\newtheorem{theorem}{Theorem}
\newtheorem*{theorem*}{Theorem}
\newtheorem{lemma}[theorem]{Lemma}
\newtheorem*{lemma*}{Lemma}
\newtheorem{corollary}[theorem]{Corollary}
\newtheorem*{corollary*}{Corollary}
\theoremstyle{definition}
\newtheorem{definition}[theorem]{Definition}
\newcommand{\n}{\notag \\}
\newcommand{\half}{\frac{1}{2}}
\newcommand{\dd}{\text{d}}
\newcommand{\tr}{\text{tr}}
\newcommand{\Wg}{\text{Wg}}
\newcommand{\ceil}[1]{\left \lceil #1 \right \rceil}
\newcommand{\Tstair}{T_\text{stair}}
\newcommand{\estair}{\lambda_\text{stair}}
\DeclareMathOperator{\vecspan}{span}
\DeclareMathOperator{\image}{image}
\begin{document}

\title{Conditional $t$-independent spectral gap for random quantum circuits and implications for $t$-design depths}

\def\urbana{
Institute for Condensed Matter Theory and IQUIST and NCSA Center for Artificial Intelligence Innovation and Department of Physics, University of Illinois at Urbana-Champaign, IL 61801, USA
}

\def\chicago{
Department of Computer Science, The University of Chicago
}

\author{James Allen}
\thanks{Co-first authors.}
\affiliation{\urbana}
\author{Daniel Belkin}
\thanks{Co-first authors.}
\affiliation{\urbana}
\author{Bryan K. Clark}
\email{bkclark@illinois.edu}
\affiliation{\urbana}
 
\begin{abstract}
    A fundamental question is understanding the rate at which random quantum circuits converge to the Haar measure. One quantity which is important in establishing this rate is the spectral gap of a random quantum ensemble.   In this work we establish a new bound on the spectral gap of the $t$-th moment of a one-dimensional brickwork architecture on $N$ qudits. This bound is independent of both $t$ and $N$, provided $t$ does not exceed the qudit dimension $q$. We also show that the bound is nearly optimal. The improved spectral gaps gives large improvements to the constant factors in known results on the approximate $t$-design depths of the 1D brickwork, of generic circuit architectures, and of specially-constructed architectures which scramble in depth $O(\log N)$.  We moreover show that the spectral gap gives the dominant $\epsilon$-dependence of the t-design depth at small-$\epsilon$.
    Our spectral gap bound is obtained by bounding the $N$-site 1D brickwork architecture by the spectra of 3-site operators. We then exploit a block-triangular hierarchy and a global symmetry in these operators in order to efficiently bound them.
    The technical methods used are a qualitatively different approach for bounding spectral gaps and and have little in common with previous techniques. 
\end{abstract}

\maketitle

\section{Introduction}
Random quantum circuits are an important model in quantum information theory and beyond. They are used in quantum advantage experiments \cite{Boixo2018, Arute2019,Movassagh2019}, as a representation of unstructured dynamics in physical systems like black holes \cite{Brown2018,Hayden2007}, and as a setting for studying measurement-induced phase transitions \cite{Skinner2019,Bao2020} as well as the AdS/CFT correspondence \cite{Jafferis2022}. Central to most of these applications is the approximate $t$-design depth, which characterizes the depth at which the moments of a random quantum circuit ensemble become difficult to distinguish from those of the Haar measure. 

Many of the known results on $t$-design depths are based on the spectral gap (see Definition~\ref{def:spectral_gap}) of the 1D brickwork architecture. 
Refs.~\onlinecite{Brandao2016,Haferkamp2022,Chen2024b} use bounds on the spectral gap to bound the additive-error and multiplicative-error approximate t-design depths of the 1D brickwork architecture. These bounds scale as $O(N)$, where $N$ is the number of sites. Refs.~\onlinecite{Schuster2024,LaRacuente2024}, meanwhile, use results on the spectral gap of the brickwork to construct families of circuits that form approximate $t$-designs in depth $O(\log N)$. Ref.~\onlinecite{Belkin2023} gives a reduction from spectral gaps of arbitrary architectures to that of the 1D brickwork to show that all well-connected circuit architectures form approximate $t$-designs in depth $\text{poly}(N)$. 

In this work we give improved bounds on the spectral gap of the 1D brickwork when the qudit dimension $q$ exceeds $t$. 
We further show that our bounds on the spectral gap are nearly tight, providing a matching lower bound which differs from our upper bound by at most 15\%. We also show that the spectral gap is the sole quantity controlling the asymptotic small-$\epsilon$ scaling of the $\epsilon$-approximate $t$-design depth. More precisely we provide both upper and lower bounds on the depth of the form $\log\left(\frac{1}{\sqrt{1-\Delta}}\right)^{-1}\log\left(\frac{1}{\epsilon}\right) + o\left[\log\left(\frac{1}{\epsilon}\right)\right]$
, where $\Delta$ is the spectral gap. It follows that our upper and lower bounds on the spectral gap give upper and lower bounds on the approximate $t$-design depth, and that the true formula for the approximate $t$-design depth must involve the spectral gap.

Our spectral gap bound also improves the approximate $t-$design depth for each of the five bounds listed above.
The asymptotic $N$-dependencies do not change, but the constant factors often change dramatically. We show, for example, that the 100-site 1D brickwork of local dimension $4$ forms a $10^{-4}$-approximate $4$-design in at most 3030 layers. The best previously-known bound for this case was $1.77 \cdot 10^{15}$ layers\cite{Brandao2016}.
Our bound does not vary with $t$, but it is valid only when the local Hilbert space dimension $q \geq t$. If our proof technique can be extended to $q < t$, the asymptotic $t$-dependence of these bounds will also improve by a factor of $\text{polylog}(t)$.

The first step of our proof is a reduction from the $N$-site brickwork to a series of $3$-site operators. A bound on the spectral gaps of the $3$-site operators implies a bound on the spectral gap of the full operator. Moreover, this relationship between the spectral gap bounds is independent of $N,q,$ and $t$. 
We then show that the spectral gaps of the $3$-site operators do not depend on $t$ so long as $t \leq q$. The restriction on $t$ enters into our proof quite late, and so it is possible that many of the techniques we develop will be useful for establishing the spectral gap in the $t > q$ case. Some of the mathematical structure we develop is also relevant to non-brickwork architectures or properties of the random circuit other than the spectral gap, such as anticoncentration~\cite{Dalzell2022}. 

Our proof exploits three types of structure in the eigenspaces of the moment operator. We first describe a block-triangular structure due to the 1D geometry of the circuit. This is similar in flavor to the known $t = 2$ block-triangular structure exploited by ref.~\cite{Deneris2024}, but not equivalent even in the $t = 2$ case. Using this structure we reduce the problem of bounding the $N$-site brickwork to that of bounding a family of $3$-site circuits. Second, we develop a block-triangular structure from the subgroup structure of the permutation group. This allows us to decompose the eigenspaces of the $t$\textsuperscript{th} moment operator, isolating new eigenvalues from those inherited from the $(t-1)$\textsuperscript{th} moment operator.
Third, we find a block-diagonal structure which corresponds to the irreducible representations of the symmetric group. We believe this proof technique is independently interesting beyond our new bounds on the spectral gap. It has very little in common with prior methods, such as detectability lemma-based approaches that bound the spectral gap through the Nachtergaele method\cite{Brandao2016}, Knabe bounds\cite{Haferkamp2021}, or relating to the PFC ensemble\cite{Chen2024b}. Some of our strategies may prove useful for stronger results on the 1D brickwork spectral gap or other properties of random circuits in the future.

\subsection{Prior work}
For a one-dimensional brickwork architecture (see Definition~\ref{def:1d_brickwork_v2}) over $N$ qudits with local Hilbert space dimension $q$, ref.~\onlinecite{Brandao2016} established a lower bound on the spectral gap (see Definition~\ref{def:spectral_gap}):
\begin{gather}
    \Delta(N,q,t) \geq \left[553000\lceil \log_q(4t)\rceil^2 q^2 t^{5+\frac{3.1}{\log q}}\right]^{-1}
\end{gather}

Ref.~\onlinecite{Haferkamp2022} introduced an improvement on the $t$ dependence of the previous bound in the case of qubits:
\begin{gather}
    \Delta(N,q,t) \geq \left[3\times 10^{13} \log^5(t) t^{4+\frac{3}{\sqrt{\log_2(t)}}}\right]^{-1}\quad  \text{for } q = 2     
\end{gather}

Ref.~\onlinecite{Chen2024b} recently improved these $q=2$ bounds to one independent of $N$ and only weakly dependent on $t$:
\begin{gather}
    \Delta(N,q, t) \geq \Omega(\log(t)^{-7})\quad \text{for } q = 2, t \leq \Theta(2^{2n/5})
\end{gather}
with an alternate bound for a slightly larger range of $t$:
\begin{gather}
    \Delta(N,q, t) \geq \Omega\left(\frac{N^{-5}}{\log(N)}\right)\quad \text{    for } q = 2, t \leq \Theta(2^{n/2})
\end{gather}

These bounds leave two areas of improvement.  Firstly, establishing a tighter bound on the exact value of the spectral gap may have useful practical implications in the construction of these circuits on real quantum systems. The constant factors in these bounds are for the most part impractically large. 
Second, it seems likely that the polylogarithmic scaling in $t$ established by ref.~\onlinecite{Chen2024b} is not optimal.

Ref.~\onlinecite{Hunter-Jones2019} showed that
\begin{gather}
    \Delta(N,q, t) \geq 1 - \left(\frac{2}{q}\right)^2 \quad \text{for } q \rightarrow \infty
\end{gather}
Ref.~\onlinecite{Haferkamp2021} showed that $t$-independence holds even for finite values of $q$ above some cutoff $q_0(t)$. In particular, they found
\begin{gather}
\label{eq:eighteenth_bound}
    \Delta(N, q, t) \geq \frac{1}{18} \quad \text{for } q \geq 6t^2
\end{gather}

How far can we reduce this limit? It has been conjectured\cite{Hunter-Jones2019,Belkin2023} that
\begin{gather}
    \label{eq:conjectured_bound}
    \Delta(N, q, t) = 1 - \left(\frac{2q}{q^2 + 1}\right)^2 \quad \text{for } N \rightarrow \infty
\end{gather}

In this paper, we will establish a near-optimal, $t$-independent spectral gap bound for one-dimensional random circuits for any $q \geq t$. Our result is in the spirit of Equation \ref{eq:eighteenth_bound}, but with the cutoff improved to $q_0(t) = t$ and the spectral gap tightened to nearly that of Equation \ref{eq:conjectured_bound}.

\subsection{Main results}
Our core results are about the spectral gap (see Definition \ref{def:spectral_gap}) of the 1D brickwork architecture (see Definition \ref{def:1d_brickwork_v2}). 
\begin{theorem}
    \label{thm:gap_t_leq_q}
    When \(t \leq q\), the spectral gap of the 1D brickwork architecture with \(N\) sites of local Hilbert space dimension \(q\) is at least
    \begin{gather}
        \Delta(N,q,t) \geq 1 - \left(\frac{2q}{q^2 + 1}\frac{1 + \sqrt{1 + \frac{1}{q^2}}}{2}\right)^2
    \end{gather}
\end{theorem}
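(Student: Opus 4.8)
The plan is to follow the three-level reduction outlined in the introduction: (i) reduce the $N$-site brickwork spectral gap to that of a family of $3$-site operators, (ii) exploit the subgroup-induced block-triangular structure to strip off eigenvalues inherited from the $(t-1)$\textsuperscript{th} moment, isolating the genuinely new spectrum, and (iii) use the $S_t$-irrep block-diagonal structure to diagonalize what remains. The $3$-site moment operator acts on $(\mathbb{C}^q)^{\otimes 3}$-copies and, after projecting onto the commutant of the single-gate Haar average, lives on a space spanned by permutation operators $\{|\sigma\rangle\}_{\sigma \in S_t}$ on each pair of adjacent sites; the relevant object is the overlap (Gram) structure of these vectors and how the two overlapping $2$-site Haar projectors compose. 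First I would set up this combinatorial model carefully, writing the $3$-site operator as a product of two projectors $P_{12}P_{23}$ (or a convex combination thereof, depending on the brickwork convention), so that bounding its gap becomes bounding $\|P_{12}P_{23} - P_\infty\|$ where $P_\infty$ is the projector onto the common invariant subspace.

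The second step is where the hypothesis $t \le q$ does its work. When $t \le q$ the permutation operators $\{|\sigma\rangle\}$ are linearly independent on each pair of sites, so the Gram matrix $G_{\sigma\tau} = q^{-\#\text{cycles}(\sigma^{-1}\tau)}$ (suitably normalized) is invertible and the local commutant is a clean copy of $\mathbb{C}[S_t]$ rather than a quotient. I would then use the block-triangular decomposition coming from the subgroup lattice of $S_t$ — concretely, organizing permutations by which "new" crossings they introduce relative to lower moments — to reduce to the top block, and then decompose that block according to the irreducible representations of $S_t$ via Schur–Weyl duality. In each irrep sector the operator becomes a small (dimension = multiplicity) matrix whose norm can be computed in closed form; the worst case over irreps should produce exactly the claimed eigenvalue $\left(\frac{2q}{q^2+1}\cdot\frac{1+\sqrt{1+1/q^2}}{2}\right)^2$. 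I expect the extremal sector to be the one that, in the $q\to\infty$ limit, recovers the conjectured gap $1-(2q/(q^2+1))^2$, with the extra factor $\frac{1+\sqrt{1+1/q^2}}{2}$ arising from the non-orthogonality correction in the $3$-site (rather than $2$-site) overlap — i.e. from the off-diagonal coupling between the two projectors at finite $q$.

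The main obstacle will be step (i): establishing that the $3$-site operator bounds control the full $N$-site gap with an $N$-independent constant. The block-triangular structure from the 1D geometry gives a Jordan-type decomposition where the $N$-site spectrum is governed by the $3$-site "building block," but one has to handle the off-diagonal (nilpotent or near-nilpotent) parts carefully — a naive triangular bound loses factors that grow with $N$. The trick, which the introduction hints at, is that the relevant similarity transform can be chosen so that the coupling terms are genuinely controlled by the $3$-site gap itself; making this quantitative without $N$-dependence is the crux. A secondary difficulty is bookkeeping in step (ii): correctly identifying which eigenvalues are "inherited" from moment $t-1$ (and hence already bounded by induction) versus genuinely new, so that the induction on $t$ closes. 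Once the model is set up, steps (ii)–(iii) should be largely linear-algebra and representation theory; I would do small cases ($t = 2, 3$) explicitly first to pin down the normalization and confirm the extremal irrep before attempting the general argument.
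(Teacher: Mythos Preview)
Your three-level reduction is the right skeleton and matches the paper's strategy, but there are two concrete misconceptions that would derail the execution.

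First, the extra factor $\frac{1+\sqrt{1+1/q^2}}{2}$ does \emph{not} arise from non-orthogonality in the $3$-site overlap. In the paper the $3$-site operator --- more precisely, a family $K_m = \Pi_m G_m \Pi_m$ indexed by $m$, where the first ``site'' carries effective dimension $q^m$ --- has subleading eigenvalue at most $\lambda = \frac{1}{q^2+1}$, with equality attained by the $t=2$ block. The correction factor comes entirely from step~(i): the reduction from the $N$-site staircase to the $K_m$ produces an explicit upper-Hessenberg matrix $A$ whose entries are $\lambda$ and $\mu = \sqrt{\lambda(1-\lambda)}$, and a Gershgorin argument on a diagonally rescaled $A$ gives $\lambda_{\mathrm{stair}} \le (1+\sqrt{1-\lambda})^2\lambda$. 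Substituting $\lambda = \frac{1}{q^2+1}$ yields the stated bound. If you look for the factor inside the $3$-site computation you will not find it, and you will also miss that step~(i) requires the staircase reformulation of the brickwork and the projector hierarchy $\Pi_1 \supset \Pi_2 \supset \cdots$ (projectors onto states uniform on the first $i$ sites), which is what makes the block decomposition genuinely $N$-independent rather than ``Jordan-type with near-nilpotent off-diagonals.''

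Second, you substantially underestimate steps (ii)--(iii). The paper does get a clean closed form for $t=2$, but for $t>2$ there is no uniform irrep-by-irrep computation that spits out the answer. Instead one must show that the \emph{deranged} block of $K_m^{(t)}$ (the part not inherited from $t-1$) has norm strictly below $\frac{1}{q^2+1}$ for every $t$ with $2<t\le q$. This is done by factoring $K_m$ as $W(Q_1Q_2)D(Q_2)C(Q_1)\cdot W(Q_2Q_3)D(Q_2)C(Q_3)$ in the irrep basis, bounding the Weingarten factor by $e$, and controlling the $DC$ factor via a ``derangement polynomial'' $d_t(x)$. The resulting analytic bound is only adequate for $t\ge 29$; the ranges $7\le t\le 28$ and $3\le t\le 6$ are handled by separate numerical estimates on $d_t(t^{-2})$ and on an elementwise-majorized half-operator $\overline{H}(t)$, respectively. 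So the induction on $t$ closes not by a single representation-theoretic identity but by a case split combining analytic and computer-assisted bounds --- something to budget for before committing to the ``largely linear algebra'' plan.
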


In addition, we show that known results on the $t=2$ case can be extended to a lower bound for all $t$.
\begin{theorem}
    \label{thm:upper_bound}
    The spectral gap of the 1D brickwork architecture with \(N\) sites of local Hilbert space dimension \(q\) is at most
    \begin{gather}
        \Delta(N,q,t) \leq 1 - \left(\frac{2q}{q^2 + 1} \cos \frac{\pi}{N}\right)^2
    \end{gather}
\end{theorem}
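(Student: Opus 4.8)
The plan is to prove this upper bound on $\Delta(N,q,t)$ by exhibiting, for the moment operator $\tmoment$, an explicit eigenvector that is orthogonal to the Haar‑invariant subspace and has eigenvalue at least $\big(\tfrac{2q}{q^2+1}\cos\tfrac{\pi}{N}\big)^2$; since an upper bound on the spectral gap is exactly a lower bound on the largest eigenvalue of $\tmoment$ in the orthogonal complement of its fixed subspace, this suffices. The argument naturally splits into a reduction to the case $t=2$ followed by an exact calculation at $t=2$.

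For the reduction I would invoke the subgroup block‑triangular structure of $\tmoment$ that is developed in the body of the paper: the spectrum of the $(t{-}1)$‑th moment operator embeds into that of the $t$‑th, with the corresponding eigenvectors supported in the orthogonal complement of the Haar‑invariant subspace. Iterating down to $t=2$, every eigenvalue of the second‑moment operator that is orthogonal to its (two‑dimensional) fixed subspace reappears in $\tmoment$ for all $t\ge2$, orthogonal to the larger fixed subspace $\vecspan\{|\pi\rangle\!\rangle^{\otimes N}:\pi\in S_t\}$. This gives $\Delta(N,q,t)\le\Delta(N,q,2)$, so it remains only to bound $\Delta(N,q,2)$. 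Passing through $t=2$ is what makes the computation tractable: for $t>2$ a single local moment operator maps the ``identity/swap'' sector partly onto permutation operators outside $S_2$, so that sector is not $\tmoment$‑invariant, whereas for $t=2$ it is.

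For $t=2$ the local moment operator on two qudits is the orthogonal projector onto $\vecspan\{|e\rangle\!\rangle_i|e\rangle\!\rangle_{i+1},\,|\sigma\rangle\!\rangle_i|\sigma\rangle\!\rangle_{i+1}\}$, where $|e\rangle\!\rangle$ and $|\sigma\rangle\!\rangle$ are the single‑qudit vectorized identity and swap‑of‑replicas. Hence the span $\mathcal W$ of the single domain‑wall configurations $|D_k\rangle\!\rangle:=|e\rangle\!\rangle^{\otimes k}\otimes|\sigma\rangle\!\rangle^{\otimes(N-k)}$, $k=0,\dots,N$, is $\tmoment$‑invariant: a gate away from the wall acts as the identity, while the gate straddling the wall sends $|D_k\rangle\!\rangle\mapsto\tfrac{q}{q^2+1}\big(|D_{k-1}\rangle\!\rangle+|D_{k+1}\rangle\!\rangle\big)$. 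Composing the two brickwork half‑layers (which move walls sitting at odd and at even positions, respectively) then reduces $\tmoment|_{\mathcal W}$ to an explicit tridiagonal (Jacobi) matrix on a chain of $O(N)$ sites, with the two fixed configurations $|D_0\rangle\!\rangle$ and $|D_N\rangle\!\rangle$ playing the role of open ends; its eigenvectors are standing waves, the slowest one (momentum $\propto \pi/N$) being orthogonal to the fixed pair and having eigenvalue $\big(\tfrac{2q}{q^2+1}\cos\tfrac{\pi}{N}\big)^2$, where $2q/(q^2{+}1)$ is twice the per‑bond hopping amplitude $q/(q^2{+}1)$. Because this eigenvalue is strictly below $1$, the eigenvector automatically lies in the complement of the fixed subspace, and $\Delta(N,q,2)\le 1-\big(\tfrac{2q}{q^2+1}\cos\tfrac{\pi}{N}\big)^2$ follows.

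The main obstacles are two. First, in the reduction one must check that the inherited eigenvector is orthogonal to \emph{all} of $\vecspan\{|\pi\rangle\!\rangle^{\otimes N}\}$, not just to $|e\rangle\!\rangle^{\otimes N}$ and $|\sigma\rangle\!\rangle^{\otimes N}$; I expect this to be automatic, because the overlap of any $|D_k\rangle\!\rangle$ with $|\pi\rangle\!\rangle^{\otimes N}$ depends on $k$ only through $q^{\pm k}$ (multiplying $\pi$ by a transposition changes the single‑site cycle count by exactly one), so a wavepacket orthogonal to those two geometric profiles is orthogonal to every $|\pi\rangle\!\rangle^{\otimes N}$. Second, and the genuinely computational part, the domain‑wall basis is non‑orthonormal, with Gram matrix $\langle\!\langle D_k|D_l\rangle\!\rangle=q^{2N}q^{-|k-l|}$, and the wall couples back into the fixed configurations at the two ends of the chain; disentangling this boundary‑value problem to obtain \emph{exactly} $\cos(\pi/N)$ — rather than $\cos(c/N)$ for some other constant — is the step that requires real work, and it is essentially the known $t=2$ spectral‑gap analysis (cf.\ ref.~\cite{Deneris2024}).
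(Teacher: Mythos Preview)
Your proposal is correct and follows essentially the same route as the paper, which simply invokes the eigenvalue-inheritance result (Theorem~\ref{thm:eigenstate_independence} / Corollary~\ref{corollary:deranged_block_eigvals}) to reduce to $t=2$ and then cites refs.~\onlinecite{Deneris2024,Znidaric2022} for the exact $t=2$ domain-wall spectral computation you sketch. Your first ``obstacle'' is in fact moot: exhibiting any eigenvalue strictly below $1$ already lower-bounds the second-largest eigenvalue and hence upper-bounds $\Delta$, so no separate orthogonality check is needed.
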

The proof of this theorem follows from Corollary \ref{corollary:deranged_block_eigvals} and the results of refs.~\onlinecite{Deneris2024}  and \onlinecite{Znidaric2022}. 
When $N$ is large, this lower bound approaches $1 - \left(\frac{2q}{q^2 + 1}\right)^2 $. Figure \ref{fig:bound_comparison} shows that our upper and lower bounds differ by at most $0.0473$, with convergence as $q \rightarrow \infty$.

\begin{figure}
    \label{fig:bound_comparison}
    \centering
    \begin{tikzpicture}
        \begin{scope}
            \node[anchor=north west,inner sep=0] (image_a) at (0.4,0)
            {\includegraphics[width=0.6\columnwidth]{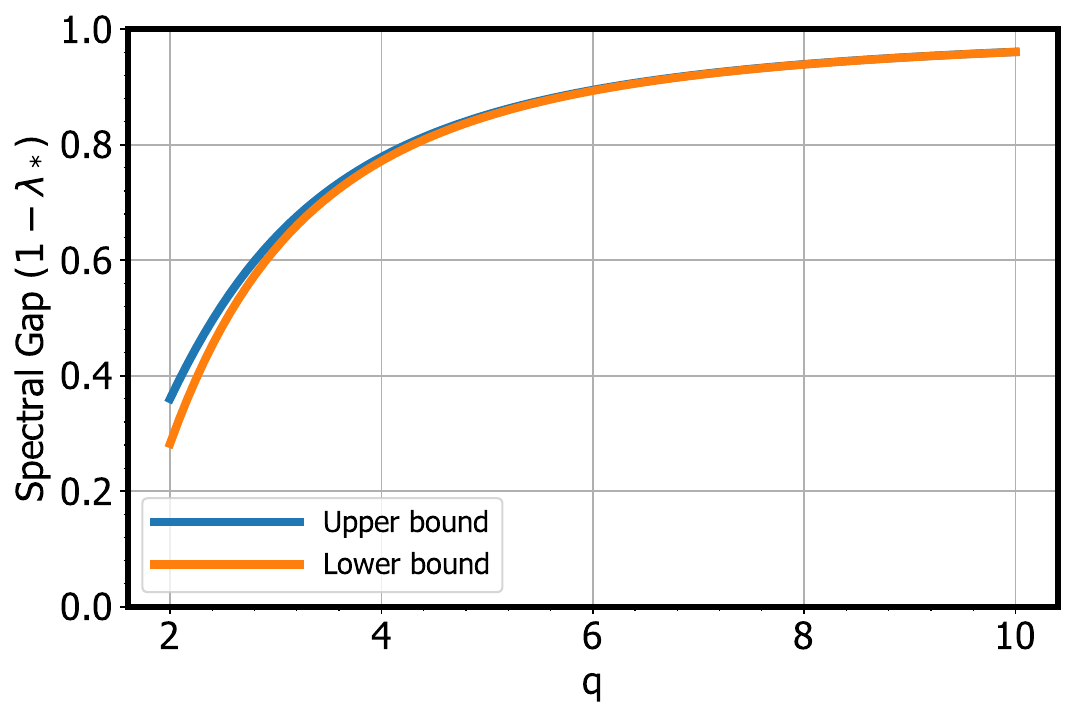}};
        \end{scope}
    \end{tikzpicture}
    \vspace*{-0.4cm}
    \caption{Comparison of our upper and lower bounds on the spectral gap when $N \rightarrow \infty$. }
\end{figure}

These results on spectral gaps imply new bounds on the approximate $t$-design depth (see Definition~\ref{def:atd}). 
\begin{corollary}
    Let $\ell_*(N,q,t,\epsilon)$ be the number of layers at which the 1D brickwork architecture with \(N\) sites of local Hilbert space dimension \(q\) first forms an \(\epsilon\)-approximate \(t\)-design.
    When \(t \leq q\), 
\begin{gather}
    \ell_* \leq 1 + C\left(2 N t \log q + \log \frac{1}{\epsilon}\right) 
\end{gather}
where
\begin{gather}
    C = \left[\log \frac{q^2 + 1}{2q} + \log\frac{2}{1 + \sqrt{1 + \frac{1}{q^2}}}\right]^{-1} \leq 6.032
\end{gather}
\end{corollary}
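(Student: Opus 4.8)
The plan is to deduce the bound on $\ell_*$ from the spectral gap bound of Theorem~\ref{thm:gap_t_leq_q} together with a standard argument relating the spectral gap to approximate $t$-design depth. First I would recall the standard fact (e.g.\ from refs.~\onlinecite{Brandao2016,Chen2024b}) that if a single layer of the brickwork has moment operator with second-largest singular value $1-\Delta$, then after $\ell$ layers the distance to the Haar moment operator in the relevant norm is bounded by $(1-\Delta)^{\ell}$ times a dimension factor. Concretely, the $\epsilon$-approximate $t$-design condition is implied by $(1-\Delta)^{\ell}\, d^{\,ct} \leq \epsilon$ for a suitable power of the total Hilbert space dimension $d = q^N$; here the exponent of $q^N$ accounts for the diamond-norm (or $2 \to 2$ to $1 \to \infty$) conversion and is where the $2Nt\log q$ term will come from. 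Taking logarithms, this gives $\ell \geq \frac{2Nt\log q + \log(1/\epsilon)}{\log\frac{1}{1-\Delta}}$, so it suffices to choose $\ell = 1 + \lceil \cdot \rceil$ of the right-hand side, with the leading $1$ absorbing the ceiling.

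The second step is to substitute the bound $\Delta \geq 1 - \left(\frac{2q}{q^2+1}\cdot\frac{1+\sqrt{1+1/q^2}}{2}\right)^2$ from Theorem~\ref{thm:gap_t_leq_q}, valid since $t \leq q$. Writing $r = \frac{2q}{q^2+1}\cdot\frac{1+\sqrt{1+1/q^2}}{2}$, we have $1-\Delta \leq r^2$, hence $\log\frac{1}{1-\Delta} \geq 2\log\frac{1}{r}$, and the $2$ cancels against the factor of $2$ one picks up in the naive depth bound from $\log(d^{2t}) = 2Nt\log q$ style counting — so one should be a little careful about exactly where the factors of $2$ live; I would track them by writing the design condition as $r^{2\ell} q^{2Nt} \leq \epsilon$ (or the analogous inequality matching the precise constant in the cited reduction) and then dividing through. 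This yields $\ell_* \leq 1 + \left(2Nt\log q + \log\frac1\epsilon\right)\big/\log\frac1r$, so that $C = \left(\log\frac1r\right)^{-1} = \left[\log\frac{q^2+1}{2q} + \log\frac{2}{1+\sqrt{1+1/q^2}}\right]^{-1}$, exactly as stated.

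The final step is the numerical claim $C \leq 6.032$. Since $C = (\log\tfrac1r)^{-1}$ is monotonic in $r$, and $r = \frac{q(1+\sqrt{1+1/q^2})}{q^2+1}$ is itself monotonically increasing in $q$ toward its supremum $\lim_{q\to\infty} r = 1$ is \emph{not} quite what we want — rather I would check that $r$ is maximized over the allowed range $q \geq t \geq 1$, i.e.\ over integers $q \geq 2$ (the $q=1$ case being trivial or excluded), at the smallest such $q$. Plugging $q=2$ gives $r = \frac{2(1+\sqrt{5}/2)}{5} = \frac{2+\sqrt5}{5} \approx 0.84721$, whence $\log\frac1r \approx 0.16574$ and $C \approx 6.0324$, giving the claimed $C \leq 6.032$ (up to rounding, which I would state as $\leq 6.033$ if being conservative, matching the paper's figure). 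The main obstacle is bookkeeping rather than conceptual: getting the factors of $2$ and the exact form of the dimension-dependent prefactor in the spectral-gap-to-design reduction to line up with the stated $2Nt\log q$, since different references normalize the moment operator and the distance measure slightly differently; one must cite the precise version of the reduction whose constants produce exactly this expression.
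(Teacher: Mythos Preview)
Your approach is essentially the paper's: invoke the standard spectral-gap-to-design reduction (the paper packages this as Theorem~\ref{thm:depth_from_eigval}, going through the frame potential via Lemmas~\ref{lemma:diamond_norm_to_frame_potential} and \ref{lemma:frame_potential_from_eigenvalue}), substitute the gap bound from Theorem~\ref{thm:gap_t_leq_q}, and evaluate $C$ at $q=2$. Two bookkeeping points to fix: (i) the spectral gap here is that of the \emph{two-layer} transfer matrix $L_OL_E$, and the paper's frame-potential computation gives the diamond-norm condition $q^{2Nt}\lambda_*^{(\ell-1)/2}\le\epsilon$, i.e.\ $q^{2Nt}r^{\ell-1}\le\epsilon$ with $\lambda_*=r^2$ --- not $r^{2\ell}$ as you wrote (your written condition would yield $C/2$, inconsistent with your own final formula); (ii) $r(q)=\frac{q(1+\sqrt{1+1/q^2})}{q^2+1}$ is \emph{decreasing} in $q$ for $q\ge 1$ with $r\to 0$ as $q\to\infty$, not increasing toward $1$, so the maximum over integers $q\ge 2$ is indeed at $q=2$ as you ultimately check.
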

    This corollary is a consequence of Theorem~\ref{thm:depth_from_eigval}, in Appendix~\ref{app:depth_from_eigval}. Theorem \ref{thm:depth_from_eigval} also gives a similar bound on the multiplicative-error $t$-design depth in terms of the spectral gap.
    
\begin{corollary}
    \label{corollary:depth_bound_upper_lower}
    There exists a constant $C(N,q,t)$ such that
    \begin{gather}
        \ell_*(N,q,t,\epsilon) = \left(C(N,q,t) + o(1)\right) \log \frac{1}{\epsilon}
    \end{gather}
    as \(\epsilon \rightarrow 0\). If $t \leq q $, then
    \begin{gather}
        \left[\log \frac{q^2 + 1}{2q} + \log\frac{1}{\cos \frac{\pi}{N}}\right]^{-1} \leq C(N,q,t) \leq \left[\log \frac{q^2 + 1}{2q} + \log\frac{2}{1 + \sqrt{1 + \frac{1}{q^2}}}\right]^{-1}
    \end{gather}
\end{corollary}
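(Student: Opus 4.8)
The plan is to split the statement into a general, architecture-independent part and a part that feeds in the concrete spectral-gap bounds of Theorems~\ref{thm:gap_t_leq_q} and~\ref{thm:upper_bound}. For the general part I would invoke Theorem~\ref{thm:depth_from_eigval} of Appendix~\ref{app:depth_from_eigval}, which supplies upper and lower bounds on $\ell_*(N,q,t,\epsilon)$ that agree to leading order, each of the form $\left[\log\frac{1}{\sqrt{1-\Delta}}\right]^{-1}\log\frac{1}{\epsilon} + o\!\left(\log\frac1\epsilon\right)$, where $\Delta = \Delta(N,q,t)$ is the true spectral gap of Definition~\ref{def:spectral_gap}. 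These two bounds together force $\ell_* = \bigl(C(N,q,t) + o(1)\bigr)\log\frac1\epsilon$ as $\epsilon \to 0$, with
\begin{gather}
    C(N,q,t) = \left[\log \frac{1}{\sqrt{1 - \Delta(N,q,t)}}\right]^{-1} .
\end{gather}
Everything then reduces to (i) establishing that such matching depth bounds exist, and (ii) inserting the numerical bounds on $\Delta$.

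For (i), the way I would argue inside Theorem~\ref{thm:depth_from_eigval} is to bracket the depth-$\ell$ approximate $t$-design error, call it $\epsilon_\ell$, between two exponentials in $\ell$ with the \emph{same} base $\sqrt{1-\Delta}$. The upper bracket $\epsilon_\ell \le q^{O(Nt)}(\sqrt{1-\Delta})^\ell$ is the standard consequence of a spectral gap~\cite{Brandao2016}: submultiplicativity of the per-layer moment operator on the complement of its fixed space gives $(\sqrt{1-\Delta})^\ell$ decay in spectral norm, and converting to the norm of Definition~\ref{def:atd} costs an $\epsilon$-independent factor that is exponential in $Nt$ (any $\mathrm{poly}(\ell)$ contribution from non-normality would only add an $O(\log\log\frac1\epsilon)$ term, still negligible). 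The lower bracket $\epsilon_\ell \ge c(N,q,t)(\sqrt{1-\Delta})^\ell$ comes from a ``slow mode'': an eigenvector of the moment operator that decays by exactly the factor $\sqrt{1-\Delta}$ per layer — such a vector exists by the very definition of the spectral gap, and the proof of Theorem~\ref{thm:upper_bound}, via refs.~\onlinecite{Deneris2024,Znidaric2022}, makes it explicit — so the design error cannot decay faster than this rate. Writing $\ell_* = \min\{\ell : \epsilon_\ell \le \epsilon\}$ (well defined since $\epsilon_\ell$ is non-increasing) and solving both brackets for $\ell$ yields $C\log\frac1\epsilon - o\!\left(\log\frac1\epsilon\right) \le \ell_* \le C\log\frac1\epsilon + o\!\left(\log\frac1\epsilon\right)$, which is the claimed asymptotic.

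For (ii), observe that $\Delta \mapsto \left[\log\frac{1}{\sqrt{1-\Delta}}\right]^{-1}$ is strictly decreasing on $(0,1)$, so an \emph{upper} bound on $\Delta$ produces a \emph{lower} bound on $C(N,q,t)$ and vice versa. Theorem~\ref{thm:upper_bound} gives $\sqrt{1-\Delta} \ge \frac{2q}{q^2+1}\cos\frac{\pi}{N}$; taking logarithms and reciprocating gives $C(N,q,t) \ge \left[\log\frac{q^2+1}{2q} + \log\frac{1}{\cos\frac{\pi}{N}}\right]^{-1}$. When $t \le q$, Theorem~\ref{thm:gap_t_leq_q} gives $\sqrt{1-\Delta} \le \frac{2q}{q^2+1}\cdot\frac{1+\sqrt{1+1/q^2}}{2}$, and the same manipulation gives $C(N,q,t) \le \left[\log\frac{q^2+1}{2q} + \log\frac{2}{1+\sqrt{1+1/q^2}}\right]^{-1}$. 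This completes the argument.

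I expect the real difficulty to sit entirely in the lower bracket of step~(i): one has to certify that the spectral gap is not just an upper bound on the mixing rate but is genuinely attained at every depth, which means locating a concrete slow eigenvector that persists under composition and then carefully controlling the ($\epsilon$-independent but dimension-dependent) constants that relate the spectral norm of the moment operator to the design error of Definition~\ref{def:atd}, including the bookkeeping caused by non-normality of the per-layer operator. Once that two-sided depth bound is available, step~(ii) is a one-line monotonicity observation followed by elementary algebra.
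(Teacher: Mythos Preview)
Your approach is essentially the paper's: establish $C(N,q,t) = 2/\log\frac{1}{1-\Delta}$ by sandwiching $\ell_*$ between matching upper and lower bounds in $\epsilon$, then plug in Theorems~\ref{thm:gap_t_leq_q} and~\ref{thm:upper_bound} via the monotonicity you describe. The lower bracket is simpler than you anticipate---the paper just feeds the subleading eigenvector $\rho_*$ of $T_{1DB}$ into the definition of the diamond norm to get $||\Phi_\varepsilon^{(t)} - \Phi_{\text{Haar}}^{(t)}||_\diamond \geq (1-\Delta)^{\ell/2}$ with no prefactor $c(N,q,t)$ and no non-normality bookkeeping (the subleading eigenvalue is real because $T_{1DB}=L_OL_E$ is isospectral to the Hermitian $L_E L_O L_E$).
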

This corollary is proven in Appendix~\ref{app:depth_bound_upper_lower}. This implies that the behavior of the random quantum circuit has rather precise bounds in the large depth/low-$\epsilon$ regime. In particular, the depth at which the circuit reaches an approximate $t$-design is \textit{independent} of both $N$ and $t$ in this regime (although the location of the regime itself might be dependent on $N$ and $t$).

In addition, we have a semi-numerical bound on the spectral gap for finite system sizes:
\begin{theorem}
    \label{thm:gap_finite_N}
    When \(q = 2\), \(t \leq 6\), and \(N \leq 1000\), the spectral gap of the 1D brickwork architecture with \(N\) sites of local Hilbert space dimension \(q\) is at least
    \begin{gather}
        1 - \left(\frac{2q}{q^2 + 1}\frac{1 + \sqrt{1 + \frac{1}{q^2}}}{2}\right)^2
    \end{gather}
\end{theorem}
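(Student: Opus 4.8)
Since Theorem~\ref{thm:gap_t_leq_q} already establishes exactly this bound whenever $t \le q$, the only cases with new content are $q = 2$ and $t \in \{3,4,5,6\}$, where $t > q$ and the analytic argument behind Theorem~\ref{thm:gap_t_leq_q} breaks down. The plan is to retain the first, $N$-independent step of the paper --- the reduction of the $N$-site 1D brickwork spectral gap to the spectra of a fixed family of $3$-site operators, together with the staircase operator $\Tstair$ and its top eigenvalue $\estair(N,q,t)$ --- and to replace the analytic control of these objects by a \emph{certified numerical} computation valid for $q = 2$, $3 \le t \le 6$. Concretely, the reduction turns the claimed inequality into two ingredients: (i) the relevant $3$-site operator has second-largest singular value at most $\tfrac{2q}{q^2+1}\cdot\tfrac{1+\sqrt{1+1/q^2}}{2}$, and (ii) an analogous bound on $\estair(N,q,t)$. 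I would verify (i) once for each of $t = 3,4,5,6$; ingredient (ii) is the only place the system size enters, and I would check it for each $N \le 1000$, which is precisely why the statement is restricted to $N \le 1000$.

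To make (i) a finite computation, I would first collapse the $3$-site operator onto the span of permutation operators $P_\sigma$, $\sigma \in S_t$, on each site. For $q = 2$ this span is the image of $\mathbb{C}[S_t]$ under the $q$-dimensional representation, so by Schur--Weyl duality it has dimension $\sum_{\lambda \vdash t,\ \ell(\lambda) \le q}(f^\lambda)^2$ --- at most $132$ when $t = 6$ --- and the $3$-site operator lands in a space of dimension at most $132^3 \approx 2.3 \times 10^6$, smaller still after projecting onto the moment operator's image. I would then cut this down further using the two additional structures developed in the paper: the subgroup-induced block-triangular decomposition, which isolates the eigenvalues genuinely new at level $t$ from those inherited from level $t-1$, and the block-diagonalization over the $S_t$-irreps, which splits each surviving block into isotypic components indexed by $\lambda$. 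The largest block that survives this reduction should be small enough to diagonalize, and the same passage also shrinks the $N$-dependent matrix attached to $\Tstair$ for each $N \le 1000$.

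For the numerics to be a proof rather than evidence, I would use interval (ball) arithmetic: compute a floating-point Schur form or eigendecomposition of each reduced matrix $M$, then certify a spectral gap by producing a rigorous lower bound on the smallest singular value of $M - \lambda I$ for all real $\lambda$ in the forbidden window --- e.g.\ via a verified $LDL^\dagger$ factorization of $(M - \lambda I)^\dagger(M - \lambda I) - c^2 I$, or, when the approximate eigenvectors are well conditioned, via a Bauer--Fike residual bound around the computed eigenvalues. Combining the certified $3$-site bound (for $t = 3,\dots,6$) with the certified $\estair$ bounds (for $N = 2,\dots,1000$) and feeding them into Corollary~\ref{corollary:deranged_block_eigvals} and the reduction estimate then yields the claimed inequality.

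The main obstacle is conceptual rather than arithmetic: when $q < t$ the Gram matrix of permutation states is singular, so the $3$-site operator of the paper is a priori only well defined on the quotient by that kernel, and the block-triangular and irrep structures that the reduction relies on must be shown to descend to this quotient. I expect the bulk of the work to be verifying that passing to the nondegenerate quotient --- equivalently, restricting to partitions with at most $q$ rows in the Artin--Wedderburn decomposition --- is compatible with the subgroup filtration, so that the reduction still outputs the correct target and the certified eigenvalue bounds can legitimately be plugged in. Scaling the certified linear algebra to the surviving blocks (dimension up to roughly $10^4$--$10^5$) is the secondary difficulty.
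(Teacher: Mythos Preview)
Your proposal misidentifies where the $N$-dependence lives, and this is a genuine gap. The paper's reduction (Theorem~\ref{thm:staircase_to_single}) is \emph{analytic}: once $\lambda = \sup_m \|K_m - \Pi_{m+1}\|$ is known, the bound $\estair \le (1+\sqrt{1-\lambda})^2\lambda$ follows with no further computation and no dependence on $N$. The $N\le 1000$ restriction arises because the three-site operators $K_m$ are \emph{not} a single object per $t$: they form a family indexed by $m=1,\dots,N-1$, with $K_m$ depending on $m$ through the effective first-site dimension $Q_1=q^m$. The paper's proof (Appendix~\ref{app:gap_finite_N}) therefore numerically bounds $\|K_m\|_{\mathfrak D}$ for every $m\le 1000$ and every $t\in\{3,4,5,6\}$, then invokes Theorem~\ref{thm:staircase_to_single}. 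Your plan --- ``verify (i) once for each $t$'' and then ``check $\estair$ for each $N\le 1000$'' --- has these reversed. Moreover, the second step would be computationally hopeless: even after every symmetry reduction you describe, the staircase transfer matrix on $N$ sites still acts on a space whose dimension grows exponentially in $N$; no reduction to a matrix you can diagonalize at $N=1000$ is possible.

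You also miss a case that forces a modification of the reduction itself. For $q=2$, $t=4$, $m=1$ the paper finds $\|K_1\|=\tfrac14$, strictly exceeding the target $\tfrac{1}{q^2+1}=\tfrac15$. This breaks the hypothesis of Lemma~\ref{lemma:gershgorin_block_bound}, so the paper proves a refined Gershgorin-type estimate (Lemma~\ref{lemma:special_t2_blockbound} in Appendix~\ref{app:special_q2_bound}) that tolerates an exceptional first column; one then checks that with $\lambda_1=\tfrac14$ and $\lambda=\tfrac15$ the same final bound still holds. Your proposal would simply report a failed inequality at this point. By contrast, your discussion of the degenerate Gram matrix for $q<t$ is on the right track and matches the paper's resolution via the coderanged/intersection space of Theorem~\ref{thm:coderanged_characterization}; and your insistence on certified interval arithmetic is if anything more rigorous than the Lanczos-based numerics the paper actually uses.
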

These results will be proved in Appendix \ref{app:gap_finite_N}. They can be extended to other $N$, $t$, and $q$ with additional computational effort. 

\subsection{Proof overview}
The bulk of this work is devoted to proving Theorem \ref{thm:gap_t_leq_q}. The first phase is a reduction from the brickwork architecture to a family of three-site operator. We begin by relating the brickwork architecture to the staircase architecture (Figure \ref{fig:staircase_architecture}). We then define an orthogonal decomposition of the vector space into subspaces. This corresponds to a specific block-matrix decomposition of the transfer matrix. Next, we show that the spectral gap of the architecture can be bounded in terms of the norms of each block in this matrix. We complete this phase by relating each block norms to the spectral gap of an effective three-site operator.

The second phase of our proof is concerned with bounds on these three-site spectral gaps. We show that it suffices to restrict these operators to a certain subspace, which we term the \textbf{complete derangement space}. Furthermore, the representation theory of the symmetric group allows us to decompose each of these subspaces into isotypic components. 

At this point our bounds split into cases. The general strategy is to obtain analytic bounds when $q$ or $t$ is sufficiently large and numerically bound the spectral gap for smaller $t$. We use multiple numerical bounds depending on the size of $t$, with larger $t$ values using looser bounds that are easier to compute, while smaller $t$ values use tighter bounds that require more detailed analysis.

\section{Preliminary work}
\subsection{Definitions}
Taking an ensemble $\varepsilon$ of random quantum circuits $U_\varepsilon$ acting on a Hilbert space $\mathcal{H}$, we want to understand the conditions under which this ensemble approaches within $\epsilon$ distance of a global Haar random distribution given any $t$-body measurement. 
\begin{definition}
    The \textbf{$t$\textsuperscript{th} moment operator} of a random quantum circuit ensemble $U_\varepsilon$ is the $L(\mathcal{H}^{\otimes t}) \rightarrow L(\mathcal{H}^{\otimes t})$ channel formed by averaging the action of $t$ copies of the circuit as follows:
    \begin{gather}
        \Phi_\varepsilon^{(t)}[\rho] = \int_\varepsilon (U_\varepsilon^\dagger)^{\otimes t} \! \rho (U_\varepsilon)^{\otimes t} \dd U_\varepsilon
    \end{gather}
    We can also write this as the expectation of the circuit ensemble over $2t$ copies
    \begin{gather}
        \Phi_\varepsilon^{(t)} = \left \langle U_\varepsilon^{\otimes t,t} \right\rangle_{U_\varepsilon}
    \end{gather}
    where $X^{\otimes t,t} \equiv X^{\otimes t}  \otimes (X^*)^{\otimes t}$
\end{definition}

\begin{definition}
\label{def:spectral_gap}
The \textbf{spectral gap} $\Delta(\varepsilon,t)$ of a random quantum circuit ensemble $\varepsilon$ and moment $t$ is the difference between the largest and second-largest eigenvalues of the corresponding $t$-th moment operator. For a one-dimensional brickwork of $N$ qubits with local Hilbert space dimension $q$, we will refer to the spectral gap as $\Delta(N,q,t)$ instead.
\end{definition}
We call the second-largest eigenvalue the \textbf{subleading eigenvalue} or \textbf{SEV}. 

A primary use of the spectral gap is in determining the $\epsilon$-approximate $t$-design depth of a random quantum circuit, which is the depth required for the statistical average of any specific $t$-body measurement of the random quantum circuit to be equal to to a measurement on the global Haar distribution, within an error of $\epsilon$:
\begin{definition}
    \label{def:atd}
    A random quantum circuit ensemble $\varepsilon$ is an \textbf{$\epsilon$-approximate $t$-design} if the diamond norm distance between the ensemble's $t$\textsuperscript{th} moment operator $\Phi_\epsilon^{(t)}$ and the Haar measure's $t$\textsuperscript{th} moment operator is at most $\epsilon$:
    \begin{gather}
        ||\Phi_\varepsilon^{(t)} - \Phi_\text{Haar}^{(t)}||_\diamond \leq \epsilon
    \end{gather}
\end{definition}

\subsection{Mapping moment operators to tensor networks}
In this section, we will give a mapping from the spectrum of the moment operator corresponding to a random quantum circuit architecture to the spectrum of a particular tensor network. 

For a circuit ensemble $U_\varepsilon$ composed of 2-local\footnote{By "2-local" we mean that the unitary gate acts nontrivially only on some pair of sites.} Haar random unitaries $U_i$, we may use the independence of the gates to write
\begin{gather}
    \Phi_{\varepsilon}^{(t)} = \left\langle \prod_i U_i^{\otimes t,t} \right \rangle = \prod_i \left \langle U_i^{\otimes t,t} \right\rangle_{U_i}
\end{gather}
The result is a tensor network in the same form as the original circuit, with each random gate replaced with a copy of \(\left \langle U_i^{\otimes t,t} \right\rangle\) (see Figure~\ref{fig:stat_mech_process}.). 

\begin{figure}
    \centering
    \begin{tikzpicture}
        \begin{scope}
            \node[anchor=north west,inner sep=0] (image_a) at (0.4,0)
            {\includegraphics[width=0.9\columnwidth]{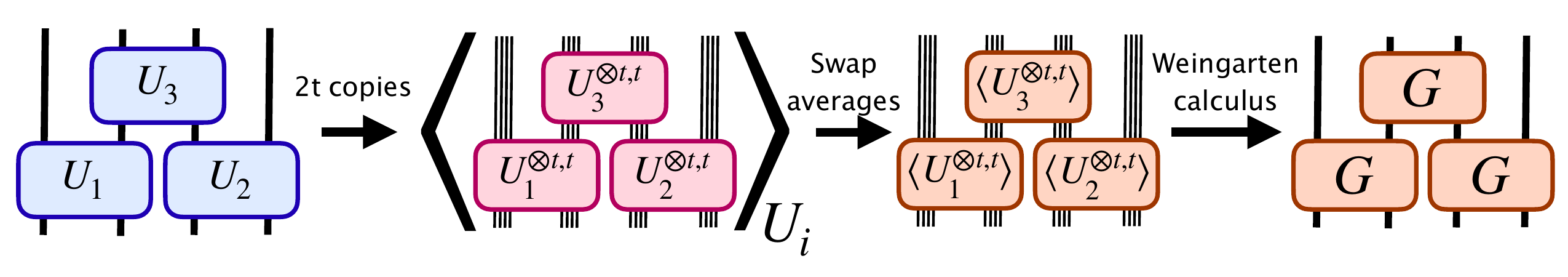}};
            \node [anchor=north] (note) at (1.95,-2.5) {\small{Original Circuit}};
            \node [anchor=north] (note) at (6.5,-2.4) {\small{$\Phi_\varepsilon^{(t)}$ Channel}};
            \node [anchor=north] (note) at (14.7,-2.5) {\small{Tensor Network}};
        \end{scope}
    \end{tikzpicture}
    \vspace*{-0.4cm}
    \caption{Converting a random circuit ensemble $\varepsilon$ consisting of 2-local Haar random unitaries $U_i$ to its $t$-fold channel $\Phi_\varepsilon^{(t)}$, which can be written as a tensor network of identical tensors $G$ in the same position as the original unitaries.}
    \label{fig:stat_mech_process}
\end{figure}

The averaged tensors $\left \langle U_i^{\otimes t,t} \right\rangle_{U_i}$ are projectors on to the commutant. Let us define the state
\begin{gather}
\ket{\sigma} = \frac{1}{\sqrt{q}^t} \sum_{\vec{i} \in \mathbb{Z}_q^t} \ket{\vec{i}} \otimes \ket{\sigma(\vec{i})}
\end{gather}
where \(\sigma\) is some permutation in the symmetric group $S_t$. The permutation \(\sigma\) acts on the multi-index $\vec{i}$ by reordering its elements. In terms of these states, the moment operator corresponding to a random 2-site gate may be written
\begin{gather}
    G \equiv \left \langle U_i^{\otimes t,t} \right\rangle = \sum_{\sigma \in S_t} \Wg(\sigma^{-1} \tau, q^2)\ket{\sigma}^{(t)\otimes 2}\bra{\tau}^{(t)\otimes 2}
\end{gather}
where \(\Wg\) is the Weingarten function defined in ref.~\onlinecite{Collins2006}, multiplied by a pseudo-normalizing\footnote{In this paper we define the Weingarten function with an extra factor of $d^t$ compared to the traditional definition. We call this the \textit{pseudo-normalized} version of the Weingarten function, because this way $\Wg(\mathbb{I}, q)$ approaches $1$ as $d \rightarrow \infty$.} factor $d^{t}$. It will also be useful to note that
\[\braket{\sigma|\tau} = q^{-|\sigma^{-1}\tau|}\]
where $|\sigma^{-1}\tau|$ is the minimal number of transpositions required to form $\sigma^{-1}\tau$.

\subsection{Brickwork and staircase}
\begin{definition}
    \label{def:1d_brickwork_v2}
    The \textbf{1D brickwork architecture} is a random quantum circuit architecture. Odd-numbered layers have gates between sites $i$ and $i+1$ for each odd $i < N$, while even-numbered layers have gates between sites $i$ and $i+1$ for each even $i < N$.
\end{definition}
\begin{definition}
    \label{def:staircase_v2}
    The \textbf{staircase architecture}, illustrated in Figure \ref{fig:staircase_architecture}, is the random quantum circuit architecture in which layer $i$ consist of a single gate between sites $i$ and $i+1$.
\end{definition}

Let $L_O$ and $L_E$ refer to the ``even'' and ``odd'' layers of the 1D brickwork architecture, respectively. The transfer matrix associated with a single period of the architecture is then \(T_{1DB} = L_O L_E\). Let $\Tstair$ refer to the moment operator of the staircase architecture.
\begin{lemma}
    \label{lemma:brickwork_is_staircase}
    The eigenspectra of \(L_O L_E\) and the staircase transfer matrix $\Tstair$ are the same.
\end{lemma}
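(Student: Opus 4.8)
The plan is to exhibit a similarity transformation (or a cyclic-rearrangement argument) relating $L_O L_E$ to $\Tstair$. The key observation is that both operators are products of the same elementary tensors: each single-gate moment operator $G_{i,i+1}$ acting on the bond between sites $i$ and $i+1$. In the brickwork, $L_E = \prod_{i \text{ even}} G_{i,i+1}$ and $L_O = \prod_{i \text{ odd}} G_{i,i+1}$, where within each layer the gates commute because they act on disjoint pairs of sites. In the staircase, $\Tstair = G_{N-1,N}\cdots G_{2,3}\,G_{1,2}$, a fully ordered product. So the lemma amounts to the claim that these two particular orderings of the $N-1$ operators $\{G_{i,i+1}\}$ have the same spectrum.

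First I would recall the standard fact that for any two operators $A,B$, the products $AB$ and $BA$ have the same nonzero spectrum (and in the finite-dimensional square case here, the same spectrum including multiplicity). I would then try to reorder the brickwork product into the staircase product by a sequence of such cyclic moves: write $L_O L_E$, and repeatedly peel a factor off one end and move it to the other, each time invoking $AB \sim BA$, until the product is in staircase order. The bookkeeping is easiest if one groups the even-layer gates and odd-layer gates and tracks that each $G_{i,i+1}$ commutes with every $G_{j,j+1}$ for which $\{i,i+1\}\cap\{j,j+1\}=\emptyset$; the only non-commuting pairs are neighbors sharing a site, and those are exactly the consecutive gates in the staircase, which never need to be swapped past each other. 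An alternative, cleaner route is to observe directly that $\Tstair$ and $T_{1DB}$ are conjugate by a permutation of the tensor-network legs — i.e., the two tensor networks are related by an isotopy of the diagram that does not cross any tensors — so they are literally the same operator up to relabeling, hence isospectral.

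I would carry this out as: (1) set up notation for the $G_{i,i+1}$ and record the disjoint-support commutation relations; (2) state the $AB\sim BA$ lemma; (3) give the explicit reordering from $L_O L_E$ to staircase order, checking that every swap used is either a commutation of disjoint-support gates or an application of $AB \sim BA$ at the ends of the product; (4) conclude equality of spectra. The main obstacle is purely combinatorial — making the reordering argument airtight, in particular handling the boundary gates (the first and last sites, where the parity structure is slightly irregular) and being careful that "same spectrum" includes algebraic multiplicities, not just the set of eigenvalues. Since the operators are not normal, one should phrase the multiplicity claim via the characteristic polynomial: $AB$ and $BA$ have the same characteristic polynomial in finite dimensions, which is preserved under the chain of moves, giving the full eigenspectrum equality that the lemma asserts.
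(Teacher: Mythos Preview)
Your proposal is correct and is essentially the same as the paper's argument: the paper simply cites this as Theorem~1 of Bensa--\v{Z}nidari\v{c} and remarks that ``the brickwork and staircase are related by a cyclic permutation,'' which is exactly your combination of disjoint-support commutations with repeated $AB\sim BA$ moves at the ends of the product. Your extra care about algebraic multiplicities via the characteristic polynomial is a welcome addition that the paper leaves implicit.
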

This is Theorem 1 of ref.~\onlinecite{Bensa2021}. It is proven by observing that the brickwork and staircase are related by a cyclic permutation.

\section{Bounding the staircase by a single gate}\label{section:single_gate_bounding}

\begin{figure}
    \centering
    \begin{tikzpicture}
        \begin{scope}
            \node[anchor=north west,inner sep=0] (image_a) at (0.4,0)
            {\includegraphics[width=0.2\columnwidth]{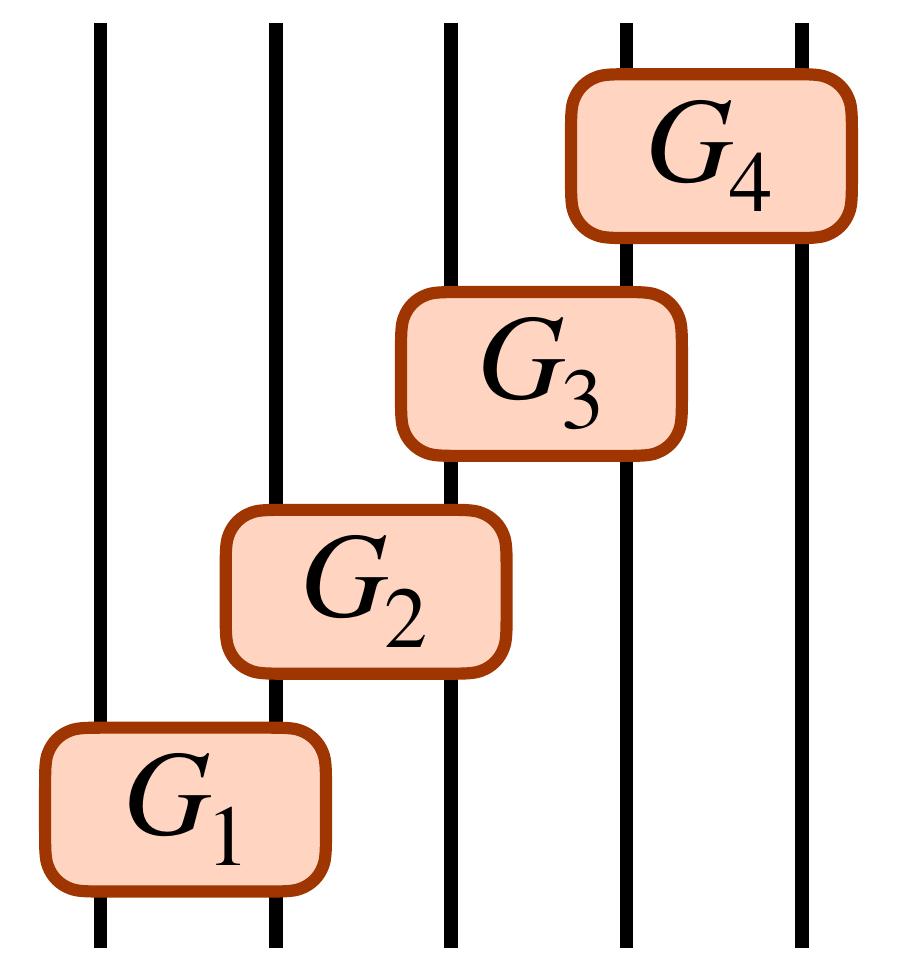}};
            \node [anchor=north west] (note) at (0,-0.1) {\small{\textbf{a)}}};
        \end{scope}
        \begin{scope}
            \node[anchor=north west,inner sep=0] (image_b) at (4.7,0)
            {\includegraphics[width=0.17\columnwidth]{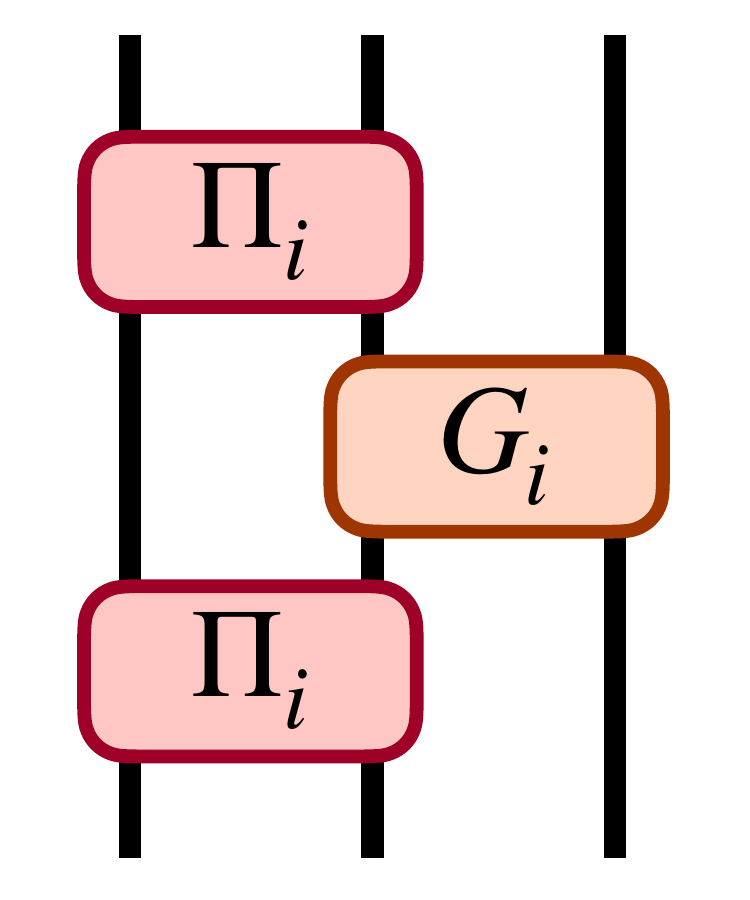}};
            \node [anchor=north west] (note) at (4.3,-0.1) {\small{\textbf{b)}}};
        \end{scope}
    \end{tikzpicture}
    \vspace*{-0.4cm}
    \caption{a) The staircase architecture on 5 sites. b) The effective 3-site gate operator $K_i \equiv \Pi_i G_i \Pi_i$.}
    \label{fig:staircase_architecture}
\end{figure}

In this section, we will develop a reduction from an arbitrary staircase to the three-site staircase. The staircase architecture consists of $N-1$ gates $G_1, ..., G_{N-1}$, where gate $G_i$ acts on sites $i,i+1$ (see Figure~\ref{fig:staircase_architecture}a.)

Let \(\Pi_i\) be the orthogonal projector to the span of the uniform permutations on the first $i$ sites. Since the permutation states which are uniform on the first $i$ sites are a subset of the permutation states which are uniform on the first $i+1$ sites, we have a hierarchy on the \(\Pi_i\). 
\begin{lemma} \label{lemma:projector_redundancy}
    The projectors and gate follow a subspace hierarchy:
    \begin{itemize}
        \item $\Pi_i \Pi_j = \Pi_j \Pi_i = \Pi_j$ for all $i \leq j$.
        \item $\Pi_j G_i = G_i \Pi_j = \Pi_j$ for all $i < j$. 
    \end{itemize}
\end{lemma}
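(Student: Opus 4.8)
The plan is to read both identities directly off the ranges of the operators, using only that each $\Pi_i$ is an orthogonal projector and that $G_i = \langle U_i^{\otimes t,t}\rangle$ is the projector onto the commutant on sites $i,i+1$. Write $V_i := \image(\Pi_i)$ for the span of the product permutation states $\ket{\sigma_1}_1 \otimes \cdots \otimes \ket{\sigma_N}_N$ whose first $i$ tensor factors coincide (all equal to some common $\sigma \in S_t$); recall also that on any consecutive pair of sites the $q^2$-dimensional permutation state factorizes as $\ket{\sigma}^{(t)\otimes 2} = \ket{\sigma}^{(t)} \otimes \ket{\sigma}^{(t)}$.

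For the first bullet I would observe that when $i \leq j$ every spanning vector of $V_j$ is already a spanning vector of $V_i$: forcing the first $j$ factors to equal a common permutation $\sigma$ forces in particular the first $i \leq j$ of them to equal $\sigma$, the remaining factors (the copies of $\sigma$ on sites $i+1,\dots,j$ together with whatever permutation states sit on sites $j+1,\dots,N$) still being permutation states. Hence $V_j \subseteq V_i$, i.e.\ $\image(\Pi_j) \subseteq \image(\Pi_i)$. Since $\Pi_i$ restricts to the identity on its own range, $\Pi_i \Pi_j = \Pi_j$; and $(\Pi_j \Pi_i)^\dagger = \Pi_i^\dagger \Pi_j^\dagger = \Pi_i \Pi_j = \Pi_j$, so $\Pi_j \Pi_i = \Pi_j^\dagger = \Pi_j$, using that both projectors are Hermitian.

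For the second bullet the key input is that each permutation state $\ket{\sigma}$ lies in the commutant, hence is fixed by $U^{\otimes t,t}$ for every unitary $U$; averaging gives $G\ket{\sigma}^{(t)\otimes 2} = \ket{\sigma}^{(t)\otimes 2}$, and since $G_i$ acts as the identity off sites $i,i+1$ it fixes any state whose restriction to sites $i,i+1$ equals some $\ket{\sigma}^{(t)\otimes 2}$. Now take $i < j$, so $i+1 \leq j$. A spanning vector of $V_j$ has its $i$-th and $(i+1)$-th tensor factors both equal to the common permutation state $\ket{\sigma}$, hence restricts to $\ket{\sigma}^{(t)\otimes 2}$ on sites $i,i+1$ (and to a product of permutation states elsewhere), so $G_i$ fixes it. As these vectors span $V_j = \image(\Pi_j)$, we get $G_i \Pi_j = \Pi_j$. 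Finally $(\Pi_j G_i)^\dagger = G_i^\dagger \Pi_j^\dagger = G_i \Pi_j = \Pi_j$, so $\Pi_j G_i = \Pi_j^\dagger = \Pi_j$.

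I do not anticipate a real obstacle — this is structural bookkeeping — but two points deserve a line of care. First, the direction of the inclusion: the range shrinks as the index grows, which is why $\Pi_i \Pi_j = \Pi_j$ for $i \leq j$ rather than the reverse. Second, the step ``$G_i$ fixes each $\ket{\sigma}^{(t)\otimes 2}$'' must be justified from $U^{\otimes t,t}$-invariance of permutation states (equivalently, from $G_i$ being the projector onto the commutant), not from any orthonormality, since the states $\ket{\sigma}$ are not mutually orthogonal and one cannot argue via projecting onto an orthonormal piece.
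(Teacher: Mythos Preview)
Your proposal is correct and follows the same subspace-hierarchy reasoning the paper sketches in the sentence preceding the lemma (the paper does not give a separate proof block for this lemma, only the remark that the uniform-permutation spans nest). Your write-up is simply a careful expansion of that remark, including the Hermiticity/adjoint step and the explicit check that $G_i$ fixes the spanning vectors of $V_j$ when $i<j$; nothing differs in substance.
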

With this lemma, we define
\begin{itemize}
    \item $P_i = \Pi_i - \Pi_{i+1}$ is the projector onto states which are uniform on the first $i$ sites, but orthogonally not uniform on the first $i+1$ sites. We will define $\Pi_{N+1} = 0$, so that $P_N = \Pi_N$. This is consistent with  Lemma~\ref{lemma:projector_redundancy}.
    \item $\lambda_i = ||P_i G_i P_i||_{\infty} = ||\Pi_i G_i \Pi_i - \Pi_{i+1}||_{\infty}$ is the leading singular value (and eigenvalue) of $G_i$ restricted to the image of $P_i$. Here $||A||_{\infty}$ denotes the spectral norm of $A$.
    \item $\mu_i = ||(I-\Pi_i) G_i \Pi_i||_{\infty} = ||\Pi_i G_i (I-\Pi_i)||_{\infty}$.
\end{itemize}
It will also be useful to define the gate $G_i$ restricted to the space projected by $\Pi_i$ as well:
\begin{itemize}
    \item $K_i = \Pi_i G_i \Pi_i$ (i.e. $\lambda_i$ is the largest singular value of $K_i - \Pi_{i+1}$)
\end{itemize}
\begin{lemma} \label{lemma:mu_bound}
    Suppose $\lambda_i \leq \frac{1}{2}$. Then
    $\mu_i = \sqrt{\lambda_i(1-\lambda_i)}$
\end{lemma}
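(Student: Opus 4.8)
The plan is to use the fact that $G_i$ is an orthogonal projector (onto the commutant of the local gate) and to read off the relation between $\lambda_i$ and $\mu_i$ by compressing the identity $G_i^2 = G_i$ to the relevant subspaces. First I would record what Lemma~\ref{lemma:projector_redundancy} gives with $j = i+1 > i$: namely $G_i \Pi_{i+1} = \Pi_{i+1} G_i = \Pi_{i+1}$ and $\Pi_i \Pi_{i+1} = \Pi_{i+1}\Pi_i = \Pi_{i+1}$, so that $P_i \Pi_{i+1} = 0$ and $(I-\Pi_i)\Pi_{i+1} = 0$. Using $\Pi_i = P_i + \Pi_{i+1}$, this shows $(I-\Pi_i) G_i \Pi_i = (I-\Pi_i) G_i P_i + (I-\Pi_i)\Pi_{i+1} = (I-\Pi_i) G_i P_i$, hence $\mu_i = \|(I-\Pi_i) G_i P_i\|_\infty$. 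Writing $Y := (I-\Pi_i) G_i P_i$ and $X := P_i G_i P_i$, we then have $\mu_i = \|Y\|_\infty$ and $\lambda_i = \|X\|_\infty$.

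Next I would compress $G_i = G_i^2$ between $P_i$ on both sides, inserting the resolution of the identity $I = P_i + \Pi_{i+1} + (I-\Pi_i)$ in the middle:
\[
X = P_i G_i^2 P_i = P_i G_i P_i G_i P_i + P_i G_i \Pi_{i+1} G_i P_i + P_i G_i (I-\Pi_i) G_i P_i .
\]
The middle term vanishes since $P_i G_i \Pi_{i+1} = P_i \Pi_{i+1} = 0$; the first term equals $X^2$ because $P_i^2 = P_i$; and the last term equals $Y^\dagger Y$ because $(I-\Pi_i)^2 = I - \Pi_i$. Hence $Y^\dagger Y = X - X^2 = X(I-X)$, and therefore $\mu_i^2 = \|Y\|_\infty^2 = \|Y^\dagger Y\|_\infty = \|X(I-X)\|_\infty = \max_{x \in \mathrm{spec}(X)} x(1-x)$.

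Finally I would pin down the spectrum of $X$. Since $G_i$ is a projector it is positive semidefinite, so the compression $X = P_i G_i P_i$ is positive semidefinite, and $\|X\|_\infty = \lambda_i$ gives $\mathrm{spec}(X) \subseteq [0,\lambda_i]$ with $\lambda_i \in \mathrm{spec}(X)$. This is exactly where the hypothesis enters: $\lambda_i \leq \half$ places the whole spectrum of $X$ in $[0,\half]$, on which $x \mapsto x(1-x)$ is increasing, so the maximum above is attained at $x = \lambda_i$ and equals $\lambda_i(1-\lambda_i)$; thus $\mu_i = \sqrt{\lambda_i(1-\lambda_i)}$. I do not anticipate a real obstacle; the only point needing care is confirming that $\lambda_i$ is genuinely an eigenvalue of $X$ rather than merely a bound on the spectrum — which holds because $X$ is Hermitian and positive semidefinite — together with the observation that without the assumption $\lambda_i \leq \half$ one could only conclude $\mu_i^2 = \max_{x\in\mathrm{spec}(X)} x(1-x)$, which may be as large as $\tfrac14$.
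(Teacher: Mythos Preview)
Your proof is correct and follows essentially the same route as the paper's: both exploit $G_i^2=G_i$ to show $\mu_i^2=\|X-X^2\|_\infty$ for the Hermitian operator $X=P_iG_iP_i=K_i-\Pi_{i+1}$, then use monotonicity of $x(1-x)$ on $[0,\tfrac12]$ to identify the maximum with $\lambda_i(1-\lambda_i)$. The only cosmetic difference is that the paper computes $\Pi_iG_i(I-\Pi_i)G_i\Pi_i=K_i-K_i^2$ directly and then subtracts off $\Pi_{i+1}$, whereas you insert the resolution $I=P_i+\Pi_{i+1}+(I-\Pi_i)$ and work with $P_i$ from the start; your version is arguably a touch cleaner in that it makes the positive-semidefiniteness of $X$ (and hence $\mathrm{spec}(X)\subseteq[0,\lambda_i]$) explicit.
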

\begin{proof}
    We use the facts that projectors are Hermitian and $||f(H)||_{\infty} = \max_{\lambda} f(\lambda)$ for analytic functions $f$ and Hermitian operators $H$ with eigenvalues $\lambda$. 
    This gives
    \begin{align*}
        \mu_i &= ||(I-\Pi_i) G_i \Pi_i||_{\infty}\\
        \mu_i^2 &= ||\Pi_i G_i (I-\Pi_i)^2 G_i \Pi_i||_{\infty}\\
        &= ||\Pi_i G_i \Pi_i - (\Pi_i G_i \Pi_i)^2||_{\infty}\\
        &= ||(\Pi_i G_i \Pi_i - \Pi_{i+1}) - (\Pi_i G_i \Pi_i - \Pi_{i+1})^2||_{\infty}\\
        &= ||(K_i - \Pi_{i+1}) - (K_i - \Pi_{i+1})^2||_{\infty}\\
        &= \max_{\lambda} \lambda(1-\lambda)
    \end{align*}
    for eigenvalues $\lambda$ of $K_i - \Pi_{i+1}$. If all eigenvalues are below $0.5$, then the function \(\lambda(1-\lambda)\) is monotonically  increasing, so the maximum is attained by the largest eigenvalue. 
\end{proof}
Next, we note that we can establish the following resolution of the identity:
\begin{gather}
    I = \Pi_1 = P_1 + \Pi_2 = ... = \sum_{i=1}^N P_i + \Pi_{N+1} = \sum_{i=1}^N P_i
\end{gather}
Moreover, for all $j > i$,
\begin{gather}
    P_j P_i = P_i P_j = \Pi_j - \Pi_{j+1} - (\Pi_j - \Pi_{j+1}) = 0
\end{gather}
so each $P_i$ is an orthogonal idempotent. It follows that for any vector $v$ in the permutation space, we have
\begin{gather}
    ||v||^2 = \sum_i ||P_i v||^2
\end{gather}
Finally, we note as a corollary of Lemma~\ref{lemma:projector_redundancy},
\begin{corollary} \label{lemma:commutation_relations}
    $P_j$ absorbs all $G_i$ for $i < j$, similar to $\Pi_j$ in Lemma~\ref{lemma:projector_redundancy}. In addition, $P_j$ commutes with all $G_i$ for $i > j+1$. In other words,
    \begin{align}
        P_j G_i &= G_i P_j = P_j \qquad i < j \\
        P_j G_i &= G_i P_j  \qquad \qquad  i > j+1
    \end{align}
\end{corollary}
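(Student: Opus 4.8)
The plan is to unfold $P_j = \Pi_j - \Pi_{j+1}$ and reduce each of the two claims to things that are already available: the absorption identities come straight from Lemma~\ref{lemma:projector_redundancy}, and the commutation identity comes from the spatial locality of $G_i$. There is no genuinely hard step here; the only thing to watch is that the index constraint in each clause is strong enough to control $\Pi_{j+1}$ and not merely $\Pi_j$.

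For the absorption identities, i.e. $i < j$: first I would observe that $i < j$ entails both $i < j$ and $i < j+1$, so Lemma~\ref{lemma:projector_redundancy} applied to each of $\Pi_j$ and $\Pi_{j+1}$ gives $\Pi_j G_i = G_i \Pi_j = \Pi_j$ and $\Pi_{j+1} G_i = G_i \Pi_{j+1} = \Pi_{j+1}$. Subtracting the two,
\begin{gather}
P_j G_i = \Pi_j G_i - \Pi_{j+1} G_i = \Pi_j - \Pi_{j+1} = P_j,
\end{gather}
and the computation for $G_i P_j$ is identical. The boundary convention $\Pi_{N+1} = 0$ is consistent with this, since then $\Pi_{N+1} G_i = 0 = \Pi_{N+1}$ as well, so $P_N = \Pi_N$ also absorbs every $G_i$ with $i < N$.

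For the commutation identity, i.e. $i > j+1$: here I would switch from the subspace hierarchy to locality. Since a configuration that is a uniform permutation on the first $m$ sites places no constraint on sites $m+1,\dots,N$, the projector $\Pi_m$ factors as a projector acting only on the first $m$ sites, tensored with the identity on the rest. The gate $G_i$ is supported on sites $i$ and $i+1$, and $i > j+1$ forces $i \geq j+2$, so both of those sites lie strictly beyond site $j+1$; hence $G_i$ is supported entirely within the set of sites on which each of $\Pi_j$ and $\Pi_{j+1}$ acts as the identity, and therefore commutes with both of them. Subtracting,
\begin{gather}
P_j G_i = \Pi_j G_i - \Pi_{j+1} G_i = G_i \Pi_j - G_i \Pi_{j+1} = G_i P_j.
\end{gather}

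Finally, for completeness I would point out why $i = j+1$ is deliberately excluded from both statements, which is really the whole reason the $P_j$ are introduced: $G_{j+1}$ acts on sites $j+1$ and $j+2$, so its support overlaps that of $\Pi_{j+1}$ at site $j+1$, which breaks the locality argument, while it is not contained in the first $j+1$ sites, which breaks the absorption argument; and indeed $P_j$ and $G_{j+1}$ do not commute in general. So the only ``obstacle'' in this corollary is bookkeeping: checking that the inequality assumed in each clause is the sharper one ($i < j$ rather than $i \le j$ in the first, $i > j+1$ rather than $i > j$ in the second) so that it also controls $\Pi_{j+1}$.
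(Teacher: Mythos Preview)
Your proof is correct and is essentially what the paper intends: it states the result as a corollary of Lemma~\ref{lemma:projector_redundancy} with no further argument, and your first clause is exactly that corollary applied to $\Pi_j$ and $\Pi_{j+1}$ and subtracted. For the second clause you invoke spatial locality of $G_i$ rather than Lemma~\ref{lemma:projector_redundancy}; this is the right move (Lemma~\ref{lemma:projector_redundancy} alone does not give commutation), and the paper's attribution is simply a little loose on this point.
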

Define the partial staircase operator
\begin{gather}
    T_j = G_j G_{j-1} ... G_1
\end{gather}
with $T_{N-1} \equiv T$ being the full staircase operator. We also define $s_{ij} = ||P_i T P_j||_{\infty}$. Because of Lemma~\ref{lemma:commutation_relations}, 
\begin{align}
    s_{ij} &= ||P_i G_{N-1} ... G_1 P_j||_{\infty}\n
    &= ||P_i G_{N-1} ... G_j P_j G_{j-1} ... G_1||_{\infty}\n
    &\leq ||G_{N-1} ... G_{i+2} P_i G_{i+1}G_i...G_j P_j||_{\infty}\n
    &\leq ||P_i G_{i+1}G_i...G_j P_j||_{\infty} \label{eq:sij_expansion}
\end{align}
for all $1 \leq i < N-1$, $1 \leq j < N$. If $i+1 < j$, there are no gates in between the two projectors, as every gate commutes with one or the other.. This leads to the following relations:
\begin{lemma}
    For all $1 \leq i,j < N$, the elements $s_{ij}$ are bounded as follows:
    \begin{itemize}
        \item For $i < j-1$, $s_{ij} = 0$.
        \item For $i = j-1$, $s_{ij} \leq \mu_j$.
        \item For $i \geq j$, $s_{ij} \leq \lambda_j \prod_{k=j+1}^i \mu_k$.
    \end{itemize}
\end{lemma}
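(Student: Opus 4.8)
The plan is to use the expansion in Equation~\eqref{eq:sij_expansion}, namely $s_{ij} \leq \|P_i G_{i+1} G_i \cdots G_j P_j\|_\infty$, and then push the projectors through the chain of gates using Corollary~\ref{lemma:commutation_relations} and Lemma~\ref{lemma:projector_redundancy}, handling the three cases in turn. For the case $i < j-1$, I would note that every gate $G_k$ appearing in the product $T = G_{N-1}\cdots G_1$ satisfies either $k > i+1$ (so $G_k$ commutes with $P_i$ by Corollary~\ref{lemma:commutation_relations}) or $k \leq i+1$; in the latter subcase, since $i < j-1$ we have $k \leq i+1 < j$, so $G_k$ is absorbed by $P_j$ on the right ($G_k P_j = P_j$ when $k < j$). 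Commuting all the ``large-index'' gates past $P_i$ to the left and absorbing all the ``small-index'' gates into $P_j$ on the right, we are left with $\|(\text{stuff}) P_i P_j (\text{stuff})\|_\infty$, and since $i < j$ we have $P_i P_j = 0$, giving $s_{ij} = 0$.

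For the case $i = j-1$: here Equation~\eqref{eq:sij_expansion} reads $s_{ij} \leq \|P_i G_{i+1} G_i P_j\|_\infty = \|P_{j-1} G_j G_{j-1} P_j\|_\infty$ (substituting $i=j-1$). Now $G_{j-1} P_j = P_j$ by Lemma~\ref{lemma:projector_redundancy} (since $j-1 < j$), so this equals $\|P_{j-1} G_j P_j\|_\infty$. Next I would use $P_j = \Pi_j - \Pi_{j+1}$ and the facts $P_{j-1}\Pi_{j+1} = 0$ (as $P_{j-1}$ kills anything in the image of $\Pi_j \supseteq \Pi_{j+1}$, since $P_{j-1} = \Pi_{j-1} - \Pi_j$ and $\Pi_j \Pi_{j+1} = \Pi_{j+1}$ so $P_{j-1}\Pi_{j+1} = \Pi_{j-1}\Pi_{j+1} - \Pi_j \Pi_{j+1} = \Pi_{j+1} - \Pi_{j+1} = 0$) and $P_{j-1}\Pi_j = 0$ similarly, to write $P_{j-1} G_j P_j = P_{j-1} G_j \Pi_j = P_{j-1}(G_j \Pi_j - \Pi_j) = P_{j-1}(\Pi_j G_j \Pi_j - \Pi_j) \cdot(\text{?})$ — more carefully, since $P_{j-1}$ sits to the left and annihilates $\Pi_j$, I can insert $\Pi_j + (I - \Pi_j)$ and observe $P_{j-1}\Pi_j = 0$, leaving $P_{j-1}(I-\Pi_j) G_j \Pi_j$, and since $\|P_{j-1}\|_\infty \leq 1$ this is at most $\|(I-\Pi_j) G_j \Pi_j\|_\infty = \mu_j$.

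For the case $i \geq j$: Equation~\eqref{eq:sij_expansion} gives $s_{ij} \leq \|P_i G_{i+1} G_i \cdots G_j P_j\|_\infty$. The idea is to peel off gates one at a time from the left. Write $P_i G_{i+1} = P_i (I - \Pi_i) G_{i+1}$ — no wait; rather, insert $I = \Pi_i + (I-\Pi_i)$ after $P_i G_{i+1}$... the cleaner route: $P_i G_{i+1} G_i \cdots G_j P_j = (P_i G_{i+1} \Pi_i)(\Pi_i G_i \cdots G_j P_j) + (P_i G_{i+1}(I-\Pi_i))(\cdots)$. The first term vanishes because $P_i \Pi_i$... hmm, actually $P_i = \Pi_i - \Pi_{i+1}$ is in the image of $\Pi_i$, so that doesn't vanish. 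Let me instead peel from the right end going left, or recognize the telescoping structure: the contribution of $G_{i+1}$ gives the factor $\mu_{i+1} = \|(I-\Pi_{i+1})G_{i+1}\Pi_{i+1}\|_\infty$... Actually the right framing is to insert $\Pi_{k}$ projectors between consecutive gates and bound $\|P_i G_{i+1}\Pi_{i+1}\|\cdot\|(I-\Pi_{?})\cdots\|$ — but the factors $\mu_k$ for $k = j+1,\dots,i$ and a single $\lambda_j$ strongly suggest: after moving $P_j$ through as much as possible, the innermost gate $G_j$ contributes $\|P_j G_j P_j\|_\infty \cdot$(stuff)$= \lambda_j \cdot$(stuff), while each gate $G_k$ for $j < k \le i$ contributes a $\mu_k$ from a $\|(I-\Pi_k) G_k \Pi_k\|$-type factor arising when we split on $\Pi_k$. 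I would make this precise by induction on $i - j$, writing $P_i G_{i+1}\cdots G_j P_j = P_i G_{i+1} \Pi_i \cdot(\Pi_i G_i \cdots G_j P_j) + P_i G_{i+1}(I - \Pi_i)\cdot((I-\Pi_i) G_i \cdots G_j P_j)$, noting $G_i \cdots G_j$ all commute with or are absorbed by $(I-\Pi_i)$...

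\textbf{Main obstacle.} The delicate point is the bookkeeping in the $i \ge j$ case: getting the telescoping to produce exactly one $\lambda_j$ (from the bottom gate $G_j$ restricted via $P_j$) and exactly one $\mu_k$ for each intermediate index, rather than over- or under-counting, and making sure the projector insertions are compatible with the hierarchy in Lemma~\ref{lemma:projector_redundancy} and Corollary~\ref{lemma:commutation_relations}. In particular one must be careful that $G_k$ for $k$ between $j$ and $i$ does \emph{not} commute with $\Pi_{k}$ or $\Pi_{k+1}$ in general (only with $\Pi_m$ for $m > k+1$), so the split on each $\Pi_k$ must be done in the right order — peeling from the outside in — and each step must use submultiplicativity $\|AB\|_\infty \le \|A\|_\infty \|B\|_\infty$ together with $\|(I-\Pi_k)G_k\Pi_k\|_\infty = \mu_k$ and the fact that the remaining projectors have norm $\le 1$. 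The other two cases are essentially immediate given the commutation lemmas.
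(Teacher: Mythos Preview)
Your treatment of the cases $i<j-1$ and $i=j-1$ is essentially the paper's argument and is correct.

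For $i\ge j$, however, none of your attempted decompositions go through, and the induction you sketch at the end contains an error: you write ``noting $G_i\cdots G_j$ all commute with or are absorbed by $(I-\Pi_i)$,'' but $G_i$ does \emph{not} commute with $\Pi_i$ (the gate $G_i$ straddles sites $i$ and $i+1$, so only $\Pi_m$ with $m\le i-1$ or $m\ge i+2$ behave nicely with it). So the split $P_i G_{i+1}\Pi_i(\cdots)+P_i G_{i+1}(I-\Pi_i)(\cdots)$ does not reduce cleanly, and the inductive step stalls.

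The paper's trick, which you are missing, is to insert \emph{all} the projectors at once using the redundancies in Lemma~\ref{lemma:projector_redundancy}. On the left one writes
\[
P_i=(I-\Pi_{i+1})\Pi_i=(I-\Pi_{i+1})\Pi_i\Pi_{i-1}\cdots\Pi_j,
\]
and on the right
\[
P_j=(I-\Pi_{j+1})\Pi_j=(I-\Pi_{i+1})(I-\Pi_i)\cdots(I-\Pi_{j+1})\Pi_j.
\]
Each $\Pi_k$ on the left commutes with $G_m$ for $m\ge k+1$, so it slides rightward until it sits immediately to the left of $G_k$; each $(I-\Pi_k)$ on the right commutes with $G_m$ for $m<k$, so it slides leftward until it sits immediately to the right of $G_k$. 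This produces the interleaved string
\[
[(I-\Pi_{i+1})G_{i+1}(I-\Pi_{i+1})]\,[\Pi_i G_i(I-\Pi_i)]\cdots[\Pi_{j+1}G_{j+1}(I-\Pi_{j+1})]\,[\Pi_j G_j\Pi_j].
\]
Since the projectors are idempotent, one may duplicate the rightmost $(I-\Pi_{j+1})$ so that the final factor becomes $(I-\Pi_{j+1})\Pi_j G_j\Pi_j=\Pi_j G_j\Pi_j-\Pi_{j+1}$, with norm $\lambda_j$; each bracket $[\Pi_k G_k(I-\Pi_k)]$ has norm $\mu_k$; the leftmost bracket has norm at most $1$. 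Submultiplicativity then gives exactly $\lambda_j\prod_{k=j+1}^i\mu_k$. The ``peel from the outside in'' intuition you voice is right, but it has to be executed via these redundant projector insertions rather than a binary $\Pi_i/(I-\Pi_i)$ split.
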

\begin{proof}
    Starting with the first case, if $i < j-1$ there are no gates in Equation~\ref{eq:sij_expansion}, so
    \begin{gather}
        s_{ij} \leq ||P_i G_{i+1}...G_j P_j||_{\infty} = ||P_i P_j||_{\infty} = 0
    \end{gather}
    For $i = j-1$, we have
    \begin{align*}
        s_{j-1,j} &\leq ||P_{j-1} G_j P_j||_{\infty} \\
        &\leq ||\Pi_{j-1} G_j \Pi_j - \Pi_j G_j \Pi_j||_{\infty}\\
        &\leq ||G_j \Pi_j - \Pi_j G_j \Pi_j||_{\infty}\\
        &\leq ||(I-\Pi_j) G_j \Pi_j ||_{\infty}\\
        &\leq \mu_j
    \end{align*}
    Finally, for $i \geq j$, we first assume $i < N-1$. Then we have
    \begin{align*}
        s_{i,j} &\leq ||P_i G_{i+1}...G_j P_j||_{\infty}\\
        &\leq ||(I-\Pi_{i+1})\Pi_i ... \Pi_j G_{i+1} ... G_j P_j||_{\infty}\\
        &\leq ||(I-\Pi_{i+1})G_{i+1}\Pi_i G_i \Pi_{i-1} ... \Pi_j G_j P_j||_{\infty}\\
        &\leq ||(I-\Pi_{i+1})G_{i+1}\Pi_i G_i \Pi_{i-1} ... \Pi_j G_j (I-\Pi_{i+1}) (I-\Pi_{i}) ... (I-\Pi_{j+1})\Pi_j||_{\infty}\\
        &\leq \bigg|\bigg|\left[(I-\Pi_{i+1})G_{i+1}(I-\Pi_{i+1})\right]\left[\Pi_i G_i (I-\Pi_i)\right]\left[\Pi_{i-1} G_{i-1} (I-\Pi_{i-1})\right] ... \left[\Pi_{j+1} G_{j+1} (I-\Pi_{j+1})\right]\Pi_j G_j \Pi_j\bigg|\bigg|_{\infty}\\
        &\leq \big|\big|(I-\Pi_{i+1})G_{i+1}(I-\Pi_{i+1})\big|\big|_{\infty} \mu_i\mu_{i-1} ... \mu_{j+1} \lambda_j\\
        &\leq \lambda_j \prod_{k=j+1}^i \mu_k.
    \end{align*}
    Where we have used the facts that the projector $(I-\Pi_i)$ absorbs $(I-\Pi_j)$ for all $i < j$, and commutes with $G_j$ for all $j < i$. 
    If $i = N-1$, gate $G_{i+1}$ does not exist, so $s_{ij} = ||P_i G_i ... G_j P_j||_{\infty}$ instead. This removes the $\big|\big|(I-\Pi_{i+1})G_{i+1}(I-\Pi_{i+1})\big|\big|_{\infty}$ term from the penultimate line of the previous derivation, which did not change the inequality anyways. 
\end{proof}
\begin{corollary} \label{corollary:gate_bound}
    Suppose there exists $\lambda \leq 0.5$ such that $\lambda_i \leq \lambda$ for all $i$ (and therefore by Lemma~\ref{lemma:mu_bound}, $\mu_i \leq \mu \equiv \sqrt{\lambda(1-\lambda)}$ for all $i$). Then for all $1 \leq i,j < N$,
    \begin{itemize}
        \item If $i < j-1$, $s_{ij} = 0$.
        \item If $i = j-1$, $s_{ij} \leq \mu$.
        \item If $i \geq j$, $s_{ij} \leq \lambda \mu^{i-j}$.
    \end{itemize}
\end{corollary}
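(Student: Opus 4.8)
The plan is to derive Corollary~\ref{corollary:gate_bound} directly from the (unlabelled) Lemma immediately preceding it, simply by replacing the index-dependent quantities $\lambda_j$ and $\mu_k$ by their uniform majorants. The only preliminary point to nail down is the bracketed claim in the statement that the hypothesis $\lambda_i \le \lambda \le \frac12$ forces $\mu_i \le \mu$. Since every $\lambda_i \le \frac12$, Lemma~\ref{lemma:mu_bound} applies and gives $\mu_i = \sqrt{\lambda_i(1-\lambda_i)}$. The function $x \mapsto x(1-x)$ is monotonically increasing on $[0,\frac12]$, so $\lambda_i \le \lambda$ yields $\lambda_i(1-\lambda_i) \le \lambda(1-\lambda)$ and hence $\mu_i \le \sqrt{\lambda(1-\lambda)} = \mu$ for all $i$.

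With this in hand I would run through the three cases of the preceding Lemma. For $i < j-1$ the Lemma already gives $s_{ij} = 0$, with nothing to add. For $i = j-1$ the Lemma gives $s_{ij} \le \mu_j$, and $\mu_j \le \mu$ finishes it. For $i \ge j$ the Lemma gives $s_{ij} \le \lambda_j \prod_{k=j+1}^{i}\mu_k$; bounding $\lambda_j \le \lambda$ and each factor $\mu_k \le \mu$ then gives $s_{ij} \le \lambda\,\mu^{\,i-j}$. The one bookkeeping point is that the product $\prod_{k=j+1}^{i}$ has exactly $i-j \ge 0$ factors, so the exponent is $i-j$; in the boundary case $i = j$ the product is empty and the bound reads $s_{jj} \le \lambda$, consistent with $\mu^{0} = 1$.

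There is essentially no obstacle here: the corollary is a routine specialization of the Lemma, with all three bounds obtained by substituting uniform bounds for the per-site quantities. The only care required is the elementary monotonicity observation used to propagate $\lambda_i \le \lambda$ into $\mu_i \le \mu$, and correctly counting the number of factors in the telescoping product.
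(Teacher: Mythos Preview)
Your proposal is correct and matches the paper's approach: the corollary is stated immediately after the Lemma without a separate proof, since it is exactly the specialization you describe—replacing each $\lambda_j$ and $\mu_k$ by the uniform bounds $\lambda$ and $\mu$. Your justification of $\mu_i \le \mu$ via the monotonicity of $x(1-x)$ on $[0,\tfrac12]$ is the one detail the paper leaves implicit in the parenthetical remark.
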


For the rest of this section we will suppose that such an upper bound $\lambda\leq 0.5$ exists. We thus have
\[||P_i T P_j ||_{\infty} \leq A_{ij}\]
where
\begin{equation}
\label{eq:aformat}
    A = \begin{bmatrix}
    \lambda & \lambda \mu & \lambda \mu^2 & \dots & \lambda \mu^{N-1} & \lambda \mu^N \\
    \mu & \lambda & \lambda \mu & \dots & \lambda \mu^{N-2} & \lambda \mu^{N-1}\\
    0 & \mu & \lambda & \dots & \lambda \mu^{N-3} & \lambda \mu^{N-2} \\
    \vdots & \vdots & \vdots & \ddots  & \vdots & \vdots \\
    0 & 0 & 0 & \dots & \mu & \lambda \\
\end{bmatrix}
\end{equation}

\begin{lemma}
    The largest eigenvalue of $A$ is an upper bound on the largest non-unit eigenvalue of $T$.
\end{lemma}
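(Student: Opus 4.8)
The plan is to turn an eigenvector of $T$ into a Collatz--Wielandt witness for the Perron root of the nonnegative matrix $A$. The one subtlety is that the block bounds $||P_iTP_j||_\infty\le A_{ij}$ only cover $i,j<N$, i.e.\ the block $(I-\Pi_N)T(I-\Pi_N)$, whereas $T$ has the eigenvalue $1$ on the leftover subspace $\image\Pi_N=\image P_N$; so $\image\Pi_N$ must be quarantined before the comparison with $A$ can be run. To do this, note that $T=G_{N-1}\cdots G_1$ involves only gates $G_i$ with $i<N$, so Lemma~\ref{lemma:projector_redundancy} gives $\Pi_N G_i=G_i\Pi_N=\Pi_N$ for each such gate, hence $T\Pi_N=\Pi_N T=\Pi_N$. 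The relation $T\Pi_N=\Pi_N$ says $\image\Pi_N$ is $T$-invariant with $T$ acting there as the identity, while $\Pi_N T=\Pi_N$ says that $v\in\ker\Pi_N=\image(I-\Pi_N)$ forces $\Pi_N(Tv)=\Pi_N v=0$, so $\image(I-\Pi_N)$ is $T$-invariant as well. Hence $T$ is block diagonal along $V=\image\Pi_N\oplus\image(I-\Pi_N)$, equal to the identity on the first summand and to $T':=T|_{\image(I-\Pi_N)}$ on the second; in particular every non-unit eigenvalue of $T$ is an eigenvalue of $T'$. Since $I-\Pi_N=\sum_{i=1}^{N-1}P_i$, we have $\image(I-\Pi_N)=\bigoplus_{i=1}^{N-1}\image P_i$, and it now suffices to bound the spectral radius $\rho(T')$.

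Next I would extract the matrix comparison. Let $\lambda$ be an eigenvalue of $T'$ with $|\lambda|=\rho(T')$; if $\rho(T')=0$ there is nothing to prove, so assume $\lambda\neq0$ and pick an eigenvector $u$, which lies in $\image(I-\Pi_N)$ because $u=\lambda^{-1}T'u$. Write $u=\sum_{i=1}^{N-1}P_iu$; since $(I-\Pi_N)u=u$, the equation $T'u=\lambda u$ reads $(I-\Pi_N)Tu=\lambda u$. Applying $P_i$ for $i\le N-1$ and using $P_i(I-\Pi_N)=P_i$ gives $\sum_{j=1}^{N-1}P_iTP_ju=\lambda P_iu$, and taking norms with $||P_iTP_j||_\infty\le A_{ij}$ from Corollary~\ref{corollary:gate_bound} yields, for every $i$,
\[
|\lambda|\,||P_iu|| \;=\; ||P_iTu|| \;\le\; \sum_{j=1}^{N-1}||P_iTP_j||_\infty\,||P_ju|| \;\le\; \sum_{j=1}^{N-1}A_{ij}\,||P_ju|| .
\]
With $w_i:=||P_iu||\ge 0$, and $w\neq 0$ because $u\neq 0$, this is exactly the entrywise inequality $Aw\ge|\lambda|\,w$.

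Finally I would close with Perron--Frobenius. The inequality $Aw\ge|\lambda|\,w$ with $A\ge 0$ and $0\neq w\ge 0$ is the hypothesis of the Collatz--Wielandt lower bound, so $\rho(A)\ge|\lambda|$: iterating gives $A^kw\ge|\lambda|^k w\ge 0$ for all $k$, so $||A^k||$ grows at least like $|\lambda|^k$ and $\rho(A)=\lim_k||A^k||^{1/k}\ge|\lambda|$ by Gelfand's formula, with no irreducibility of $A$ needed. Because $A$ is nonnegative, $\rho(A)$ is itself an eigenvalue of $A$ and dominates the modulus of every eigenvalue of $A$, i.e.\ it is the largest eigenvalue of $A$. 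Combining, the largest non-unit eigenvalue of $T$ is at most $\rho(T')=|\lambda|\le\rho(A)$, the largest eigenvalue of $A$, as claimed. The main obstacle is just the bookkeeping of the first paragraph: without it, applying the $P_i$-comparison to an arbitrary eigenvector of $T$ would let the unit eigenvalue leak in (any attempt to enlarge $A$ by an index $N$ fails, since $||P_NTP_N||_\infty=1$, not $\lambda$, as $T$ restricts to the identity on $\image\Pi_N$), so confining the argument to $\image(I-\Pi_N)$ --- where the Corollary~\ref{corollary:gate_bound} bounds genuinely hold --- is what makes the comparison with $A$ work.
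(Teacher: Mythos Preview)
Your proof is correct and follows essentially the same route as the paper: take an eigenvector of $T$, form the vector of block norms $w_i=\|P_iu\|$, get the entrywise inequality $Aw\ge|\lambda|w$, and conclude $|\lambda|\le\rho(A)$ by a Collatz--Wielandt/Perron--Frobenius argument. Two points of comparison are worth noting. First, the paper's one-line proof just invokes Theorem~\ref{thm:block_norm_bound}, which as stated bounds the \emph{largest} eigenvalue of $T$; you correctly saw that to get at the largest \emph{non-unit} eigenvalue one must first quarantine $\image\Pi_N$ using $T\Pi_N=\Pi_N T=\Pi_N$, so that the comparison with the $(N-1)\times(N-1)$ matrix $A$ applies to $T'=T|_{\image(I-\Pi_N)}$. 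This is a genuine detail the paper elides. Second, for the final Perron step you argue by iteration $A^kw\ge|\lambda|^kw$ and Gelfand's formula, whereas the paper's Theorem~\ref{thm:block_norm_bound} proves the same inequality via the Collatz--Wielandt min-max characterization $\rho(A)=\max_{x\ge 0}\min_i (Ax)_i/x_i$ together with a subgradient argument. Your version is a bit more elementary and avoids any discussion of where the maximum of $f(\vec x)=\min_i(A\vec x)_i/x_i$ is attained; the paper's version has the mild advantage of not needing Gelfand's formula. Both are standard and equally valid.
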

\begin{proof}
    This follows from Theorem \ref{thm:block_norm_bound} of Appendix \ref{app:block_norm_bound}. 
\end{proof}

\begin{lemma}
    \label{lemma:gershgorin_block_bound}
    The largest eigenvalue of $A$ is at most $(1 + \sqrt{1 - \lambda})^2 \lambda$.
\end{lemma}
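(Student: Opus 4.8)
The plan is to control the spectral radius of $A$ --- which equals its largest eigenvalue since $A$ is entrywise nonnegative (Perron--Frobenius) --- via a weighted, Gershgorin-style row-sum bound. A direct row-sum bound on $A$ is too crude: the row sums approach $\mu+\lambda/(1-\mu)$ as the row index grows, which exceeds the claimed value. So I would first conjugate $A$ by a diagonal matrix --- which preserves the spectrum --- so as to rebalance the weight between the single super-diagonal entry $\mu$ and the geometric sub-diagonal tail $\lambda,\lambda\mu,\lambda\mu^2,\dots$ below the diagonal.

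Concretely, take $D=\operatorname{diag}(r,r^2,\dots,r^N)$ and $B=DAD^{-1}$, so that $B_{ij}=r^{\,i-j}A_{ij}$: the diagonal stays $\lambda$, the super-diagonal becomes $\mu/r$, and the sub-diagonal tail of row $i$ becomes $\lambda\sum_{k=0}^{i-1}(r\mu)^k$. For any $r$ with $r\mu<1$, each row sum of $B$ is then bounded --- uniformly in $i$ and $N$, since the finite truncations in the actual size-$N$ matrix only decrease the row sums --- by
\[
    f(r)\;\equiv\;\frac{\mu}{r}+\lambda\sum_{k=0}^{\infty}(r\mu)^k\;=\;\frac{\mu}{r}+\frac{\lambda}{1-r\mu}.
\]
The remaining step is a one-variable minimization: $f'(r)=0$ gives $(1-r\mu)^2=\lambda r^2$, hence $r=(\mu+\sqrt\lambda)^{-1}$ (which satisfies $r\mu<1$ whenever $\lambda>0$) and $f(r)=(\mu+\sqrt\lambda)^2$. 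Substituting $\mu=\sqrt{\lambda(1-\lambda)}$ gives $\sqrt\lambda+\mu=\sqrt\lambda\,(1+\sqrt{1-\lambda})$, so $(\mu+\sqrt\lambda)^2=\lambda(1+\sqrt{1-\lambda})^2$. Since the spectral radius of a nonnegative matrix is at most its maximum row sum and conjugation preserves the spectrum, this yields $\lambda_{\max}(A)=\lambda_{\max}(B)\le\lambda(1+\sqrt{1-\lambda})^2$.

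I do not expect a substantial obstacle: the content is essentially the right change of variables together with a routine optimization. The two points needing care are checking that the optimal $r$ keeps the geometric series convergent ($r\mu<1$), and observing that replacing the finite sub-diagonal sums of the true matrix by the infinite geometric series is an over-estimate, so the bound obtained for the ``infinite staircase'' applies at every finite $N$.
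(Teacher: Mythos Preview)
Your proposal is correct and follows essentially the same approach as the paper: conjugate $A$ by a diagonal matrix with geometric entries and apply the Gershgorin circle theorem, arriving at the bound $(\mu+\sqrt\lambda)^2=\lambda(1+\sqrt{1-\lambda})^2$. The paper simply states the diagonal matrix $M_{jj}=(\mu+\sqrt\lambda)^j$ (found by an unspecified optimization) and uses column sums on its displayed $A$, whereas you work with the transpose orientation, use row sums, and carry out the one-variable minimization explicitly to find $r=(\mu+\sqrt\lambda)^{-1}$; these are the same computation up to transposition.
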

\begin{proof}
Consider the diagonal matrix\footnote{This matrix was obtained by an optimization procedure, the details of which are not necessary for this proof. Numerics suggest that the resulting bound is tight in the limit of large $N$.} with elements
$$M_{ij} = \delta_{ij}(\mu + \sqrt{\lambda})^j$$
The eigenvalues of $B = M A M^{-1}$ are the same as those of $A$. If we apply the Gershgorin circle theorem to $B$, we obtain a bound on the largest eigenvalue of
\begin{align}
\max_i \left[\lambda + (1 - \delta_{iN})(\mu + \sqrt{\lambda})\mu + \sum_{j=1}^i \lambda \left(\frac{\mu}{\mu + \sqrt{\lambda}}\right)^j\right]
\\
< (\mu + \sqrt{\lambda})\mu + \sum_{j=0}^\infty \lambda \left(\frac{\mu}{\mu + \sqrt{\lambda}}\right)^j
\\ =
(\mu + \sqrt{\lambda})\mu + \frac{\lambda}{1 - \left(\frac{\mu}{\mu + \sqrt{\lambda}}\right)}
\\ = (\mu + \sqrt{\lambda})^2
\\ = (1 + \sqrt{1 - \lambda})^2 \lambda
\end{align}
\end{proof}

Combining these lemmas, we obtain a relationship between the staircase spectral gap and the spectral gaps of a family of smaller operators:
\begin{theorem}
    \label{thm:staircase_to_single}
    Let $T_N$ be the $N$-site staircase transfer matrix and define
    \begin{equation}
        \lambda = \sup_m ||K_m - \Pi_{m+1}||_{2}
    \end{equation}
    Then
    \begin{gather}
        \estair \leq (1 + \sqrt{1 - \lambda})^2 \lambda
    \end{gather}
\end{theorem}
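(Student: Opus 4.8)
The plan is to obtain Theorem~\ref{thm:staircase_to_single} as a short assembly of the results already proved in this section, since the substance has been front-loaded into those lemmas. First I would unpack the definition: $\lambda := \sup_m \|K_m - \Pi_{m+1}\|_2$ is exactly $\sup_m \lambda_m$, because $K_m - \Pi_{m+1} = \Pi_m G_m \Pi_m - \Pi_{m+1}$ and the bullet list after Lemma~\ref{lemma:projector_redundancy} identifies $\lambda_m = \|\Pi_m G_m \Pi_m - \Pi_{m+1}\|_\infty$. Hence $\lambda_m \le \lambda$ for every $m$, and (under the standing assumption $\lambda \le \tfrac{1}{2}$) Lemma~\ref{lemma:mu_bound} gives $\mu_m \le \mu := \sqrt{\lambda(1-\lambda)}$ for every $m$. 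Corollary~\ref{corollary:gate_bound} then says $\|P_i T_N P_j\|_\infty \le A_{ij}$, where $A$ is the matrix of Equation~\ref{eq:aformat} built from $\lambda$ and $\mu$.

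Second, I would chain the two remaining lemmas. The block-norm comparison preceding this theorem --- i.e.\ Theorem~\ref{thm:block_norm_bound} of Appendix~\ref{app:block_norm_bound} --- converts the entrywise operator-norm domination $\|P_i T_N P_j\|_\infty \le A_{ij}$ into a spectral statement: the largest eigenvalue of $A$ upper-bounds the largest non-unit eigenvalue of $T_N$, which is exactly $\estair$. Then Lemma~\ref{lemma:gershgorin_block_bound} --- through the explicit diagonal similarity $M_{ij} = \delta_{ij}(\mu+\sqrt{\lambda})^j$ and the Gershgorin circle theorem --- bounds the largest eigenvalue of $A$ by $(\mu+\sqrt{\lambda})^2 = (1+\sqrt{1-\lambda})^2\lambda$. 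Composing these inequalities yields $\estair \le (1+\sqrt{1-\lambda})^2\lambda$, as claimed. If one wants the statement without the section's standing hypothesis $\lambda \le \tfrac{1}{2}$, the complementary case is immediate: for $\tfrac{1}{2} < \lambda \le 1$ one checks $(1+\sqrt{1-\lambda})^2\lambda \ge 1$, while $\estair \le \|T_N\|_\infty \le 1$ since $T_N$ is a product of the norm-$1$ projectors $G_i$, so the bound holds trivially.

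The hard part is not in this theorem: the real work lives in Theorem~\ref{thm:block_norm_bound}, which must turn entrywise norm bounds into a genuine eigenvalue bound even though $T_N$ is non-normal, and in Lemma~\ref{lemma:gershgorin_block_bound}, where the carefully optimized conjugating matrix makes the Gershgorin row sums telescope into the clean closed form $(\mu+\sqrt{\lambda})^2$. The only thing to watch when writing out the assembly is keeping the indexing consistent across Corollary~\ref{corollary:gate_bound} and the matrix $A$ --- the boundary conventions $\Pi_{N+1}=0$, $P_N = \Pi_N$, and the absence of a gate $G_N$ when $i = N-1$ --- but that is bookkeeping rather than mathematics.
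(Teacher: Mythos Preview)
Your proposal is correct and follows essentially the same route as the paper: the paper's ``proof'' of Theorem~\ref{thm:staircase_to_single} is literally the sentence ``Combining these lemmas,'' and you have spelled out that combination (Corollary~\ref{corollary:gate_bound} $\to$ the block-norm lemma via Theorem~\ref{thm:block_norm_bound} $\to$ Lemma~\ref{lemma:gershgorin_block_bound}) accurately. Your explicit treatment of the $\lambda > \tfrac12$ case is a nice addition that the paper relegates to the post-theorem remark about the bound being nontrivial only for $\lambda \lesssim 0.2956$.
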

Note that this bound is nontrivial only when 
$$\lambda < \frac{1}{3}\left[2^\frac{1}{3}\left(13 + 3 \sqrt{33}\right)^\frac{1}{3} - 2^\frac{8}{3} \left(13 + 3 \sqrt{33}\right)^{-\frac{1}{3}} - 1\right] \approx 0.2956$$
Over this region, we have \(3.382 < (1 + \sqrt{1 - \lambda})^2 < 4\), so the bound is nearly proportional to \(\lambda\). Also note that $\Pi_{m+1}$ is the projector onto the unit eigenspace of $K_m$, so $||K_m - \Pi_{m+1}||_{2}$ is also the largest non-unit eigenvalue of $K_m$. 

\section{Properties of the eigenspaces}
We will now discuss some useful aspects of the structure of the gate and transfer operators. The first is a symmetry under \(S_t \times S_t\). 
\begin{definition}
    For \(\pi \in S_t\), define the \textit{global left-action}
    \begin{gather}
        \pi_L: \ket{\sigma_1, \sigma_2 ... \sigma_N } \rightarrow \ket{\pi \circ \sigma_1, \pi \circ \sigma_2 ... \pi \circ \sigma_N }
    \end{gather}
    and the analogous \textit{global right-action}
    \begin{gather}
        \pi_R: \ket{\sigma_1, \sigma_2 ... \sigma_N } \rightarrow \ket{\sigma_1 \circ \pi, \sigma_2 \circ \pi ... \sigma_N \circ \pi }
    \end{gather}
\end{definition}

\begin{lemma}
\label{lemma:symmetry}
The moment operator $G$ corresponding to a random unitary acting on $k$ of the sites is invariant under global left and right action:
    \[[G, \pi_R] = [G, \pi_L] = 0\]
\end{lemma}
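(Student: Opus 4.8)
The plan is to lift the abstractly-defined actions $\pi_L,\pi_R$ to honest operators on the full $2t$-copy Hilbert space $\mathcal{H}^{\otimes t}\otimes(\mathcal{H}^*)^{\otimes t}$ and to recognize that in that picture they are simply permutations of the tensor copies, which commute with $U^{\otimes t,t}$ copy-by-copy for \emph{every} $U$. Concretely, for $\pi\in S_t$ I would define $W_{\mathrm{ket}}(\pi)$ to be the operator that permutes the $t$ ``ket'' tensor factors (the ones carrying $U^{\otimes t}$), simultaneously on all $N$ sites, according to $\pi$, while acting as the identity on the $t$ ``bra'' factors; and symmetrically $W_{\mathrm{bra}}(\pi)$ permuting the bra factors only.

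The key elementary fact is that for any fixed operator $V$ (here $V=\mathbb{1}^{\otimes(N-k)}\otimes U$), the tensor power $V^{\otimes t}$ is invariant under simultaneous permutation of its factors; it suffices to check this on transpositions, where $\mathrm{SWAP}_{(ab)}\,(V\otimes\cdots\otimes V)=(V\otimes\cdots\otimes V)\,\mathrm{SWAP}_{(ab)}$ because $V$ occupies every slot. Hence $W_{\mathrm{ket}}(\pi)$ commutes with $U^{\otimes t}$ and acts trivially on $(U^*)^{\otimes t}$, so $[W_{\mathrm{ket}}(\pi),U^{\otimes t,t}]=0$ identically in $U$; likewise $[W_{\mathrm{bra}}(\pi),U^{\otimes t,t}]=0$. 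Averaging over the Haar measure then immediately gives $[W_{\mathrm{ket}}(\pi),G]=[W_{\mathrm{bra}}(\pi),G]=0$.

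It remains to identify these lifted operators with the $\pi_L,\pi_R$ of the definition. I would do this by computing their action on the permutation states $\ket{\sigma}=q^{-t/2}\sum_{\vec{i}}\ket{\vec{i}}\otimes\ket{\sigma(\vec{i})}$ (and tensor products thereof over sites): relabeling the summation index shows $W_{\mathrm{bra}}(\pi)\ket{\sigma}=\ket{\pi\circ\sigma}$ and $W_{\mathrm{ket}}(\pi)\ket{\sigma}=\ket{\sigma\circ\pi^{-1}}$, i.e. $W_{\mathrm{bra}}(\pi)$ realizes the global left-action $\pi_L$ and $W_{\mathrm{ket}}(\pi)$ realizes the global right-action of $\pi^{-1}$ (and $\{\pi^{-1}:\pi\in S_t\}=S_t$, so this loses nothing). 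Since $G$ is supported on the span of permutation states --- $G=\Pi_{\mathrm{comm}}G\,\Pi_{\mathrm{comm}}$ --- and each $W$ maps that span to itself, the full-space commutation relations restrict to $[G,\pi_L]=[G,\pi_R]=0$ on the permutation subspace, which is exactly where $\pi_L,\pi_R$ live.

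I do not anticipate a genuine obstacle; the only points needing care are bookkeeping: (i) pinning down the ket/bra $\leftrightarrow$ left/right dictionary and the $\pi$-versus-$\pi^{-1}$ conventions, which the permutation-state computation settles unambiguously once a convention for how $\sigma$ reorders $\vec{i}$ is fixed; and (ii) noting that a gate ``acting on $k$ sites'' contributes only identity factors on the remaining $N-k$ sites, and identity factors commute trivially with any permutation of copies, so the argument is unchanged. If anything, the mild subtlety is that $\pi_L,\pi_R$ are \emph{a priori} only defined on the permutation subspace, so one must observe (as above) that $G$ and the $W$'s both preserve that subspace before transferring the commutation relation down to it.
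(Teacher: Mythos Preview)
Your proposal is correct and takes a genuinely different route from the paper. The paper argues abstractly: $G$ is the \emph{orthogonal projector} onto $\vecspan\{\ket{\sigma}^{\otimes k}\}$; since that subspace is closed under $\pi_L,\pi_R$ and the induced metric $\langle\sigma|\tau\rangle^k=q^{-k|\sigma^{-1}\tau|}$ is $\pi_L,\pi_R$-invariant, the orthogonal projector onto it must commute with those actions. Your argument instead works \emph{pre-averaging}: you lift $\pi_L,\pi_R$ to copy-permutation unitaries $W_{\mathrm{bra}},W_{\mathrm{ket}}$ on the full $2t$-fold Hilbert space, note the elementary fact that any $V^{\otimes t}$ commutes with permutations of its tensor factors, and conclude $[W,U^{\otimes t,t}]=0$ for every individual $U$ before taking the Haar average.

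The paper's route is shorter and uses only the projector characterization of $G$, but requires one to separately check metric invariance. Your route is more hands-on but arguably more elementary: it never invokes the projector structure or the Weingarten formula, it yields commutation on the entire Hilbert space rather than just on the commutant span, and it makes transparent that the symmetry is already present realization-by-realization. Your bookkeeping on the $\pi$ vs.\ $\pi^{-1}$ dictionary and on restricting to the permutation subspace is correct and adequately careful.
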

\begin{proof}
    A $k$-site gate $G$ is a projector on to the span of the $k$-site uniform permutation states $\langle \ket{\sigma}^{\otimes k} : \sigma \in S_t\rangle$. The metric in this subspace, $\langle \sigma | \tau \rangle^k = q^{-k |\sigma^{-1}\tau|}$, is invariant under global left or right action. In addition, the space itself is closed under global left or right action. Therefore, the image of $G$, as a subspace, is preserved under global left or right action. The commutativity follows.
\end{proof}
Next we show that there is a natural embedding of transfer matrices at some moment $t$ and transfer matrices at any higher moment $t' > t$. Since we will be working with states and gates at different values of $t$, we will here note the $t$-value with a superscript, so that 
\begin{gather}
\ket{\sigma}^{(t)} = \frac{1}{\sqrt{q}^t} \sum_{\vec{i} \in \mathbb{Z}_q^t} \ket{\vec{i}} \otimes \ket{\sigma(\vec{i})}
\end{gather}
Here the tensor product is between copies of the random circuit and copies of its adjoint, not sites. Similarly we define
\begin{gather}
    G^{(t)} = \sum_{\sigma \in S_t} \Wg(\sigma^{-1} \tau, q^2)\ket{\sigma}^{(t)\otimes 2}\bra{\tau}^{(t)\otimes 2}
\end{gather}
where this tensor product is between sites, not copies.
Then
\begin{lemma}
\label{lemma:gate_t_blindness}
Let $k < t$ and \(\vec{\sigma} \in S_k^{\times N}\), where $S_k^{\times N}$ is the set of tuples$\{(\sigma_1, \sigma_2, ..., \sigma_N) : \sigma_i \in S_k\}$. Suppose
\begin{gather}
G^{(k)} \ket{\vec{\sigma}}^{(k)} = \sum_{\vec{\tau} \in S_k^N} c_{\vec{\tau}} \ket{\vec{\tau}}^{(k)}
\end{gather}
Then
\begin{gather}
    G^{(t)} \ket{\vec{\sigma}}^{(t)} = \sum_{\vec{\tau} \in S_k^N} c_{\vec{\tau}} \ket{\vec{\tau}}^{(t)}
\end{gather}
where the coefficients \(c_{\vec{\tau}}\) are the same.
\end{lemma}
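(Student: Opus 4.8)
The plan is to prove Lemma~\ref{lemma:gate_t_blindness} by working directly from the Weingarten expansion of $G^{(t)}$ and showing that the only permutations $\sigma \in S_t$ that contribute, when $G^{(t)}$ acts on a state built from permutations lying in $S_k \subseteq S_t$, are themselves in $S_k$, with coefficients matching the $k$-copy case. The key observation is that the state $\ket{\sigma}^{(t)}$ is an isometric image (up to the explicit normalization) of $\ket{\sigma}^{(k)}$ in the sense that the Gram matrix is ``the same function of the permutations'': $\iprod{\sigma}{\tau}^{(t)} = q^{-|\sigma^{-1}\tau|}$, and for $\sigma,\tau \in S_k$ this coincides with $\iprod{\sigma}{\tau}^{(k)}$ computed in $S_k$, because the number of transpositions needed to build $\sigma^{-1}\tau$ is the same whether we regard it as an element of $S_k$ or of the larger group $S_t$ (extra fixed points do not change the cycle type or the minimal transposition count). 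So the embedding $S_k \hookrightarrow S_t$ induces an isometry between $\vecspan\{\ket{\sigma}^{(k)} : \sigma \in S_k\}$ and $\vecspan\{\ket{\sigma}^{(t)} : \sigma \in S_k\}$.

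First I would write out $G^{(t)}\ket{\vec\sigma}^{(t)}$ using the definition: $G^{(t)}$ acts site-by-site, each site carrying a factor $\sum_{\rho\in S_t}\Wg(\rho^{-1}\pi, q^2)\ket{\rho}^{(t)\otimes 2}\bra{\pi}^{(t)\otimes 2}$ contracted against the incoming state. The contraction produces inner products $\bra{\pi}^{(t)\otimes 2}\ket{\vec\sigma\text{-data}}^{(t)\otimes 2}$, i.e. products of factors $q^{-2|\pi^{-1}\sigma_i|}$ (the square and the ``$\otimes 2$'' reflecting that each brickwork gate acts on a pair of sites, hence on a squared copy of the permutation space with parameter $q^2$). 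The entire computation of $G^{(t)}\ket{\vec\sigma}^{(t)}$, expanded in the (overcomplete) permutation-state basis, is therefore a function only of: (i) the Weingarten values $\Wg(\cdot, q^2)$ on the relevant permutations, and (ii) the inner products $q^{-2|\cdot|}$ among the relevant permutations. Both are ``insensitive to the ambient group'' as long as all permutations involved lie in $S_k$. The one subtlety is whether permutations $\rho \in S_t \setminus S_k$ can appear with nonzero coefficient. I would argue they cannot: $G^{(t)}$ is the orthogonal projector onto $\vecspan\{\ket{\sigma}^{(t)\otimes 2} : \sigma\in S_t\}$ (this is stated in the excerpt — ``the averaged tensors are projectors onto the commutant''), and the sub-collection indexed by $S_k$ spans an invariant subspace; more concretely, since $G^{(k)}$ is the analogous projector and the Gram matrix restricted to $S_k$ is literally the same (with $q^2$ in place of the appropriate parameter), the pseudoinverse / Weingarten coefficients that define the projection onto the $S_k$-block are identical. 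Hence acting with $G^{(t)}$ on something in the $S_k$-span gives the same linear combination of $\ket{\tau}^{(t)\otimes 2}$, $\tau \in S_k$, that $G^{(k)}$ would give of $\ket{\tau}^{(k)\otimes 2}$.

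The clean way to package this is: let $J_{k\to t}$ be the linear map sending $\ket{\sigma}^{(k)} \mapsto \ket{\sigma}^{(t)}$ for $\sigma \in S_k$, extended linearly; by the transposition-count argument it is an isometry onto its image, and by the analogous statement for pairs of sites it intertwines the $k$-copy and $t$-copy inner products. Then I would check that $G^{(t)} J_{k\to t} = J_{k\to t} G^{(k)}$ on the domain $\vecspan\{\ket{\sigma}^{(k)}:\sigma\in S_k\}^{\otimes N}$ — equivalently, that the matrix elements $\bra{\vec\tau}^{(t)} G^{(t)} \ket{\vec\sigma}^{(t)}$ and $\bra{\vec\tau}^{(k)} G^{(k)} \ket{\vec\sigma}^{(k)}$ agree for $\vec\sigma, \vec\tau \in S_k^{\times N}$, which reduces (since $G$ acts as identity off the relevant pair of sites and as the local gate on it) to the single-gate identity $\bra{\tau_1\tau_2}^{(t)} G^{(t)} \ket{\sigma_1\sigma_2}^{(t)} = \bra{\tau_1\tau_2}^{(k)} G^{(k)} \ket{\sigma_1\sigma_2}^{(k)}$. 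Expanding both sides through the Weingarten formula, each is a finite sum over $\rho, \pi \in S_k$ (resp. $S_t$) of $\Wg(\rho^{-1}\pi, q^2) q^{-2|\rho^{-1}\sigma_1|}\cdots$; the $S_t$-sum's terms with $\rho$ or $\pi$ outside $S_k$ must be shown to cancel, which follows from $G^{(t)}$ being a projection whose range's $S_k$-part is exactly the range of $J_{k\to t} G^{(k)}$, together with $J_{k\to t}$ being an isometry so that orthogonality is preserved.

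The main obstacle is making the ``terms with $\rho \notin S_k$ vanish'' step rigorous rather than hand-wavy. The cleanest route is probably not to manipulate Weingarten sums at all, but to use the projector characterization: $G^{(t)}$ projects onto $W_t := \vecspan\{\ket{\sigma}^{(t)\otimes 2}:\sigma \in S_t\}$, and I want to show that for $v$ in $W_k^{\mathrm{img}} := \vecspan\{\ket{\sigma}^{(t)\otimes 2} : \sigma\in S_k\}$ — note $v$ need not be in $W_t$'s image of the incoming state, rather the incoming state $\ket{\vec\sigma}^{(t)}$ is a tensor of individual permutation states — we get that $G^{(t)}$ restricted appropriately equals $J_{k\to t}G^{(k)}J_{k\to t}^{-1}$. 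Since $J_{k\to t}$ is an isometry intertwining the two Gram structures, and a projector is uniquely determined by its range (as a subspace with a fixed inner product), $G^{(t)}\!\restriction$ and $J_{k\to t} G^{(k)} J_{k\to t}^{-1}$ are the same orthogonal projection onto $W_k^{\mathrm{img}}$, provided one first checks the incoming tensor states lie where needed — which they do, because $\ket{\vec\sigma}^{(t)} = J_{k\to t}^{\otimes \text{sites}}\ket{\vec\sigma}^{(k)}$ site-wise. I expect the write-up to spend most of its length carefully setting up $J_{k\to t}$ and verifying the transposition-count invariance $|\sigma^{-1}\tau|_{S_k} = |\sigma^{-1}\tau|_{S_t}$, after which the intertwining is a short formal argument.
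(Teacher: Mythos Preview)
Your approach via the Gram structure and the isometry $J_{k\to t}$ is a reasonable idea, but the projector argument has a genuine gap. You write that ``$G^{(t)}\!\restriction$ and $J_{k\to t} G^{(k)} J_{k\to t}^{-1}$ are the same orthogonal projection onto $W_k^{\mathrm{img}}$.'' The second operator is indeed the orthogonal projection onto $W_k^{\mathrm{img}}$, but the first is not: $G^{(t)}$ is the orthogonal projection onto the \emph{larger} space $W_t = \vecspan\{\ket{\rho}^{(t)\otimes 2} : \rho \in S_t\}$, and restricting its domain does not shrink its range. What you actually need, and have not supplied, is that $G^{(t)}$ applied to a two-site input built from $S_k$-permutations lands in $W_k^{\mathrm{img}}$ rather than merely in $W_t$; equivalently, that the input space is orthogonal to $W_t \ominus W_k^{\mathrm{img}}$. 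This does not follow from the Gram-matrix equality $\iprod{\sigma}{\tau}^{(t)} = \iprod{\sigma}{\tau}^{(k)}$ for $\sigma,\tau \in S_k$: that equality controls only inner products \emph{within} $S_k$, whereas here you must control inner products of $S_k$-inputs against states $\ket{\rho\rho}^{(t)}$ with $\rho \notin S_k$. Your ``matrix elements agree'' reformulation has the same gap---agreement of $\bra{\vec\tau}^{(t)} G^{(t)} \ket{\vec\sigma}^{(t)}$ for $\vec\tau \in S_k^{\times N}$ constrains only the component of $G^{(t)}\ket{\vec\sigma}^{(t)}$ lying in the image of $J_{k\to t}$, not the component orthogonal to it.

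The paper's proof avoids Weingarten entirely. It observes that for $\sigma \in S_k$ one has the copy-wise factorization $\ket{\sigma}^{(t)} = \ket{\sigma}^{(k)} \otimes \ket{I}^{(t-k)}$, and then uses the transpose trick $(U \otimes U^*)\ket{I} = \ket{I}$ to cancel the unitary on the last $t-k$ copies \emph{before} averaging, obtaining $G^{(t)}\ket{\sigma,\tau}^{(t)} = \big(G^{(k)}\ket{\sigma,\tau}^{(k)}\big) \otimes \ket{I,I}^{(t-k)}$ in one line. This tensor factorization is precisely the missing ingredient in your argument: it shows directly that the output is of the form (something)~$\otimes\,\ket{I,I}^{(t-k)}$ and hence lies in $W_k^{\mathrm{img}}$, after which your isometry argument would go through---but by that point the lemma is already proved.
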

\begin{proof} \textit{(Informal)}
    The basic insight here is to return to the original picture of averaging $t$ copies of a Haar random unitary over a density matrix given by a permutation state $\ket{\vec{\sigma}}^{(k)}$ in the $S_k$ symmetric group. In this case, we can cancel out the unitaries on copies on which the permutation acts trivially. A full proof may be found in Appendix \ref{app:gate_t_blindness}. 
\end{proof}

\begin{corollary}
    \label{corollary:gate_t_blindness}
    Let \(k < t\). Define \(\vecspan(S_k^{\times N}) =\vecspan\{ \ket{\sigma_1, ...\sigma_N}^{(t)}, \sigma_i \in S_k\}\)
    Then \(\vecspan(S_k^{\times N})\) is an invariant subspace of \(G_i\).
\end{corollary}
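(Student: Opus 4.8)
The statement is essentially an immediate corollary of Lemma~\ref{lemma:gate_t_blindness}, so the plan is short. I would first record that the hypothesis of Lemma~\ref{lemma:gate_t_blindness} is met for free at moment $k$: for any $\vec{\sigma} \in S_k^{\times N}$, the vector $G_i \ket{\vec{\sigma}}^{(k)}$ lives in the moment-$k$ permutation space, which is \emph{exactly} $\vecspan\{\ket{\vec{\tau}}^{(k)} : \vec{\tau} \in S_k^{\times N}\}$ since at moment $k$ there are no permutations outside $S_k$. Hence we may write $G_i\ket{\vec{\sigma}}^{(k)} = \sum_{\vec{\tau} \in S_k^{\times N}} c_{\vec{\tau}}\ket{\vec{\tau}}^{(k)}$ for suitable coefficients $c_{\vec{\tau}}$. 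Here $G_i$ denotes the gate moment operator $G^{(t)}$ of the preceding lemmas, supported on sites $i,i+1$ and acting as the identity on the remaining sites.

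Next I would invoke Lemma~\ref{lemma:gate_t_blindness} itself, which promotes this identity from moment $k$ to moment $t$ with the \emph{same} coefficients:
\[
G_i \ket{\vec{\sigma}}^{(t)} = \sum_{\vec{\tau} \in S_k^{\times N}} c_{\vec{\tau}}\, \ket{\vec{\tau}}^{(t)} .
\]
Thus $G_i$ carries each spanning vector $\ket{\vec{\sigma}}^{(t)}$, $\vec{\sigma}\in S_k^{\times N}$, of $\vecspan(S_k^{\times N})$ to a linear combination of spanning vectors of the same space. Since $G_i$ is linear and these vectors span $\vecspan(S_k^{\times N})$ by definition, $G_i\big(\vecspan(S_k^{\times N})\big) \subseteq \vecspan(S_k^{\times N})$, which is the claim.

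I do not expect a genuine obstacle here; the only point worth stating explicitly is why the hypothesis of Lemma~\ref{lemma:gate_t_blindness} holds automatically, namely that at moment $k$ any image of a gate is trivially expressible in the $S_k$ basis because that basis is the whole space. One could equivalently argue site-by-site, observing that $G_i$ leaves the permutation labels on sites other than $i,i+1$ untouched and, by Lemma~\ref{lemma:gate_t_blindness} applied with two sites, keeps the labels on sites $i,i+1$ inside $S_k$; but the formulation above is the most economical.
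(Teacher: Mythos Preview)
Your proposal is correct and matches the paper's approach: the corollary is stated immediately after Lemma~\ref{lemma:gate_t_blindness} without a separate proof, precisely because it follows by the argument you give---at moment $k$ the image of $G_i$ is automatically an $S_k$-combination, and the lemma promotes this to moment $t$ with the same coefficients.
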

\begin{corollary}
    \label{corollary:}
    The subspace \(\pi_L \rho_R \vecspan(S_k^{\times N}) =\vecspan\{ \ket{\pi \circ \sigma_1 \circ \rho, ...\pi \circ \sigma_N \circ \rho}^{(t)}, \sigma_i \in S_k\}\) is also an invariant subspace of \(G_i\) for every \(\pi,\rho \in S_t\).
\end{corollary}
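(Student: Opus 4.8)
The plan is to derive this purely from the two facts already in hand: the $S_t\times S_t$ symmetry of the gates (Lemma~\ref{lemma:symmetry}) and the invariance of $\vecspan(S_k^{\times N})$ (Corollary~\ref{corollary:gate_t_blindness}). First I would unwind the definitions of $\pi_L$ and $\rho_R$ to see that on a permutation state they act by $\pi_L\rho_R\ket{\sigma_1,\dots,\sigma_N}^{(t)} = \ket{\pi\circ\sigma_1\circ\rho,\dots,\pi\circ\sigma_N\circ\rho}^{(t)}$, where each $\sigma_i\in S_k$ is regarded as an element of $S_t$ through the standard embedding fixing the last $t-k$ indices. In particular $\pi_L$ and $\rho_R$ commute, each is an invertible linear map on the permutation space (it merely permutes the spanning set $\{\ket{\vec\sigma}\}$), and the image $\pi_L\rho_R\vecspan(S_k^{\times N})$ is exactly the span written in the statement, so the displayed equality in the corollary is just a restatement of this.

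The substantive step is then a one-line conjugation argument. Each staircase gate $G_i$ is the moment operator of a Haar-random unitary acting on the two sites $i,i+1$, so Lemma~\ref{lemma:symmetry} gives $[G_i,\pi_L]=[G_i,\rho_R]=0$ for all $\pi,\rho\in S_t$. Hence for any $v\in\vecspan(S_k^{\times N})$ we have $G_i(\pi_L\rho_R v)=\pi_L\rho_R(G_i v)$, and by Corollary~\ref{corollary:gate_t_blindness} the vector $G_i v$ lies in $\vecspan(S_k^{\times N})$, so the right-hand side lies in $\pi_L\rho_R\vecspan(S_k^{\times N})$. Thus $G_i$ maps $\pi_L\rho_R\vecspan(S_k^{\times N})$ into itself, which is the claimed invariance.

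I do not expect any real obstacle: the corollary is a direct consequence of the commutation relations already established, being essentially the observation that $G_i \cdot UV = UV \cdot G_i$ carries an invariant subspace of $G_i$ to an invariant subspace, with $U=\pi_L$ and $V=\rho_R$. The only points worth a brief remark are the well-definedness of $\pi_L,\rho_R$ on the (non-orthonormal) permutation space and the convention by which $S_k\subseteq S_t$ enters the composition $\pi\circ\sigma_i\circ\rho$; neither causes any difficulty.
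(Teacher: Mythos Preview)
Your argument is correct and is exactly what the paper has in mind: the corollary is stated without proof in the paper, as an immediate consequence of Lemma~\ref{lemma:symmetry} and Corollary~\ref{corollary:gate_t_blindness} via the conjugation argument you wrote down. There is nothing to add.
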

Call the direct sum of all subspaces of this form \(\vecspan(S_t^2 \circ S_k^{\times N})\). The arguments above generalize straightforwardly to non-two-site gates, so these subspaces are also invariant under the \(\Pi_i\) operators. The subspaces have a nested structure, with \(\vecspan (S_t^2 \circ S_{k-1}^N) \subset \vecspan (S_t^2 \circ S_k^N)\).
Since they are invariant subspaces of each gate, they are also invariant subspaces of any circuit. In other words, the circuit may be diagonalized separately in each subspace. 
\begin{lemma}
    $T$ is block-diagonal in \(\vecspan (S_t^2 \circ S_k^{\times N})\) and its orthogonal complement.
\end{lemma}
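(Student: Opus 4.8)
The plan is to show that the orthogonal projector onto $\vecspan(S_t^2 \circ S_k^{\times N})$ commutes with $T$. Since $T$ is a product of the gates $G_i$ (and, in the staircase-to-single reduction, the projectors $\Pi_i$), it suffices to show that each factor preserves both the subspace and its orthogonal complement. Preservation of the subspace itself is exactly the content of Corollary~\ref{corollary:gate_t_blindness} and the remark immediately following it: each $\vecspan(S_t^2 \circ S_k^{\times N})$ is an invariant subspace of every $G_i$, and the same argument applies to the non-two-site gates and hence to the $\Pi_i$. So the only thing left is invariance of the orthogonal complement.

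For the orthogonal complement, I would argue that each $G_i$ is Hermitian (it is the averaged tensor $\langle U_i^{\otimes t,t}\rangle$, which is an orthogonal projector onto the commutant — self-adjoint — so $G_i = G_i^\dagger$), and likewise each $\Pi_i$ is an orthogonal projector, hence Hermitian. A Hermitian operator that preserves a subspace $V$ automatically preserves $V^\perp$: if $w \in V^\perp$ and $v \in V$, then $\langle v, G_i w\rangle = \langle G_i v, w\rangle = 0$ because $G_i v \in V$. Therefore every factor of $T$ is block-diagonal with respect to the decomposition $\mathcal{H} = \vecspan(S_t^2 \circ S_k^{\times N}) \oplus \vecspan(S_t^2 \circ S_k^{\times N})^\perp$, and a product of block-diagonal operators (with respect to a fixed decomposition) is block-diagonal. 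This gives the claim.

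The one point requiring a little care — and the step I expect to be the main obstacle — is making sure the invariant-subspace statement genuinely applies to \emph{all} the operators appearing in $T$, not just the two-site gates $G_i$ for which Lemma~\ref{lemma:gate_t_blindness} was stated. The excerpt asserts in passing that ``the arguments above generalize straightforwardly to non-two-site gates, so these subspaces are also invariant under the $\Pi_i$ operators,'' and I would want to spell that out: $\Pi_i$ is the orthogonal projector onto the span of permutation states uniform on the first $i$ sites, which is itself built from the same permutation-state structure, so the $t$-blindness of Lemma~\ref{lemma:gate_t_blindness} carries over, and $\Pi_i$ maps $\ket{\sigma_1,\dots,\sigma_N}^{(t)}$ with $\sigma_j \in S_k$ to a combination of such states (after conjugation by the global actions, the $S_t^2$-translated copies). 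Once that is granted, the block-diagonality is immediate from Hermiticity as above. If one prefers to avoid invoking the generalization to $\Pi_i$, an alternative is to note that Lemma~\ref{lemma:brickwork_is_staircase} lets us work with $T = L_O L_E$ directly, which is a product of two-site gates only, so Corollary~\ref{corollary:gate_t_blindness} applies verbatim to each factor and no extension is needed.
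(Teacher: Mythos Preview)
Your proposal is correct and follows essentially the same route as the paper: use Corollary~\ref{corollary:gate_t_blindness} for invariance of the subspace under each gate, invoke Hermiticity of each $G_i$ to get invariance of the orthogonal complement (the paper phrases this as $PG=(GP)^\dagger=(PGP)^\dagger=PGP=GP$, which is the same computation as your $\langle v,G_iw\rangle=\langle G_iv,w\rangle=0$), and then pass to the product. Your caution about the $\Pi_i$ is unnecessary here since in this lemma $T$ is simply a product of gate operators, as you note in your final remark.
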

\begin{proof}
    Let $P$ be the orthogonal projector on to \(\vecspan (S_t^2 \circ S_k^{\times N})\). We first show that $G$ and $P$ commute. Since $\text{Im}(GP) \subseteq \vecspan (S_t^2 \circ S_k^{\times N})$, we see that $GP = PGP$. Since both $G$ and $P$ are Hermitian,
    $$PG = (GP)^\dagger = (PGP)^\dagger = PGP = GP$$
    It follows that $(I-P)GP = 0$, so each gate operator is block-diagonal. $T$ is a product of gate operators, so it inherits their block-diagonal structure.
\end{proof}
The nonzero eigenvalues of $PG^{(t)}P$ are of course exactly those of $G^{(k)}$. It follows that as we increase $t$, new eigenvalues can appear only in the orthogonal complement of \(\vecspan (S_t^2 \circ S_k^{\times N})\). To proceed, however, we will need a slightly stronger theorem. We now show that even a nonorthogonal complement will do:
\begin{theorem} 
\label{thm:eigenstate_independence}
Let $P$ be any projector (not necessarily orthogonal) onto \(\vecspan (S_t^2 \circ S_k^{\times N})\). Then $T^{(t)}$ has a block-triangular structure in $P$ and $(I-P)$ and the former block contains the same eigenvalues as $T^{(k)}$. More precisely, let \(\text{eig}^*(M)\) denote the set of nonzero eigenvalues of $M$. Then
\begin{gather}
    \label{eq:eigenvalues_are_subset}
    \text{eig}^*\left(T^{(k)}\right) = \text{eig}^* \left(P T^{(t)} P\right)
\end{gather}
and
\begin{gather}
    \label{eq:block_triangular_eigenvalues}
    \text{eig}^*\left(T^{(t)}\right) = 
    \text{eig}^* \left((I - P) T^{(t)}(I - P)\right) \cup \text{eig}^*\left(T^{(k)}\right)
\end{gather}

\end{theorem}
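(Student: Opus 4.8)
The plan is to exploit that $V := \vecspan(S_t^2 \circ S_k^{\times N})$ is an invariant subspace of every gate $G_i^{(t)}$, hence of the staircase transfer matrix $T^{(t)}$, and to read the spectrum off a block decomposition that does not need $P$ to be orthogonal. Fix a projector $P$ with $\image(P)=V$. The only algebraic fact needed is that invariance, $T^{(t)}\image(P)\subseteq\image(P)$, forces $(I-P)\,T^{(t)}\,P=0$: for any $x$, $Px\in V$, so $T^{(t)}Px\in V$, so $(I-P)T^{(t)}Px=0$. Hence, in a basis adapted to the direct sum $\mathcal H=\image(P)\oplus\ker(P)$ (valid for any idempotent $P$), $T^{(t)}$ is block upper-triangular, its diagonal blocks being the restriction $T^{(t)}|_V$ and the compression $(I-P)T^{(t)}(I-P)$ to $\ker(P)$. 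Since the characteristic polynomial of a block-triangular matrix factors over its diagonal blocks, $\text{eig}^*(T^{(t)})=\text{eig}^*(T^{(t)}|_V)\cup\text{eig}^*\!\big((I-P)T^{(t)}(I-P)\big)$. The same bookkeeping shows $PT^{(t)}P$, regarded on all of $\mathcal H$, is block-diagonal with blocks $T^{(t)}|_V$ and $0$, so $\text{eig}^*(PT^{(t)}P)=\text{eig}^*(T^{(t)}|_V)$; in particular this nonzero spectrum is the same for every projector onto $V$.

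The heart of the proof --- and the step where hidden subtleties can creep in --- is identifying $\text{eig}^*(T^{(t)}|_V)$ with $\text{eig}^*(T^{(k)})$. First reduce to one $S_t^2$-translate: write $V=\sum_{\pi,\rho\in S_t}\pi_L\rho_R V_0$ with $V_0:=\vecspan(S_k^{\times N})$, which is $G_i^{(t)}$-invariant by Corollary~\ref{corollary:gate_t_blindness}; since $G_i^{(t)}$ commutes with every $\pi_L$ and $\rho_R$ (Lemma~\ref{lemma:symmetry}), each $\pi_L\rho_R V_0$ is likewise invariant and $\pi_L\rho_R$ conjugates $T^{(t)}|_{V_0}$ onto $T^{(t)}|_{\pi_L\rho_R V_0}$. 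All summands therefore carry the same eigenvalues, and since the minimal polynomial of an operator on a (not necessarily direct) sum of invariant subspaces is the least common multiple of its minimal polynomials on the summands, $\text{eig}^*(T^{(t)}|_V)=\text{eig}^*(T^{(t)}|_{V_0})$. Next build the intertwiner $\iota:\vecspan\{\ket{\vec\sigma}^{(k)}:\vec\sigma\in S_k^{\times N}\}\to V_0$ sending $\ket{\vec\sigma}^{(k)}\mapsto\ket{\vec\sigma}^{(t)}$; it is well-defined and bijective because the two Gram matrices agree, $\braket{\vec\sigma|\vec\tau}^{(k)}=\prod_i q^{-|\sigma_i^{-1}\tau_i|}=\braket{\vec\sigma|\vec\tau}^{(t)}$ for $\sigma_i,\tau_i\in S_k$ (the transposition length of a permutation does not change when $S_k$ is embedded into $S_t$). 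Lemma~\ref{lemma:gate_t_blindness} says precisely that $G_i^{(t)}\iota=\iota\,G_i^{(k)}$ on that span, so $\iota$ conjugates $T^{(k)}$ restricted to $\vecspan\{\ket{\vec\sigma}^{(k)}\}$ onto $T^{(t)}|_{V_0}$.

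Finally, restricting $T^{(k)}$ to $\vecspan\{\ket{\vec\sigma}^{(k)}:\vec\sigma\in S_k^{\times N}\}$ loses no nonzero eigenvalue, because the staircase $G_{N-1}^{(k)}\cdots G_1^{(k)}$ forces every site into the span of the single-site uniform-permutation states $\{\ket\tau^{(k)}\}$ (each gate projects its pair of sites into that form, and no later gate undoes it), so $\image(T^{(k)})$, and hence every eigenvector of $T^{(k)}$ with nonzero eigenvalue, lies in that span. Chaining the identifications gives $\text{eig}^*(T^{(t)}|_V)=\text{eig}^*(T^{(k)})$, which together with the block decomposition of the first paragraph yields both Eq.~\ref{eq:eigenvalues_are_subset} and Eq.~\ref{eq:block_triangular_eigenvalues}. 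I expect the main obstacle to be precisely this identification --- verifying that $\iota$ is well-defined (the Gram-matrix coincidence) and that passing through $V_0$ and the site-factorized span drops only zero eigenvalues; the block-triangular linear algebra is routine, the non-orthogonality of $P$ entering only through the harmless observation that the characteristic polynomial still factors in any basis adapted to $\image(P)\oplus\ker(P)$.
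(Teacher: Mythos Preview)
Your proof is correct and follows essentially the same approach as the paper's: invariance of $V$ under each gate gives the block-triangular structure, Lemma~\ref{lemma:gate_t_blindness} identifies $T^{(t)}|_{V_0}$ with $T^{(k)}$, and the global $S_t\times S_t$ symmetry carries this to all of $V$. You are in fact more careful than the paper on two points it leaves implicit: you verify that the intertwiner $\iota$ is well-defined via the coincidence of Gram matrices, and you observe that $\image(T^{(k)})$ already lies in the site-factorized permutation span so that no nonzero eigenvalue is lost when restricting; the paper simply asserts the isomorphism and the eigenvalue equality without these checks.
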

\begin{proof}
\textit{(Informal)}
To prove Equation \ref{eq:eigenvalues_are_subset}, we use Lemma \ref{lemma:gate_t_blindness} to show that \(P T^{(t)} P\) is essentially several copies of $T^{(k)}$. To prove Equation \ref{eq:block_triangular_eigenvalues}, we show that $T^{(t)}$ is block-triangular in the images of $P$ and $I-P$. We then use the fact that the eigenvalues of a block-triangular matrix are those of the diagonal blocks. A full proof may be found in Appendix \ref{app:eigenstate_independence}. 
\end{proof}

We now define and characterize two useful subspaces. Our analytical proofs will make use of the \textbf{deranged subspace}, while our numerical work is done in terms of the \textbf{coderanged subspace}.
\subsection{Deranged subspace}
\begin{definition}
    \label{def:deranged}
    A state $\ket{\sigma_1, ... \sigma_N}$ is called a \textbf{complete derangement state} if and only if there does not exist $i \in \{1...t\}$
     such that \(\sigma_1(i) = \sigma_2(i) = ... = \sigma_N(i)\). 
    Equivalently, $\ket{\sigma_1, ... \sigma_N}$ is a \textbf{complete derangement state} if and only if the permutations \(\sigma_1^{-1} \circ \sigma_2, \sigma_1^{-1} \circ \sigma_3, ... \sigma_1^{-1} \circ \sigma_N\) have no common fixed point. Let \(\mathfrak{D}_N^{(t)}\) denote the set of complete derangement states on $t$ copies of $N$ sites. We will usually suppress the $t$.
\end{definition}

In the two-site case, things become a bit simpler. A state \(\ket{\sigma, \tau} \in \mathfrak{D}_2\) if and only if \(\sigma^{-1} \circ \tau\) is a derangement. Let \(S_t^{(\mathfrak{D})}\) refer to the set of derangements of order $t$.

\begin{lemma}
Suppose $t \leq q$. Then \(\vecspan \left({\mathfrak{D}_N^{(t)}}\right)\) is a complement of \(\vecspan \left(S_t^2 \circ S_{t-1}^{\times N}\right)\).
\end{lemma}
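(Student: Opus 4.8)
The plan is to exhibit an explicit direct-sum decomposition of the full permutation space into $\vecspan\left(S_t^2 \circ S_{t-1}^{\times N}\right)$ and $\vecspan\left(\mathfrak{D}_N^{(t)}\right)$, which requires two things: that the two subspaces together span everything, and that they intersect trivially (and then a dimension count, or the direct construction, seals it as a complement even though the sum is not orthogonal). The key structural observation is the combinatorial dichotomy implicit in the two definitions: a basis state $\ket{\sigma_1,\dots,\sigma_N}$ either has \emph{some} index $i \in \{1,\dots,t\}$ fixed in common by all the $\sigma_1^{-1}\sigma_j$ (equivalently $\sigma_1(i)=\dots=\sigma_N(i)$), in which case conjugating/relabeling by an element of $S_t^2$ moves that common value to $t$ and exhibits the state as lying in one of the $\pi_L\rho_R$-translates of $\vecspan(S_{t-1}^{\times N})$; or it has no such common fixed point, in which case it is by definition a complete derangement state. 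So every basis state lies in the set-theoretic union of (translates of) the two subspaces, hence the two subspaces span the whole space.

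First I would make the spanning argument precise: given $\ket{\vec\sigma}$ with common fixed point $i$ for all $\sigma_1^{-1}\sigma_j$, pick $\rho \in S_t$ with $\rho(t) = i$ and $\pi \in S_t$ with $\pi(\sigma_1(i)) = t$; then $\pi \circ \sigma_j \circ \rho$ fixes $t$ for every $j$, so $\pi \circ \sigma_j \circ \rho \in S_{t-1}$ (viewing $S_{t-1}$ as the stabilizer of $t$ inside $S_t$), and therefore $\ket{\vec\sigma} = \pi_L^{-1}\rho_R^{-1}\ket{\pi\circ\sigma_1\circ\rho,\dots} \in \pi_L^{-1}\rho_R^{-1}\vecspan(S_{t-1}^{\times N}) \subseteq \vecspan(S_t^2 \circ S_{t-1}^{\times N})$. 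Combined with the definitional fact that non-complete-derangement basis states are exactly those \emph{with} a common fixed point, this shows $\vecspan(S_t^2 \circ S_{t-1}^{\times N}) + \vecspan(\mathfrak{D}_N^{(t)})$ contains every basis state, hence equals the whole space.

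The harder half — and where I expect the real work to be — is showing the sum is direct, i.e. $\vecspan(S_t^2 \circ S_{t-1}^{\times N}) \cap \vecspan(\mathfrak{D}_N^{(t)}) = \{0\}$; this is exactly where the hypothesis $t \leq q$ must enter, since for $t > q$ the permutation states are linearly dependent and the naive basis-counting breaks. The natural route is a dimension count: when $t \leq q$ the states $\{\ket{\vec\sigma} : \vec\sigma \in S_t^{\times N}\}$ are linearly independent (the Gram matrix built from $\braket{\sigma|\tau} = q^{-|\sigma^{-1}\tau|}$ is nonsingular in this regime — a standard fact I would cite or quickly verify), so $\dim$ of the whole space is $(t!)^N$, the complete-derangement states are linearly independent with $\dim \vecspan(\mathfrak{D}_N^{(t)})$ equal to the number of $N$-tuples of permutations with no common fixed point, and $\dim \vecspan(S_t^2 \circ S_{t-1}^{\times N})$ equals the number of tuples \emph{with} a common fixed point; since these two counts are complementary and sum to $(t!)^N$, and the two subspaces jointly span, the sum must be direct. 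The one subtlety to nail down carefully is that $\dim \vecspan(S_t^2 \circ S_{t-1}^{\times N})$ is genuinely the count of "has a common fixed point" tuples and not something larger — i.e. that the union of the $S_t^2$-translates of $\vecspan(S_{t-1}^{\times N})$ contributes no extra dimensions beyond the spanning basis states it already contains — which again follows from linear independence of the full permutation-state basis when $t \le q$. So the proof reduces to: (i) linear independence of permutation states for $t \le q$; (ii) the combinatorial spanning argument above; (iii) the complementary-dimension count; and linear independence is the load-bearing input that forces the $t \le q$ restriction.
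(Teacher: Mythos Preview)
Your approach is essentially the paper's: show that the basis-state set $S_t^{\times N}$ partitions (as a set) into $\mathfrak{D}_N^{(t)}$ and $S_t^2 \circ S_{t-1}^{\times N}$, then invoke linear independence of the permutation states (valid precisely when $t \leq q$) to upgrade this set-level partition to a direct-sum decomposition of spans. Your spanning argument --- that a non-derangement tuple lies in $S_t^2 \circ S_{t-1}^{\times N}$ via a suitable $(\pi,\rho)$ --- matches the paper's first direction.

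There is, however, a genuine gap in how you resolve the ``subtlety''. To close the dimension count you need the \emph{reverse} inclusion at the set level: every tuple $\vec\sigma \in S_t^2 \circ S_{t-1}^{\times N}$ must be a non-derangement. You assert this ``again follows from linear independence of the full permutation-state basis'', but it does not --- it is a purely combinatorial fact, independent of $q$. The paper supplies exactly this: if $\sigma_i = \pi\tau_i\rho$ with each $\tau_i \in S_{t-1}$ (so $\tau_i(t)=t$), then $\sigma_1^{-1}\sigma_i = \rho^{-1}(\tau_1^{-1}\tau_i)\rho$ fixes $\rho^{-1}(t)$ for every $i$, so $\vec\sigma \notin \mathfrak{D}_N^{(t)}$. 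Without this step you only have $|S_t^2 \circ S_{t-1}^{\times N}| \geq \#\{\text{non-derangement tuples}\}$, and hence only $\dim(\text{sum}) \leq \dim\vecspan(\mathfrak{D}_N^{(t)}) + \dim\vecspan(S_t^2\circ S_{t-1}^{\times N})$ with the right-hand side possibly \emph{exceeding} $(t!)^N$; spanning then yields $\dim(\text{intersection}) \geq 0$, which is vacuous. Once you insert this one-line combinatorial observation, your argument is complete and coincides with the paper's.
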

\begin{proof}
    When $t \leq q$, the basis states are linearly independent. We will prove that every basis state $\ket{\sigma_1, ... \sigma_N}$ is either in $\mathfrak{D}_N^{(t)}$ or $S_t^2 \circ S_{t-1}^{\times N}$, that is, the complement of $\mathfrak{D}_N^{(t)}$ is exactly $S_t^2 \circ S_{t-1}^{\times N}$.
    
    Firstly, suppose that $\ket{\sigma_1, ... \sigma_N}$ is not in $\mathfrak{D}_N^{(t)}$, that is, there is a common fixed point $j$ among $\sigma_1^{-1} \circ \sigma_2, \sigma_1^{-1} \circ \sigma_3, ... \sigma_1^{-1} \circ \sigma_N$. Let \(\tau_{jt}\) denote the permutation which swaps copy \(j\) and copy \(t\). Then $(\tau_{jt} \sigma_1^{-1} ) \cdot \sigma_i \cdot \tau_{jt} \in S_{t-1}$ for all $i$, as this permutation must fix the $t$th element. The complement of $\mathfrak{D}_N^{(t)}$ is therefore contained in $S_t^2 \circ S_{t-1}^{\times N}$. 

    Now take a basis element $\ket{\sigma_1, ... \sigma_N} \in S_t^2 \circ S_{t-1}^{\times N}$. We must have some $\pi, \rho, \tau_1, ..., \tau_N$ such that $\sigma_i = \pi \tau_i \rho$ for all $i$. Therefore
    \begin{gather}
        \sigma_1^{-1} \sigma_i = \rho^{-1} \left(\tau_1^{-1} \tau_i\right) \rho
    \end{gather}
    is a conjugation by $\rho$ of a permutation in $S_{t-1}$ - therefore, it must leave the element $\rho^{-1}(t)$ fixed. This is true for all $i > 1$, however, so $\ket{\sigma_1, ..., \sigma_N}$ is not in $\mathfrak{D}_N^{(t)}$. Hence, the complement of $\mathfrak{D}_N^{(t)}$ also contains all of $S_t^2 \circ S_{t-1}^{\times N}$, so the two sets are identical.
\end{proof}
We will refer to \(\vecspan \left({\mathfrak{D}_N^{(t)}}\right)\) and \(\vecspan \left(S_t^2 \circ S_{t-1}^{\times N}\right)\) as the \textbf{deranged subspace} and \textbf{non-deranged subspace}, respectively.

\begin{corollary}
    \label{corollary:deranged_block_eigvals}
    Suppose $t \leq q$. Define an operator $P_{\mathfrak{D}}$ on basis states by
    \begin{gather}
        P_{\mathfrak{D}}\ket{\vec{\sigma}} = \begin{cases}
        \ket{\vec{\sigma}} & \vec{\sigma} \in \mathfrak{D}_N \\
        \mathbf{0} & \text{otherwise} \\
        \end{cases}
    \end{gather}
    Then
    \begin{gather}
    \label{eq:deranged_eigvals}
    \text{eig}^*\left(T^{(t)}\right) = 
    \text{eig}^* \left(P_{\mathfrak{D}} T^{(t)}P_{\mathfrak{D}}\right) \cup \text{eig}^*\left(T^{(t-1)}\right)
\end{gather}
\end{corollary}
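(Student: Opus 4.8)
The plan is to apply Theorem~\ref{thm:eigenstate_independence} with $k = t-1$ to the specific non-orthogonal projector $P \equiv I - P_{\mathfrak{D}}$, and then observe that, because $I - P = P_{\mathfrak{D}}$, the general block-triangular identity of that theorem collapses to exactly the claimed statement. The only content beyond quoting earlier results is checking that $P$ really is a projector onto $\vecspan\!\left(S_t^2 \circ S_{t-1}^{\times N}\right)$ whose kernel is the deranged subspace.

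First I would record that the hypothesis $t \le q$ guarantees that the permutation-tuple states $\{\ket{\vec{\sigma}}\}$ are linearly independent, so the prescription defining $P_{\mathfrak{D}}$ on basis states extends (uniquely) to a well-defined linear operator; without $t \le q$ there are linear relations among the $\ket{\vec{\sigma}}$ and the prescription need not be consistent. Since $P_{\mathfrak{D}}$ fixes the complete-derangement basis states and annihilates every other basis state, it is idempotent, and hence so is $P = I - P_{\mathfrak{D}}$: the latter annihilates the complete-derangement basis states and fixes the remaining (``non-deranged'') basis states.

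Next I would invoke the lemma immediately preceding the corollary: when $t \le q$, every basis state is either a complete-derangement state or lies in $S_t^2 \circ S_{t-1}^{\times N}$, and the non-deranged basis states, being linearly independent, therefore form a basis of $\vecspan\!\left(S_t^2 \circ S_{t-1}^{\times N}\right)$. Consequently $\image(P) = \vecspan\!\left(S_t^2 \circ S_{t-1}^{\times N}\right)$ and $\ker(P) = \vecspan\!\left(\mathfrak{D}_N^{(t)}\right)$; that is, $P$ is precisely a (generally non-orthogonal) projector of the kind permitted in Theorem~\ref{thm:eigenstate_independence} with $k = t-1$. Applying Equation~\ref{eq:block_triangular_eigenvalues} of that theorem to this $P$ gives $\text{eig}^*\!\left(T^{(t)}\right) = \text{eig}^*\!\left((I-P)T^{(t)}(I-P)\right) \cup \text{eig}^*\!\left(T^{(t-1)}\right)$, and substituting $I - P = P_{\mathfrak{D}}$ yields Equation~\ref{eq:deranged_eigvals}. (Equation~\ref{eq:eigenvalues_are_subset} of the theorem additionally identifies $\text{eig}^*\!\left(P T^{(t)} P\right)$ with $\text{eig}^*\!\left(T^{(t-1)}\right)$, which is not needed here.)

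I do not expect a genuine obstacle. The one subtlety — that the relevant decomposition uses a non-orthogonal complement — was deliberately absorbed into the general statement of Theorem~\ref{thm:eigenstate_independence}, so here it costs nothing. The only points requiring care are the well-definedness of $P_{\mathfrak{D}}$, which is exactly where $t \le q$ enters, and the bookkeeping identification of $\image(P)$ and $\ker(P)$ with the non-deranged and deranged subspaces, which is immediate from the preceding lemma.
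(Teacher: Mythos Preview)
Your proposal is correct and follows essentially the same approach as the paper: verify that $t \le q$ makes $P_{\mathfrak{D}}$ well-defined via linear independence of the basis states, observe that $I - P_{\mathfrak{D}}$ is a projector onto the non-deranged subspace $\vecspan(S_t^2 \circ S_{t-1}^{\times N})$, and then apply Theorem~\ref{thm:eigenstate_independence} with $k = t-1$. The paper's proof is a two-sentence version of exactly this argument.
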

\begin{proof}
    When $t \leq q$, the basis states are linearly independent, so $P_{\mathfrak{D}}$ extends to a well-defined linear operator. Furthermore, $I - P_{\mathfrak{D}}$ is a projector to the non-deranged subspace. We may thus apply Theorem~\ref{thm:eigenstate_independence} with $k = t-1$.
\end{proof}

By inductively applying Corollary~\ref{corollary:deranged_block_eigvals}, we arrive at the following:
\begin{lemma}\label{lemma:K_m_bounded_by_deranged}
When $t \leq q$, 
    \begin{gather}
        \text{eig}^*\left(K_m^{(t)}\right) = \bigcup_{k=1}^t \text{eig}^* \left(P_{\mathfrak{D}} K_m^{(k)}P_{\mathfrak{D}}\right)
    \end{gather}
    As the unit eigenvalues make up $\text{eig}^*(K_m^{(1)})$, the subleading eigenvalue of $K_m$ is bounded by the leading eigenvalues of $P_{\mathfrak{D}}K_m^{(k)} P_{\mathfrak{D}}$ for all $2 \leq k \leq t$.
\end{lemma}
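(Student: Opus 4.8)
The plan is to obtain the claimed identity by induction on $t$, using Corollary~\ref{corollary:deranged_block_eigvals} as the inductive step after first transferring it from the full staircase operator $T^{(t)}$ to the three-site operator $K_m^{(t)}$. The key observation that makes this transfer legitimate is that every piece of machinery in Section~\ref{section:single_gate_bounding} and in the ``Properties of the eigenspaces'' section applies to $K_m = \Pi_m G_m \Pi_m$ just as it applies to $T$: the subspaces $\vecspan(S_t^2 \circ S_k^{\times N})$ are invariant under $G_m$ and under $\Pi_m$ (this was noted right after Corollary~\ref{corollary:}, since $\Pi_m$ is itself built out of uniform permutation states and commutes with the relevant projectors), so $K_m^{(t)}$ is block-triangular in any projector $P$ onto $\vecspan(S_t^2 \circ S_{t-1}^{\times N})$ and its complement. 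Hence Theorem~\ref{thm:eigenstate_independence}, applied with the operator $K_m$ in place of $T$ and with $k = t-1$, gives $\text{eig}^*(K_m^{(t)}) = \text{eig}^*((I-P_{\mathfrak{D}})K_m^{(t)}(I-P_{\mathfrak{D}})) \cup \text{eig}^*(K_m^{(t-1)})$, and the first set on the right is exactly $\text{eig}^*(P_{\mathfrak{D}} K_m^{(t)} P_{\mathfrak{D}})$ because $I - P_{\mathfrak{D}}$ projects onto the non-deranged subspace and $P_{\mathfrak{D}}$ onto the deranged subspace.

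With this $K_m$-version of Corollary~\ref{corollary:deranged_block_eigvals} in hand, I would run the induction. The base case $t = 1$ is trivial: $K_m^{(1)}$ acts on the one-dimensional permutation space, its only eigenvalue is $1$, and the right-hand side for $t=1$ is $\text{eig}^*(P_{\mathfrak{D}} K_m^{(1)} P_{\mathfrak{D}})$; one checks $\mathfrak{D}_N^{(1)}$ behaves correctly (the identity permutation is, by convention, the sole state and the claim reduces to $\{1\} = \{1\}$, or one simply starts the union at $k=1$ so both sides are $\{1\}$). For the inductive step, assume $\text{eig}^*(K_m^{(t-1)}) = \bigcup_{k=1}^{t-1} \text{eig}^*(P_{\mathfrak{D}} K_m^{(k)} P_{\mathfrak{D}})$; substituting this into the $K_m$-version of \eqref{eq:deranged_eigvals} gives $\text{eig}^*(K_m^{(t)}) = \text{eig}^*(P_{\mathfrak{D}} K_m^{(t)} P_{\mathfrak{D}}) \cup \bigcup_{k=1}^{t-1}\text{eig}^*(P_{\mathfrak{D}} K_m^{(k)} P_{\mathfrak{D}}) = \bigcup_{k=1}^{t}\text{eig}^*(P_{\mathfrak{D}} K_m^{(k)} P_{\mathfrak{D}})$, which closes the induction. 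The restriction $t \leq q$ is needed exactly where it was needed in Corollary~\ref{corollary:deranged_block_eigvals}: to guarantee that the permutation basis states are linearly independent so that $P_{\mathfrak{D}}$ is a well-defined (if non-orthogonal) projector.

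The final sentence of the lemma is then a short corollary of the identity. The $k=1$ term contributes precisely the unit eigenvalues: $K_m^{(1)}$ has spectrum $\{1\}$, and $P_{\mathfrak{D}}$ restricted to one copy is the identity on that one-dimensional space, so $\text{eig}^*(P_{\mathfrak{D}} K_m^{(1)} P_{\mathfrak{D}}) = \{1\}$. Every other term, for $2 \le k \le t$, consists of non-unit eigenvalues (this is where one uses that $P_{\mathfrak D} K_m^{(k)} P_{\mathfrak D}$ acts on the deranged subspace, on which $K_m$ has no unit eigenvalue — intuitively because a uniform permutation state is never a complete derangement state for $N \ge 2$, so the top eigenspace of $K_m^{(k)}$ lies entirely in the non-deranged part). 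Therefore the subleading eigenvalue of $K_m = K_m^{(t)}$ is the maximum over $2 \le k \le t$ of the leading eigenvalue of $P_{\mathfrak{D}} K_m^{(k)} P_{\mathfrak{D}}$, as claimed.

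The main obstacle I anticipate is not the induction itself — which is routine once the $K_m$-analogue of Corollary~\ref{corollary:deranged_block_eigvals} is established — but rather being careful that Theorem~\ref{thm:eigenstate_independence} genuinely applies verbatim with $K_m$ substituted for $T$. One must confirm that $K_m$ is a product of operators ($\Pi_m$, $G_m$, $\Pi_m$) each of which is block-triangular in the relevant pair of subspaces, and in particular that $\Pi_m$ preserves $\vecspan(S_t^2 \circ S_{t-1}^{\times N})$; this follows from the remark in the text that the subspaces ``are also invariant under the $\Pi_i$ operators,'' but it is the one point where the argument for $K_m$ differs from the argument for $T$ and so it deserves an explicit sentence. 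A secondary subtlety is the claim that $K_m$ has no unit eigenvalue on the deranged subspace, needed for the ``subleading eigenvalue'' statement; I would justify this by noting that any unit eigenvector of the projector-like operator $K_m$ must lie in the span of uniform permutation states on the relevant sites, none of which are complete derangement states, hence the unit eigenspace is contained in the non-deranged subspace and is captured entirely by the $k=1$ term.
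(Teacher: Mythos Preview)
Your proposal is correct and follows the same route as the paper, which simply states that the lemma is obtained ``by inductively applying Corollary~\ref{corollary:deranged_block_eigvals}'' without further elaboration. You have in fact been more careful than the paper on two points the paper leaves implicit: (i) that Theorem~\ref{thm:eigenstate_independence} and Corollary~\ref{corollary:deranged_block_eigvals} apply to $K_m$ and not just to $T$ (the paper justifies this only by the remark that the invariant-subspace arguments ``generalize straightforwardly to non-two-site gates'' so that the $\Pi_i$ also preserve $\vecspan(S_t^2 \circ S_{t-1}^{\times N})$), and (ii) that the unit eigenspace of $K_m$ lies entirely in the non-deranged block.
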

It therefore suffices to check the leading eigenvalues of $K_m$ restricted to deranged subspaces. These eigenvalues are bounded by the spectral norm of $K_m$ in these subspaces with respect to any inner product. In particular, we will work in terms of operator norm induced by the \textit{basis norm} 
\begin{gather}
    \left|\left|\sum_{\sigma \in S_t} c_\sigma \ket{\sigma}\right|\right|_\text{basis}^2 \equiv \sum_{\sigma \in S_t} |c_\sigma|^2
\end{gather}
Note that this norm is not equivalent to the norm inherited from the inner product on the original Hilbert space. Then, we define the \textbf{deranged subspace basis norm}
\begin{gather}
    ||K_m||_{\mathfrak{D}} \equiv ||P_{\mathfrak{D}} K_m P_{\mathfrak{D}}||_{\text{basis}}
\end{gather}
The subleading eigenvalue of $K_m^{(t)}$ is then bounded by largest deranged subspace basis norm of $K_m^{(k)}$ over all $k \leq t$, i.e.
\begin{gather}
    ||K_m^{(t)} - \Pi_{m + 1}||_\infty \leq \max_{2 \leq k \leq t} ||K_m^{(k)}||_{\mathfrak{D}}
\end{gather}

Finally, we note that the symmetry of Lemma \ref{lemma:symmetry} applies separately within each subspace.
\begin{lemma}
    \label{lemma:deranged_symmetry}
    The subspaces $\vecspan \left(\mathfrak{D}_N^{(t)}\right)$,  \(\vecspan \left(S_t^2 \circ S_{t-1}^{\times N}\right)\), and the orthogonal complement of \(\vecspan \left(S_t^2 \circ S_{t-1}^{\times N}\right)\) are each invariant under the global right-action \(\sigma_R\) of any \(\sigma \in S_t\).
\end{lemma}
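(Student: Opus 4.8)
The plan is to show that each $\sigma_R$ is a genuine unitary operator which merely relabels replica indices, and then to check the three subspaces one at a time; none of them needs a separate heavy idea. First I would establish that $\sigma_R$ is unitary: although the permutation states are not orthonormal, $\sigma_R$ is not just a formal relabeling of an overcomplete basis. Writing $\ket{\tau}^{(t)} = q^{-t/2}\sum_{\vec i}\ket{\vec i}\otimes\ket{\tau(\vec i)}$ and substituting $\vec j = \sigma(\vec i)$ shows that $\sigma_R$ equals the operator which, on each site, permutes the $t$ ``ket'' replica tensor factors and acts as the identity on the ``bra'' replicas. A permutation of tensor factors is unitary, so $\sigma_R$ is unitary, hence invertible with $\sigma_R^{-1} = (\sigma^{-1})_R$. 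It also carries the permutation basis state $\ket{\tau_1,\dots,\tau_N}$ to $\ket{\tau_1\circ\sigma,\dots,\tau_N\circ\sigma}$, so it maps the spanning set of $\vecspan(S_t^{\times N})$ bijectively onto itself and therefore preserves that space.

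Next I would handle the two ``concrete'' subspaces. For $\vecspan(S_t^2\circ S_{t-1}^{\times N})$, note that by definition it is spanned by the states $\ket{\pi\tau_1\rho,\dots,\pi\tau_N\rho}$ with $\pi,\rho\in S_t$ and $\tau_i\in S_{t-1}$, and right-action by $\sigma$ sends such a state to $\ket{\pi\tau_1(\rho\sigma),\dots,\pi\tau_N(\rho\sigma)}$, which is again of this form with $\rho$ replaced by $\rho\sigma$; so $\sigma_R$ permutes the spanning set. For $\vecspan(\mathfrak{D}_N)$, use the fixed-point characterization of Definition~\ref{def:deranged}: $\ket{\tau_1,\dots,\tau_N}\in\mathfrak{D}_N$ iff $\tau_1^{-1}\tau_2,\dots,\tau_1^{-1}\tau_N$ have no common fixed point. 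After right-action by $\sigma$ these become $(\tau_1\sigma)^{-1}(\tau_i\sigma) = \sigma^{-1}(\tau_1^{-1}\tau_i)\sigma$, i.e. the same tuple conjugated by $\sigma$, and conjugation by $\sigma$ is a bijection of $\{1,\dots,t\}$ which carries the fixed-point set of any $\pi$ onto that of $\sigma^{-1}\pi\sigma$ (indeed $\sigma^{-1}\pi\sigma$ fixes $\sigma^{-1}(j)$ precisely when $\pi$ fixes $j$). Hence the conjugated tuple has a common fixed point iff the original one does, so $\sigma_R$ maps complete derangement states to complete derangement states, and being invertible it permutes $\mathfrak{D}_N$ among itself. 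This step is purely combinatorial and does not use $t\le q$; that hypothesis enters only when identifying $\vecspan(\mathfrak{D}_N)$ as a complement of the non-deranged subspace.

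Finally, for the orthogonal complement of $\vecspan(S_t^2\circ S_{t-1}^{\times N})$, taken inside $\vecspan(S_t^{\times N})$: since $\sigma_R$ is unitary and preserves both $\vecspan(S_t^{\times N})$ and $\vecspan(S_t^2\circ S_{t-1}^{\times N})$, for any $w$ orthogonal to the latter and any $v$ in the latter we get $\langle \sigma_R w, v\rangle = \langle w, \sigma_R^{-1} v\rangle = 0$ because $\sigma_R^{-1} v$ again lies in $\vecspan(S_t^2\circ S_{t-1}^{\times N})$; hence $\sigma_R$ preserves the orthogonal complement as well. I do not expect a real obstacle here; the one point that deserves care is the claim that $\sigma_R$ is unitary, since ``$\sigma_R$ permutes basis states'' does not by itself give unitarity when the permutation-state basis is overcomplete and non-orthogonal, so one must read unitarity off the explicit replica-relabeling description of $\sigma_R$. (An alternative to the last two paragraphs is to observe that $\sigma_R$ commutes with every $\pi_L$ and satisfies $\sigma_R\rho_R = (\rho\sigma)_R$, hence permutes the summands $\pi_L\rho_R\vecspan(S_{t-1}^{\times N})$ whose direct sum is the non-deranged subspace, but the direct computation above is cleaner.)
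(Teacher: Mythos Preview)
Your proposal is correct and follows essentially the same approach as the paper: the conjugation argument for $\vecspan(\mathfrak{D}_N)$, the immediate observation from the definition for $\vecspan(S_t^2\circ S_{t-1}^{\times N})$, and unitarity for the orthogonal complement. The paper simply asserts ``the representation is unitary'' for the last step, whereas you take extra care to justify this by identifying $\sigma_R$ explicitly as a permutation of replica tensor factors; that added justification is worthwhile and does not change the overall strategy.
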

The proof is in Appendix \ref{app:deranged_symmetry}. It follows that when $t \leq q$, new eigenvalues may be classified according to the irreducible representations of $S_t$. The lemma also holds for left-action and thus for $S_t \times S_t$, but we will need only right-action.

\subsection{Co-deranged subspace}
We will now give a formal description of the orthogonal complement of \(\vecspan \left(S_t^2 \circ S_{t-1}^{\times N}\right)\) which is consistent for all $q,t$. The results of this section are not necessary for the proof of Theorem \ref{thm:gap_t_leq_q}, where $t \leq q$, but are relied on by the numerics used to establish Theorem \ref{thm:gap_finite_N} in the $t > q$ regime. 
\begin{definition}
    Define the \textbf{permutation cobasis}\footnote{It should be noted that the permutation cobasis is only a true cobasis for $t \leq q$. For $t > q$, the linear dependence of the original basis does not allow us to define a true cobasis, but we are still allowed to write down the set of states $\{\ket{\widetilde{\sigma}} : \sigma \in S_t\}$.} states by
    \begin{gather}
        \ket{\widetilde{\sigma}} = \sum_{\tau} \Wg(\sigma^{-1}\tau)\ket{\tau}
    \end{gather}
    Then the \textbf{coderanged states} are states of the form $\ket{\widetilde{\sigma}_1, ... \widetilde{\sigma}_N}$, where \(\sigma_1^{-1} \circ \sigma_2, \sigma_1^{-1} \circ \sigma_3, ... \sigma_1^{-1} \circ \sigma_N\) have no common fixed point. Denote the set of such states by $\widetilde{\mathfrak{D}}_N.$ 
\end{definition}

\begin{corollary}
    If $t \leq q$, then \(\vecspan \left({\widetilde{\mathfrak{D}}_N^{(t)}}\right)\) is the orthogonal complement of \(\vecspan \left(S_t^2 \circ S_{t-1}^{\times N}\right)\). If $t > q$, then $\vecspan \left({\widetilde{\mathfrak{D}}_N^{(t)}}\right)$ contains the orthogonal complement of \(\vecspan \left(S_t^2 \circ S_{t-1}^{\times N}\right)\).
\end{corollary}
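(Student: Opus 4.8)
The plan is to reduce both statements to a single resolution of the identity for the orthogonal projector $\Pi_{\mathcal{V}}$ onto the span $\mathcal{V}$ of all $N$-site permutation basis states, expressed through the cobasis.

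First I would recall the defining property of the Weingarten function: as a matrix on $S_t$ it is the Moore--Penrose pseudoinverse of the single-site Gram matrix $\mathcal{G}_{\sigma\tau}=\braket{\sigma|\tau}=q^{-|\sigma^{-1}\tau|}$ (see ref.~\onlinecite{Collins2006}), coinciding with $\mathcal{G}^{-1}$ precisely when $t\le q$. Writing $\Phi$ for the map $e_\sigma\mapsto\ket{\sigma}$ on a single site, so that $\Phi^\dagger\Phi=\mathcal{G}$ and $\image\Phi$ equals the single-site permutation span $V$, the operator $\Phi\,\Wg\,\Phi^\dagger=\Phi\,\mathcal{G}^+\,\Phi^\dagger$ is Hermitian, acts as the identity on $V$, and has image contained in $V$, so it must be $\Pi_V$. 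Since $\ket{\widetilde\sigma}=\Phi\,\Wg\,e_\sigma$ by definition, this is exactly $\Pi_V=\sum_{\sigma\in S_t}\ket{\widetilde\sigma}\bra{\sigma}$; tensoring over the $N$ sites gives $\Pi_{\mathcal{V}}=\sum_{\vec\sigma\in S_t^{N}}\ket{\widetilde{\vec\sigma}}\bra{\vec\sigma}$, where $\mathcal{V}=V^{\otimes N}$.

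With this in hand the $t>q$ containment is immediate. Let $W$ be the orthogonal complement, taken inside $\mathcal{V}$, of $\mathcal{N}:=\vecspan(S_t^2\circ S_{t-1}^{\times N})$, and let $w\in W$. Then $w=\Pi_{\mathcal{V}}w=\sum_{\vec\sigma}\braket{\vec\sigma|w}\ket{\widetilde{\vec\sigma}}$, and every non-deranged tuple $\vec\sigma\in S_t^2\circ S_{t-1}^{\times N}$ contributes $0$ because $\ket{\vec\sigma}\in\mathcal{N}\perp w$; hence only complete-derangement tuples survive, i.e.\ $w\in\vecspan(\widetilde{\mathfrak{D}}_N)$. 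This argument uses nothing about $t$ versus $q$, so $W\subseteq\vecspan(\widetilde{\mathfrak{D}}_N)$ in all cases.

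To upgrade this to equality when $t\le q$, I would add a dimension count: $\widetilde{\mathfrak{D}}_N$ has $|\mathfrak{D}_N|$ elements, so $\dim\vecspan(\widetilde{\mathfrak{D}}_N)\le|\mathfrak{D}_N|$, while the lemma identifying $\vecspan(\mathfrak{D}_N)$ as a (linear) complement of $\mathcal{N}$ — valid precisely because the permutation states are linearly independent for $t\le q$ — gives $\dim W=\dim\mathcal{V}-\dim\mathcal{N}=|\mathfrak{D}_N|$; together with the inclusion just proved, $W=\vecspan(\widetilde{\mathfrak{D}}_N)$. The step I expect to demand the most care is the first one: verifying that the Weingarten function as used here genuinely is the pseudoinverse of the Gram matrix in the degenerate regime $t>q$ (equivalently, that $\ket{\widetilde\sigma}$ really implements the orthogonal projection onto $\mathcal{V}$ rather than some other extension of the $t\le q$ formula), since the whole argument rests on $\Pi_{\mathcal{V}}=\sum_{\vec\sigma}\ket{\widetilde{\vec\sigma}}\bra{\vec\sigma}$. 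It is also worth being explicit that the orthogonal complement is taken within $\mathcal{V}$; in the full Hilbert space the statement fails outright, since $\vecspan(\widetilde{\mathfrak{D}}_N)\subseteq\mathcal{V}$ cannot contain $\mathcal{V}^{\perp}$.
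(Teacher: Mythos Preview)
Your proposal is correct and follows essentially the same route as the paper: your projector identity $\Pi_{\mathcal{V}}=\sum_{\vec\sigma}\ket{\widetilde{\vec\sigma}}\bra{\vec\sigma}$ is precisely the content of the paper's Lemma~\ref{lemma:X_properties} on $X_{\sigma\tau}=\braket{\sigma|\widetilde\tau}$ (the facts $X^2=X$ and $\ket{X\mathbf v}=\ket{\mathbf v}$), and expanding $w\in W$ in the cobasis is the same computation that yields $E(\mathcal I)=W$ in the proof of Theorem~\ref{thm:coderanged_characterization}. The only cosmetic difference is that for $t\le q$ you close with a dimension count against the earlier complement lemma, whereas the paper instead observes $X=I$ in that regime so that $\mathcal I$ already equals the full coderanged coefficient space.
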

This corollary will follow from Theorem \ref{thm:coderanged_characterization} below. It follows immediately that when $t \leq q$, a version of Lemma \ref{lemma:K_m_bounded_by_deranged} holds with $P_\mathfrak{D}$ replaced by the orthogonal projector on to the co-deranged subspace. However, when $t > q$ the co-deranged subspace has nontrivial intersection with the non-deranged subspace, so we must restrict ourselves the subspace of $\vecspan \left({\widetilde{\mathfrak{D}}_N^{(t)}}\right)$ which is orthogonal to the non-deranged subspace.

It turns out that we can find a more concrete description of this subspace. Define
\begin{gather}
    X_{\sigma \tau}(q,t) = \braket{\sigma|\widetilde{\tau}}
\end{gather}
By abuse of notation we also define the multi-site version $X(\vec{q},t) = \bigotimes_i X(q_i,t)$. 
\begin{theorem}\label{thm:coderanged_characterization}
    Define the intersection space to be the intersection of $\image(X(\vec{q},t))$ with the preimage of the coderanged subspace, 
    \begin{gather}
        \mathcal{I} = \left\{\mathbf{v} \in  \mathbb{C}^{t!^N}: \left(v_{\vec{\sigma}} = 0 \forall \vec{\sigma} \in S_t^2 \circ S_{t-1}^{\times N}\right) \wedge \left(\mathbf{v} \in \image(X)\right) \right\}
    \end{gather}
    Define the embedding
    \begin{gather}
    E(\mathbf{v}) = \sum_{\vec{\sigma} \in S_t^N} v_{\vec{\sigma}} \ket{\widetilde{\sigma_1,  \dots \sigma_N}}
    \end{gather}
    The restriction of $E$ to $\mathcal{I}$
    gives an isomorphism between $\mathcal{I}$ and the orthogonal complement of $\vecspan \left(S_t^2 \circ S_{t-1}^{\times N}\right)$.
\end{theorem}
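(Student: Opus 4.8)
The plan is to identify the orthogonal complement of $\vecspan(S_t^2 \circ S_{t-1}^{\times N})$ explicitly, and to do so I will work in the "dual picture" provided by the cobasis states $\ket{\widetilde\sigma}$. The key algebraic fact is that the Gram matrix $X_{\sigma\tau} = \braket{\sigma|\widetilde\tau}$ is (a scalar multiple of) the inverse of the permutation-state Gram matrix $\braket{\sigma|\tau} = q^{-|\sigma^{-1}\tau|}$ when $t \le q$, and more generally $\image(X)$ is the image of the Gram matrix regarded as a map on coefficient vectors. Concretely, I would set up coordinates: a vector $\mathbf{v} \in \mathbb{C}^{t!^N}$ represents the state $E(\mathbf v) = \sum_{\vec\sigma} v_{\vec\sigma}\ket{\widetilde{\sigma_1,\dots,\sigma_N}}$, and the overlap of $E(\mathbf v)$ with a basis state $\ket{\vec\tau}$ is $\sum_{\vec\sigma} v_{\vec\sigma}\overline{X_{\tau\sigma}} = (X^\dagger \mathbf v)_{\vec\tau}$ (using that $X$ is symmetric/real up to conjugation conventions). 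So orthogonality of $E(\mathbf v)$ to the non-deranged subspace $\vecspan(S_t^2\circ S_{t-1}^{\times N})$ is exactly the condition that the coordinates of $E(\mathbf v)$ in the \emph{original} basis vanish on all non-deranged indices — i.e. $(X\mathbf v)_{\vec\sigma} = 0$ for every $\vec\sigma \in S_t^2\circ S_{t-1}^{\times N}$, which after a change of variable $\mathbf w = X\mathbf v$ is the first defining condition of $\mathcal I$ together with $\mathbf w \in \image(X)$.

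The steps, in order. First, record the linear-algebra dictionary: $E$ is, up to the coordinate change by $X$, the identity embedding of coefficient space into state space, so $\ker E = \ker X$ and $\image E = \vecspan\{\ket{\widetilde{\vec\sigma}}\}= \image(X)$ viewed inside state space. Second, translate "orthogonal to $\vecspan(S_t^2\circ S_{t-1}^{\times N})$" into the vanishing of the appropriate coordinates: since $\{\ket{\vec\sigma}\}$ spans the ambient space and the non-deranged states span $\vecspan(S_t^2\circ S_{t-1}^{\times N})$, a state is orthogonal to that subspace iff its expansion in the (possibly overcomplete) basis $\{\ket{\vec\sigma}\}$ can be chosen with zero coefficients on non-deranged indices — here one must be a little careful because for $t \le q$ the basis is genuinely a basis so the expansion is unique, making this step clean. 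Third, combine: $E(\mathbf v)$ lies in the orthogonal complement iff $X\mathbf v$ has vanishing non-deranged coordinates, i.e. iff $X\mathbf v \in \mathcal I$; reparametrizing, the image $E(\mathcal I)$ is exactly this orthogonal complement. Fourth, check injectivity of $E|_{\mathcal I}$: if $\mathbf v \in \mathcal I \subseteq \image(X)$ and $E(\mathbf v) = 0$ then $\mathbf v \in \ker X$, but on $\image(X) = (\ker X)^\perp$ the only such vector is $0$; hence $E|_{\mathcal I}$ is an isomorphism onto its image. Along the way this also yields the preceding corollary: when $t\le q$, $X$ is invertible, $\image(X)$ is everything, and $\mathcal I$ collapses to "coordinates vanish on non-deranged indices", so $\vecspan(\widetilde{\mathfrak D}_N^{(t)})$ is precisely the orthogonal complement; when $t > q$, $\image(X)$ is a proper subspace and we only get containment.

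The main obstacle is getting the two "pictures" — coefficient vectors versus states, and the original basis versus the cobasis — to line up without sign/conjugation errors, and in particular handling the $t>q$ degeneracy carefully: there $\{\ket{\widetilde{\vec\sigma}}\}$ is not linearly independent and $\mathcal I$ genuinely needs the clause $\mathbf v \in \image(X)$ to be well-defined as the "right" parametrization (otherwise different $\mathbf v$ give the same state and the map is not injective). I expect the bulk of the real work to be Step 2 — the precise statement that orthogonality to $\vecspan(S_t^2\circ S_{t-1}^{\times N})$ is equivalent to vanishing of original-basis coordinates on non-deranged indices — and verifying that $X$ intertwines the coderanged-index structure with the deranged-index structure so that the embedding $E$ respects the decomposition. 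Everything else is bookkeeping about $\image(X)$ and $\ker X$ being orthogonal complements, which holds because $X$ is (conjugate-)symmetric.
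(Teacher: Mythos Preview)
Your overall strategy matches the paper's: compute $\braket{\vec{\sigma}\,|\,E(\mathbf v)} = (X\mathbf v)_{\vec{\sigma}}$, translate orthogonality to the non-deranged subspace into vanishing of $(X\mathbf v)_{\vec{\sigma}}$ on non-deranged indices, reparametrize by $\mathbf w = X\mathbf v$, and check injectivity on $\image(X)$. That skeleton is exactly what the paper does.

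However, there is a real gap in your Step~3. The ``reparametrization'' $\mathbf v \mapsto X\mathbf v$ only yields $E(\mathcal I)$ equal to the orthogonal complement if you know $E(X\mathbf v) = E(\mathbf v)$ for every $\mathbf v$, and this requires $X^2 = X$, not merely that $X$ is Hermitian. Hermiticity gives you $\image(X) \perp \ker(X)$, which you correctly use for injectivity, but it does \emph{not} give idempotence. The paper proves $X$ is a projection via the explicit formula $X = \sum_{\nu \vdash t : |\nu| \le q} P_\nu$ (a sum of canonical idempotents), from which both $X^2 = X$ and the identity $\ket{X\mathbf v} = \ket{\mathbf v}$ follow immediately. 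You need to supply this fact.

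Relatedly, your description of $X$ as ``the inverse of the permutation-state Gram matrix'' is not right: $X_{\sigma\tau} = \braket{\sigma|\widetilde\tau} = \sum_\rho q^{-|\sigma^{-1}\rho|}\,\Wg(\rho^{-1}\tau)$ is the \emph{product} of the Gram matrix with the Weingarten matrix (its pseudoinverse), hence the orthogonal projection onto $\image(C)$, not $C^{-1}$ itself. This is precisely why $X$ is idempotent. Also, the verbal gloss in Step~2 (``orthogonal iff its expansion in $\{\ket{\vec\sigma}\}$ can be chosen with zero coefficients on non-deranged indices'') conflates inner products with expansion coefficients in a non-orthonormal basis; these are different things. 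Your formal computation $(X\mathbf v)_{\vec\sigma} = \braket{\vec\sigma|E(\mathbf v)}$ is the correct statement and is all you need, so drop the misleading paraphrase.
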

This theorem is proven in Appendix \ref{app:coderanged_characterization}. Since $K_m$ is block-diagonal in $\vecspan \left(S_t^2 \circ S_{t-1}^{\times N}\right)$ and $E(\mathcal{I})$, this isomorphism allows us to run more efficient numerical eigensolving.

\section{Structure of the 3-Site Operator}
Our goal is to to bound the subleading eigenvalue $\lambda_m$ of the effective three-site gate operator (Fig.~\ref{fig:staircase_architecture}b)
\begin{gather}
    K_m = \Pi_m G_{m} \Pi_m 
\end{gather}
Since each $\Pi_m$ is a projector, we can restrict our search to the image of $\Pi_m$. This is a space of dimension $q^2$, spanned by the (non-orthogonal) basis
\begin{gather}
    \left\{|\sigma_1\rangle_{1}\ket{\sigma_1}_2 ... \ket{\sigma_1}_m |\sigma_2\rangle_{m+1} \big | \sigma_1, \sigma_2 \in S_t \right \}\label{eq:two_site_basis}
\end{gather}
The intermediate state $G_m \Pi_m \ket{\psi}$ lies in $\image (G_m \Pi_m)$, which is spanned by states of the form $\ket{\sigma_1}_1 ...\ket{\sigma_1}_{m-1}\ket{\sigma_2}_{m}\ket{\sigma_2}_{m+1}$. We will work in the span of the union of these two spaces
\begin{gather} 
    \label{eq:three_site_basis}
    \vecspan \left( \image [\Pi_m] \cup \image [G_m \Pi_m] \right) = \left \{ \ket{\sigma_1}_1,...\ket{\sigma_1}_{m-1}\ket{\sigma_2}_{m}\ket{\sigma_3}_{m+1} \big | \sigma_1, \sigma_2, \sigma_3 \in S_t \right\}
\end{gather}
This is an effective three-site basis. 

There is a natural isomorphism of the form $\ket{\sigma, \sigma, ...\sigma} \rightarrow \ket{\sigma}$ between the uniform permutation states on $k$ sites and the permutation states on a single site. If the original $k$ sites had local dimension $q$ and the new site has physical dimension $q^k$, then this map is also an isometry. Applying this map to the basis of Equation \ref{eq:three_site_basis}, we obtain an effective three-site basis
\begin{gather}
    \left\{\ket{\sigma_1, \sigma_2, \sigma_3} \big|\sigma_1, \sigma_2, \sigma_3 \in S_t\right\}
\end{gather}
where the first site has effective local dimension $Q_1 = q^m$ and the second and third sites have $Q_2 = Q_3 = q$. 
In this generalization, the projectors $\Pi_m$ and $G_m$ have analogous definitions: $\Pi_m$ is the uniform projector on sites $1$ and $2$, while $G_m$ is the uniform projector on sites $2$ and $3$.

\subsection{Factorization of \(K_m\)}
Let's review the sequence of reductions we have passed through. We began by looking for the dominant eigenvalue of \(\Pi_n G_n \Pi_n\) in $\mathbb{C}^{2tqn}$ for all $ n \leq N$. By averaging, we restricted our search space to $\vecspan \left(S_t^{\times n}\right)$. By the argument above we have reduced it to $\vecspan  \left(S_t^{\times 3}\right)$. By Corollary~\ref{corollary:deranged_block_eigvals}, it furthermore suffices to check $\vecspan  \left(\mathcal{D}_3^{(k)}\right)$ for all \(k \leq t\). And by Lemma~\ref{lemma:symmetry}, it in fact suffices to check each isotypic component under the global left-action of $S_t$ separately. We will now introduce a convenient basis for this last space.

Irreducible representations of $S_t$ are labeled by the partitions of $t$. For a particular partition $\nu$, let \(V_\nu: S_t \rightarrow \text{End}(\mathbb{R}^d)\) be the corresponding irreducible representation in some basis.
\begin{lemma}
\label{lemma:isotype_basis}
Consider a global right action operator in an irreducible representation of $S_t$ indexed by a partition $\nu$:
\begin{gather}
    R_\nu^{ij} = \sum_{\rho \in S_t} V_\nu(\rho^{-1})^{ij} \rho_R
\end{gather}
A basis for the isotypic component of \(\image(\Pi_m)\) corresponding to \(\nu\) is given by
\begin{gather}
    \left\{\ket{e_{\nu}^{ij,\sigma}} = R_{\nu}^{ij} \ket{I,I, \sigma} \bigg|\sigma \in S_t, i,j \in \{1...d_{\nu}\} \right\}
\end{gather}
Furthermore, the subset of such states for which $\sigma$ is a complete derangement is a basis for the isotypic component of \(\vecspan  \left(\mathfrak{D}_3^{(t)}\right) \cap \image(\Pi_m)\) corresponding to \(\nu\).
\end{lemma}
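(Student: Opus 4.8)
\textit{Proof plan.} The plan is to analyze the representation of \(S_t\) carried by \(\image(\Pi_m)\) under the global right-action \(\rho \mapsto \rho_R\), decompose it as a direct sum of ``sectors'' each isomorphic to the regular representation of \(S_t\), and identify the states \(\ket{e_\nu^{ij,\sigma}}\) with the classical Schur/Wedderburn matrix-coefficient bases of the \(\nu\)-isotypic pieces of these sectors. I work in the effective three-site space \(\vecspan(S_t^{\times 3})\) and assume \(t \le q\) (the regime of Theorem~\ref{thm:gap_t_leq_q}), so that the permutation kets are genuinely linearly independent and \(\image(\Pi_m) = \vecspan\{\ket{\sigma_1,\sigma_1,\sigma_2} : \sigma_1,\sigma_2 \in S_t\}\), of dimension \((t!)^2\).

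First I would introduce the coordinates \((\sigma_1,\sigma_2) \mapsto (\pi,h)\) with \(\pi := \sigma_1 \sigma_2^{-1}\) and \(h := \sigma_2\), which is a bijection of \(S_t \times S_t\). Since \(\rho_R \ket{\sigma_1,\sigma_1,\sigma_2} = \ket{\sigma_1\rho,\sigma_1\rho,\sigma_2\rho}\) has new coordinates \((\pi, h\rho)\), the right-action fixes \(\pi\) and translates \(h\). Hence there is a vector-space direct sum \(\image(\Pi_m) = \bigoplus_{\pi \in S_t} \mathcal{H}_\pi\), where \(\mathcal{H}_\pi := \vecspan\{\ket{\sigma_1,\sigma_1,\sigma_2} : \sigma_1 \sigma_2^{-1} = \pi\}\) is spanned by \(t!\) formal basis kets and carries a copy of the regular representation of \(S_t\); a vector lies in a single \(\mathcal{H}_\pi\) iff only kets with that value of \(\sigma_1\sigma_2^{-1}\) occur in its basis expansion. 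I then invoke the standard consequence of Schur orthogonality: within a copy of the regular representation the \(d_\nu^2\) vectors \(w^{\nu}_{ik} := \sum_{h \in S_t} V_\nu(h^{-1})^{ik}\ket{h}\) (\(i,k \in \{1,\dots,d_\nu\}\)) are linearly independent and span its \(\nu\)-isotypic component.

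Next I would express the proposed states in these coordinates. From \(\rho_R\ket{I,I,\sigma} = \ket{\rho,\rho,\sigma\rho}\) we get \(\ket{e_\nu^{ij,\sigma}} = \sum_{\rho \in S_t} V_\nu(\rho^{-1})^{ij}\ket{\rho,\rho,\sigma\rho}\), and every summand has \(\pi\)-coordinate \(\rho(\sigma\rho)^{-1} = \sigma^{-1}\), so \(\ket{e_\nu^{ij,\sigma}} \in \mathcal{H}_{\sigma^{-1}}\). Re-indexing by \(h = \sigma\rho\) and using \(V_\nu(h^{-1}\sigma)^{ij} = \sum_k V_\nu(h^{-1})^{ik}V_\nu(\sigma)^{kj}\), this becomes \(\ket{e_\nu^{ij,\sigma}} = \sum_k V_\nu(\sigma)^{kj}\,w^{\nu}_{ik}\) inside \(\mathcal{H}_{\sigma^{-1}}\). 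Since \(V_\nu(\sigma)\) is invertible, for each fixed \(\sigma\) the family \(\{\ket{e_\nu^{ij,\sigma}}\}_{i,j}\) is an invertible linear recombination of \(\{w^{\nu}_{ik}\}_{i,k}\), hence a basis of the \(\nu\)-isotypic piece of \(\mathcal{H}_{\sigma^{-1}}\). As \(\sigma \mapsto \sigma^{-1}\) ranges bijectively over \(S_t\) and the \(\mathcal{H}_\pi\) are linearly independent, the union over all \(\sigma\) is a basis of \(\bigoplus_{\pi}(\nu\text{-isotypic part of }\mathcal{H}_\pi)\), which is precisely the \(\nu\)-isotypic component of \(\image(\Pi_m)\); the count \(t!\,d_\nu^2\) is consistent. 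For the complete-derangement refinement, note \(P_{\mathfrak{D}}\) acts diagonally on basis kets and that \(\ket{\rho,\rho,\sigma\rho}\) is a complete-derangement state iff \(\rho^{-1}\sigma\rho\) --- equivalently \(\sigma\) --- has no fixed point; thus \(\ket{e_\nu^{ij,\sigma}} \in \vecspan(\mathfrak{D}_3^{(t)})\) exactly when \(\sigma \in S_t^{(\mathfrak{D})}\). Since \(\sigma_1^{-1}\sigma_2\) and \(\pi = \sigma_1\sigma_2^{-1}\) are conjugate, \(\vecspan(\mathfrak{D}_3^{(t)}) \cap \image(\Pi_m) = \bigoplus_{\pi \in S_t^{(\mathfrak{D})}} \mathcal{H}_\pi\), and rerunning the previous argument with \(\sigma\) restricted to \(S_t^{(\mathfrak{D})}\) gives the stated basis.

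The representation theory here is entirely classical, so I expect no deep obstacle; the step I would be most careful with is the index bookkeeping --- in particular, indexing the sectors by \(\sigma_1\sigma_2^{-1}\) (rather than \(\sigma_1^{-1}\sigma_2\)), so that each \(\ket{e_\nu^{ij,\sigma}}\) lands inside a single sector, and reading ``\(\sigma\) is a complete derangement'' correctly as ``\(\sigma\) has no fixed point'' in this three-site context. Given the sector decomposition and the Wedderburn basis, spanning and linear independence follow immediately from the dimension count.
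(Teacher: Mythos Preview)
Your proof is correct and takes a genuinely different route from the paper. The paper works directly with the canonical idempotent \(P_\nu = \frac{\chi_\nu(1)}{t!}\sum_\tau \chi_\nu(\tau^{-1})\tau_R\): it first proves \(P_\mu R_\nu^{ij} = \delta_{\mu\nu}R_\nu^{ij}\) via Schur orthogonality (so each \(\ket{e_\nu^{ij,\sigma}}\) lies in the \(\nu\)-isotypic component), and then computes \(P_\nu\ket{\pi,\pi,\rho} = \frac{\chi_\nu(1)}{t!}\sum_{i,j} V_\nu^{ji}(\pi)\ket{e_\nu^{ij,\pi^{-1}\rho}}\) to show spanning, with linear independence left implicit from the dimension count. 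Your argument instead introduces the sector decomposition \(\image(\Pi_m) = \bigoplus_{\pi}\mathcal{H}_\pi\) indexed by \(\pi = \sigma_1\sigma_2^{-1}\), recognizes each \(\mathcal{H}_\pi\) as a copy of the regular representation, and identifies \(\{\ket{e_\nu^{ij,\sigma}}\}_{i,j}\) with the Wedderburn matrix-coefficient basis of the \(\nu\)-isotypic piece of \(\mathcal{H}_{\sigma^{-1}}\) up to the invertible transformation \(V_\nu(\sigma)\). The paper's approach is shorter and reuses the idempotent machinery already in play; yours makes the linear independence transparent (different \(\sigma\) live in different direct summands), and your observation that \(\vecspan(\mathfrak{D}_3^{(t)})\cap\image(\Pi_m) = \bigoplus_{\pi\in S_t^{(\mathfrak{D})}}\mathcal{H}_\pi\) gives a cleaner account of the derangement refinement than the paper's one-line remark.
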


Note that $\ket{e_\nu^{ij, \sigma}} \in \vecspan \left(\mathfrak{D}_3^{(t)}\right)$ if and only if $\ket{I, I, \sigma}$ is a complete derangement. The fact that $\ket{e_\nu^{ij, \sigma}}$ lies in the isotypic component corresponding to $\nu$ follows from the Schur orthogonality relations, while the fact that they span the space follows from the form of the canonical idempotent
\begin{gather}
    P_\nu = \frac{\chi_\nu(1)}{t!} \sum_\tau \chi_\nu(\tau^{-1}) \tau_R
\end{gather}
A full proof may be found in Appendix \ref{app:isotype_basis}. 

Using this basis, we can obtain a convenient factorization of the gate operator. 
\begin{lemma} \label{lemma:main_product}
    In the basis defined by Lemma \ref{lemma:isotype_basis},
    \begin{align}
        K_m \ket{e_{\nu}^{ij,\sigma}} = \sum_{k,\xi \in \mathbb{D}_t} M_{m,\xi,\sigma}^{\nu,kj} \ket{e_{\nu}^{ik,\xi}}
    \end{align}
    Note that \(\nu\), $i$, and $j$ indices are \textit{not} implicitly summed over. Furthermore,
    \begin{align} \label{eq:main_product_v2}
        M^{\nu}_m = D_\nu^{-1} W(Q_1 Q_2) D(Q_2) C(Q_1) D_\nu W(Q_2 Q_3) D(Q_2) C(Q_3)
    \end{align}
    where
    \begin{gather}
        Q_1 = q^m \\
        Q_2 = Q_3 = q \\
        W(d)_{\sigma, \tau}^{ij} = \Wg(\sigma^{-1} \tau, d) \delta_{ij}\\
        D(d)_{\sigma, \tau}^{ij} = d^{-|\sigma|}\delta_{\sigma, \tau} \delta_{ij}\\
        C(d)_{\sigma, \tau}^{ij} = d^{-|\sigma^{-1} \tau|} \delta_{ij}\\
        D^{ij}_{\nu,\sigma, \tau} = V_\nu(\sigma)^{ij} \delta_{\sigma, \tau}
    \end{gather}
    and the product above is matrix multiplication.
\end{lemma}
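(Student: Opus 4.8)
The plan is to prove Lemma~\ref{lemma:main_product} by unwinding the definition of $K_m = \Pi_m G_m \Pi_m$ acting on the basis states $\ket{e_\nu^{ij,\sigma}} = R_\nu^{ij}\ket{I,I,\sigma}$, and tracking how each of the three operators $\Pi_m$, $G_m$, $\Pi_m$ transforms coefficients in the overcomplete permutation-state basis. The key observation is that $\Pi_m$ (uniform projector on sites $1,2$) and $G_m$ (uniform projector on sites $2,3$) both have the Weingarten form $\sum_{\sigma,\tau}\Wg(\sigma^{-1}\tau, Q)\ket{\sigma}^{\otimes 2}\bra{\tau}^{\otimes 2}$ with the appropriate effective dimensions $Q = Q_1 Q_2 = q^{m+1}$ for $\Pi_m$ and $Q = Q_2 Q_3 = q^2$ for $G_m$. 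Applying such a projector to a three-site state $\ket{\alpha,\beta,\gamma}$ requires expanding the bra $\bra{\tau}^{\otimes 2}$ against the relevant pair of sites, which produces overlap factors $\braket{\tau|\beta} = q^{-|\tau^{-1}\beta|}$ and $\braket{\tau|\gamma}$ (or $\braket{\tau|\alpha}$, $\braket{\tau|\beta}$) — exactly the entries of the matrices $C(d)$. The normalization of $\ket{\sigma}$ versus the unnormalized permutation sum contributes the diagonal factors $d^{-|\sigma|}$ encoded in $D(d)$.

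Concretely, I would first compute $G_m\ket{I,I,\sigma}$: since $G_m$ acts on sites $2,3$ and the state restricted to those sites is $\ket{I,\sigma}$, we get $G_m\ket{I,I,\sigma} = \sum_{\xi}\big(\sum_{\tau}\Wg(\xi^{-1}\tau,q^2)q^{-|\sigma|\cdot 0}q^{-|\tau^{-1}\sigma|}\cdots\big)\ket{I,\xi,\xi}$, and carefully bookkeeping the normalization gives precisely a product of the form $W(Q_2 Q_3)D(Q_2)C(Q_3)$ acting on the permutation label. Then $\Pi_m$ acts on sites $1,2$ (where the state is now $\ket{I,\xi}$) by the analogous formula with effective dimension $Q_1 Q_2$, yielding another factor $W(Q_1 Q_2)D(Q_2)C(Q_1)$. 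The remaining ingredient is commuting the global right-action operator $R_\nu^{ij}$ past these operators: by Lemma~\ref{lemma:symmetry}, each gate commutes with $\rho_R$, so $R_\nu^{ij}$ passes through $G_m$ and the outer $\Pi_m$ freely; the subtlety is that $R_\nu^{ij}\ket{I,I,\sigma}$ is not of the form $\ket{I,I,\sigma'}$ but a linear combination, and re-expressing $K_m R_\nu^{ij}\ket{I,I,\sigma}$ back in the $\ket{e_\nu^{ik,\xi}}$ basis is where the conjugating matrices $D_\nu$, $D_\nu^{-1}$ with entries $V_\nu(\sigma)^{ij}$ enter — they implement the change of basis between "bare" permutation labels and the isotypic basis, using Schur orthogonality. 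The restriction of $\xi$ to derangements $\mathbb{D}_t$ in the sum follows from Lemma~\ref{lemma:K_m_bounded_by_deranged} / Corollary~\ref{corollary:deranged_block_eigvals}: the non-deranged part carries only lower-$t$ eigenvalues and can be projected out.

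The main obstacle I anticipate is the careful bookkeeping of the global right-action and the Schur-orthogonality step that produces the $D_\nu^{-1}(\cdots)D_\nu$ conjugation structure — in particular verifying that the $i$ index is a spectator (appearing on both sides as $\ket{e_\nu^{ik,\xi}}$ with the same $i$) while the $j$ index is threaded through the matrix $M_{m,\xi,\sigma}^{\nu,kj}$, and that the product comes out in exactly the stated left-to-right order $D_\nu^{-1} W(Q_1Q_2) D(Q_2) C(Q_1) D_\nu W(Q_2Q_3) D(Q_2) C(Q_3)$ rather than its transpose or reverse. This requires being scrupulous about whether operators act on kets from the left (composing as matrix products in one order) versus how coefficient vectors transform (the opposite order), and about the placement of inverses in $V_\nu(\rho^{-1})$ versus $V_\nu(\sigma)$. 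The underlying computations — expanding Weingarten-form projectors, collecting powers of $q$, applying $\braket{\sigma|\tau} = q^{-|\sigma^{-1}\tau|}$ — are each routine, so I would relegate the full index-chase to an appendix and present here only the sequence of reductions and the identification of each factor with its operator ($C$ from overlaps, $D$ from normalization, $W$ from Weingarten coefficients, $D_\nu$ from the isotypic change of basis).
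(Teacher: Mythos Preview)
Your proposal is correct and follows essentially the same approach as the paper: compute $K_m\ket{I,I,\sigma}$ directly by successively applying the Weingarten-form projectors $G_m$ and $\Pi_m$, then pull $R_\nu^{ij}$ back through (using that gates commute with global right-action) and re-express in the isotypic basis. One small clarification: the conjugating factors $D_\nu^{-1}$ and $D_\nu$ do not come from Schur orthogonality but from the commutation relation $R_\nu^{ij}\,\sigma_R = \sum_k R_\nu^{ik}V_\nu(\sigma)^{kj}$ (Lemma~\ref{lemma:pr_properties}) applied to the residual right-action $(\xi^{-1}\tau)_R$ left over after $\Pi_m$ maps $\ket{I,\tau,\tau}$ to $\ket{\xi,\xi,\tau}=\xi_R\ket{I,I,\tau\xi^{-1}}$; splitting $V_\nu(\xi^{-1}\tau)=V_\nu(\xi)^{-1}V_\nu(\tau)$ with $\xi$ the output label and $\tau$ the intermediate label is precisely what places $D_\nu^{-1}$ on the far left and $D_\nu$ between the two half-products.
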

\begin{proof}
Consider the action of \(K_m = \Pi_m G_m \Pi_m\) on one of the basis states. Since $\Pi_m$ and $G_m$ commute with right-action, 
\begin{gather}
    \Pi_m G_m \Pi_m \ket{e_{\nu}^{ij, \sigma}} = R_{\nu}^{ij} \Pi_m G_m \Pi_m \ket{I, \sigma}
\end{gather}
We can compute
\begin{align*}
    \Pi_m G_m \Pi_m \ket{I, I, \sigma} &= \Pi_m G_m \ket{I, I, \sigma} \\
    &= \Pi_m \sum_{\tau, \pi \in S_dst} \Wg(\tau^{-1} \pi, Q_2 Q_3) Q_2^{-|\pi|} Q_3^{-|\sigma^{-1} \pi|} \ket{I,\tau , \tau}
\end{align*}
Meanwhile,
\begin{align*}
    \Pi_m \ket{I,\tau , \tau} &= \sum_{\xi, \omega \in S_t} \Wg(\xi^{-1} \omega, Q_1 Q_2) Q_1^{-|\omega|} Q_2^{-|\tau^{-1} \omega|} \ket{\xi,\xi , \tau}
    \\ &= \sum_{\xi, \omega \in S_t} \Wg(\xi^{-1} \omega, Q_1 Q_2) Q_1^{-|\omega|} Q_2^{-|\tau^{-1} \omega|} \xi_R \ket{I,I , \tau \xi^{-1}}
\end{align*}
Combining these, we see
\begin{align}
    K_m \ket{I,I,\sigma} &= \sum_{\xi, \omega,\tau,\pi} \Wg(\xi^{-1} \omega, Q_1 Q_2) Q_1^{-|\omega|}Q_2^{-|\tau^{-1} \omega|}\Wg(\tau^{-1} \pi, Q_2 Q_3) Q_2^{-|\pi|} Q_3^{-|\sigma^{-1} \pi|} \xi_R \ket{I,I , \tau \xi^{-1}} 
    \n &= 
    \sum_{\xi, \omega,\tau,\pi} \Wg(\tau^{-1}\xi \omega, Q_1 Q_2) Q_1^{-|\omega|}Q_2^{-|\tau^{-1} \omega|}\Wg(\tau^{-1} \pi, Q_2 Q_3) Q_2^{-|\pi|} Q_3^{-|\sigma^{-1} \pi|} (\xi^{-1}\tau)_R \ket{I,I , \xi} \quad (\xi \rightarrow \xi^{-1}\tau)\n
    &= \sum_{\xi, \omega,\tau,\pi} \Wg(\xi \omega^{-1}, Q_1 Q_2) Q_1^{-|\omega^{-1}\tau|}Q_2^{-| \omega|}\Wg(\tau^{-1} \pi, Q_2 Q_3) Q_2^{-|\pi|} Q_3^{-|\sigma^{-1} \pi|} (\xi^{-1}\tau)_R \ket{I,I , \xi} \quad (\omega \rightarrow \omega^{-1}\tau)
\end{align}
In Lemma \ref{lemma:pr_properties}, we will show
\begin{align}
    R_\nu^{ij} \sigma_R &= \sum_k R_\nu^{ik} V_\nu(\sigma)^{kj} 
\end{align}
Applying this result, 
\begin{align}
    K_m \ket{e_{\nu}^{ij,\sigma}} &= R_\nu^{ij} K_m\ket{I,I,\sigma}\n
    &= \sum_{\xi, \omega,\tau,\pi} \Wg(\xi \omega^{-1}, Q_1 Q_2) Q_1^{-|\omega^{-1}\tau|}Q_2^{-| \omega|}\Wg(\tau^{-1} \pi, Q_2 Q_3) Q_2^{-|\pi|} Q_3^{-|\sigma^{-1} \pi|} R_\nu^{ij} (\xi^{-1}\tau)_R \ket{I,I , \xi}\n 
    &= \sum_{\xi, \omega,\tau,\pi}  \Wg(\xi \omega^{-1}, Q_1 Q_2) Q_1^{-|\omega^{-1}\tau|}Q_2^{-| \omega|}\Wg(\tau^{-1} \pi, Q_2 Q_3) Q_2^{-|\pi|} Q_3^{-|\sigma^{-1} \pi|} \sum_{k} V_\nu(\xi^{-1}\tau )^{kj} R_\nu^{i k} \ket{I,I , \xi}\n
    &= \sum_{\xi, \omega,\tau,\pi}  \Wg(\xi \omega^{-1}, Q_1 Q_2) Q_1^{-|\omega^{-1}\tau|}Q_2^{-| \omega|}\Wg(\tau^{-1} \pi, Q_2 Q_3) Q_2^{-|\pi|} Q_3^{-|\sigma^{-1} \pi|} \sum_{k,\ell} V_\nu(\xi^{-1})^{k\ell}V_\nu(\tau )^{\ell j} R_\nu^{i k} \ket{I,I , \xi}\n
    &= \sum_{\xi, \omega,\tau,\pi,k,\ell} \left[(D_\nu^{-1})^{k\ell}_{\xi \xi} W(Q_1 Q_2)^{\ell \ell}_{\xi \omega}  D(Q_2)^{\ell \ell}_{\omega \omega} C(Q_1)^{\ell \ell}_{\omega \tau} (D_\nu)^{\ell j}_{\tau \tau} W(Q_2 Q_3)^{j j}_{\tau \pi} D(Q_2)^{j j }_{\pi \pi} C(Q_3)^{j j }_{\pi \sigma}\right] R_\nu^{i k} \ket{I,I , \xi}\n 
    &= \sum_{k} M^{\nu, k j}_{m, \xi \sigma} \ket{e_\nu^{i k, \xi}} \label{eq:sum_over_k_for_m_matrix}
\end{align}
\end{proof}

We can reduce the isotypical subspace even further by considering global left-action symmetries as well as global right-action. In Appendix~\ref{app:lr_irreps}, we consider the nature of this reduced subspace as well as the form of the gate $K_m$ in this subspace. For now, it is sufficient to prove Theorem~\ref{thm:gap_t_leq_q} with just global right-action symmetries.

\subsection{Computing \(\lambda\) when $t=2$}\label{section:t2_3_site_op}
We will begin by computing the subleading eigenvalue of $K_m$ in the special case $t=2$. 
\begin{theorem}
    \label{thm:eigval_t2}
    When $t = 2$, 
    \[\lambda_m = \frac{q^2 (q^{2m} - 1)}{(q^2 + 1)(q^{2m + 2} - 1)} \leq \frac{1}{1 + q^2}\]
\end{theorem}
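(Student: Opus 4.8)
The plan is to compute $\lambda_m$ directly via the factorization of Lemma~\ref{lemma:main_product}, specialized to $t=2$. When $t=2$ the symmetric group $S_2$ has two irreducible representations: the trivial representation and the sign representation. The trivial representation contributes only the unit eigenvalue (it corresponds to the $t=1$ sector via Corollary~\ref{corollary:deranged_block_eigvals}), so the subleading eigenvalue must live in the sign representation $\nu = (1,1)$, which is one-dimensional. Moreover $\mathbb{D}_2$, the set of derangements of $\{1,2\}$, contains only the single transposition $(12)$, so the matrix $M^\nu_m$ in Equation~\eqref{eq:main_product_v2} collapses to a $1\times 1$ scalar. Thus $\lambda_m$ is literally just that scalar, and the bulk of the work is bookkeeping: evaluating $W(d)$, $D(d)$, $C(d)$, and $D_\nu$ on the two-element index set $\{I,(12)\}$, restricted to the single relevant row/column.

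First I would record the elementary $t=2$ data: $|I| = 0$, $|(12)| = 1$, $\braket{\sigma|\tau} = q^{-|\sigma^{-1}\tau|}$ so the Gram matrix on $\{I,(12)\}$ is $\begin{psmallmatrix} 1 & 1/q \\ 1/q & 1\end{psmallmatrix}$, and the pseudo-normalized Weingarten values $\Wg(I,d) = \frac{d^2}{d^2-1}$, $\Wg((12),d) = \frac{-d}{d^2-1}$ (using the convention of the paper with the extra $d^t$ factor, here $t=2$). The sign representation gives $V_\nu(I) = 1$, $V_\nu((12)) = -1$, so $D_\nu$ restricted to the derangement index is just the scalar $-1$ and $D_\nu^{-1}$ likewise. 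Then I would substitute $Q_1 = q^m$, $Q_2 = Q_3 = q$ into $M^\nu_m = D_\nu^{-1} W(Q_1Q_2) D(Q_2) C(Q_1) D_\nu W(Q_2Q_3) D(Q_2) C(Q_3)$, contracting over the two-element index sets but keeping only the $(12)$-to-$(12)$ entry, since that is the only derangement. This reduces to a product of a handful of rational functions of $q$ and $q^m$; collecting terms should yield $\lambda_m = \frac{q^2(q^{2m}-1)}{(q^2+1)(q^{2m+2}-1)}$.

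For the inequality $\lambda_m \le \frac{1}{1+q^2}$, I would simply note that $\frac{q^{2m}-1}{q^{2m+2}-1} = \frac{q^{2m}-1}{q^2(q^{2m}-1) + (q^2-1)} < \frac{1}{q^2}$ for all $m \ge 1$ and $q \ge 2$ (equivalently, $q^{2m}-1 < q^{2m} - q^{2m-2} + \dots$; more directly, $q^2(q^{2m}-1) < q^{2m+2}-1$), whence multiplying by $\frac{q^2}{q^2+1}$ gives the bound. It is also worth observing monotonicity in $m$ so that the supremum over $m$ (needed for $\lambda$ in Theorem~\ref{thm:staircase_to_single}) is approached as $m\to\infty$, giving $\lambda \to \frac{1}{q^2+1}$; but that is not strictly required for this statement.

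The main obstacle I anticipate is purely computational: correctly tracking the index structure through the seven-factor matrix product in Equation~\eqref{eq:main_product_v2}, in particular getting the conjugation-by-$D_\nu$ signs right and not dropping the pseudo-normalization factors in $\Wg$. A secondary subtlety is making sure that restricting to the derangement subspace (keeping only $\sigma,\xi \in \mathbb{D}_2 = \{(12)\}$) is legitimate \emph{before} taking the norm — but this is exactly what Lemma~\ref{lemma:K_m_bounded_by_deranged} and Lemma~\ref{lemma:isotype_basis} license, so once those are invoked the computation is unobstructed. Since everything is $1\times 1$ in the end, there is no spectral-radius estimation to do: the eigenvalue is the matrix entry itself, and the only real risk is an algebra slip in the simplification.
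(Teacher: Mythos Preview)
Your approach is essentially the paper's: specialize the factorization of Lemma~\ref{lemma:main_product} to $t=2$ and read off the subleading eigenvalue. However, your justification for restricting to the sign representation alone is incorrect. The trivial representation of $S_2$ is \emph{not} the same as the $t=1$ (non-deranged) sector of Corollary~\ref{corollary:deranged_block_eigvals}: both irreps intersect the deranged subspace nontrivially. Concretely, within $\image(\Pi_m)$ the deranged subspace is spanned by $\ket{I,I,(12)}$ and $\ket{(12),(12),I}$, and the symmetric combination $\ket{I,I,(12)} + \ket{(12),(12),I}$ lives in the trivial irrep while the antisymmetric one lives in the sign irrep. So you must compute the $(12)\to(12)$ derangement entry of $M^{\nu}$ for \emph{both} $\nu = (2)$ and $\nu = (1,1)$, and take the larger.

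The paper does exactly this: it writes out the full $2\times 2$ matrix $M^{(2)}$ in the trivial irrep, observes it is upper-triangular with diagonal entries $1$ and $\lambda_m$, and then notes that $M^{(1,1)} = M^{(2)}$ because $D_{(1,1)}$ differs from $D_{(2)}$ only by a sign on the $(12)$ row, and that sign appears once in $D_\nu$ and once in $D_\nu^{-1}$, so it cancels. Your restricted computation would therefore produce the same scalar for both irreps and your final formula is correct; but to close the argument you should either compute both irreps or invoke this sign-cancellation observation explicitly. Your inequality and monotonicity argument for $\lambda_m \leq \frac{1}{1+q^2}$ matches the paper's.
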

\begin{proof}
    The only partitions of $2$ are $(1,1)$ and $2$, corresponding to the alternating and trivial representations, respectively. Each is one-dimensional, so the sum over $k$ in Equation~\ref{eq:sum_over_k_for_m_matrix} contains only the $k=1$ term. The two elements of the group are identity and the $(1,2)$ transposition. These are referred to below as $I$ and $S$, respectively. 

    In the trivial irrep, we have
    
    \[W(d) = \frac{1}{1 - d^{-2}}\begin{bmatrix}
        1 & -1/d \\
        -1/d & 1
    \end{bmatrix}\]
    \[D(d) = \begin{bmatrix}
        1 & 0 \\
        0 & 1/d
    \end{bmatrix}\]
    \[C(d) = \begin{bmatrix}
        1 & 1/d \\
        1/d & 1
    \end{bmatrix}\]
    \[D_{(2)}^{11} =\begin{bmatrix}
        1 & 0 \\
        0 & 1 \\
    \end{bmatrix}\]
    and
    \begin{align}
        M^{(2)} &= D_{(2)}^{-1} 
        W(q^{m+1}) D(q) C(q^m) D_{(2)} W(q^2) D(q) C(q)
        \\ &= \begin{bmatrix}
            1 & \frac{q(q^m + 1)(q^{m + 2}-1)}{(q^2 + 1)(q^{2m+2}-1)} \\
            0 & \frac{q^2 (q^{2m} - 1)}{(q^2 + 1)(q^{2m + 2} - 1)}
        \end{bmatrix}
    \end{align}
    Since this matrix is upper-triangular, its eigenvalues are the diagonal elements. The unit eigenvalue corresponds of course to the uniform \(\ket{II} + \ket{SS}\) state. The second diagonal element, which corresponds to the deranged subspace, is the eigenvalue we are looking for. 
    
    In the alternating case, 
    \[D_{(1,1)} =\begin{bmatrix}
        1 & 0 \\
        0 & -1 \\
    \end{bmatrix}\]
    but because the minus sign appears twice, it cancels out and we find \(M^{(1,1)} = M^{(2)}\).

    We may now compute
    \begin{align}
        \sup_m \frac{q^2 (q^{2m} - 1)}{(q^2 + 1)(q^{2m + 2} - 1)} &= \frac{q^2}{q^2 + 1} \sup_m\frac{q^{2m} - 1}{q^{2m + 2} - 1} \\
        &= \frac{q^2}{q^2 + 1} \lim_{m \rightarrow \infty} \frac{q^{2m} - 1}{q^{2m + 2} - 1} \\
        &= \frac{1}{q^2 + 1}
    \end{align}
    where in the second line we use the fact that \(q > 1\).
\end{proof}

\subsection{Computing $\lambda$ when $t > 2$}
The remainder of this work is devoted to establishing the following fact.
\begin{theorem}
    \label{thm:eigval_t_leq_q}
    When $t \leq q$, 
    \begin{gather}
        ||K_m ||_\infty \leq \frac{1}{1 + q^2}
    \end{gather}
\end{theorem}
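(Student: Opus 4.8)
The plan is to bound the spectral norm of $K_m$ by working separately in each isotypic component under the global right-action, using the factorization from Lemma~\ref{lemma:main_product}. By Lemma~\ref{lemma:K_m_bounded_by_deranged}, the subleading eigenvalue of $K_m^{(t)}$ is controlled by the leading eigenvalue of $P_{\mathfrak D} K_m^{(k)} P_{\mathfrak D}$ for $2 \leq k \leq t$, and by Lemma~\ref{lemma:deranged_symmetry} this decomposes over the irreps $\nu$ of $S_k$. So it suffices to show that $\| M_m^\nu \|$, restricted to the deranged sector (indices $\sigma,\xi$ ranging over derangements $\mathbb{D}_t$), is at most $\frac{1}{1+q^2}$ for every partition $\nu$ and every $m$, whenever $t \leq q$. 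The key structural input is Equation~\ref{eq:main_product_v2}: $M_m^\nu = D_\nu^{-1} W(Q_1 Q_2) D(Q_2) C(Q_1) D_\nu \, W(Q_2 Q_3) D(Q_2) C(Q_3)$ with $Q_1 = q^m$, $Q_2 = Q_3 = q$.

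First I would observe that the $t = 2$ computation already reveals the mechanism: the rightmost factor $W(q^2) D(q) C(q)$ is the ``fresh gate'' piece, and on the non-uniform (deranged) sector its norm is governed by $\frac{1}{1+q^2}$, while the leftmost block $D_\nu^{-1} W(q^{m+1}) D(q) C(q^m) D_\nu$ should be a contraction on the deranged sector (norm $\leq 1$), so that the product has norm $\leq \frac{1}{1+q^2}$. Concretely I would try to factor $M_m^\nu = A_m^\nu \cdot B$ where $B = W(q^2) D(q) C(q^3)$-type object is $m$-independent and $A_m^\nu$ absorbs the $Q_1$-dependence, then prove (i) $\|A_m^\nu\|_{\mathfrak D} \leq 1$ and (ii) $\|B\|_{\mathfrak D} \leq \frac{1}{1+q^2}$. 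For (ii), the relevant operator is essentially the restriction of a single Haar-averaged two-site gate $G^{(k)}$ to the complete-derangement subspace with the $q^m$ site collapsed; its leading eigenvalue should be bounded using the Weingarten asymptotics $\Wg(\pi, d) = d^{-|\pi|}(\mu_{\mathrm{cycle}}(\pi) + O(d^{-2}))$ and the fact that derangements have $|\pi| \geq 1$, yielding the $\frac{1}{1+q^2}$ scaling. The restriction $t \leq q$ is what guarantees $d = q^2 \geq t^2$ is large enough for the leading-order Weingarten behavior to dominate without sign issues — this is likely where the condition enters "quite late," as the paper says.

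For (i), the contractivity of $A_m^\nu = D_\nu^{-1} W(Q_1 q) D(q) C(Q_1) D_\nu$: I would argue that $W(Q_1 q) D(q) C(Q_1)$ (without the $D_\nu$ conjugation) is similar to an orthogonal projector composed with contractions — it is, after all, built from pieces of $\Pi_m$, which is a genuine orthogonal projector in the true Hilbert-space inner product — so in the appropriate (non-basis) norm it has norm $1$. The subtlety is that we are working in the \emph{basis norm}, not the inner-product norm, and the conjugation by the non-unitary $D_\nu$ (whose entries are irrep matrix elements $V_\nu(\sigma)$) can in principle inflate the norm. I would handle this by choosing $V_\nu$ to be a \emph{unitary} (orthogonal) representation, so that $D_\nu$ is block-diagonal-unitary up to the $\delta_{\sigma\tau}$ structure, and then tracking how the change of basis between $\ket{e_\nu^{ij,\sigma}}$ and the standard basis distorts norms — I expect a bounded distortion, or better, exact isometry after correctly normalizing the basis of Lemma~\ref{lemma:isotype_basis}.

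The main obstacle I anticipate is controlling the norm in the deranged sector uniformly in $\nu$ and $m$ while working in the basis norm rather than the natural inner-product norm: the factorization is clean but the operators $W$, $C$ are not normal, the deranged-subspace restriction breaks the obvious projector structure, and the $D_\nu$ conjugation couples everything to representation theory. I would likely need a lemma establishing that conjugating the gate into the isotypic basis and restricting to derangements preserves (or only mildly distorts) operator norms — essentially a quantitative version of the statement that all of this machinery is secretly a bounded perturbation of honest orthogonal projectors — and then reduce the whole problem to the single clean eigenvalue estimate $\lambda \leq \frac{1}{1+q^2}$ for the fresh gate, which the $t=2$ case (Theorem~\ref{thm:eigval_t2}) already confirms is sharp.
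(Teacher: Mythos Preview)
Your factorization $M_m^\nu = A_m^\nu \cdot B$ with the plan $\|A_m^\nu\|_{\mathfrak D} \le 1$ and $\|B\|_{\mathfrak D} \le \frac{1}{1+q^2}$ is not how the paper proceeds, and the specific inequalities you hope for are not established (and likely fail) in the basis norm. You correctly note that $\Pi_m$ is an orthogonal projector in the Hilbert-space inner product, but the factorization of Lemma~\ref{lemma:main_product} lives in the \emph{basis} norm, where $W(Q_1Q_2)$ alone can have norm as large as $e$ (Lemma~\ref{lemma:weingarten_bound}). Choosing $V_\nu$ unitary makes $D_\nu$ basis-norm unitary, but that only gives $\|A_m^\nu\| = \|W(Q_1Q_2)D(Q_2)C(Q_1)\|$, which is exactly the half-operator $H(Q_1,Q_2)$ the paper must bound by other means. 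There is no mechanism in your sketch that turns this into a contraction. Likewise, your claim (ii) that the ``fresh gate'' piece $B=W(q^2)D(q)C(q)$ has deranged norm $\le \frac{1}{1+q^2}$ uniformly in $t$ is asserted via Weingarten asymptotics but never argued; the $t=2$ computation confirms the \emph{product} attains this value, not either factor separately.

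The paper's proof is structurally different: it does not split the bound as ``contraction $\times$ small.'' Instead it shows that for $t>2$ the deranged-block eigenvalue is \emph{far smaller} than $\frac{1}{1+q^2}$ --- decaying like $q^{-2}(t/q)^{t-2}$ --- so the $t=2$ case is the binding constraint. This is achieved by a four-way case split: $t=2$ by the exact computation of Theorem~\ref{thm:eigval_t2}; $3\le t\le 6$ by elementwise bounds on the half-operator $\overline H(t)$ (Lemma~\ref{lemma:Kbound_t34}); $7\le t\le 28$ by numerically evaluating the derangement polynomial $d_t(t^{-2})$ (Lemma~\ref{lemma:numerical_bound}); and $t\ge 29$ by an analytic bound on $d_t$ (Lemma~\ref{lemma:analytic_bound}). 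The smallness on the deranged subspace emerges from the combinatorial fact that derangements have $|\sigma|\ge \lceil t/2\rceil$, which forces every term in $d_t(q^{-2})$ to carry at least $q^{-t}$; this is the content of Lemma~\ref{lemma:dc_bound} and the subsequent analysis of $d_t$, and it is a joint property of both halves of the product rather than a contraction-times-small structure.
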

\begin{proof}
From Corollary \ref{corollary:deranged_block_eigvals}, it suffices to check the eigenvalues of $K_m$ in each deranged block separately. From Theorem \ref{thm:eigval_t2} ($t=2$) and Lemmas~\ref{lemma:Kbound_t34} ($t \leq 6$), \ref{lemma:numerical_bound} ($7 \leq t \leq 28$), and \ref{lemma:analytic_bound} ($t > 28$), we will see there is no value of $t$ for which the eigenvalue of $K_m$ in the deranged block exceeds $\frac{1}{1 + q^2}$.
\end{proof}

\section{Bounding the complete derangement block}
\begin{figure}
    \centering
    \begin{tikzpicture}
        \begin{scope}
            \node[anchor=north west,inner sep=0] (image_a) at (0.4,0)
            {\includegraphics[width=0.4\columnwidth]{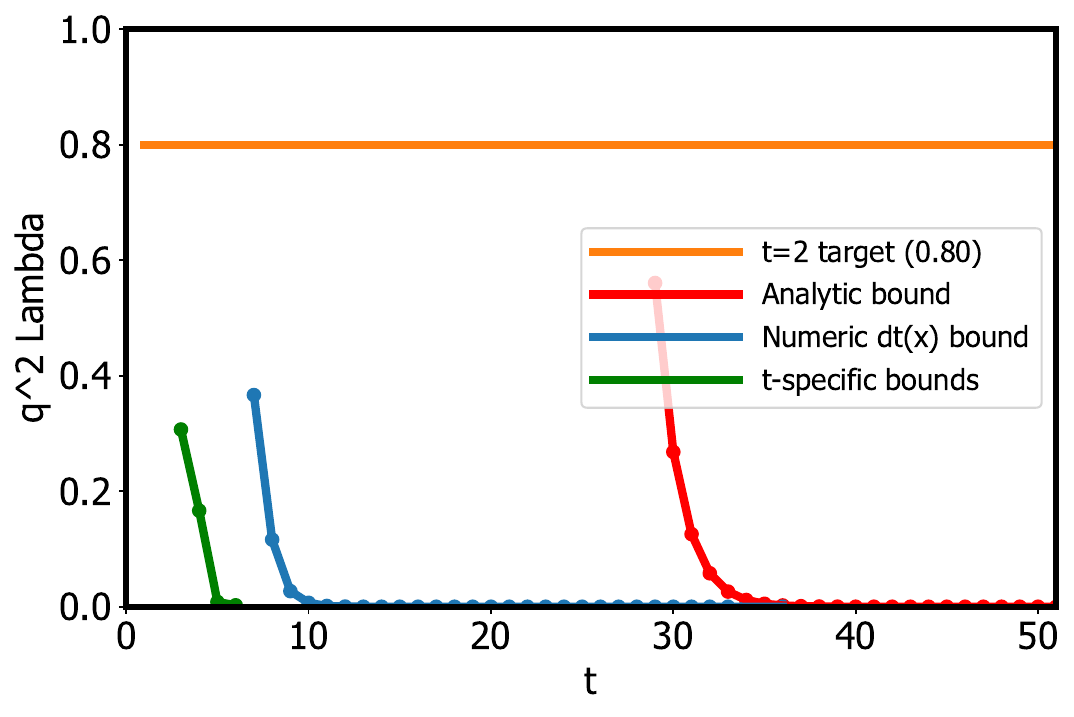}};
        \end{scope}
    \end{tikzpicture}
    \vspace*{-0.4cm}
    \caption{Bounds on $q^2 ||K_m||_{\mathfrak{D}}$, that is, the largest eigenvalue of the complete derangement block scaled by $q^2$, through Lemmas~\ref{lemma:analytic_bound} (red line), \ref{lemma:numerical_bound} (blue line), and \ref{lemma:Kbound_t34} (green line). For each $t > 2$, at least one of our bounds implies $||K_m||_{\mathfrak{D}} < \frac{4}{5} q^2 \leq \frac{1}{q^2+1}$. It follows that the $t=2$ eigenvalue always dominates.}
    \label{fig:derangement_eigs}
\end{figure}

In this section we consider the case $2 < t \leq q$. Our goal is to obtain bounds on the largest eigenvalue of $K_m$ in the completely deranged subspace. Let \(||\cdot ||_\mathfrak{D}\) refer to the spectral norm of an operator's restriction to the deranged subspace.  There are several cases which must be handled independently.
\begin{lemma}\label{lemma:analytic_bound}
    When $t \leq q$ and $t > 28$,
    \begin{gather}
        ||K_m ||_{\mathfrak{D}} \leq 
         \frac{t^2}{2} e^{\frac{1}{2t^2} + 2} \left[\left(\frac{1}{2} + \frac{3}{t}\right)^t + \left(\frac{4}{\sqrt{t}}\right)^t\right] \left(\frac{t}{q}\right)^{t-2} q^{-2} \leq \frac{1}{1 + q^2}
    \end{gather}
\end{lemma}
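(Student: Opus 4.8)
The plan is to bound $\|K_m\|_{\mathfrak{D}}$ by controlling the matrix $M^\nu_m$ of Lemma~\ref{lemma:main_product} restricted to the deranged subspace, uniformly over all partitions $\nu \vdash t$ and all $m$. Recall $M^\nu_m = D_\nu^{-1} W(Q_1Q_2) D(Q_2) C(Q_1) D_\nu W(Q_2Q_3) D(Q_2) C(Q_3)$ with $Q_1 = q^m$, $Q_2=Q_3=q$. Since we only need an upper bound on the largest eigenvalue, and eigenvalues are bounded by any operator norm, I would estimate $\|K_m\|_{\mathfrak{D}}$ by submultiplicativity: bound each of the seven factors in some convenient norm (say the operator norm induced by the basis norm on the deranged subspace, or the $\ell^1\to\ell^1$ / $\ell^\infty\to\ell^\infty$ norms, whichever gives cleaner combinatorial sums), and multiply. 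The key small parameters are the off-diagonal entries of $C(d)$ and $W(d)$, which decay like $d^{-|\sigma^{-1}\tau|}$, and the crucial gain comes from the fact that on the \emph{deranged} subspace every relevant permutation $\sigma^{-1}\tau$ moves all $t$ points, so $|\sigma^{-1}\tau| \geq t - (\text{number of cycles})$; for a derangement the worst case (all 2-cycles) gives $|\sigma^{-1}\tau|\geq t/2$, while entries tied to a specific derangement $\sigma$ in the $C(Q_3)$ factor acting on $\ket{I,I,\sigma}$ contribute $Q_3^{-|\sigma|}$ with $|\sigma|\geq t/2$ as well — but the extra $(t/q)^{t-2}$ in the claimed bound signals that the real mechanism is that each transposition costs a factor $\sim 1/q$ and there are $\geq t-2$ of them net, after two are "spent" recovering the $t=2$-like leading behavior.

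Concretely, I would proceed as follows. First, write $\|K_m\|_{\mathfrak{D}} \leq \max_{\nu \vdash t} \|M^\nu_m|_{\mathfrak{D}}\|$. Second, bound $\|D_\nu\|$ and $\|D_\nu^{-1}\|$: these are block-diagonal with blocks $V_\nu(\sigma)$, so in a unitary realization of the irrep $V_\nu$ they are isometries and contribute $1$; if a non-unitary basis is used one picks up a condition number, so I would fix a unitary/orthogonal realization from the start (the excerpt's $V_\nu: S_t \to \mathrm{End}(\mathbb{R}^d)$ can be taken orthogonal). Third, bound $\|W(d)\|$: the Weingarten matrix $W(d) = (C(d))^{-1}$ up to the pseudo-normalization, and for $d = q^2$ or larger it is close to the identity; a standard bound gives $\|W(d) - I\| = O(t^2/d)$ or so, hence $\|W(d)\| \leq 1 + O(t^2/d) \leq e^{O(t^2/q^2)}$, which is where the $e^{1/(2t^2)+2}$-type prefactor and the $t \leq q$ hypothesis enter (it keeps $t^2/q^2 \leq 1$). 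Fourth, and most importantly, bound $\|D(Q_2) C(Q_1) D_\nu\|$ and $\|D(Q_2) C(Q_3)\|$ restricted to the deranged subspace: here $D(Q_2)_{\sigma\sigma} = q^{-|\sigma|}$ and $C(Q_3)_{\sigma\tau} = q^{-|\sigma^{-1}\tau|}$, and the key point is that on $\mathfrak{D}$, summing $\sum_{\tau \in S_t^{(\mathfrak{D})}} q^{-|\sigma^{-1}\tau|}$ over derangements $\tau$ is dominated by the transpositions-heavy terms and is $\leq \sum_{k\geq t/2} (\#\{w : |w| = k, w \text{ deranged}\}) q^{-k}$; I would bound the derangement count by $t!/e \cdot (\text{stuff})$ and then estimate via $t! \leq t^t$ and Stirling-type inequalities, producing exactly the two competing terms $(\tfrac12 + \tfrac3t)^t$ (from near-minimal-length elements) and $(4/\sqrt t)^t$ (from the bulk, where entropy $t^t$ beats the $q^{-k}$ decay only mildly). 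Multiplying the $\leq 2$ "expensive" factors' worth of $q^{-1}$'s that are \emph{not} offset, plus the $q^{-2}$ that always factors out as in the $t=2$ case, yields the $(t/q)^{t-2} q^{-2}$ scaling; finally one checks numerically/analytically that for $t > 28$ this whole expression is $\leq 1/(1+q^2) \leq q^{-2}$.

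The main obstacle is the fourth step: getting the combinatorial sum over deranged permutations weighted by $q^{-|\cdot|}$ sharp enough. Naively bounding $|S_t| = t!$ and $q^{-|\sigma^{-1}\tau|} \leq q^{-t/2}$ gives $t! q^{-t/2} \sim (t/\sqrt q)^t$, which is \emph{not} summable-to-small unless $q \gg t^2$ — far worse than $t \leq q$. The trick must be to track the cycle structure carefully: an element of length $k$ has at least $t-k$ cycles, the number of such elements is controlled (roughly $\binom{t}{2k-t}\cdot\ldots$, or via the generating function $\prod(1 + x/j)$ for cycle-type weights), and one organizes the sum so that the diagonal-ish $D(Q_2)$ factor $q^{-|\sigma|}$ combines with one $C$ factor to kill the entropy. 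I expect the $(1/2+3/t)^t$ term to come from the regime $2k - t = O(1)$ (few "excess" transpositions beyond a perfect matching) and the $(4/\sqrt t)^t$ term from $k$ near $t$ (where $\binom{t}{k}$-type entropy is maximal but $q^{-k}$ with $q \geq t$ just barely controls it); reconciling these two regimes into one clean bound, with explicit constants good enough that the final inequality holds for all $t > 28$ simultaneously, is the delicate part. Everything else — the norm bounds on $D_\nu$, $W$, and the reduction to $M^\nu_m$ — is routine given the machinery already set up in the excerpt.
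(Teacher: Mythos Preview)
Your overall architecture matches the paper's: bound $\|K_m\|_{\mathfrak D}$ by submultiplicativity on the factorization of Lemma~\ref{lemma:main_product}, take $\|D_\nu\|=\|D_\nu^{-1}\|=1$ via a unitary realization, bound each Weingarten factor by a constant (the paper's Lemma~\ref{lemma:weingarten_bound} gives $\|W(d)\|\le e$, so $e^2$ total---this is the ``$+2$'' in the exponent), and then spend all the effort on $\|D(Q_2)C(Q_1)\|_{\mathfrak D}$. That much is right.

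The gap is in the last step, and it is the whole proof. The paper bounds $\|D(Q_2)C(Q_1)\|_{\mathfrak D}$ by the Frobenius norm, which equals $\sqrt{d_t(q^{-2})}$ for the \emph{derangement polynomial} $d_t(x)=\sum_{\sigma\in S_t^{(\mathfrak D)}}\sum_{\tau\in S_t} x^{|\sigma^{-1}\tau|+|\tau|}$. Note the exponent combines \emph{both} the $D$-weight $|\tau|$ and the $C$-weight $|\sigma^{-1}\tau|$; your row/column-sum sketch with only $q^{-|\sigma^{-1}\tau|}$ misses half the damping and is indeed stuck at $q\gg t^2$, as you observed. The paper then does \emph{not} attack $d_t$ by direct cycle-structure counting. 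Instead it (i) uses inclusion--exclusion over common fixed points to write $d_t(x)$ as an alternating sum of products $(-1)^t\sum_k(-1)^k\binom{t}{k}\prod(\cdots)$, (ii) reinterprets that alternating sum as a matrix element $c_t(x^2)\,\hat e_0^T(\Delta+TG(x))^t\vec u$ with $\Delta=T-I$ a finite-difference operator, (iii) expands the $t$-th power and uses a discrete product rule to push each $\Delta$ onto a single $G$-factor, and (iv) bounds the iterated finite differences $\delta^n g_k(x)$ by $3x^{(n+1)/2}$ via the Cauchy integral formula. Only after all this does one get $\sum_A\binom{t}{A}A^{t-A}3^A x^{t/2}$, and splitting that sum at $A=\lfloor t/2\rfloor$ is what produces the two terms $(\tfrac12+\tfrac3t)^t$ and $(4/\sqrt t)^t$. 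Your speculation that these arise from ``near-minimal-length'' versus ``bulk'' regimes of a direct permutation sum is not the mechanism, and a direct count along those lines would not close the gap at $t\le q$.
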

\begin{proof}
    We have
    \begin{gather}
        ||K_m ||_{\mathfrak{D}} \leq \max_\nu ||D_{\nu}^{-1}|| \cdot
        ||W(q^{m+1})|| \cdot ||D(q)C(q^m)P_{\mathfrak{D}}|| \cdot ||D_{\nu}|| \cdot ||W(q^2)|| \cdot ||D(q)C(q)P_{\mathfrak{D}}||
    \end{gather}
    We have $||D_{\nu}|| = ||D_{\nu}^{-1}|| = 1$, since the representations are unitary. Lemmas \ref{lemma:weingarten_bound}, \ref{lemma:dc_bound}, and \ref{lemma:dt_bound_analytic} give bounds on the remaining terms.
\end{proof}
\begin{lemma}\label{lemma:numerical_bound}
    When $t \leq q$ and $7 \leq t \leq 28$, 
    \begin{gather}
        ||K_m ||_{\mathfrak{D}} \leq \frac{0.367}{q^2} \left(\frac{t}{q}\right)^{t-2} \leq \frac{1}{1 + q^2}
    \end{gather}
\end{lemma}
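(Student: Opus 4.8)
The plan is to reuse the skeleton of the proof of Lemma~\ref{lemma:analytic_bound}, but to replace its crude closed-form estimates by sharper, case-by-case bounds valid only on the finite range $7 \le t \le 28$. Concretely, by Corollary~\ref{corollary:deranged_block_eigvals} it suffices to bound $K_m$ restricted to the complete derangement subspace, and by Lemma~\ref{lemma:main_product} this restriction is block-diagonalized into the matrices $M^{\nu}_m = D_\nu^{-1} W(Q_1 Q_2) D(Q_2) C(Q_1) D_\nu W(Q_2 Q_3) D(Q_2) C(Q_3)$ indexed by partitions $\nu$ of $t$, with $Q_1 = q^m$, $Q_2 = Q_3 = q$. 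Submultiplicativity of the operator norm together with $||D_\nu|| = ||D_\nu^{-1}|| = 1$ (the irreps are unitary) then reduces the task to bounding, on the deranged block, the four factors $||W(q^{m+1})||$, $||W(q^2)||$, $||D(q)C(q^m)P_{\mathfrak{D}}||$, and $||D(q)C(q)P_{\mathfrak{D}}||$, uniformly in $m$ and in $q \ge t$.

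The first reduction is to remove the dependence on $m$. As $m \to \infty$ one has $Q_1 = q^m \to \infty$, so $W(q^{m+1}) \to I$ and $C(q^m) \to I$ entrywise; I would show the relevant factor norms are monotone in $m$ (mirroring the exact $t=2$ computation in Theorem~\ref{thm:eigval_t2}, where $\lambda_m$ increases to its $m\to\infty$ value), or more simply bound $||W(q^{m+1})|| \le ||W(q^2)||$ and $||C(q^m)P_{\mathfrak{D}}|| \le ||C(q)P_{\mathfrak{D}}||$ by diagonal dominance, so that every factor is controlled by its value at $m$ minimal/maximal and all $m$-dependence drops out. The second reduction is in $q$: each surviving norm is a ratio of polynomials in $q^{-1}$ whose leading behaviour is a fixed power of $q^{-1}$ — every derangement of $S_t$ has at least $\lceil t/2 \rceil$ transpositions, so each of the two occurrences of $D(q)$ contributes a factor $q^{-\lceil t/2\rceil}$, producing the overall $q^{-t}$ scaling, while the count of deranged permutations supplies the $t^{t-2}$ prefactor. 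I would then argue that, after pulling out this leading power, the remaining rational function is non-increasing in $q$ for $q \ge t$ (the leading term dominates once $q \ge t$), so it suffices to verify the claimed bound at $q = t$.

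At that point only the $22$ values $t \in \{7, 8, \dots, 28\}$ remain, and for each I would compute a rigorous upper bound on $q^2\,||K_m||_{\mathfrak{D}}$ at $q = t$ (and $m$ at its worst value) by bounding the operator norms of the now fully explicit finite matrices $W(t^2)$ and $D(t)C(t)$ on the derangement subspace — for instance via the $\sqrt{||\cdot||_1\,||\cdot||_\infty}$ inequality or interval-arithmetic eigenvalue enclosures on the isotypic blocks; the structural identities behind Lemmas~\ref{lemma:weingarten_bound} and~\ref{lemma:dc_bound} make these matrices explicit. Checking in each of the $22$ cases that the result is at most $0.367\,t^{t-2}$ yields $||K_m||_{\mathfrak{D}} \le 0.367\,q^{-2}(t/q)^{t-2}$ for all $q \ge t$ and all $m$. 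The concluding inequality is then immediate: since $7 \le t \le q$ forces $(t/q)^{t-2} \le 1$ and $0.367(1+q^2) \le q^2$ for every $q \ge 1$, we get $||K_m||_{\mathfrak{D}} \le \frac{0.367}{q^2} \le \frac{1}{1+q^2}$.

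The main obstacle I anticipate is the uniformity in $m$ and $q$: the factored bound is a product of $m$- and $q$-dependent rational functions, and pinning the worst case at $(m, q)$ extremal — rather than scanning a two-parameter family — requires the right monotonicity statements for the Weingarten norm $||W(q^2)||$ and for $||D(q)C(q)P_{\mathfrak{D}}||$. Once those are in hand, the rest is a finite and essentially mechanical (if computationally nontrivial) verification over $t \in \{7,\dots,28\}$, and the final comparison with $\frac{1}{1+q^2}$ is trivial.
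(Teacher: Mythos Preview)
Your plan matches the paper's: factor $\|K_m\|_{\mathfrak D}\le\|W\|^2\|DCP_{\mathfrak D}\|^2$, eliminate the $m$- and $q$-dependence by monotonicity and scaling, then verify the residual constant for each $t\in\{7,\dots,28\}$. The paper implements this via Lemmas~\ref{lemma:weingarten_bound} (uniform bound $\|W(d)\|\le e$ for all $d\ge q^2$, which disposes of $m$), \ref{lemma:dc_bound} (Frobenius bound $\|D(Q_2)C(Q_1)\|_{\mathfrak D}^2\le d_t(q^{-2})$, the \emph{derangement polynomial}, valid for all $Q_1,Q_2\ge q$), and \ref{lemma:dt_bound_numerical} (positive coefficients give $d_t(q^{-2})\le d_t(t^{-2})(t/q)^{2\lceil t/2\rceil}$, and the product formula of Lemma~\ref{lemma:derangement_polynomial_productform} makes the finitely many evaluations of $d_t(t^{-2})$ cheap, yielding $e^2\cdot 0.0496\approx 0.367$). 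One small correction to your heuristic: the $q^{-\lceil t/2\rceil}$ scaling does not come from $D(q)$ alone (its diagonal contains $q^{-|I|}=1$) but from the triangle inequality $|\tau|+|\sigma^{-1}\tau|\ge|\sigma|\ge\lceil t/2\rceil$ applied to the full product $D(Q_2)C(Q_1)$ acting on a deranged column $\sigma$.
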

\begin{proof}
This will follow from Lemmas \ref{lemma:weingarten_bound}, \ref{lemma:dc_bound}, and \ref{lemma:dt_bound_numerical}.
\end{proof}

\begin{lemma}
    \label{lemma:Kbound_t34}
    When $t \leq q$ and $2 < t < 7$, 
    \begin{gather}
        ||K_m||_{\mathfrak{D}} \leq \begin{cases}
        \frac{0.307}{q^2}\left(\frac{t}{q}\right)^2 & t = 3 \\
        \frac{0.167}{q^2}\left(\frac{t}{q}\right)^2 & t = 4 \\
        \frac{8.27\times 10^{-3}}{q^2}\left(\frac{t}{q}\right)^4 & t = 5 \\
        \frac{2.48 \times 10^{-3}}{q^2}\left(\frac{t}{q}\right)^4 & t = 6
        \end{cases}
        \leq \frac{1}{1 + q^2}
    \end{gather}
\end{lemma}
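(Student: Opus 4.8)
The four values $t \in \{3,4,5,6\}$ are handled one at a time; for each, the plan is to reduce to a finite family of explicit matrices and then bound their spectral norms.

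\textbf{Reduction.} By Lemma~\ref{lemma:K_m_bounded_by_deranged} it suffices to bound $\|P_{\mathfrak{D}} K_m^{(k)} P_{\mathfrak{D}}\|$ for $2 \le k \le t$; the $k=2$ contribution is given exactly by Theorem~\ref{thm:eigval_t2}, so we focus on $k = t$. Lemma~\ref{lemma:deranged_symmetry} decomposes the deranged subspace into $S_t$ right-isotypic components, Lemma~\ref{lemma:isotype_basis} gives an explicit basis of each indexed by a rep-index pair and a derangement $\sigma$, and Lemma~\ref{lemma:main_product} identifies the block of $K_m$ on the $\nu$-isotypic derangement subspace with $P_{\mathfrak{D}} M_m^\nu P_{\mathfrak{D}}$, where $M_m^\nu = D_\nu^{-1} W(q^{m+1}) D(q) C(q^m) D_\nu W(q^2) D(q) C(q)$. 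Since the partitions $\nu \vdash t$ are finite in number and each matrix has size $d_\nu D_t$ (with $D_t$ the number of derangements of $t$ letters, so $D_3 = 2$, $D_4 = 9$, $D_5 = 44$, $D_6 = 265$), the task is to bound $\max_{\nu} \sup_{m \ge 1} \|P_{\mathfrak{D}} M_m^\nu P_{\mathfrak{D}}\|$.

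\textbf{Removing $m$ and reading off the power of $q$.} The matrix $M_m^\nu$ depends on $m$ only through $Q_1 = q^m$ in $C(q^m)$ and $W(q^{m+1})$, and both converge (monotonically in the relevant entries) to the identity on the permutation module as $m \to \infty$. Hence it suffices to bound $\|P_{\mathfrak{D}} M_\infty^\nu P_{\mathfrak{D}}\|$ plus $\sup_m \|P_{\mathfrak{D}}(M_m^\nu - M_\infty^\nu) P_{\mathfrak{D}}\|$, the latter being smaller by a factor $O(q^{-1})$ by a direct estimate on $C(q^m) - I$ and $W(q^{m+1}) - W(\infty)$ (the worst case being $m=1$, exactly as in Theorem~\ref{thm:eigval_t2}). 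In the limit the conjugation $D_\nu^{-1}(\cdot)D_\nu$ acts only on quantities that are scalar on each $\sigma$-block and drops out, leaving the $\nu$-independent $M_\infty = D(q) W(q^2) D(q) C(q)$. Its $(\sigma,\rho)$-entry between derangements is $\sum_{\tau \in S_t} q^{-|\sigma|}\Wg(\sigma^{-1}\tau, q^2) q^{-|\tau|} q^{-|\tau^{-1}\rho|}$; since a derangement of $t$ letters has $|\sigma| \ge \lceil t/2 \rceil$, while $\Wg(\pi, q^2) = O(q^{-2|\pi|})$ and $|\sigma^{-1}\tau| \ge \big| |\sigma| - |\tau| \big|$, every term is $O\!\big(q^{-2\lceil t/2\rceil - |\tau^{-1}\rho|}\big)$. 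Summing the finitely many $\tau$ and bounding the operator norm by $D_t$ times the largest entry gives $\|P_{\mathfrak{D}} M_\infty P_{\mathfrak{D}}\| = O\big(q^{-2\lceil t/2\rceil}\big) = O\!\big(q^{-2}(t/q)^{2\lceil t/2\rceil - 2}\big)$, which is the exponent $(t/q)^2$ for $t \in \{3,4\}$ and $(t/q)^4$ for $t \in \{5,6\}$ appearing in the statement.

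\textbf{The constants and the final inequality.} For $t = 3, 4$ the blocks are small enough ($d_\nu \le 3$, $D_t \le 9$) that $M_m^\nu$, its eigenvalues, and $\sup_m$ can be written in closed form as rational functions of $q$ and $q^m$, directly yielding the constants $0.307$ and $0.167$. For $t = 5, 6$ the matrices are larger, so one bounds $\|P_{\mathfrak{D}} M_m^\nu P_{\mathfrak{D}}\|$ by a rigorous numerical computation at $m \to \infty$ together with the explicit finite-$m$ correction bound, obtaining $8.27\times10^{-3}$ and $2.48\times10^{-3}$. To get the second inequality $\frac{c_t}{q^2}(t/q)^{2\lceil t/2\rceil - 2} \le \frac{1}{1+q^2}$, note that $q \ge t \ge 3$ forces $(t/q)^{2\lceil t/2\rceil-2} \le 1$ and $1 + q^{-2} \le \tfrac{10}{9}$, so the claim reduces to $c_t \cdot \tfrac{10}{9} \le 1$, which holds as $\max_t c_t = 0.307$.

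\textbf{Main obstacle.} The difficulty is precisely that the asymptotic-style estimates used for large $t$ (the Weingarten, $DC$, and $DT$ bounds, Lemmas~\ref{lemma:weingarten_bound}, \ref{lemma:dc_bound}, \ref{lemma:dt_bound_analytic}) are too lossy here — they deliver only the weaker exponent $(t/q)^{t-2}$ and constants that are not small — so the small-$t$ cases must be treated essentially explicitly. This makes the representation-theoretic reduction of the first step (and, for $t=6$ especially, the further left-action reduction of Appendix~\ref{app:lr_irreps}) essential for keeping the matrices small enough to handle, and it places the weight of the proof on carefully controlling the finite-$m$ corrections and on a reliable numerical norm bound for $t = 5, 6$.
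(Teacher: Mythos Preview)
Your reduction to the blocks $P_{\mathfrak D}M_m^\nu P_{\mathfrak D}$ and your identification of the exponent $(t/q)^{2\lceil t/2\rceil-2}$ are correct, and your observation that $D_\nu^{-1}D(q)D_\nu=D(q)$ so that $M_\infty^\nu$ is $\nu$-independent is a nice simplification. But the route you take to remove the $m$-dependence is not the paper's, and it has a genuine gap.

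The paper does not take an $m\to\infty$ limit and then bound a correction. Instead it factors the norm as $\|M_m^\nu\|_{\mathfrak D}\le\|H(Q_1,Q_2)\|_{\mathfrak D}\,\|H(Q_3,Q_2)\|_{\mathfrak D}$ with $H=W\,D\,C$ (the $D_\nu$ conjugation is unitary in the basis norm, so it disappears from the bound for \emph{every} $m$, not only at $m=\infty$). The $m$-dependence is then eliminated by an elementwise bound on $H$ that is uniform in $Q_1,Q_2\ge q\ge t$ (Lemma~\ref{lemma:elementwise_hbound}): one writes the Weingarten numerator as a polynomial, expands $H_{\sigma\tau}$ as a polynomial in $(Q_1^{-1},Q_2^{-1})$, replaces every coefficient by its absolute value, uses that the minimum total degree is $\lceil t/2\rceil$ (from $|\sigma|\ge\lceil t/2\rceil$ for derangements) to pull out $(t/q)^{\lceil t/2\rceil}$, and evaluates the remainder at $Q_1=Q_2=t$. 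This gives a single nonnegative matrix $\overline H(t)$, independent of $m$ and $q$, and Perron--Frobenius converts the elementwise inequality into $\|H\|_{\mathfrak D}\le(t/q)^{\lceil t/2\rceil}\|\overline H(t)\|_{\mathfrak D}$. The constants $0.307,\,0.167,\,8.27\times10^{-3},\,2.48\times10^{-3}$ are precisely $\|\overline H(t)\|_{\mathfrak D}^2$ computed numerically for $t=3,4,5,6$ (Table~\ref{tab:ft_t2}) and then rewritten with the extra factor $t^2/q^2$ absorbed.

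Your alternative --- bound $\|M_\infty\|$ and add $\sup_m\|M_m-M_\infty\|$ --- is not carried through. The monotonicity claim fails at the level of entries (the Weingarten matrix has entries of alternating sign, so $W(q^{m+1})$ does not approach $I$ monotonically entrywise), and you give no argument that the correction norm is $O(q^{-1})$ times the leading term with a constant small enough to preserve the stated numerical values; the reference to Theorem~\ref{thm:eigval_t2} is not apt, since there the supremum over $m$ is attained as $m\to\infty$, not at $m=1$, and no such limit-plus-correction decomposition is used. Even if you made this rigorous, the constants produced would be those of your estimate on $M_\infty$ and on the correction, not $\|\overline H(t)\|_{\mathfrak D}^2$, so they would not match the statement. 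For $t=3,4$ your proposed closed-form computation of $\sup_m$ of the eigenvalues of the full $M_m^\nu$ is in principle feasible but substantially harder than the paper's route (for $t=4$ the deranged block already has nine permutations per irrep index), and again would yield different constants.
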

A proof is found in Section \ref{app:Kbound_t34} - in short, we can obtain these bounds by diagonalizing the half-matrix $W(Q_1 Q_2) D(Q_2) C(Q_1)$ directly, using Perron-Frobenius to replace matrix elements with more positive expressions that are largely independent of $Q_1, Q_2$. 
The bounds of Lemmas~\ref{lemma:analytic_bound} through \ref{lemma:Kbound_t34} are illustrated in Fig.~\ref{fig:derangement_eigs}, where $||K_m||_{\mathfrak{D}}$ is compared to $\frac{0.8}{q^2}$, which in turn is bounded by $\frac{1}{1+q^2}$ for all $q \geq 2$. 
\subsection{Bounds by factorization of $K_m$}
We saw in Lemma \ref{lemma:main_product} that $K_m$ may be decomposed as a product of several operators. Here we will bound the norms of the terms in that product independently. The product of these norms of the terms then gives a bound on the norm of the product. We will work in terms of operator norm induced by the \textbf{basis norm} 
\begin{gather}
    \left|\left|\sum_{\sigma \in S_t} c_\sigma \ket{\sigma}\right|\right|_\text{basis}^2 \equiv \sum_{\sigma \in S_t} |c_\sigma|^2
\end{gather}
Note that this norm is not equivalent to the norm inherited from the inner product on the original Hilbert space. 

The first term in the product is of the form $W(Q_1 Q_2)$. 
\begin{lemma} \label{lemma:weingarten_bound}
    For \(3 \leq t \leq q\) and \(Q_1, Q_2 \geq q\), 
    \begin{gather}
        ||W(Q_1 Q_2)||_\text{basis} \leq e
    \end{gather}
\end{lemma}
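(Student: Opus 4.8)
\textbf{Proof proposal for Lemma~\ref{lemma:weingarten_bound}.}

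The plan is to identify $W(d)$ with the inverse of a Gram matrix and then diagonalize that Gram matrix exactly by representation theory. Write $d = Q_1 Q_2$; since $Q_1,Q_2 \ge q \ge t \ge 3$ we have $d \ge q^2 \ge t^2$, so in particular $d \ge t$ and the permutation states $\ket{\sigma}$ at local dimension $d$ are linearly independent. Recall from Lemma~\ref{lemma:main_product} that $W(d)^{ij}_{\sigma\tau} = \Wg(\sigma^{-1}\tau,d)\,\delta_{ij}$, so as an operator it is $[\Wg(\sigma^{-1}\tau,d)]_{\sigma,\tau\in S_t}$ tensored with an identity on the representation indices, and $\|W(d)\|_{\text{basis}} = \|[\Wg(\sigma^{-1}\tau,d)]\|_{\text{basis}}$. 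Because $\langle U_i^{\otimes t,t}\rangle$ is the orthogonal projector onto $\mathrm{span}\{\ket{\sigma}\}$ and (with this paper's pseudo-normalization) $\braket{\sigma|\tau} = d^{-|\sigma^{-1}\tau|}$, the matrix $[\Wg(\sigma^{-1}\tau,d)]$ is exactly the inverse of the Gram matrix $\tilde G(d)_{\sigma\tau} = d^{-|\sigma^{-1}\tau|}$. Both matrices are real, symmetric, and positive definite, so $\|W(d)\|_{\text{basis}} = 1/\lambda_{\min}(\tilde G(d))$ and it suffices to prove $\lambda_{\min}(\tilde G(d)) \ge 1/e$.

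Next I would diagonalize $\tilde G(d)$ by observing that, as an operator on the group algebra $\mathbb{C}[S_t]$, it is convolution by the class function $\pi \mapsto d^{-|\pi|} = d^{c(\pi)-t}$ (where $c(\pi)$ is the number of cycles). Its eigenvalues are therefore labelled by partitions $\lambda \vdash t$, with the eigenvalue on the $\lambda$-isotypic component equal to $\mu_\lambda = d^{-t}\, d_\lambda^{-1}\sum_{\pi \in S_t} d^{c(\pi)}\chi_\lambda(\pi)$. Applying the classical content evaluation $d_\lambda^{-1}\sum_{\pi} d^{c(\pi)}\chi_\lambda(\pi) = \prod_{(i,j)\in\lambda}(d+j-i)$ turns this into a product over the boxes of the Young diagram, $\mu_\lambda = \prod_{b\in\lambda}\bigl(1 + \mathrm{cont}(b)/d\bigr)$, where $\mathrm{cont}((i,j)) = j-i \in \{-(t-1),\dots,t-1\}$. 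Every factor is positive since $d \ge t$.

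I would then show the minimum over $\lambda$ is attained at the single column $\lambda = (1^t)$, for which $\mu_{(1^t)} = \prod_{i=1}^{t-1}(1-i/d)$. The clean route is a majorization statement: for every $\lambda \vdash t$ the sorted multiset of contents of $\lambda$ dominates, entry by entry, that of $(1^t)$, namely $(-(t-1),-(t-2),\dots,-1,0)$; this follows from a short counting argument bounding, for each $j\ge 1$, the number of boxes of $\lambda$ with content $\le -j$. Since factors of nonnegative content are $\ge 1$, domination of contents forces $\mu_\lambda \ge \mu_{(1^t)}$. Finally, using $d \ge t^2$ and $\log(1-x) \ge -x/(1-x)$,
\begin{gather*}
\mu_{(1^t)} = \prod_{i=1}^{t-1}\Bigl(1-\tfrac{i}{d}\Bigr) \ge \prod_{i=1}^{t-1}\Bigl(1-\tfrac{i}{t^2}\Bigr) \ge \exp\!\Bigl(-\sum_{i=1}^{t-1}\tfrac{i}{t^2-i}\Bigr) \\
\ge \exp\!\Bigl(-\tfrac{1}{t^2-t+1}\binom{t}{2}\Bigr) > e^{-1/2} > \tfrac{1}{e}
\end{gather*}
for $t \ge 3$, which is comfortably stronger than required. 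Hence $\|W(d)\|_{\text{basis}} = 1/\lambda_{\min}(\tilde G(d)) < e$.

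The only step needing genuine care is pinning down the minimizing partition, i.e.\ the content-majorization claim; the rest is bookkeeping with standard symmetric-function identities. One can sidestep identifying the exact minimizer at no real savings: the same majorization gives $\sum_{b:\mathrm{cont}(b)<0}|\mathrm{cont}(b)| \le \binom{t}{2}$ for every $\lambda$, whence $\mu_\lambda \ge \exp\!\bigl(\sum_{b:\mathrm{cont}(b)<0}\mathrm{cont}(b)/(d+\mathrm{cont}(b))\bigr) \ge \exp(-\binom{t}{2}/(t^2-t+1)) > 1/e$ directly, so the obstacle is essentially that lemma either way.
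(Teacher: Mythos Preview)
Your proof is correct and follows essentially the same route as the paper: both diagonalize the Weingarten matrix by partitions (the paper via the known decomposition $W(d)=\sum_\lambda c_\lambda(d)^{-1}P_\lambda$, you via the Gram matrix and the content product formula, which are the same thing since $c_\lambda(d)=\prod_{b\in\lambda}(1+\mathrm{cont}(b)/d)$), identify the extremal partition as the single column $(1^t)$, and then bound the resulting product $\prod_{i=0}^{t-1}(1-i/d)$. The differences are in execution rather than idea. The paper simply asserts that $(1^t)$ minimizes $c_\lambda$, whereas you supply the content-majorization argument (which is indeed short: if $\ell(\lambda)\ge j$ then the first-column boxes $(1,1),\dots,(j,1)$ already give $j$ boxes of content $>-j$, while if $\ell(\lambda)<j$ there are no boxes of content $\le -j$ at all). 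For the final numerical step the paper invokes Stirling on $d^t(d-t)!/d!$ and lands at $e$, while your elementary estimate $\log(1-x)\ge -x/(1-x)$ gives the sharper $\sqrt{e}$. So your argument is a cleaner and slightly stronger version of the same proof.
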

A proof may be found in Appendix \ref{app:weingarten_bound}. Note that this bound holds in either the deranged subspace or the whole space.

\begin{lemma} \label{lemma:dc_bound}
Define the \textit{derangement polynomial}
    \begin{gather}\label{eq:dt_definition_copy}
        d_t(x) = \sum_{\sigma \in S_t^{(\mathfrak{D})}} \sum_{\tau \in S_t} x^{|\sigma^{-1} \tau| + |\tau|}
    \end{gather}
    Let $Q_1, Q_2 \geq q$. Then
    \begin{gather}
        ||D(Q_2) C(Q_1)||_{\mathfrak{D}, \text{basis}}  \leq \sqrt{d_t(q^{-2})}
    \end{gather}
\end{lemma}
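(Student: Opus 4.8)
The plan is to bound the operator norm $\|D(Q_2)C(Q_1)\|_{\mathfrak{D},\text{basis}}$ by computing the largest singular value of the matrix $D(Q_2)C(Q_1)$ restricted to the deranged subspace, which equals the square root of the largest eigenvalue of $C(Q_1)^\dagger D(Q_2)^\dagger D(Q_2) C(Q_1) = C(Q_1) D(Q_2)^2 C(Q_1)$ (using that $C$ and $D$ are symmetric/real). Since we are in the basis norm, all the relevant matrices have nonnegative entries: $C(d)_{\sigma\tau} = d^{-|\sigma^{-1}\tau|} \geq 0$ and $D(d)_{\sigma\tau} = d^{-|\sigma|}\delta_{\sigma\tau} \geq 0$. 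The strategy will be to use the fact that for a matrix with nonnegative entries, the spectral norm restricted to a coordinate subspace is controlled by a Perron–Frobenius / Schur-test type estimate on row sums.

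Concretely, first I would observe that restricting to the deranged subspace means restricting the row index $\sigma$ to run over $S_t^{(\mathfrak{D})}$ (derangements) while the column index $\tau$ still runs over all of $S_t$ — this matches the index structure in the definition of $d_t(x)$ in Equation~\ref{eq:dt_definition_copy}. The $(\sigma,\rho)$ entry of $(D(Q_2)C(Q_1))$ with $\sigma$ a derangement is $Q_2^{-|\sigma|} Q_1^{-|\sigma^{-1}\rho|}$. Then $\|D(Q_2)C(Q_1)\|_{\mathfrak{D},\text{basis}}^2$ is the largest eigenvalue of the Gram matrix $G_{\rho\rho'} = \sum_{\sigma \in S_t^{(\mathfrak{D})}} Q_2^{-2|\sigma|} Q_1^{-|\sigma^{-1}\rho| - |\sigma^{-1}\rho'|}$. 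By Perron–Frobenius this largest eigenvalue is at most $\max_{\rho'} \sum_{\rho,\sigma} Q_2^{-2|\sigma|} Q_1^{-|\sigma^{-1}\rho|-|\sigma^{-1}\rho'|}$ — but a cleaner route is to bound the Frobenius-type quantity directly: $\|MN\|^2 \leq \sum_{\sigma \in \mathfrak{D}} \big(\sum_\rho Q_2^{-|\sigma|} Q_1^{-|\sigma^{-1}\rho|}\big)^2$ is too lossy, so instead I would bound the operator norm of $M = D(Q_2)C(Q_1)|_\mathfrak{D}$ by $\|M\| \leq \|M\|_F = \big(\sum_{\sigma\in\mathfrak{D},\rho\in S_t} Q_2^{-2|\sigma|}Q_1^{-2|\sigma^{-1}\rho|}\big)^{1/2}$. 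Since $Q_1,Q_2 \geq q$, each factor $Q_i^{-2k} \leq q^{-2k}$, so this is at most $\big(\sum_{\sigma\in S_t^{(\mathfrak{D})}, \rho\in S_t} q^{-2|\sigma|-2|\sigma^{-1}\rho|}\big)^{1/2} = \sqrt{d_t(q^{-2})}$, exactly the claimed bound after recognizing the exponent $|\sigma^{-1}\tau| + |\tau|$ in $d_t$ matches $|\sigma^{-1}\rho| + |\sigma|$ under the relabeling $\tau = \sigma^{-1}\rho$ (so $|\tau| = |\sigma^{-1}\rho|$ and $|\sigma^{-1}\tau| = |\sigma^{-1}\sigma^{-1}\rho|$ — I need to check the exact substitution, possibly it is $\tau \mapsto \sigma\tau$ or a left/right convention, but the count of terms and exponents will match because $|\cdot|$ is a conjugation-invariant length and inversion-invariant).

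The main obstacle I anticipate is matching the precise index conventions so that the double sum defining $\|M\|_F^2$ is literally $d_t(q^{-2})$ rather than merely comparable to it; the Frobenius bound $\|M\|\le\|M\|_F$ is crude but here it is apparently what is wanted, so I must make sure no factor is lost and that the substitution $\rho \leftrightarrow \tau$ (a bijection of $S_t$ for fixed $\sigma$) sends the pair of exponents $(|\sigma|, |\sigma^{-1}\rho|)$ to $(|\tau|, |\sigma^{-1}\tau|)$ correctly — using $|\sigma| = |\sigma^{-1}|$ and the fact that for a derangement $\sigma$, as $\rho$ ranges over $S_t$ so does $\tau := \sigma^{-1}\rho$, with $|\tau| = |\sigma^{-1}\rho|$. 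A secondary subtlety is confirming that the $\delta_{ij}$ block structure in the definitions of $W,D,C$ from Lemma~\ref{lemma:main_product} does not introduce extra multiplicity: since $D(d)$ and $C(d)$ are proportional to the identity in the irrep indices $i,j$, the norm computation factors through and the irrep indices contribute nothing. Once these bookkeeping points are settled, the lemma follows immediately; I would relegate the detailed index chase to Appendix~\ref{app:weingarten_bound} or a dedicated appendix, as the paper's phrasing ("A proof may be found in Appendix") suggests.
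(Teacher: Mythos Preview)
Your approach---bound the operator norm by the Frobenius norm---is exactly the paper's. The only issue is that you restrict the wrong index to derangements. In the factorization used in Lemma~\ref{lemma:analytic_bound} the projector $P_{\mathfrak{D}}$ sits to the \emph{right} of $DC$, so it is the \emph{column} (input) index that must lie in $S_t^{(\mathfrak{D})}$, not the row. With your choice (row $\sigma\in S_t^{(\mathfrak{D})}$) the Frobenius sum becomes
\[
\sum_{\sigma\in S_t^{(\mathfrak{D})}}\sum_{\rho\in S_t} q^{-2|\sigma|-2|\sigma^{-1}\rho|}
\;=\;\Big(\sum_{\sigma\in S_t^{(\mathfrak{D})}} q^{-2|\sigma|}\Big)\Big(\sum_{\pi\in S_t} q^{-2|\pi|}\Big),
\]
which is \emph{not} $d_t(q^{-2})$ (for $t=2$ it gives $q^{-2}+q^{-4}$, while $d_2(q^{-2})=2q^{-2}$). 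So the substitution you were searching for cannot exist; the mismatch you flagged is real, not cosmetic.

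The fix is immediate: restrict the column index $\tau$ to derangements instead. Then $(DC)_{\sigma\tau}=Q_2^{-|\sigma|}Q_1^{-|\sigma^{-1}\tau|}$ with $\tau\in S_t^{(\mathfrak{D})}$, and the Frobenius sum
\[
\sum_{\tau\in S_t^{(\mathfrak{D})}}\sum_{\sigma\in S_t} q^{-2|\sigma|-2|\sigma^{-1}\tau|}
\]
matches $d_t(q^{-2})$ on the nose after the trivial relabel $(\sigma,\tau)\leftrightarrow(\tau,\sigma)$ and $|\sigma^{-1}\tau|=|\tau^{-1}\sigma|$. No further substitution is needed.
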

The proof involves using the Frobenius norm to upper-bound the operator norm of $DC$. A formal proof may be found in Appendix~\ref{app:proof_of_dc_bound}.

To make use of this result, we establish the following pair of bounds on $d_t$. 
\begin{lemma}
\label{lemma:dt_bound_analytic}
When $t \leq q$, 
\begin{gather}
    d_t(q^{-2}) \leq \frac{t^2}{2} e^{\frac{1}{2t^2}} \left[\left(\frac{1}{2} + \frac{3}{t}\right)^t + \left(\frac{4}{\sqrt{t}}\right)^t\right] \left(\frac{t}{q}\right)^{t-2} \frac{1}{q^2}
\end{gather}
\end{lemma}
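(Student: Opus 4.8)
The goal is an upper bound on the derangement polynomial $d_t(q^{-2}) = \sum_{\sigma \in S_t^{(\mathfrak{D})}} \sum_{\tau \in S_t} q^{-2(|\sigma^{-1}\tau| + |\tau|)}$ in the regime $t \le q$, showing it is at most $\frac{t^2}{2} e^{\frac{1}{2t^2}} \left[\left(\frac{1}{2} + \frac{3}{t}\right)^t + \left(\frac{4}{\sqrt{t}}\right)^t\right] \left(\frac{t}{q}\right)^{t-2} \frac{1}{q^2}$. The first step is to rewrite the double sum so that the $q$-dependence is isolated: since $|\pi|$ for $\pi \in S_t$ equals $t$ minus the number of cycles $c(\pi)$, and $|\sigma^{-1}\tau| + |\tau| \ge |\sigma^{-1}| = |\sigma|$ with equality exactly when $\tau$ lies on a geodesic from $I$ to $\sigma$ in the Cayley graph, I would substitute $x = q^{-2}$ and group terms by the value $k = |\sigma^{-1}\tau| + |\tau|$. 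For a fixed derangement $\sigma$, a derangement has $|\sigma| \ge \lceil t/2 \rceil$ (a derangement on $t$ points needs at least $\lceil t/2\rceil$ transpositions since every cycle has length $\ge 2$), but more usefully the minimal exponent over all $\tau$ is exactly $|\sigma|$, so the leading power of $x$ contributed by $\sigma$ is $x^{|\sigma|}$. The worst case $|\sigma| = t-2$ corresponds to $\sigma$ being a transposition times a transposition... no: actually the \emph{smallest} possible value of $|\sigma|$ over derangements is $\lceil t/2 \rceil$, achieved by products of disjoint transpositions (for even $t$), while for the bound to come out as $(t/q)^{t-2} q^{-2} = x^{t-1}$ we need the dominant contribution to be at exponent $t-1$. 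So I would double-check: for a single transposition adjoined, hmm — rather, I expect the relevant geometry is that $\min_\tau(|\sigma^{-1}\tau|+|\tau|) = |\sigma|$ and the number of $\tau$ achieving a given exponent $|\sigma| + 2j$ grows, so the generating function in $x$ for fixed $\sigma$ is $x^{|\sigma|}$ times a polynomial; summing over $\sigma$ and using $t \le q$ (so $x \le t^{-2}$) lets me bound the tail.

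The cleanest route is probably to split the sum over derangements $\sigma$ by the quantity $|\sigma|$, or better, to use the known generating-function identity for $\sum_{\tau \in S_t} x^{|\sigma^{-1}\tau| + |\tau|}$. This inner sum, by a standard fact about the symmetric group, depends only on the conjugacy class (cycle type) of $\sigma$, and in fact equals $\prod_{\text{cycles } c \text{ of } \sigma} f_{\ell(c)}(x)$ for an appropriate single-cycle polynomial $f_\ell$, because the geodesic structure factorizes over cycles. So I would first prove this multiplicativity: if $\sigma = \sigma_1 \sigma_2$ with disjoint supports, then $\sum_\tau x^{|\sigma^{-1}\tau|+|\tau|}$ factors as a product over the two pieces — this follows because any $\tau$ decomposes compatibly and the distances add. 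Then I would compute or bound $f_\ell(x) := \sum_{\tau \in S_\ell, \text{supported on an } \ell\text{-cycle}} x^{\cdots}$ for a single $\ell$-cycle; for an $\ell$-cycle $|\sigma| = \ell - 1$, and the number of $\tau$ at exponent $(\ell-1) + 2j$ is a Narayana-type or Catalan-type count, which I'd bound crudely by something like $\binom{2(\ell-1)}{\ell-1}$ or $4^{\ell-1}$ times $x^{\ell-1}$. Since a derangement has all cycles of length $\ge 2$, and $k$ cycles with lengths summing to $t$, $|\sigma| = t - k$, and the product of single-cycle bounds gives roughly $C^{t-k} x^{t-k}$; with $x \le t^{-2}$ the term with the most cycles ($k$ large, i.e. $|\sigma|$ small) is \emph{largest}, which seems to contradict the target exponent $t-1$ — so I must be careful: the target has $x^{t-1}$, meaning the bound must be dominated by $|\sigma|$ near $t-1$, impossible for a derangement unless... the $(t/q)^{t-2}q^{-2}$ factoring is just $x^{t-1}$ and perhaps the true minimum of $|\sigma|$ being $\lceil t/2\rceil$ means the actual sum is even \emph{smaller} than the claimed bound, i.e. the claimed bound is just a convenient (not tight) upper estimate. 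So the strategy is: bound each cycle-polynomial $f_\ell(x) \le g(\ell) \cdot x^{\ell-1}$ with an explicit $g$, multiply over cycles, sum over cycle types (partitions of $t$ into parts $\ge 2$) with multiplicities counted by the usual $t!/(\prod \ell_i \cdot \prod m_j!)$ formula, and then coarsely upper-bound the resulting sum of products by the stated closed form.

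\textbf{Key steps in order.} (1) Establish the cycle-factorization of the inner sum $\sum_{\tau\in S_t} x^{|\sigma^{-1}\tau|+|\tau|} = \prod_{\text{cycles}} f_{\ell}(x)$. (2) Bound the single-cycle polynomial: show $f_\ell(x) \le c_\ell\, x^{\ell-1}$ with an explicit combinatorial constant $c_\ell$ (e.g. via counting lattice-path/minimal-factorization data, bounding by $\binom{2\ell-2}{\ell-1} \le 4^{\ell-1}$, times a geometric correction in $x$ valid since $x \le t^{-2} \le \ell^{-2}$ is small). (3) Sum over derangement cycle types: write $d_t(x) = \sum_{\text{partitions } \lambda \vdash t,\ \lambda_i \ge 2} N_\lambda \prod_i f_{\lambda_i}(x)$ where $N_\lambda$ is the number of permutations of cycle type $\lambda$, and bound $N_\lambda \prod_i c_{\lambda_i} x^{|\lambda|}$ by grouping the partition sum. (4) Recognize/bound the partition sum: the two terms $\left(\frac12 + \frac3t\right)^t$ and $\left(\frac4{\sqrt t}\right)^t$ should emerge from splitting the partitions by whether they have "many small parts" (dominated by all-transpositions-type, giving the $(1/2 + 3/t)^t$ piece after Stirling estimates on $t!/(2^{t/2}(t/2)!)$) versus "some large part" (giving the $(4/\sqrt t)^t$ piece from the $4^{\ell-1}$ cycle bounds); the prefactors $\frac{t^2}{2}e^{1/(2t^2)}$ come from Stirling error terms and the sum over where the split occurs. (5) Finally, use $t \le q$, i.e. $x = q^{-2} \le t/q \cdot q^{-1}$... precisely rewrite $x^{t-1} = (t/q)^{t-2} q^{-2} \cdot (1/t)^{t-2} \cdot$ (leftover), folding the $(1/t)^{t-2}$ into the partition estimates to produce exactly the stated form.

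\textbf{Main obstacle.} The hard part will be step (4): getting the partition sum $\sum_{\lambda \vdash t,\ \lambda_i \ge 2} N_\lambda \prod_i c_{\lambda_i}$ down to precisely the claimed closed form with the stated constants, rather than some messier expression. This requires a careful Stirling analysis and a judicious split of the set of partitions — the all-2's (and near-all-2's) partitions are the combinatorially dominant family and produce the $\left(\frac12 + \frac3t\right)^t$ term (the $3/t$ slack absorbing partitions with a few larger parts), while partitions containing a long cycle need a separate, looser bound producing $\left(\frac4{\sqrt t}\right)^t$; merging these into one clean inequality valid for all $t \le q$ (and in particular reconciling it with the $t > 28$ restriction in Lemma~\ref{lemma:analytic_bound} versus the full $t \le q$ range claimed here) is the delicate accounting. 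I would expect to isolate the single-cycle constant $c_\ell \le 4^{\ell-1}/\ell$ or similar and then the geometric/exponential bookkeeping falls out, but verifying the exact numerical constants $\frac{t^2}{2}$, $e^{1/(2t^2)}$, $\frac12+\frac3t$, $\frac{4}{\sqrt t}$ is where the real work lies.
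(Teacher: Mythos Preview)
Your step (1) contains a genuine error: the inner sum $\sum_{\tau \in S_t} x^{|\sigma^{-1}\tau|+|\tau|}$ does \emph{not} factor over the cycles of $\sigma$. The reason is exactly the one you gloss over --- a generic $\tau \in S_t$ does not decompose as a product of permutations supported on the individual cycles of $\sigma$. For instance, with $t=4$ and $\sigma = (12)(34)$, the permutation $\tau = (13)$ mixes the two cycle supports, and indeed $|\sigma^{-1}(13)| + |(13)| = 3 + 1 = 4$, contributing an $x^4$ term that is absent from the product $\bigl(\sum_{\tau \in S_2} x^{|(12)\tau|+|\tau|}\bigr)^2 = (2x)^2$. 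Since the full $S_t$-sum has strictly more terms than the product of the per-cycle $S_{\ell_i}$-sums, and all terms are positive, the inequality even goes the wrong way for an upper bound. This breaks the entire chain (2)--(4), which all rest on the cycle factorization.

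The paper's proof takes a completely different route. It first establishes the inclusion-exclusion identity
\[
d_t(x) = (-1)^t\sum_{k=0}^t (-1)^k \binom{t}{k} \prod_{\ell = k+1}^t \bigl(1+(\ell - 1)x^2\bigr) \prod_{\ell = 1}^{k} \bigl(1+(\ell - 1)x\bigr)^2
\]
via a coset decomposition of $S_t$ over $S_k$ (which is the correct factorization: over fixed-point support, not over cycles). This alternating sum is then recast as a matrix element $c_t(x^2)\,\hat{e}_0^T(\Delta + TG(x))^t \vec{u}$, where $\Delta = T - I$ is a finite-difference operator. Expanding the power and repeatedly applying the identity $[\Delta, T^k D] = T^{k+1}\delta D$ leads to the bound $d_t(x) \le c_t(x^2)\sum_A \binom{t}{A} A^{t-A} 3^A x^{t/2}$, and it is the split of \emph{this} binomial sum at $A = \lfloor t/2\rfloor$ --- not a split over partition types --- that produces the two bracketed terms $\bigl(\tfrac12 + \tfrac{3}{t}\bigr)^t$ and $\bigl(\tfrac{4}{\sqrt t}\bigr)^t$. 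The prefactor $e^{1/(2t^2)}$ comes from bounding $c_t(x^2)$ at $x = q^{-2} \le t^{-2}$.
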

The proof of this lemma may be found in Appendix~\ref{app:proof_of_dt_bound}. This bound is adequate only for $t \geq 28$. For smaller $t$, we may evaluate the derangement polynomial numerically to obtain the following bound.
\begin{lemma} \label{lemma:dt_bound_numerical}
When $7 \leq t \leq 28$ and $q \geq t$, 
\begin{gather}
    d_t(q^{-2}) \leq \frac{0.0496}{q^2}\left(\frac{t}{q}\right)^{t-2} 
\end{gather}
\end{lemma}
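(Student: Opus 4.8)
The goal is to bound the derangement polynomial $d_t(x) = \sum_{\sigma \in S_t^{(\mathfrak{D})}} \sum_{\tau \in S_t} x^{|\sigma^{-1}\tau| + |\tau|}$ evaluated at $x = q^{-2}$, uniformly over $7 \le t \le 28$ and $q \ge t$, by the explicit expression $\frac{0.0496}{q^2}(t/q)^{t-2}$. My plan is to separate out the $q$-dependence from the purely combinatorial content, reducing the claim to a finite list of numerical checks — one for each integer $t$ in the range — of a quantity that depends only on $t$.

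First I would rewrite $d_t(q^{-2})$ by grouping terms according to the exponent. Setting $x = q^{-2}$, each monomial is $q^{-2(|\sigma^{-1}\tau| + |\tau|)}$. For a derangement $\sigma$, the minimum of $|\sigma^{-1}\tau| + |\tau|$ over all $\tau \in S_t$ is exactly $|\sigma|$ (achieved at $\tau = I$ among other places), and since $\sigma$ is a derangement on $t$ points, $|\sigma| \ge \lceil t/2 \rceil \ge t/2$; more relevantly, the smallest possible value of $|\sigma|$ over all derangements is attained by a product of transpositions, giving $|\sigma| = t/2$ for even $t$ and $(t+1)/2$ for odd $t$. Actually the cleanest route: factor out $q^{-2}\cdot (t/q)^{t-2} = q^{-(t-2)}\,t^{t-2}\,q^{-2}= t^{t-2} q^{-t}$. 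So the claim is equivalent to $d_t(q^{-2}) \cdot q^t \le 0.0496\, t^{t-2}$ for all $q \ge t$. Now $d_t(q^{-2}) q^t = \sum_{\sigma,\tau} q^{t - 2(|\sigma^{-1}\tau| + |\tau|)}$. The exponent $t - 2(|\sigma^{-1}\tau| + |\tau|)$ is negative for every term (since the minimum of $|\sigma^{-1}\tau|+|\tau|$ is $|\sigma| \ge \lceil (t+1)/2\rceil$ for $t$ odd, $t/2+1$ actually no — a derangement can have $|\sigma| = t/2$ when $t$ is even), so each term $q^{\text{(negative)}}$ is monotonically decreasing in $q$, hence maximized at $q = t$. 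Therefore it suffices to prove $d_t(t^{-2}) \cdot t^t \le 0.0496\, t^{t-2}$, i.e. $d_t(t^{-2}) \le 0.0496\, t^{-2}$, a statement with no free parameters for each fixed $t$.

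The remaining step is then purely computational: for each $t \in \{7, 8, \dots, 28\}$, evaluate $d_t(t^{-2})$ exactly (or to sufficient precision) and verify $t^2 d_t(t^{-2}) \le 0.0496$. The derangement polynomial can be computed by iterating over the conjugacy classes of $S_t$: the inner sum $\sum_{\tau \in S_t} x^{|\sigma^{-1}\tau| + |\tau|}$ depends on $\sigma$ only through its cycle type, and $\sum_\tau x^{|\tau|}$ together with the structure constants of the class algebra (or simply a direct computation using the fact that $|\tau| = t - (\text{number of cycles of }\tau)$) lets one assemble $d_t$ efficiently. Concretely, $\sum_{\tau \in S_t} y^{c(\tau)} = y(y+1)\cdots(y+t-1)$ where $c(\tau)$ is the number of cycles, so with $y = x^{-1}$ absorbed appropriately one gets closed forms; the cross term $|\sigma^{-1}\tau|$ requires a convolution over the group but is still a finite rational computation for $t \le 28$. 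I would tabulate these values and confirm the bound holds with the stated constant $0.0496$ (and note that the true maximum of $t^2 d_t(t^{-2})$ over the range occurs at one endpoint, most likely $t=7$, which dictates the constant).

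The main obstacle is not conceptual but bookkeeping: one must make the reduction "the exponent $t - 2(|\sigma^{-1}\tau|+|\tau|)$ is always negative, hence $q = t$ is the worst case" fully rigorous, which requires checking that no monomial in $d_t$ has exponent $\ge 0$ — equivalently that $|\sigma^{-1}\tau| + |\tau| \ge \lceil (t+1)/2 \rceil > t/2$ fails only in the one boundary situation (even $t$, $\sigma$ a fixed-point-free involution, $\tau = I$), where the exponent is exactly $t - 2(t/2) = 0$ rather than negative. That single borderline family of terms is flat in $q$ rather than decreasing, so strictly speaking I would either (i) handle it separately — it contributes exactly the number of fixed-point-free involutions on $t$ points, which is $O(t^{t/2})$, easily absorbed — or (ii) observe that including it still leaves the sum maximized at $q=t$ since all other terms strictly decrease. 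Once that is pinned down, the proof is just the finite verification, which I would present as a table or a short appended computation.
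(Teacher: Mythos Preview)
Your proposal is correct and takes essentially the same approach as the paper: reduce to the single numerical check $t^{2}\, d_t(t^{-2}) \le 0.0496$ for each $t \in \{7,\dots,28\}$ by exploiting that $d_t$ has nonnegative coefficients and minimal degree $\lceil t/2\rceil \ge t/2$, so that $q^{t} d_t(q^{-2})$ is non-increasing in $q$. The paper's only additional ingredient is a closed-form product expression for $d_t(x)$ (an alternating sum of products $\prod_\ell (1+(\ell-1)x^2)\prod_\ell (1+(\ell-1)x)^2$), which makes the numerical evaluation polynomial-time rather than a sum over conjugacy classes --- a computational convenience, not a conceptual difference.
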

This lemma is proved in Appendix \ref{app:dt_numerical_proof}.
This bound, in turn, is actually adequate only for $t \geq 7$. For $t < 7$, we will need to consider the whole product together instead of bounding it term-by-term.

\subsection{Bounds for $2 < t < 7$ and $q \geq t$ (Lemma \ref{lemma:Kbound_t34})}
\label{app:Kbound_t34}
The above bounds cover the case $7 \leq t \leq q$. It remains to prove Lemma \ref{lemma:Kbound_t34}. Our goal is to bound the dominant eigenvalue of $M_{\sigma,\tau}^{ij}$, as defined in Lemma \ref{lemma:main_product}. An upper bound on the dominant eigenvalue of $M$ is given by the operator norm of $M$ with respect to \textit{any} vector norm. 
With respect to the basis norm, $D_\nu$ is a unitary operator. 
It follows that 
\begin{gather}
    ||M_m||_{\text{basis}} \leq ||W(Q_1 Q_2)D(Q_2)C(Q_1)||_{\text{basis}} ||W(Q_2 Q_3)D(Q_2)C(Q_3)||_{\text{basis}} = ||H(Q_1, Q_2)||_{\text{basis}} ||H(Q_3, Q_2)||_{\text{basis}}
\end{gather}
where we define the half-operator 
\begin{gather} \label{eq:matrix_half_product}
    H(Q_1, Q_2) = W(Q_1 Q_2)D(Q_2)C(Q_1)
\end{gather}
In order to bound this operator, consider the Weingarten function. There exist polynomials $f_t$ and $g_{\sigma^{-1} \tau}$ such that
\begin{gather}
    \Wg\left(\sigma^{-1}\tau, d\right) = \frac{g_{\sigma^{-1} \tau}(d^{-1})}{f_t(d^{-1})}
\end{gather}
In Appendix \ref{app:elementwise_hbound}, we establish the following elementwise upper bound.
\begin{lemma}
\label{lemma:elementwise_hbound}
Let $Q_1, Q_2 \geq q \geq t$. Let $f_t, g_{\sigma^{-1}\tau}$ be polynomials such that
\begin{gather}
    \Wg\left(\sigma^{-1}\tau, d\right) = \frac{g_{\sigma^{-1} \tau}(d^{-1})}{f_t(d^{-1})}
\end{gather}
Define the polynomial 
\begin{gather}
    h_{\sigma \tau}(x,y) = \sum_{\rho \in S_t} x^{|\rho \sigma^{-1}|} y^{|\rho|} g_{\rho^{-1} \tau}(xy)
\end{gather}
and let $h_{\sigma \tau}^{(i,j)}$ be the corresponding coefficients. Define the matrix
\begin{gather}
    \overline{H}(t)_{\sigma \tau} = \frac{1}{f_t(t^{-2})}\sum_{i,j} \left|h_{\sigma \tau}^{(ij)}\right| t^{-(i+j)}
\end{gather}
Then we may bound the matrix elements of $H$ by
\begin{gather}
        |H_{\sigma \tau}(Q_1, Q_2)| \leq \left(\frac{t}{q}\right)^{\left\lceil \frac{t}{2}\right \rceil} 
        \overline{H}(t)_{\sigma \tau} 
\end{gather}
\end{lemma}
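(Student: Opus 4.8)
The plan is to turn each matrix element of $H(Q_1,Q_2)=W(Q_1Q_2)D(Q_2)C(Q_1)$ into a single rational function of $Q_1^{-1}$ and $Q_2^{-1}$, whose numerator is (up to relabeling) the polynomial $h_{\sigma\tau}$ and whose denominator is $f_t((Q_1Q_2)^{-1})$, and then to bound numerator and denominator separately. Expanding the matrix product gives $H_{\sigma\tau}(Q_1,Q_2)=\sum_{\rho\in S_t}\Wg(\sigma^{-1}\rho,Q_1Q_2)\,Q_2^{-|\rho|}\,Q_1^{-|\rho^{-1}\tau|}$. Substituting $\Wg(\pi,d)=g_\pi(d^{-1})/f_t(d^{-1})$ and factoring out $f_t((Q_1Q_2)^{-1})$, what remains is a polynomial in $x=Q_1^{-1}$, $y=Q_2^{-1}$. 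Reindexing the sum over $\rho$ and using that the Weingarten data are class functions — so $g_\pi=g_{\pi^{-1}}=g_{\mu\pi\mu^{-1}}$ and likewise $|\pi|=|\pi^{-1}|=|\mu\pi\mu^{-1}|$ — one checks this polynomial is exactly $h_{\sigma\tau}(x,y)$, i.e. $H_{\sigma\tau}(Q_1,Q_2)=h_{\sigma\tau}(Q_1^{-1},Q_2^{-1})/f_t((Q_1Q_2)^{-1})$.

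Next I would handle the two pieces. For the denominator: the poles of $\Wg(\cdot,d)$ in $d$ all lie at integers of absolute value $\le t-1$, so $f_t$ is positive and non-increasing on $[0,t^{-2}]$ (a standard property of the Weingarten denominator for $d\ge t$); since $Q_1,Q_2\ge q\ge t$ forces $(Q_1Q_2)^{-1}\le t^{-2}$, we get $0<f_t(t^{-2})\le f_t((Q_1Q_2)^{-1})$. For the numerator, the triangle inequality on the coefficients gives $|h_{\sigma\tau}(Q_1^{-1},Q_2^{-1})|\le\sum_{i,j}|h_{\sigma\tau}^{(ij)}|\,Q_1^{-i}Q_2^{-j}\le\sum_{i,j}|h_{\sigma\tau}^{(ij)}|\,q^{-(i+j)}$.

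The heart of the argument is a low-degree-vanishing claim: every monomial $x^iy^j$ actually occurring in $h_{\sigma\tau}$ has $i+j\ge\lceil t/2\rceil$ whenever $\sigma$ (equivalently $\tau$) is a complete derangement. The monomials contributed by a fixed $\rho$ have degree $|\rho\sigma^{-1}|+|\rho|+2\,\mathrm{ord}(g_{\rho^{-1}\tau})$. The classical large-$d$ expansion of the Weingarten function (cf.~Ref.~\onlinecite{Collins2006}) gives $\Wg(\pi,d)=\Theta(d^{-|\pi|})$ in our pseudo-normalization, hence $\mathrm{ord}(g_\pi)\ge|\pi|$; combining this with the triangle inequalities $|\rho\sigma^{-1}|+|\rho|\ge|\rho^{-1}(\rho\sigma^{-1})|=|\sigma|$ and $|\rho|+|\rho^{-1}\tau|\ge|\rho(\rho^{-1}\tau)|=|\tau|$, this degree is at least $\max(|\sigma|,|\tau|)$. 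A fixed-point-free permutation of $t$ letters has all cycles of length $\ge2$, so it takes at least $\lceil t/2\rceil$ transpositions, whence $\max(|\sigma|,|\tau|)\ge\lceil t/2\rceil$ under the hypothesis. Therefore, for each surviving monomial, $q^{-(i+j)}=(t/q)^{i+j}\,t^{-(i+j)}\le(t/q)^{\lceil t/2\rceil}\,t^{-(i+j)}$, using $t/q\le1$ and $i+j\ge\lceil t/2\rceil$. Assembling the three estimates, $|H_{\sigma\tau}(Q_1,Q_2)|\le(t/q)^{\lceil t/2\rceil}\,f_t(t^{-2})^{-1}\sum_{i,j}|h_{\sigma\tau}^{(ij)}|\,t^{-(i+j)}=(t/q)^{\lceil t/2\rceil}\,\overline H(t)_{\sigma\tau}$, as claimed.

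I expect the main obstacle to be the index bookkeeping in the first step rather than any inequality: matching the reindexed Weingarten sum to the precise form of $h_{\sigma\tau}$ requires careful use of the class-function identities to absorb the several inverses built into the definitions of $W$, $D$, and $C$, and one has to keep straight that the complete-derangement hypothesis is imposed only on the relevant row or column index — which is all that is needed, since the bound on $\|K_m\|_{\mathfrak D}$ enters through $\|P_{\mathfrak D}H\|$ and $\|HP_{\mathfrak D}\|$ via the factorization of Lemma~\ref{lemma:main_product} ($D_\nu$ unitary) together with a Frobenius-norm estimate. The one genuinely external ingredient is $\mathrm{ord}(g_\pi)\ge|\pi|$, which should be recorded once as a consequence of the standard asymptotics of the Weingarten function.
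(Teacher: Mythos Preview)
Your proposal is correct and follows the paper's approach: factor $H_{\sigma\tau}$ as $h_{\sigma\tau}(Q_1^{-1},Q_2^{-1})/f_t((Q_1Q_2)^{-1})$, bound the numerator by the triangle inequality together with the low-degree vanishing $i+j\ge\lceil t/2\rceil$, and bound the denominator from below by monotonicity of $f_t$ on $[0,t^{-2}]$. The only difference is that the paper gets the low-degree vanishing directly from $|\rho\sigma^{-1}|+|\rho|\ge|\sigma|$ with $\sigma$ the deranged index, without invoking your auxiliary fact $\mathrm{ord}(g_\pi)\ge|\pi|$; that extra Weingarten-asymptotics ingredient is harmless (and does buy you the symmetric statement for $\tau$ deranged) but is not needed here.
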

For sufficiently small $t$, we can find the operator norm of $\overline{H}$ numerically. By the Perron-Frobenius theorem, this gives us a bound on the operator norm of $H$. This bound will be of the form
$h(t) q^{-\left\lceil \frac{t}{2}\right \rceil}$. The $t=2$ eigenvalue will dominate so long as 
\begin{gather}
    h^2(t) \leq \frac{t^{2 \left\lceil \frac{t}{2}\right \rceil}}{1 + t^2}
\end{gather}
Numerical values of $||\overline{H}(t)||_\text{basis}^{\mathfrak{D}}$ are presented in Table \ref{tab:ft_t2}. We see that the bound is satisfied for all $t \leq 6$, which completes the proof of Theorem \ref{thm:gap_t_leq_q}.

\section{Conclusion}
We have established a nearly-optimal $t$-independent, $N$-independent lower bound on the spectral gap when $t \leq q$. This implies improvements to known bounds on approximate $t$-design depths of various architectures, including the brickwork\cite{Brandao2016,Haferkamp2022,Chen2024b}, arbitrary well-connected architectures\cite{Belkin2023}, and specific rapidly-scrambling architectures\cite{Schuster2024,LaRacuente2024}. Although the asymptotics of these results do not change, our improvements to the constant factors are often dramatic. The scaling with $t$ is also improved for the regime $t \leq q < 6t^2$ in which the bound of ref.~\onlinecite{Haferkamp2021} did not apply. We also show that the resulting bound on the 1D brickwork $t$-design depth has nearly optimal small-$\epsilon$ behavior.

Our first technical innovation is our strategy to relate the 1D staircase SEV to a sequence of 3-site operators. This differs from previously established methods, and we believe it has the potential to produce even stronger results with further work. As a primary example, we would like to extend our bound to the case $t > q$. From numerical evidence (see Appendix~\ref{app:gap_finite_N}), it appears that the eigenvalues of $K_m$ remain bounded by $\frac{1}{q^2+1}$ even when $t > q$, with the sole exception of the case $t = 4$, $q = 2$, $m = 2$. There are, roughly speaking, two obstacles to extending our proof to the $t > q$ regime. The first is the degeneracy of the permutation basis at $t > q$. This is partially resolved by Theorem \ref{thm:coderanged_characterization}, but the intersection space $\mathcal{I}$ is difficult to work with. Second, we bound $||K||$ by factorization in terms of $||W|| \cdot ||DC||$. But for $t > q$ both $||W||$ and $||DC||$ grow large even as $||WDC||$ remains small, so bounds by factorization cannot succeed.

We would also like to extend our bounds to the regime of non-integer $q < t$. In this setting, the metric acquires negative eigenvalues, so additional techniques may be needed. Bounding these regions would allow us to extend the depth bounds on one-dimensional systems to all architectures through ref.~\onlinecite{Belkin2023}.

Recent works have shown that architectures very similar to the 1D brickwork form $\epsilon$-approximate $t$-designs in depth $O(\log N)$. However, a the best known scaling for the 1D brickwork itself remains $O(N)$. In this work we use our reduction to $3$-site operators only to bound the spectral gap of the 1D brickwork, but the implied block-triangular structure may be helpful for other analyses of the behavior of the moment operator. The deranged-subspace and irreducible-representation structures of both the original brickwork and of the three-site operators could provide additional insight to this behavior as well. If these decompositions can be used to study the entire eigenspectrum, it could lead to a better understanding of the true $N$-scaling of the 1D brickwork approximate $t$-design depth.

\begin{acknowledgments}
This material is based upon work supported by the U.S. Department of Energy, Office of Science, National Quantum Information Science Research Centers. 
\end{acknowledgments}

\bibliographystyle{unsrt}
\bibliography{references2}

\begin{appendices}
\section{Relationship between spectral gap and $t$-design depth}
This material is generally standard, but we include it here for completeness. 

\begin{definition}
    An ensemble $\varepsilon$ of random quantum circuits is an additive-error $\epsilon$-approximate $t$-design if the diamond norm distance between its $t$-fold channel and the global Haar ensemble's $t$-fold channel is less than $\epsilon$:
    \begin{gather}
        ||\Phi_\varepsilon^{(t)} - \Phi_\text{Haar}^{(t)}||_\diamond \leq \epsilon
    \end{gather}
\end{definition}

\begin{definition}
    The frame potential $\mathcal{F}_\varepsilon^{(t)}$ of a random quantum circuit ensemble $\varepsilon$ is a $4t$-th moment of its unitaries:
    \begin{gather}
        \mathcal{F}_\varepsilon^{(t)} = \int_{\varepsilon^{\otimes 2}} |\tr(U_\varepsilon^\dagger V_\varepsilon)|^{2t} \dd U_\varepsilon \dd V_\varepsilon
    \end{gather}
\end{definition}

\begin{lemma}
    \label{lemma:diamond_norm_to_frame_potential}
    The diamond norm distance between two $t$-fold channels is bounded by the difference in their frame potentials:
    \begin{gather}
        ||\Phi_\varepsilon^{(t)} - \Phi_\text{Haar}^{(t)}||_\diamond^2 \leq q^{2Nt} \left(\mathcal{F}_\varepsilon^{(t)} - \mathcal{F}_\text{Haar}^{(t)}\right)
    \end{gather}
\end{lemma}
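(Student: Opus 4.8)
\emph{Proof proposal.} The plan is to split the bound into two logically independent pieces: a dimension-counting inequality that relates the diamond norm of any superoperator to its Hilbert--Schmidt norm, and an exact evaluation of that Hilbert--Schmidt norm in terms of frame potentials. Throughout write $M = \Phi_\varepsilon^{(t)} - \Phi_\text{Haar}^{(t)}$, regarded as a linear map on $L(\mathcal{H}^{\otimes t})$, and set $D = q^{Nt} = \dim \mathcal{H}^{\otimes t}$.

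\emph{Step 1 (diamond norm to Frobenius norm).} I will use that in the definition of the diamond norm it suffices to stabilize with an ancilla of dimension $D$, so $\|M\|_\diamond = \sup_{\|X\|_1 \le 1}\|(M\otimes I_D)(X)\|_1$, a supremum over operators $X$ on a space of dimension $D^2$. For any such $X$, chain the estimates $\|(M\otimes I_D)(X)\|_1 \le \sqrt{\operatorname{rank}(M\otimes I_D)(X)}\,\|(M\otimes I_D)(X)\|_2 \le D\,\|(M\otimes I_D)(X)\|_2$, then $\|(M\otimes I_D)(X)\|_2 \le \|M\otimes I_D\|_{2\to 2}\,\|X\|_2 \le \|M\|_{2\to 2}$, using $\|X\|_2 \le \|X\|_1 \le 1$ and the fact that tensoring with an identity leaves the largest singular value unchanged, and finally $\|M\|_{2\to 2} \le \|M\|_2$ (the largest singular value is bounded by the Frobenius norm). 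This yields $\|M\|_\diamond \le D\,\|M\|_2$, hence $\|M\|_\diamond^2 \le q^{2Nt}\,\|M\|_2^2$.

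\emph{Step 2 (Frobenius norm as a frame-potential difference).} I will compute $\|M\|_2^2 = \tr(M^\dagger M)$ by expanding using the tensor representation $\Phi_\varepsilon^{(t)} = \big\langle U^{\otimes t}\otimes (U^*)^{\otimes t}\big\rangle_{U\sim\varepsilon}$. The key identity is $\tr\big((\Phi_{\varepsilon_1}^{(t)})^\dagger \Phi_{\varepsilon_2}^{(t)}\big) = \mathbb{E}_{U\sim\varepsilon_1,\,V\sim\varepsilon_2}\big[\,\tr\big((U^\dagger V)^{\otimes t}\big)\,\overline{\tr\big((U^\dagger V)^{\otimes t}\big)}\,\big] = \mathbb{E}\big[\,|\tr(U^\dagger V)|^{2t}\,\big]$, obtained by pushing the trace through the tensor factors. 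Specializing: $\varepsilon_1=\varepsilon_2=\varepsilon$ gives $\mathcal{F}_\varepsilon^{(t)}$; both Haar gives $\mathcal{F}_\text{Haar}^{(t)}$; and for the cross term, fixing $U$ and using right-invariance of the Haar measure under $V\mapsto UV$ shows $\mathbb{E}_{V\sim\text{Haar}}[\,|\tr(U^\dagger V)|^{2t}\,] = \mathbb{E}_{V\sim\text{Haar}}[\,|\tr(V)|^{2t}\,] = \mathcal{F}_\text{Haar}^{(t)}$ for every $U$, so the cross terms both equal $\mathcal{F}_\text{Haar}^{(t)}$. Expanding $\tr(M^\dagger M)$ then gives $\|M\|_2^2 = \mathcal{F}_\varepsilon^{(t)} - 2\mathcal{F}_\text{Haar}^{(t)} + \mathcal{F}_\text{Haar}^{(t)} = \mathcal{F}_\varepsilon^{(t)} - \mathcal{F}_\text{Haar}^{(t)}$. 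Combining with Step 1 proves the lemma.

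\emph{Main obstacle.} No step is deep; the one requiring the most care is Step 1, where obtaining exactly the prefactor $q^{2Nt}$ (with no stray constant) relies on correctly invoking that an ancilla of dimension $D$ suffices for the diamond norm and on the rank bound $\operatorname{rank} \le D^2$ for the stabilized operator. I also need to track the vectorization and complex-conjugation conventions in the tensor form of $\Phi^{(t)}$ carefully, so that the trace identity comes out as a genuine modulus-squared $|\tr(U^\dagger V)|^{2t}$ with no spurious transposes; once that is pinned down, the Haar-invariance argument for the cross terms is routine.
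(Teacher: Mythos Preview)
Your proposal is correct and follows exactly the standard argument the paper has in mind: the paper does not spell out a proof but simply records the key identity $\mathcal{F}_\varepsilon^{(t)} = \tr\big(\Phi_\varepsilon^{(t)\dagger}\Phi_\varepsilon^{(t)}\big)$ and treats the rest as folklore. Your Step~1 (diamond norm $\leq D\cdot$Frobenius norm via the rank bound) and Step~2 (expanding $\|M\|_2^2$ and using Haar invariance on the cross terms) are precisely the details behind that folklore, carried out cleanly and with the right constant.
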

The frame potential is the Frobenius norm of the moment operator, 
\begin{gather}
    \mathcal{F}_\varepsilon^{(t)} = \tr\left(\Phi_\varepsilon^{(t) \dagger} \Phi_\varepsilon^{(t)}\right)
\end{gather}

Let $L_O$ and $L_E$ refer to the ``even'' and ``odd'' layers of the 1D brickwork architecture, respectively. The transfer matrix associated with a single period of the architecture is then \(T_{1DB} = L_O L_E\). 
\begin{lemma}
\label{lemma:brickwork_is_quasihermitian}
Let \(\mathcal{F}_\varepsilon(\ell)\) be the frame potential of the \(\ell\)-layer 1D brickwork architecture, \(\ell \geq 2\). Then
\[\mathcal{F}_\varepsilon(\ell) = \tr \left((L_E L_O L_E)^{\ell - 1}\right)\]
\end{lemma}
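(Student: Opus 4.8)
The strategy is to reduce the claim to two elementary facts: every single-layer moment operator is a Hermitian projector, and the trace is cyclic.

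First I would record that $L_E$ and $L_O$ are orthogonal (hence Hermitian) projectors, so $L_E^2 = L_E$ and $L_O^2 = L_O$. Within one layer the two-site gates act on pairwise-disjoint pairs of sites, so their moment operators are supported on disjoint tensor factors and therefore commute; each two-site gate moment operator $G = \langle u^{\otimes t,t}\rangle$ is the orthogonal projector onto the commutant, as recalled above; and a product of commuting Hermitian projectors is again a Hermitian projector.

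Next I would write the moment operator of the $\ell$-layer brickwork as an alternating product $\Phi(\ell) = M_\ell M_{\ell-1}\cdots M_1$ with $M_i \in \{L_E, L_O\}$ alternating (the starting layer, and the multiplication convention, merely cyclically permute or reverse the $M_i$, which is immaterial below since everything that follows is a trace). Using $\mathcal{F}_\varepsilon^{(t)} = \tr\!\big(\Phi^{(t)\dagger}\Phi^{(t)}\big)$ together with $M_i = M_i^\dagger$, we get $\mathcal{F}_\varepsilon(\ell) = \tr(M_1\cdots M_{\ell-1}M_\ell M_\ell M_{\ell-1}\cdots M_1)$. Collapsing the central $M_\ell M_\ell = M_\ell$, then cyclically rotating the trailing $M_1$ to the front and collapsing $M_1 M_1 = M_1$, this becomes $\mathcal{F}_\varepsilon(\ell) = \tr(M_1 M_2\cdots M_{\ell-1}M_\ell M_{\ell-1}\cdots M_2)$, a word of length $2(\ell-1)$. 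Because the ascending run $M_1\cdots M_\ell$, the turn $M_{\ell-1}M_\ell M_{\ell-1}$, and the descending run $M_\ell\cdots M_2$ are each alternating, the whole word alternates, so it is exactly $(L_E L_O)^{\ell-1}$ (equivalently $(L_O L_E)^{\ell-1}$, which has the same trace by cyclicity). Hence $\mathcal{F}_\varepsilon(\ell) = \tr\!\big((L_E L_O)^{\ell-1}\big)$.

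Finally I would expand $(L_E L_O L_E)^{\ell-1}$: each internal junction $L_E L_E$ collapses to $L_E$, leaving $(L_E L_O)^{\ell-1}L_E$; taking the trace, rotating the final $L_E$ to the front by cyclicity, and collapsing $L_E L_E = L_E$ gives $\tr\!\big((L_E L_O)^{\ell-1}\big)$, which by the previous step equals $\mathcal{F}_\varepsilon(\ell)$. This proves the lemma. The hypothesis $\ell \geq 2$ is exactly what makes these collapses legitimate: for $\ell = 1$ the right-hand side would read $\tr(I) = q^{2Nt}$, which is not the one-layer frame potential. There is no deep obstacle here; the only care required is the bookkeeping around the parity of $\ell$ and which layer is applied first, and both issues disappear once one observes that the reduced word is always the alternating product of length $2(\ell-1)$ and that $\tr\!\big((L_E L_O)^{\ell-1}\big) = \tr\!\big((L_O L_E)^{\ell-1}\big)$, so no case distinction is needed.
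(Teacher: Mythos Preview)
Your proof is correct and follows essentially the same approach as the paper: both arguments rest on the facts that $L_E$ and $L_O$ are Hermitian projectors and that the trace is cyclic, then collapse repeated letters to reduce $\tr(\Phi^\dagger\Phi)$ to $\tr\big((L_OL_E)^{\ell-1}\big)$. The only difference is organizational: the paper splits into even and odd $\ell$ and treats each case explicitly, whereas your alternating-word argument handles both parities at once, which is a bit cleaner.
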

\begin{proof}
    The first observation is that
    \begin{gather}
        \tr \left((L_E L_O L_E)^{\ell - 1}\right) = \tr \left((L_O L_E^2)^{\ell - 1}\right) = \tr \left((L_O L_E)^{\ell - 1}\right)
    \end{gather}
    
    Next, suppose \(\ell\) is even. Then the moment operator is \(\Phi_\epsilon = (L_O L_E)^{\ell/2}\) and
    \begin{gather}
        \mathcal{F}_\varepsilon(\ell) = \tr \left((L_O L_E)^{\ell/2} (L_O L_E)^{\dagger \ell/2}\right)
    \end{gather}
    Now, $L_O$ is a tensor product of $G$ operators, each of which is an orthogonal projection operator. It follows that $L_O$ is an orthogonal projection operator, which implies both $L_O^\dagger = L_O$ and $L_O^2 = L_O$. The same goes for $L_E$. Using these, we can see
    \begin{align}
        \tr \left(\Phi_\epsilon \Phi_\epsilon^{\dagger}\right)
        &= \tr \left((L_O L_E)^\frac{\ell}{2}(L_E L_O)^\frac{\ell}{2}
        \right)
        \\ &=
        \tr \left((L_O L_E)^\frac{\ell}{2}L_E (L_O L_E)^{\frac{\ell}{2} - 1} L_O
        \right)
        \\ &=
        \tr \left((L_O L_E)^\frac{\ell}{2}(L_O L_E)^{\frac{\ell}{2} - 1}
        \right)
    \end{align}
    where in the third line we have used the cylic property of the trace and the fact that \(L_O L_O = L_O\). 

    Now suppose \(\ell\) is odd. Then \(\Phi_\epsilon^t = T_{1DB}^{\frac{\ell - 1}{2}} L_O\), and we can compute
    \begin{align}
        \tr \left(\Phi_\epsilon \Phi_\epsilon^{\dagger}\right) 
        &= \tr \left((L_O L_E)^\frac{\ell - 1}{2} L_O L_O (L_E L_O)^\frac{\ell - 1}{2}
        \right)
        \\ &= \tr \left((L_O L_E)^\frac{\ell - 1}{2}L_O^2 L_E (L_O L_E)^{\frac{\ell - 1}{2} - 1} L_O\right)
        \\ &= \tr \left((L_O L_E)^\frac{\ell - 1}{2}(L_O L_E)^1 (L_O L_E)^{\frac{\ell - 1}{2} - 1}\right)
    \end{align}
    where again the third line uses the cyclic property of the trace. 
\end{proof}

\begin{lemma}
\label{lemma:frame_potential_from_eigenvalue}
Let \(\lambda_*\) be the magnitude of the largest non-unit eigenvalue of \(L_E L_O L_E\). Then
\begin{gather}
    \mathcal{F}_\varepsilon(\ell) \leq \mathcal{F}_\text{Haar} + q^{2Nt}\lambda_*^{\ell - 1}
\end{gather}
\end{lemma}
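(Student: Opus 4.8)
The plan is to diagonalize $L_E L_O L_E$ and split its trace into the contribution of its unit eigenspace and the contribution of its remaining eigenvalues. First I would note that, since $L_O$ and $L_E$ are each orthogonal projectors, $L_E L_O L_E = (L_O L_E)^\dagger (L_O L_E)$ is Hermitian and positive semidefinite, hence diagonalizable with all eigenvalues lying in $[0,1]$ (the upper bound since $||L_O L_E||_\infty \le 1$). Writing $\lambda_1,\dots,\lambda_D$ for its eigenvalues, where $D = q^{2Nt}$ is the dimension of $L(\mathcal{H}^{\otimes t})$, Lemma~\ref{lemma:brickwork_is_quasihermitian} (which applies because $\ell \ge 2$) gives
\[
\mathcal{F}_\varepsilon(\ell) = \tr\!\left((L_E L_O L_E)^{\ell-1}\right) = \sum_{i=1}^{D} \lambda_i^{\ell-1}.
\]

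Next I would pin down the unit eigenspace. A unit-norm vector $v$ with $L_E L_O L_E v = v$ forces equality all the way through the chain $||v|| \ge ||L_E v|| \ge ||L_O L_E v|| = ||L_E v||$, hence $L_E v = v$ and $L_O v = v$; conversely any such $v$ is fixed by $L_E L_O L_E$. So the unit eigenspace is exactly the joint fixed space of $L_O$ and $L_E$, i.e.\ the subspace left invariant by every gate of the brickwork. Because the brickwork gates generate the full unitary group on $q^N$ dimensions, this is precisely the image of $\Phi_\text{Haar}^{(t)}$, and its dimension is $\tr\!\left(\Phi_\text{Haar}^{(t)}\right) = \tr\!\left(\Phi_\text{Haar}^{(t)\dagger}\Phi_\text{Haar}^{(t)}\right) = \mathcal{F}_\text{Haar}$, using that $\Phi_\text{Haar}^{(t)}$ is an orthogonal projector. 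Thus the eigenvalues equal to $1$ contribute exactly $\mathcal{F}_\text{Haar}$ to the sum above.

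Finally, for each remaining eigenvalue we have $0 \le \lambda_i \le \lambda_*$ by positivity and the definition of $\lambda_*$, so $\lambda_i^{\ell-1} \le \lambda_*^{\ell-1}$, and there are at most $D = q^{2Nt}$ such terms. Combining,
\[
\mathcal{F}_\varepsilon(\ell) = \mathcal{F}_\text{Haar} + \sum_{\lambda_i \neq 1} \lambda_i^{\ell-1} \le \mathcal{F}_\text{Haar} + q^{2Nt}\,\lambda_*^{\ell-1},
\]
as claimed. The one point I would be most careful about is the multiplicity bookkeeping in the second paragraph — verifying that the Haar channel's moment operator is the orthogonal projector onto exactly the joint fixed space of $L_O$ and $L_E$, so that the count of unit eigenvalues is $\mathcal{F}_\text{Haar}$ and not merely bounded by it; the positivity/diagonalizability of $L_E L_O L_E$ and the final estimate are routine once that is in place.
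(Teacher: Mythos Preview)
Your proposal is correct and follows essentially the same approach as the paper: use Lemma~\ref{lemma:brickwork_is_quasihermitian} to write $\mathcal{F}_\varepsilon(\ell)$ as a sum of eigenvalue powers of the Hermitian operator $L_E L_O L_E$, separate the unit-eigenvalue contribution (which is $\mathcal{F}_\text{Haar}$), and bound the remaining terms by $\lambda_*^{\ell-1}$ times the total dimension $q^{2Nt}$. Your version is slightly more detailed --- you explicitly note positive semidefiniteness via $L_E L_O L_E = (L_O L_E)^\dagger(L_O L_E)$ and justify the unit-eigenspace count through the joint fixed space of $L_O$ and $L_E$, whereas the paper simply asserts there are $t!$ unit eigenvalues --- but the logical structure is identical.
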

\begin{proof}
    From Lemma \ref{lemma:brickwork_is_quasihermitian},
    \begin{gather}
        \mathcal{F}_\varepsilon(\ell) = \tr \left((L_E L_O L_E)^{\ell - 1}\right)
    \end{gather}
    as $L_E$ and $L_O$ are individually Hermitian, so is $L_E L_O L_E$. Therefore 
    \begin{gather}
        \tr \left((L_E L_O L_E)^{\ell - 1}\right) = \sum_j \lambda_j^{\ell-1}
    \end{gather}
    where $j$ runs over all eigenvalues of $L_E L_O L_E$. There are $t!$ unit eigenvalues, which become $\mathcal{F}_\text{Haar}$, and every other eigenvalue is bounded by $\lambda_*$. As the dimension of the system is $q^{2Nt}$, there are $q^{2Nt}$ eigenvalues in total, so 
    \begin{align}
        \mathcal{F}_\varepsilon(\ell) &\leq t! + (q^{2Nt} - t!) \lambda_*^{\ell-1}\n
        &\leq t! + q^{2Nt}\lambda_*^{\ell-1}
    \end{align}
\end{proof}

\begin{lemma}
    \label{lemma:brickwork_power_gap}
    Let \(\lambda(\ell,N,q,t)\) be the largest non-unit eigenvalue of the $\ell$-layer brickwork. Then
    \[\lambda(\ell,N,q,t) = \lambda(2, N,q,t)^{\lfloor \ell/2\rfloor}\]
\end{lemma}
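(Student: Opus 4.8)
The plan is to work directly with the $\ell$-layer moment operator $\Phi(\ell)$, written in terms of the two orthogonal projectors $Q \equiv L_O$ and $P \equiv L_E$ (each is a tensor product of orthogonal projectors onto commutant spaces, hence itself an orthogonal projector: $P = P^\dagger = P^2$, and likewise for $Q$). Following the computation in the proof of Lemma~\ref{lemma:brickwork_is_quasihermitian}, $\Phi(\ell) = (QP)^{\ell/2}$ when $\ell$ is even and $\Phi(\ell) = (QP)^{(\ell-1)/2}Q$ when $\ell$ is odd; I will take $\ell \geq 2$ throughout, so that $m \equiv \lfloor \ell/2\rfloor \geq 1$. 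Using the opposite layer as the first layer only cyclically permutes the product, which does not change the nonzero spectrum, so there is no loss of generality in this choice.

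The first substantive step is to pin down the spectrum of $T \equiv QP = L_O L_E = \Phi(2)$. Since $\image(QP)\subseteq\image(Q)$, in block form with respect to the orthogonal decomposition $\mathcal{H} = \image(Q)\oplus\ker(Q)$ the operator $QP$ is block upper-triangular with vanishing lower-right block, and its upper-left block is $A \equiv QP|_{\image(Q)}$; since $Qv = v$ on $\image(Q)$ we have $A = QPQ|_{\image(Q)}$, which is Hermitian and satisfies $0 \le QPQ \le Q$ (because $\langle v,QPQv\rangle = \|PQv\|^2$ and $\|PQv\|\le\|Qv\|$), hence $0 \le A \le I$. Therefore $\text{eig}(T) = \text{eig}(A)\cup\{0\}\subseteq[0,1]$, all the non-unit eigenvalues of $T$ lie in $[0,1)$, and $\lambda(2,N,q,t)$ — the largest non-unit eigenvalue of $T$ — equals the largest non-unit eigenvalue of $A$. (The same value is obtained from $L_E L_O L_E = PQP$, which shares the nonzero eigenvalues of $QP$ by cyclicity; this is the quantity appearing in Lemma~\ref{lemma:frame_potential_from_eigenvalue}.)

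The second step handles the two parities. For even $\ell = 2m$, $\Phi(\ell) = T^m$, whose eigenvalues are exactly the $m$-th powers of those of $T$; since $x\mapsto x^m$ is strictly increasing on $[0,1]$ and fixes $0$ and $1$, unit eigenvalues remain unit and the largest non-unit eigenvalue becomes $\lambda(2,N,q,t)^m$. For odd $\ell = 2m+1$, $\Phi(\ell) = (QP)^m Q$ annihilates $\ker(Q)$ and, on $\image(Q)$, acts as $A^m$ — inductively, because $QP$ maps $\image(Q)$ into itself acting as $A$ while $Qv = v$ there — so $\text{eig}(\Phi(\ell)) = \{\mu^m:\mu\in\text{eig}(A)\}\cup\{0\}$, and the same monotonicity argument gives largest non-unit eigenvalue $\lambda(2,N,q,t)^m$. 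In both cases $m = \lfloor\ell/2\rfloor$, which is the claimed identity.

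The step I expect to require the most care is the first one: establishing that $T$ has real spectrum contained in $[0,1]$ with the unit part "protected", since this is precisely what makes "extract the largest non-unit eigenvalue" commute with "raise to the $m$-th power" — without the $[0,1]$ localization, powers of complex or negative eigenvalues could in principle overtake the power of the second-largest eigenvalue. The rest is routine orthogonal-projector algebra. (The formula is meaningful only for $\ell \ge 2$; for $\ell = 1$ the moment operator is the single projector $L_O$, whose only eigenvalues are $0$ and $1$, so the statement is not needed there.)
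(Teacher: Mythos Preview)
Your proof is correct and a bit more careful than the paper's. The paper's argument is terser: for even $\ell$ it simply notes $\Phi(\ell) = T^{\ell/2}$ so the eigenvalues are the $(\ell/2)$th powers; for odd $\ell$ it cyclically permutes one layer around the product (nonzero spectrum is invariant under cyclic permutation), collapses it against its neighbor via $L_E^2 = L_E$, and concludes that the odd-$\ell$ spectrum coincides with the $(\ell-1)$-layer spectrum. The paper never explicitly argues that the spectrum of $T$ lies in $[0,1]$, so the step you rightly flagged --- that ``largest non-unit eigenvalue'' commutes with taking $m$th powers --- is left implicit there; your block-triangular reduction to the Hermitian operator $A = QPQ|_{\image(Q)}$ with $0\le A\le I$ makes this point explicit and airtight. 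For the odd case, your restriction-to-$\image(Q)$ computation is a minor variant of the paper's cyclicity-plus-idempotence move; both are one-line projector manipulations and neither is essentially deeper than the other.
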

\begin{proof}
    Suppose first that $\ell$ is even. Then the $\ell$-layer moment operator is the $(\ell/2)$\textsuperscript{th} power of the $2$-layer moment operator. It follows that all of its eigenvalues are the same. \\
    Now suppose $\ell$ is odd. Then we may cyclically permute the final $L_E$ to the beginning, where it cancels against the initial $L_E$ via $L_E^2 = L_E$. The spectrum is thus the same as that of the $(\ell-1)$-layer moment operator.
\end{proof}

\begin{lemma}
    \label{lemma:hermitianized_brickwork_is_staircase}
    The eigenspectra of \(L_E L_O L_E\) and the staircase transfer matrix $\Tstair$ are the same.
\end{lemma}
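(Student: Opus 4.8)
The plan is to reduce the claim to the already-established Lemma~\ref{lemma:brickwork_is_staircase}, which states that $L_O L_E$ and $\Tstair$ have the same eigenspectrum. So it suffices to show that $L_E L_O L_E$ and $L_O L_E$ have the same eigenspectrum. First I would record the structural fact that $L_E$ and $L_O$ are tensor products of the single-gate moment operators $G$, each of which is an orthogonal projector onto the commutant; hence $L_E$ and $L_O$ are themselves orthogonal projectors, so $L_E^2 = L_E$, $L_E^\dagger = L_E$, and likewise for $L_O$ (these facts are already used in the proof of Lemma~\ref{lemma:brickwork_is_quasihermitian}).

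The key step is to write $L_E L_O L_E = AB$ with $A = L_E$ and $B = L_O L_E$. Since $A$ and $B$ are square matrices of the same dimension $q^{2Nt}$, their products satisfy $\det(\lambda I - AB) = \det(\lambda I - BA)$, i.e. $AB$ and $BA$ have identical characteristic polynomials. Here $BA = L_O L_E L_E = L_O L_E$ by idempotence of $L_E$. Therefore $L_E L_O L_E$ and $L_O L_E$ have the same characteristic polynomial, hence the same eigenvalues with the same algebraic multiplicities — in particular the multiplicity of the zero eigenvalue agrees, so the full multisets coincide. Combining with Lemma~\ref{lemma:brickwork_is_staircase} gives $\operatorname{spec}(L_E L_O L_E) = \operatorname{spec}(L_O L_E) = \operatorname{spec}(\Tstair)$, which is the assertion.

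As an independent cross-check, one can instead observe that $(L_E L_O L_E)^k = L_E (L_O L_E)^k$ and then use cyclicity of the trace together with $L_E^2 = L_E$ to get $\tr\bigl((L_E L_O L_E)^k\bigr) = \tr\bigl((L_O L_E)^k\bigr)$ for every $k \ge 1$ (this is exactly the computation in the proof of Lemma~\ref{lemma:brickwork_is_quasihermitian}); equality of all power traces up to degree $q^{2Nt}$ pins down the spectrum via Newton's identities, yielding the same conclusion. The main point requiring care — and the reason I phrase the argument through characteristic polynomials rather than the weaker statement that $AB$ and $BA$ share their nonzero eigenvalues — is that ``same eigenspectrum'' must be read with multiplicities and must account for the zero eigenvalue; beyond that bookkeeping there is no real obstacle.
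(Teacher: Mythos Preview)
Your proof is correct and takes essentially the same approach as the paper: both use the cyclic invariance of spectra (your $AB$ vs.\ $BA$ characteristic-polynomial argument is exactly this, stated more carefully) together with $L_E^2 = L_E$ to pass from $L_E L_O L_E$ to $L_O L_E$, and then invoke the already-known equivalence of $L_O L_E$ with the staircase. Your treatment is in fact more careful than the paper's about multiplicities and the zero eigenvalue, but the underlying argument is identical.
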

\begin{proof}
    The eigenvalues of a product of matrices are invariant under cyclic permutations, so the spectra of $L_E L_O L_E$ and $L_O L_E^2$ are the same. But $L_E^2 = L_E$, so the spectra of $L_E L_O L_E$ and $L_O L_E$ are the same. We complete the proof by noting that $L_E L_O$ is related to the staircase transfer matrix by cyclic permutations of the individual gates.
\end{proof}

\subsection{Obtaining the approximate $t$-design depth from the spectral gap}\label{app:depth_from_eigval}

In order to obtain a bound on the approximate $t$-design depth from an eigenvalue gap in the transfer matrix, we use the following:

\begin{theorem}
    \label{thm:depth_from_eigval}
     The $N$-site 1D brickwork architecture with local Hilbert space dimension $q$ forms an additive-error $\epsilon$-approximate $t$-design after at most
    \begin{gather}
    1 + \frac{2}{\log \frac{1}{\estair(N,q,t)}}\left(2 N t \log q + \log \frac{1}{\epsilon}\right)
    \end{gather}
    layers and a multiplicative-error\footnote{The multiplicative-error $t$-design condition, as can be seen in ref.~\onlinecite{Brandao2016, Chen2024b, Schuster2024}, requires $(1-\epsilon)\Psi_{\text{Haar}} \preceq \Psi_{\varepsilon} \preceq (1+\epsilon) \Psi_{\text{Haar}}$, where $\Psi_{\nu}$ is the operator formed by taking the Choi-Jamiolkowski isomorphism of $\Phi_\nu$.} $\epsilon$-approximate $t$-design after at most
    \begin{gather}
    \ell \geq 2\left\lceil \frac{1}{2\log \frac{1}{\estair(N,q,t)}} \left(2Nt\log q + \log \frac{1}{\epsilon}\right) \right\rceil
    \end{gather} 
    layers. Here $\estair(N,q,t)$ is the magnitude of the largest eigenvalue of \(\Tstair\). 
\end{theorem}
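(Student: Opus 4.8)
The plan is to feed the spectral information into the frame-potential estimates collected in this appendix. For the \textbf{additive-error} bound I would chain three facts already established here: Lemma~\ref{lemma:diamond_norm_to_frame_potential} gives $\|\Phi_\varepsilon^{(t)} - \Phi_\text{Haar}^{(t)}\|_\diamond^2 \le q^{2Nt}\bigl(\mathcal{F}_\varepsilon^{(t)} - \mathcal{F}_\text{Haar}^{(t)}\bigr)$; Lemma~\ref{lemma:frame_potential_from_eigenvalue} bounds the frame-potential difference after $\ell$ layers by $q^{2Nt}\lambda_*^{\ell-1}$, where $\lambda_*$ is the magnitude of the largest non-unit eigenvalue of $L_E L_O L_E$; and Lemma~\ref{lemma:hermitianized_brickwork_is_staircase} identifies the spectrum of $L_E L_O L_E$ with that of $\Tstair$, so $\lambda_* = \estair(N,q,t)$. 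Composing these gives $\|\Phi_\varepsilon^{(t)} - \Phi_\text{Haar}^{(t)}\|_\diamond \le q^{2Nt}\,\estair^{(\ell-1)/2}$. I would then demand that the right-hand side be at most $\epsilon$; squaring and taking logarithms — and remembering that $\log\estair < 0$, so dividing through by it reverses the inequality — turns this into $(\ell-1)\log(1/\estair) \ge 2\bigl(2Nt\log q + \log(1/\epsilon)\bigr)$, which rearranges to exactly the stated layer count, reading ``at most $\ell$ layers'' as the ceiling of the displayed expression.

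For the \textbf{multiplicative-error} bound I would invoke the standard reduction of refs.~\onlinecite{Brandao2016,Chen2024b,Schuster2024}, which guarantees the multiplicative-error $\epsilon$-design property once the subleading eigenvalue of the $\ell$-layer moment operator is small enough — a threshold of the form $\le \sqrt{\epsilon}\,q^{-Nt}$. By Lemma~\ref{lemma:brickwork_is_staircase} the two-layer brickwork has subleading eigenvalue $\estair$, so Lemma~\ref{lemma:brickwork_power_gap} gives subleading eigenvalue $\estair^{\lfloor \ell/2\rfloor}$ for the $\ell$-layer brickwork; since an odd layer on top of an even depth does not decrease it, the relevant depths are even, $\ell = 2m$. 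Imposing $\estair^{m} \le \sqrt{\epsilon}\,q^{-Nt}$ and solving for $m$ gives $m \ge \tfrac{1}{2\log(1/\estair)}\bigl(2Nt\log q + \log(1/\epsilon)\bigr)$; taking $m$ to be the ceiling of the right-hand side and $\ell = 2m$ reproduces the claimed bound.

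The one step that takes genuine care — and the only place the argument can slip — is controlling the distance to Haar by the correct power of $\estair$, since the brickwork transfer matrix $L_O L_E$ is \emph{not} normal, so its operator norm strictly exceeds its spectral radius and a naive submultiplicative estimate $\|(L_O L_E)^m - \Phi_\text{Haar}^{(t)}\|_\infty \le \|L_O L_E - \Phi_\text{Haar}^{(t)}\|_\infty^{m} = \estair^{m/2}$ is too weak by roughly a factor of two in the exponent. The fix throughout is the Hermitian surrogate $B = L_E L_O L_E$, which has the same spectrum as the staircase (Lemma~\ref{lemma:hermitianized_brickwork_is_staircase}) and for which $\|B^k - \Phi_\text{Haar}^{(t)}\|_\infty = \estair^k$ exactly: writing $\Phi_\text{Haar}^{(t)} = P$ and using $L_O^2 = L_O$, $L_E^2 = L_E$, $L_O P = L_E P = P$ to collapse interior factors one obtains identities such as $\bigl((L_O L_E - P)^m\bigr)^\dagger (L_O L_E - P)^m = B^{2m-1} - P$, which pin down the relevant exponents and also explain why the even-depth restriction is natural. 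Everything else — solving the exponential inequalities and tracking the dimension prefactors ($q^{2Nt}$ in the diamond-norm chain, $q^{Nt}$ in the multiplicative reduction) — is routine.
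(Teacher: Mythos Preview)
Your proposal is correct and follows the paper's proof essentially line for line: the additive-error argument chains Lemmas~\ref{lemma:diamond_norm_to_frame_potential}, \ref{lemma:frame_potential_from_eigenvalue}, and \ref{lemma:hermitianized_brickwork_is_staircase} exactly as the paper does, and the multiplicative-error argument likewise combines the eigenvalue threshold from ref.~\onlinecite{Brandao2016} with Lemma~\ref{lemma:brickwork_power_gap} (your third paragraph on the Hermitian surrogate $L_E L_O L_E$ is just the content of Lemma~\ref{lemma:brickwork_is_quasihermitian}). One small caveat: the paper quotes the multiplicative threshold from Lemma~4 of ref.~\onlinecite{Brandao2016} as $\lambda(\ell)\le \epsilon\,q^{-2Nt}$, not your $\sqrt{\epsilon}\,q^{-Nt}$, so you should double-check that constant when you write it up.
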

\begin{proof}
    We first prove the additive-error bound. Combining the bounds of Lemmas \ref{lemma:frame_potential_from_eigenvalue},  \ref{lemma:hermitianized_brickwork_is_staircase}, and \ref{lemma:diamond_norm_to_frame_potential}, we see
    \begin{gather}
    ||\Phi_\varepsilon - \Phi_\text{Haar}||_\diamond^2 \leq  q^{2Nt} \left(q^{2Nt}|\estair|^{\ell - 1}\right)
    \end{gather}
    The diamond-norm distance is guaranteed to be below \(\epsilon\) so long as 
    \begin{gather}
    \epsilon \geq q^{2Nt}\estair^\frac{\ell - 1}{2}
    \end{gather}
    or equivalently
    \begin{gather}
    \log \frac{1}{\epsilon} \leq -2 N t \log q  + \frac{\ell - 1}{2} \log \frac{1}{\estair}
    \end{gather}
    Solving for \(\ell\) gives the desired result.

    For the multiplicative-error bound, we can use Lemma 4 of ref.~\onlinecite{Brandao2016} to see that the 1D brickwork forms a multiplicative-error $\epsilon$-approximate $t$-design when 
    \begin{gather}
        \epsilon \geq q^{2Nt}\lambda(\ell,N,q,t)
    \end{gather}
    We then apply Lemmas \ref{lemma:brickwork_power_gap} and \ref{lemma:hermitianized_brickwork_is_staircase} to write this as
    \begin{gather}
        \epsilon \geq q^{2Nt}\estair^{\lfloor \ell/2 \rfloor}
    \end{gather}
    and rearrange to obtain
    \begin{gather}
        \ell \geq 2\left\lceil \frac{1}{2\log \frac{1}{\estair}} \left(2Nt\log q + \log \frac{1}{\epsilon}\right) \right\rceil
    \end{gather}
\end{proof}
Multiplicative error $\epsilon$ implies additive error at most $2\epsilon$, so the second bound may give a tighter additive-error bound for some parameter values.

\subsection{Proof of Corollary \ref{corollary:depth_bound_upper_lower}}
\label{app:depth_bound_upper_lower}

\begin{corollary*}
    (Restatement of Corollary \ref{corollary:depth_bound_upper_lower})
    There exists a constant $C(N,q,t)$ such that
    \[\ell_*(N,q,t,\epsilon) = \left(C(N,q,t) + o_{\epsilon}(1)\right) \log \frac{1}{\epsilon}\]
    as \(\epsilon \rightarrow 0\). If $t \leq q $, then
    \begin{gather}
        \left[\log \frac{q^2 + 1}{2q} + \log\frac{1}{\cos \frac{\pi}{N}}\right]^{-1} \leq C(N,q,t) \leq \left[\log \frac{q^2 + 1}{2q} + \log\frac{2}{1 + \sqrt{1 + \frac{1}{q^2}}}\right]^{-1} \label{eq:C_upper_lower_bounds}
    \end{gather}
\end{corollary*}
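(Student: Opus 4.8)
The plan is to pin down the constant explicitly as $C(N,q,t) = 2/\log\frac{1}{\estair(N,q,t)}$, where $\estair = 1-\Delta(N,q,t)$ is the subleading eigenvalue of the $t$-th moment operator of the brickwork --- equivalently of $T_{1DB}$, of $\Tstair$, and of $L_E L_O L_E$, all of which share a spectrum by Lemmas~\ref{lemma:brickwork_is_staircase} and \ref{lemma:hermitianized_brickwork_is_staircase} --- and then to substitute the two-sided bounds on $\estair$ supplied by Theorems~\ref{thm:gap_t_leq_q} and \ref{thm:upper_bound}. The upper half is immediate: Theorem~\ref{thm:depth_from_eigval} already gives $\ell_*(N,q,t,\epsilon) \leq 1 + \frac{2}{\log(1/\estair)}\left(2Nt\log q + \log\frac1\epsilon\right)$, so dividing by $\log\frac1\epsilon$ and sending $\epsilon\to0$ with $N,q,t$ held fixed yields $\limsup_{\epsilon\to0}\,\ell_*/\log\frac1\epsilon \leq 2/\log\frac1\estair$.

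The matching lower bound requires a converse to Lemma~\ref{lemma:diamond_norm_to_frame_potential}. Since $L_E L_O L_E = (L_O L_E)^\dagger(L_O L_E)\succeq 0$ has only nonnegative eigenvalues, Lemma~\ref{lemma:brickwork_is_quasihermitian} gives $\mathcal{F}_\varepsilon(\ell) - \mathcal{F}_\text{Haar} = \sum_{j:\lambda_j\neq1}\lambda_j^{\ell-1} \geq \estair^{\ell-1}$. Because $\mathcal{F}_\varepsilon(\ell) - \mathcal{F}_\text{Haar}$ equals the squared Frobenius norm $\|\Phi_\varepsilon^{(t)} - \Phi_\text{Haar}^{(t)}\|_F^2$ of the superoperator $\Phi_\varepsilon^{(t)} - \Phi_\text{Haar}^{(t)}$ on $L(\mathcal{H}^{\otimes t})$, the standard norm inequalities $\|\Lambda\|_\diamond \geq \|\Lambda\|_{1\to1} \geq q^{-2Nt}\|\Lambda\|_F$ yield $\|\Phi_\varepsilon^{(t)} - \Phi_\text{Haar}^{(t)}\|_\diamond \geq q^{-2Nt}\,\estair^{(\ell-1)/2}$. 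Hence, for every $\ell < 1 + \frac{2}{\log(1/\estair)}\left(\log\frac1\epsilon - 2Nt\log q\right)$ the $\ell$-layer brickwork fails to be an $\epsilon$-approximate $t$-design, so $\ell_*$ exceeds that threshold and $\liminf_{\epsilon\to0}\,\ell_*/\log\frac1\epsilon \geq 2/\log\frac1\estair$. Combining the two bounds, the limit exists and $C(N,q,t) = 2/\log\frac1\estair$, which proves the first assertion.

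It then remains to substitute the spectral-gap bounds. Theorem~\ref{thm:gap_t_leq_q} (valid when $t\leq q$) gives $\estair \leq \left(\frac{2q}{q^2+1}\cdot\frac{1+\sqrt{1+1/q^2}}{2}\right)^2$, so $\log\frac1\estair \geq 2\left[\log\frac{q^2+1}{2q} + \log\frac{2}{1+\sqrt{1+1/q^2}}\right]$ and $C \leq \left[\log\frac{q^2+1}{2q} + \log\frac{2}{1+\sqrt{1+1/q^2}}\right]^{-1}$; Theorem~\ref{thm:upper_bound} gives $\estair \geq \left(\frac{2q}{q^2+1}\cos\frac\pi N\right)^2$, so $\log\frac1\estair \leq 2\left[\log\frac{q^2+1}{2q} + \log\frac1{\cos(\pi/N)}\right]$ and $C \geq \left[\log\frac{q^2+1}{2q} + \log\frac1{\cos(\pi/N)}\right]^{-1}$, which is exactly \eqref{eq:C_upper_lower_bounds}.

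The one genuinely new ingredient --- and the step I expect to be the main obstacle --- is the diamond-norm lower bound $\|\Phi_\varepsilon^{(t)} - \Phi_\text{Haar}^{(t)}\|_\diamond \gtrsim \estair^{(\ell-1)/2}$: Lemma~\ref{lemma:diamond_norm_to_frame_potential} supplies only the upper direction, and one has to verify that the subleading eigenvalue genuinely survives into the frame potential, which here is immediate precisely because $L_E L_O L_E$ is positive semidefinite (so the $\estair^{\ell-1}$ term in $\sum_j \lambda_j^{\ell-1}$ is not cancelled by negative contributions). The dimension-dependent prefactor $q^{-2Nt}$ is harmless: it only shifts $\ell_*$ by an additive $O_{N,q,t}(1)$, which is absorbed into the $o(1)$ as $\epsilon\to0$. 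Everything else is routine manipulation of logarithms.
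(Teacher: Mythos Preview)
Your proof is correct and follows essentially the same strategy as the paper: identify $C(N,q,t)=2/\log\frac{1}{\estair}$, get the upper bound from Theorem~\ref{thm:depth_from_eigval}, establish a matching diamond-norm lower bound decaying like $\estair^{\Theta(\ell)}$, and then plug in Theorems~\ref{thm:gap_t_leq_q} and \ref{thm:upper_bound}. The only technical difference is in the lower bound step: the paper feeds the subleading eigenvector $\rho_*$ of $T_{1DB}$ directly into the $1\!\to\!1$ norm (using $\Phi_\text{Haar}\rho_*=0$ and $T_{1DB}\rho_*=\estair\,\rho_*$) to obtain $\|\Phi_\varepsilon-\Phi_\text{Haar}\|_\diamond\geq\estair^{\ell/2}$ with no dimension prefactor, whereas you route through the frame potential and the chain $\|\cdot\|_\diamond\geq\|\cdot\|_{1\to1}\gtrsim q^{-O(Nt)}\|\cdot\|_F$, picking up a harmless dimension factor that you correctly absorb into $o_\epsilon(1)$. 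Your observation that $L_EL_OL_E=(L_OL_E)^\dagger(L_OL_E)\succeq0$ is exactly what makes the frame-potential lower bound $\mathcal{F}_\varepsilon-\mathcal{F}_\text{Haar}\geq\estair^{\ell-1}$ work; the paper's eigenvector argument is slightly slicker but yours is just as valid.
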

We first note that the lower and upper bounds on $C(N,q,t)$ in Equation~\ref{eq:C_upper_lower_bounds} are equal to $\frac{2}{\log\left(\frac{1}{1-\Delta_{UB}}\right)}$ and $\frac{2}{\log\left(\frac{1}{1-\Delta_{LB}}\right)}$ respectively, where $\Delta_{UB}$ is the upper bound on the spectral gap from Theorem~\ref{thm:upper_bound} and $\Delta_{LB}$ is the lower bound on the spectral gap from Theorem~\ref{thm:gap_t_leq_q}. Therefore, this corollary naturally follows from Theorems~\ref{thm:gap_t_leq_q} and \ref{thm:upper_bound} once we prove the following lemma:
\begin{lemma}
    Suppose the 1D brickwork transfer matrix $T_{1DB} = L_O L_E$ has a spectral gap of $\Delta$. Then that brickwork reaches an $\epsilon$-approximate $t$-design at $\ell_*$ layers, where
    \begin{gather}
        \ell_* = \left(C + o_\epsilon(1)\right)\log\frac{1}{\epsilon}\label{eq:ell_exact_value} \\
        C = \frac{2}{\log\left(\frac{1}{1-\Delta}\right)}. 
    \end{gather}
\end{lemma}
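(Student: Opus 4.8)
\emph{Proof strategy.} The plan is to sandwich $\ell_*$ between two quantities that each agree with $C\log\frac{1}{\epsilon}$ up to an additive term independent of $\epsilon$, and then let $\epsilon\to 0$. Write $\lambda_* = 1-\Delta$ for the magnitude of the largest non-unit eigenvalue of $T_{1DB}$; since $T_{1DB}=L_OL_E$, $L_EL_OL_E$, and $\Tstair$ all share the same spectrum (Lemmas~\ref{lemma:brickwork_is_staircase} and~\ref{lemma:hermitianized_brickwork_is_staircase}), we have $\estair = 1-\Delta$ in the notation of Theorem~\ref{thm:depth_from_eigval}, so $C = 2/\log\frac{1}{1-\Delta}$.

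\emph{Upper bound.} Here I would simply quote the additive-error part of Theorem~\ref{thm:depth_from_eigval}: the $N$-site brickwork is an $\epsilon$-approximate $t$-design after at most $1 + C\left(2Nt\log q + \log\frac{1}{\epsilon}\right)$ layers. Hence $\ell_* \le C\log\frac{1}{\epsilon} + \left(1 + 2CNt\log q\right)$. The parenthesized term does not depend on $\epsilon$, while $C\log\frac{1}{\epsilon}\to\infty$, so $\ell_* \le \left(C + o_\epsilon(1)\right)\log\frac{1}{\epsilon}$.

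\emph{Lower bound.} The plan is to show the diamond-norm distance decays no faster than $\lambda_*^{\lfloor \ell/2\rfloor}$, using three ingredients. (i) The $\ell$-layer moment operator $\Phi_\varepsilon^{(t)}(\ell)$ commutes with $\Phi_\text{Haar}^{(t)}$, because $\Phi_\varepsilon^{(t)}\Phi_\text{Haar}^{(t)} = \Phi_\text{Haar}^{(t)}\Phi_\varepsilon^{(t)} = \Phi_\text{Haar}^{(t)}$ ($\Phi_\text{Haar}^{(t)}$ being the Hilbert--Schmidt-orthogonal projector onto the commutant, which the brickwork moment operator fixes pointwise); hence $\Phi_\varepsilon^{(t)}(\ell)$ preserves the commutant, acting as the identity there, and preserves its orthogonal complement, so every eigenvector with eigenvalue $\neq 1$ lies in that complement and is annihilated by $\Phi_\text{Haar}^{(t)}$. (ii) By Lemmas~\ref{lemma:brickwork_power_gap} and~\ref{lemma:hermitianized_brickwork_is_staircase}, the largest-magnitude non-unit eigenvalue of $\Phi_\varepsilon^{(t)}(\ell)$ is $\lambda_*^{\lfloor\ell/2\rfloor}$; let $v$ be a corresponding eigenvector, normalized so that $\|v\|_1 = 1$. (iii) The diamond norm dominates the single-input trace-norm ratio, so
\[
\left\| \Phi_\varepsilon^{(t)}(\ell) - \Phi_\text{Haar}^{(t)} \right\|_\diamond \;\ge\; \left\| \left(\Phi_\varepsilon^{(t)}(\ell) - \Phi_\text{Haar}^{(t)}\right)(v) \right\|_1 \;=\; \lambda_*^{\lfloor\ell/2\rfloor},
\]
where (i) kills the $\Phi_\text{Haar}^{(t)}$ term. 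Consequently, being an $\epsilon$-design forces $\lambda_*^{\lfloor\ell/2\rfloor}\le\epsilon$; taking $\ell=\ell_*$ and using $\ell_*\ge 2\lfloor\ell_*/2\rfloor$ gives $\ell_*\ge 2\log_{1/\lambda_*}\frac{1}{\epsilon} = C\log\frac{1}{\epsilon}$.

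Combining the two directions, $C\log\frac{1}{\epsilon}\le\ell_*\le C\log\frac{1}{\epsilon} + O(1)$, which is exactly $\ell_* = \left(C + o_\epsilon(1)\right)\log\frac{1}{\epsilon}$. I expect the only genuine subtlety to live in the lower bound: step (i), verifying that the subleading eigenvector is orthogonal to the commutant (which is precisely what commutation with $\Phi_\text{Haar}^{(t)}$ buys us), and the standard observation in step (iii) that a single well-chosen input already witnesses the slow decay of $\|\cdot\|_\diamond$. Everything else is bookkeeping with floors and with the $\epsilon$-independent constants, and the upper bound is immediate from Theorem~\ref{thm:depth_from_eigval}.
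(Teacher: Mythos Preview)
Your proposal is correct and follows essentially the same approach as the paper: an upper bound on $\ell_*$ coming from the spectral-gap-to-depth inequality, and a lower bound coming from evaluating the diamond norm on the subleading eigenvector. The paper invokes the frame-potential lemmas directly rather than quoting Theorem~\ref{thm:depth_from_eigval}, and it simply asserts $\Phi_{\text{Haar}}^{(t)}\rho_*=0$ without comment; your step~(i) supplying that justification, and your explicit handling of $\lfloor\ell/2\rfloor$ for odd depths, are modest improvements in rigor over the paper's own argument, but the underlying strategy is identical.
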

\begin{proof}
    We will prove this lemma by providing equal upper and lower bounds for $\ell_*$ in Equation~\ref{eq:ell_exact_value}.

    \textit{Upper bound:} This proof follows similarly to the previous section. We have from Lemma~\ref{lemma:diamond_norm_to_frame_potential} that
    \begin{gather}
        ||\Phi_\varepsilon^{(t)} - \Phi_\text{Haar}^{(t)}||_\diamond^2 \leq q^{2Nt} \left(\mathcal{F}_\varepsilon^{(t)} - \mathcal{F}_\text{Haar}^{(t)}\right),
    \end{gather}
    and from Lemma~\ref{lemma:frame_potential_from_eigenvalue} that
    \begin{gather}
        \mathcal{F}_\varepsilon(\ell) \leq \mathcal{F}_\text{Haar} + q^{2Nt}\lambda_*^{\ell - 1}
    \end{gather}
    where $\lambda_* = 1-\Delta$. Therefore
    \begin{gather}
        ||\Phi_\varepsilon^{(t)} - \Phi_\text{Haar}^{(t)}||_\diamond \leq q^{2Nt} (1-\Delta)^{\frac{\ell-1}{2}}.
    \end{gather}
    Right before the $\epsilon$-approximate $t$-design depth (i.e. at depth $\ell_*-1$), the left-hand side is still greater than $\epsilon$, so
    \begin{gather}
        \epsilon \leq q^{2Nt} (1-\Delta)^{\frac{(\ell_*-1)-1}{2}}\\
        \log \frac{1}{\epsilon} \geq -2Nt \log q + \frac{\ell_*-2}{2} \log\left( \frac{1}{1-\Delta}\right)\\
        \ell_* \leq 2 + \frac{2}{\log\left(\frac{1}{1-\Delta}\right)}\left(2Nt \log q + \log \frac{1}{\epsilon}\right)
    \end{gather}
    from which the upper bound version of Equation~\ref{eq:ell_exact_value} follows once we collate all lower-order terms into $o_\epsilon(1)$.

    \textit{Lower bound:} By definition of the diamond norm,
    \begin{gather}
        ||\Phi_\varepsilon^{(t)} - \Phi_\text{Haar}^{(t)}||_\diamond \geq \frac{||(\Phi_\varepsilon^{(t)} - \Phi_\text{Haar}^{(t)})\rho||_1}{||\rho||_1}
    \end{gather}
    for all operators $\rho$. Suppose our 1D brickwork had an even number of layers, i.e. it consists of $\frac{\ell}{2}$ copies of the transfer matrix $T_{1DB}$. Now consider the subleading eigenstate $\rho_*$ of the transfer matrix:
    \begin{gather}
        ||T_{1DB}\rho_*||_1 = \lambda_* ||\rho_*||_1 = (1-\Delta) ||\rho_*||_1\\
        ||\Phi_\text{Haar}^{(t)} \rho_*||_1 = 0
    \end{gather}
    We therefore have
    \begin{gather}
        ||\Phi_\varepsilon^{(t)} - \Phi_\text{Haar}^{(t)}||_\diamond \geq (1-\Delta)^{\frac{\ell}{2}}.
    \end{gather}
    At the $\epsilon$-approximate $t$-design depth $\ell_*$, the left-hand side is at most $\epsilon$, so
    \begin{gather}
        \epsilon \geq (1-\Delta)^{\frac{\ell_*}{2}}\\
        \log \frac{1}{\epsilon} \leq \frac{\ell_*}{2} \log\left(\frac{1}{1-\Delta}\right)\\
        \ell_* \geq \frac{2}{\log\left(\frac{1}{1-\Delta}\right)} \log \frac{1}{\epsilon}
    \end{gather}
    from which the lower bound version of Equation~\ref{eq:ell_exact_value} follows.
\end{proof}

\section{Assorted proofs}
\subsection{Matrix bound from block decomposition (Theorem \ref{thm:block_norm_bound})}
\label{app:block_norm_bound}
\begin{theorem}
\label{thm:block_norm_bound}
    Let $P_i$ be a complete orthogonal set of projection operators. Let $T$ be a matrix. Define $A_{ij} = ||P_i T P_j||_{\infty}$. The largest eigenvalue of $A$ gives an upper bound on the largest eigenvalue of $T$.
\end{theorem}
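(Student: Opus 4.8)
The plan is to start from an eigenvector of $T$, decompose it across the complete orthogonal family $\{P_i\}$, and show that the vector of block-norms is "dominated" by $A$ in the Perron--Frobenius sense, from which the eigenvalue bound drops out. Concretely, suppose $Tv = \lambda v$ with $v \neq 0$. Since $\sum_i P_i = I$ and the $P_i$ are orthogonal projections, write $v_i = P_i v$, so that $v = \sum_i v_i$ and $P_i v_i = v_i$. Applying $P_i$ to the eigenvalue equation and inserting $I = \sum_j P_j$ gives
\begin{equation}
    \lambda v_i = P_i T v = \sum_j P_i T P_j v_j .
\end{equation}
Taking norms, using the triangle inequality and then $||P_i T P_j v_j|| \le ||P_i T P_j||_{\infty}\, ||v_j||$, and defining the nonnegative real vector $w$ by $w_i = ||v_i||$, we obtain
\begin{equation}
    |\lambda|\, w_i \;=\; \Big|\Big| \sum_j P_i T P_j v_j \Big|\Big| \;\le\; \sum_j A_{ij} w_j \;=\; (Aw)_i ,
\end{equation}
i.e. $A w \ge |\lambda|\, w$ entrywise, with $w \ge 0$ and $w \neq 0$ because $v \neq 0$.

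The second step is to convert this entrywise domination into a bound on $|\lambda|$. Because $A$ has nonnegative entries, left-multiplying $Aw \ge |\lambda| w$ by $A$ preserves the inequality, so by induction $A^k w \ge |\lambda|^k w \ge 0$ for every $k \ge 0$. Hence, for any monotone vector norm on $\mathbb{R}^{\dim A}$ and the matrix norm it induces, $||A^k|| \ge ||A^k w||/||w|| \ge |\lambda|^k$, so $||A^k||^{1/k} \ge |\lambda|$ for all $k$. Letting $k \to \infty$ and invoking Gelfand's formula $\rho(A) = \lim_k ||A^k||^{1/k}$ yields $\rho(A) \ge |\lambda|$. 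Finally, since $A$ is a nonnegative matrix, the Perron--Frobenius theorem guarantees that its spectral radius $\rho(A)$ is itself an eigenvalue of $A$, and in fact the largest; therefore $\lambda \le |\lambda| \le \rho(A) = \lambda_{\max}(A)$. This proves the stated bound, and in fact the slightly stronger statement that $\lambda_{\max}(A)$ bounds the modulus of every eigenvalue of $T$ (which is what is actually needed to bound the subleading eigenvalue of the staircase transfer matrix via the matrix $A$ of Equation~\ref{eq:aformat}).

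The one place requiring care is the passage from $Aw \ge |\lambda| w$ to $|\lambda| \le \rho(A)$: the naive route of pairing $w$ against a left Perron eigenvector of $A$ fails when $A$ is reducible, since that eigenvector can vanish on the support of $w$. Routing the argument through the power iteration $A^k w \ge |\lambda|^k w$ avoids any irreducibility hypothesis and is the only place where nonnegativity of the entries of $A$ is essential — which is precisely why the theorem must be phrased in terms of the block-\emph{norm} matrix $A$ rather than a signed block matrix. Everything else is routine: the decomposition of the eigenvector is immediate from completeness and orthogonality of the $P_i$, and the norm estimate is just submultiplicativity of the spectral norm together with the triangle inequality.
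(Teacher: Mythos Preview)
Your proof is correct, and its first half---decomposing an eigenvector of $T$ along the $P_i$ and arriving at the entrywise inequality $Aw \ge |\lambda|\,w$ for the nonnegative vector $w_i = \|P_i v\|$---is exactly what the paper does. The two arguments diverge only in how they extract $|\lambda|\le\rho(A)$ from this inequality. The paper invokes a Collatz--Wielandt--style variational characterization: it considers $f(\vec{x}) = \min_i (A\vec{x})_i/x_i$, argues via a subgradient computation that any maximizer of $f$ must satisfy $A\vec{x} = f(\vec{x})\vec{x}$, and concludes that $\max_{\vec{x}} f(\vec{x})$ is an eigenvalue of $A$. You instead iterate the inequality to $A^k w \ge |\lambda|^k w$, bound $\|A^k\|$ below by $|\lambda|^k$, and apply Gelfand's formula. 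Your route is shorter, avoids the somewhat delicate subgradient analysis (which in the paper glosses over what happens when some $x_i$ vanish), and automatically yields the bound on $|\lambda|$ rather than on $\lambda$---which, as you correctly point out, is what is actually used downstream when feeding the block-norm matrix $A$ of Equation~\ref{eq:aformat} into Lemma~\ref{lemma:gershgorin_block_bound}. The paper's variational argument has the minor advantage of making the connection to Perron--Frobenius theory explicit, but functionally the two are equivalent.
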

\begin{proof}
    Let \(v\) be an eigenvector of \(T\) with eigenvalue \(c\). Define
    \[x_i = ||P_i v||\]
    \[y_i = ||P_i T v||\]
    \[z_i = \sum_{j} A_{ij} x_i\]
    and let \(a\) be the largest eigenvalue of \(A\).
    We must have \(y_i = cx_i\). By assumption \(y_i \leq z_i\), so \(c \leq \frac{z_i}{x_i}\). In fact, since this holds independently for every $i$,
    \begin{gather}
        c \leq \min_i \frac{z_i}{x_i}
    \end{gather}
    We will now show that
    \[\min_i \frac{z_i}{x_i} \leq a\]
    Define 
    \[f(\vec{x}) = \min_i \frac{z_i}{x_i}\]
    We wish to show that $\max_{\vec{u}} f(\vec{u}) \leq a$. We can compute
    \begin{gather}
        \frac{\partial f(\vec{x})}{\partial x_i} = \begin{cases}
            \min_{j: \frac{z_j}{x_j} = f(\vec{x})} A_{ij} & \frac{z_i}{x_i} > f(\vec{x}) \\
            - \frac{1}{x_i^2}\sum_{k \ne i} A_{ik}x_k & \frac{z_i}{x_i} = f(\vec{x})
        \end{cases}
    \end{gather}
    (Note that this is only a subgradient, as $f$ is not differentiable everywhere.) A maximum of $f$ can occur only if every element of the subgradient is nonpositive. Since $A_{ij} \geq 0$, this happens only when $\frac{z_i}{x_i} = f(\vec{x})$ for every $i$. In other words, $f$ can be maximized only when $A\vec{x} = f(\vec{x}) \vec{x}$, which implies that $\vec{x}$ is an eigenvector of $A$ with eigenvalue $f(\vec{x})$. It follows that $f(\vec{x}) \leq a$. We saw earlier that $c \leq f(\vec{x})$, from which we obtain $c \leq a$.
\end{proof}

\subsection{Proof of Lemma \ref{lemma:gate_t_blindness}}
\label{app:gate_t_blindness}
\begin{lemma*}
(Restatement of Lemma \ref{lemma:gate_t_blindness})
Let $k < t$ and \(\vec{\sigma} \in S_k^{\times N}\), where $S_k^{\times N}$ is the set of tuples$\{(\sigma_1, \sigma_2, ..., \sigma_N) : \sigma_i \in S_k\}$. Suppose
\begin{gather}
G^{(k)} \ket{\vec{\sigma}}^{(k)} = \sum_{\vec{\tau} \in S_k^N} c_{\vec{\tau}} \ket{\vec{\tau}}^{(k)}
\end{gather}
Then
\begin{gather}
    G^{(t)} \ket{\vec{\sigma}}^{(t)} = \sum_{\vec{\tau} \in S_k^N} c_{\vec{\tau}} \ket{\vec{\tau}}^{(t)}
\end{gather}
where the coefficients \(c\) are the same.
\end{lemma*}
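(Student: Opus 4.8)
The plan is to work in the Heisenberg (channel) picture rather than with the Weingarten expansion directly, exploiting the fact that a permutation $\sigma \in S_k$, viewed inside $S_t$, fixes the copies $k+1,\dots,t$. First I would record the dictionary: the permutation state $\ket{\vec\sigma}^{(t)}$, viewed as a vector in $L(\mathcal H^{\otimes t})$, corresponds (up to the overall normalization $q^{-Nt/2}$) to the operator $\bigotimes_{s=1}^N P^{(s)}_{\sigma_s}$, where $P^{(s)}_{\sigma_s}$ permutes the $t$ copies of site $s$; and the two-site gate operator $G^{(t)}$ (acting on sites $m,m+1$) corresponds to the channel $\rho \mapsto \int (U^{\dagger})^{\otimes t}\rho\,U^{\otimes t}\,dU$, with $U$ ranging over $U(q^2)$ acting on sites $m,m+1$. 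The spectral facts we need are insensitive to whether one conjugates by $U$ or $U^{\dagger}$, so I would not belabor the vectorization convention.

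The key structural observation is a factorization over copies. Since $\sigma_m,\sigma_{m+1}\in S_k$, the operators $P^{(m)}_{\sigma_m}$ and $P^{(m+1)}_{\sigma_{m+1}}$ act as the identity on copies $k+1,\dots,t$, so on the $t$-fold space of sites $m,m+1$ we may write $P^{(m)}_{\sigma_m}\otimes P^{(m+1)}_{\sigma_{m+1}} = \widetilde\rho_{1\ldots k}\otimes I_{k+1\ldots t}$, where $\widetilde\rho_{1\ldots k}$ is the restriction to the first $k$ copies and corresponds to the $k$-copy two-site state $\ket{\sigma_m}^{(k)}\otimes\ket{\sigma_{m+1}}^{(k)}$. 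Because $U^{\otimes t} = U^{\otimes k}\otimes U^{\otimes(t-k)}$ and $(U^{\dagger})^{\otimes(t-k)} I\, U^{\otimes(t-k)} = I$, the Haar integral factorizes:
\begin{align*}
 &\int (U^{\dagger})^{\otimes t}\bigl(\widetilde\rho_{1\ldots k}\otimes I_{k+1\ldots t}\bigr)U^{\otimes t}\,dU \\
 &\qquad = \Bigl(\int (U^{\dagger})^{\otimes k}\widetilde\rho_{1\ldots k}\,U^{\otimes k}\,dU\Bigr)\otimes I_{k+1\ldots t}.
\end{align*}
The inner integral is exactly $G^{(k)}$ applied to $\widetilde\rho_{1\ldots k}$. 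Tensoring back in the spectator sites $s\neq m,m+1$ (on which $G$ acts trivially and the factors $P^{(s)}_{\sigma_s}$, with $\sigma_s\in S_k$, pass through unchanged), this says precisely that $G^{(t)}\ket{\vec\sigma}^{(t)}$ is the image of $G^{(k)}\ket{\vec\sigma}^{(k)}$ under the embedding $\iota$ that tensors a $k$-copy state of each site with the maximally entangled (``identity'') state $\ket{I}^{(t-k)}$ on the remaining $t-k$ copies of that site.

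To finish, I would note that $\iota$ is a genuine linear map with no restriction relating $k$ and $q$: reorganizing tensor factors, $\ket{\vec\tau}^{(t)} = \iota\bigl(\ket{\vec\tau}^{(k)}\bigr)$ for every $\vec\tau\in S_k^{\times N}$ (the normalizations match, $q^{-Nk/2}\cdot q^{-N(t-k)/2}=q^{-Nt/2}$), and $\iota$ is obtained by tensoring with a fixed vector, so it carries any linear combination to the corresponding combination regardless of whether the $\ket{\vec\tau}^{(k)}$ are independent. Applying $\iota$ to the hypothesis $G^{(k)}\ket{\vec\sigma}^{(k)} = \sum_{\vec\tau\in S_k^{\times N}} c_{\vec\tau}\ket{\vec\tau}^{(k)}$ and invoking the displayed identity gives $G^{(t)}\ket{\vec\sigma}^{(t)} = \sum_{\vec\tau\in S_k^{\times N}} c_{\vec\tau}\ket{\vec\tau}^{(t)}$ with the same coefficients. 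The same argument goes through verbatim for gates of any arity (replacing ``two-site'' by ``$r$-site''), which is what is needed for the corollaries that follow.

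The only real subtlety — and the step I would be most careful about — is the bookkeeping in the first paragraph: correctly matching the tensor-factor ordering of the ``doubled Hilbert space'' vectors $\ket{\vec\sigma}^{(t)}$ with operators on $\mathcal H^{\otimes t}$, and in particular verifying that the embedding $S_k\hookrightarrow S_t$ really turns the copy-permutation operator $P_\sigma$ on $t$ copies into $P_\sigma$ on the first $k$ copies tensored with the identity on the remaining $t-k$. Once that dictionary is pinned down, the remainder is the one-line fact $U^{\dagger}U=I$ together with linearity of $\iota$.
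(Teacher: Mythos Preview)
Your proof is correct and takes essentially the same approach as the paper: both arguments factor the permutation state $\ket{\sigma}^{(t)}$ for $\sigma\in S_k$ as $\ket{\sigma}^{(k)}\otimes\ket{I}^{(t-k)}$ over copies, then use $U^\dagger U = I$ (the paper calls it the ``transpose trick'' in the vectorized picture, you phrase it as invariance of the identity operator under conjugation) to show the extra $t-k$ copies are inert, and finish by linearity of the embedding. The only difference is cosmetic---you work in the operator/channel picture while the paper stays with the vectorized states---but the mathematical content is identical.
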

\begin{proof}
Consider \(\sigma \in S_k\). Partition a multi index \(\vec{i} \in \mathbb{Z}_q^t\) into its first \(k\) entries \(\vec{i}_{< k}\) and last \(t - k\) entries \(\vec{i}_{\geq k}\). Then
\begin{align}
    \ket{\sigma}^{(t)} 
    &= \frac{1}{\sqrt{q}^t} \sum_{\vec{i} \in \mathbb{Z}_q^t} \ket{\vec{i}_{< k}} \otimes \ket{\vec{i}_{\geq k}} \otimes \ket{\sigma(\vec{i}_{<k})} \otimes \ket{\vec{i}_{\geq k}}
    \\
    &= \left(\frac{1}{\sqrt{q}^k} \sum_{\vec{i} \in \mathbb{Z}_q^k} \ket{\vec{i}} \otimes \ket{\sigma(\vec{i})}\right) \otimes \left(\frac{1}{\sqrt{q}^{t-k}} \sum_{\vec{i} \in \mathbb{Z}_q^{t-k}} \ket{\vec{i}} \otimes \ket{\vec{i}}\right) \\
    &= \ket{\sigma}^{(k)} \otimes \ket{I}^{(t-k)}
\end{align}
where the tensor product is again over copies. Let us now apply a random unitary to a pair of sites belonging to the basis element $\ket{\sigma, \tau}^{(t)}$ with $\sigma, \tau \in S_k$. We obtain
\begin{align}
U^{\otimes t} \otimes U^{*\otimes t} \ket{\sigma, \tau}^{(t)} &= (U^{\otimes k} \otimes U^{*\otimes k})\ket{\sigma, \tau}^{(k)} \otimes (U^{\otimes (t-k)} \otimes U^{*\otimes (t-k)})\ket{I, I}^{(t-k)}
\\
&= (U^{\otimes k} \otimes U^{*\otimes k})\ket{\sigma, \tau}^{(k)} \otimes (U^{\otimes (t-k)}U^{\dagger \otimes (t-k)} \otimes I^{\otimes (t-k)})\ket{I, I}^{(t-k)}
\\
&= (U^{\otimes k} \otimes U^{*\otimes k})\ket{\sigma, \tau}^{(k)} \otimes \ket{I, I}^{(t-k)}
\end{align}
where in the second line we have used the transpose trick. Averaging over the Haar measure, we see that
\begin{gather}
    G^{(t)}\ket{\sigma, \tau}^{(t)} = \left(G^{(k)}\ket{\sigma, \tau}^{(k)}\right) \otimes \ket{I, I}^{(t-k)}
\end{gather}
The lemma above then follows from the linearity of \(G\).
\end{proof}

\subsection{Proof of Lemma \ref{lemma:deranged_symmetry}}
\label{app:deranged_symmetry}
\begin{lemma*}
    (Restatement of Lemma \ref{lemma:deranged_symmetry})
    The subspaces $\vecspan \left(\mathfrak{D}_N^{(t)}\right)$,  \(\vecspan \left(S_t^2 \circ S_{t-1}^{\times N}\right)\), and the orthogonal complement of \(\vecspan \left(S_t^2 \circ S_{t-1}^{\times N}\right)\) are each invariant under the global right-action \(\sigma_R\) of any \(\sigma \in S_t\).
\end{lemma*}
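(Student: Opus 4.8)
The plan is to exploit the fact that each global right-action $\sigma_R$ is a \emph{unitary} operator which acts by a bijection on the set of permutation product states $\{\ket{\tau_1,\dots,\tau_N} : \tau_i\in S_t\}$, and to check that this bijection preserves each of the two combinatorial conditions defining the subspaces in question; the orthogonal-complement statement will then follow for free from unitarity. I would start with $\vecspan(S_t^2\circ S_{t-1}^{\times N})$, which is spanned by states $\ket{\pi\tau_1\rho,\dots,\pi\tau_N\rho}$ with $\pi,\rho\in S_t$ and $\tau_i$ in the stabilizer $S_{t-1}\le S_t$ of the last index. Applying $\sigma_R$ sends such a state to $\ket{\pi\tau_1(\rho\sigma),\dots,\pi\tau_N(\rho\sigma)}$, which is again of the same form with $\rho$ replaced by $\rho\sigma\in S_t$ and the $\tau_i$ unchanged. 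Hence $\sigma_R$ carries a spanning set of this subspace into the subspace, so it is invariant.

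For the orthogonal complement I would first record that $\sigma_R$ is unitary: on the permutation module one computes $\braket{\tau\sigma|\tau'\sigma} = q^{-|(\tau\sigma)^{-1}(\tau'\sigma)|} = q^{-|\sigma^{-1}\tau^{-1}\tau'\sigma|} = q^{-|\tau^{-1}\tau'|} = \braket{\tau|\tau'}$, using that the minimal transposition length is a class function, and tensoring over the $N$ sites shows $\sigma_R$ preserves all inner products among permutation states; since it is invertible with inverse $(\sigma^{-1})_R$, it is unitary as an operator on the Hilbert space. Consequently, if $V$ is invariant under every $\tau_R$ then for $w\in V^\perp$ and $v\in V$ we have $\braket{v|\sigma_R w}=\braket{(\sigma^{-1})_R v|w}=0$ because $(\sigma^{-1})_R v\in V$; thus $V^\perp$ is invariant, which gives the third subspace.

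Finally, for $\vecspan(\mathfrak{D}_N^{(t)})$ I would use the conjugation characterization of complete derangements from Definition~\ref{def:deranged}: $\ket{\sigma_1,\dots,\sigma_N}\in\mathfrak{D}_N^{(t)}$ precisely when $\sigma_1^{-1}\sigma_2,\dots,\sigma_1^{-1}\sigma_N$ have no common fixed point. Under $\sigma_R$ this state becomes $\ket{\sigma_1\sigma,\dots,\sigma_N\sigma}$, and $(\sigma_1\sigma)^{-1}(\sigma_i\sigma)=\sigma^{-1}(\sigma_1^{-1}\sigma_i)\sigma$; since the fixed-point set satisfies $\mathrm{Fix}(\sigma^{-1}g\sigma)=\sigma^{-1}(\mathrm{Fix}(g))$, the common fixed-point set of the conjugated family is the image under $\sigma^{-1}$ of that of the original family, hence still empty. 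So $\sigma_R$ permutes $\mathfrak{D}_N^{(t)}$ among itself and its span is invariant. Note this step does not need $t\le q$: it only uses that $\sigma_R$ restricts to a bijection of the basis set $\mathfrak{D}_N^{(t)}$, so the linear dependence among the $\ket{\tau}$ that appears for $t>q$ is irrelevant. The only point requiring any care is the orthogonal-complement step, which genuinely relies on the unitarity of $\sigma_R$; the rest is bookkeeping with the right action, and I do not anticipate a substantive obstacle.
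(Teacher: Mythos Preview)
Your proposal is correct and follows essentially the same approach as the paper: verify that right-action permutes the spanning set of $S_t^2\circ S_{t-1}^{\times N}$ by absorbing $\sigma$ into $\rho$, check that conjugation by $\sigma$ carries the common-fixed-point set bijectively so $\mathfrak{D}_N^{(t)}$ is preserved, and deduce invariance of the orthogonal complement from unitarity. The only difference is that you spell out the unitarity of $\sigma_R$ explicitly via $|\sigma^{-1}\tau^{-1}\tau'\sigma|=|\tau^{-1}\tau'|$, whereas the paper simply asserts that the representation is unitary; your extra line is a welcome clarification but not a different argument.
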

\begin{proof}
    We first show that the set of complete derangement states is invariant under left-action. Consider Definition \ref{def:deranged}. Under some left-action \(\tau_L\), we find \(\sigma_1^{-1}\sigma_2 \rightarrow (\tau \sigma_1)^{-1} (\tau \sigma_2) = \sigma_1^{-1}\sigma_2\), so the set of permutations \((\tau \sigma_1)^{-1} (\tau \sigma_2) ... (\tau \sigma_1)^{-1} (\tau \sigma_N)\) have no common fixed point if and only if the original set also had no common fixed point.

    We now show the same under right-action. Under right-action, \(\sigma_1^{-1}\sigma_2 \rightarrow \tau^{-1} \sigma_1^{-1} \sigma_2 \tau\). If \(j\) is a fixed point of the former, then \(\tau(j)\) is a fixed point of the latter. It follows that the property of having a common fixed point is invariant under right-action. From this we see that the complete derangement subspace is invariant under both left- and right-action.

    The invariance of \(\vecspan \left(S_t^2 \circ S_{t-1}^{\times N}\right)\) is immediate from its definition. Because the representation is unitary, the orthogonal complement of any invariant subspace is also invariant.
\end{proof}

\subsection{Proof of Lemma \ref{lemma:isotype_basis}}
\label{app:isotype_basis}
The projector to the isotypic component corresponding to $\nu$ is the \textbf{canonical idempotent} given by
\begin{gather}
    P_\nu = \frac{\chi_\nu(1)}{t!} \sum_\tau \chi_\nu(\tau^{-1}) \tau_R
\end{gather}
Recall also the operator 
\begin{gather}
    R_\nu^{ij} = \sum_{\rho \in S_t} V_\nu(\rho^{-1})^{ij} \rho_R
\end{gather}
We will first establish a few properties of these operators.
\begin{lemma}
\label{lemma:pr_properties}
The operators $R_{\nu}^{ij}$ obey the following formulas:
\begin{align}
    R_{\nu}^{ij} \tau_R &= \sum_{k} R_{\nu}^{ik} V_\nu(\tau)^{kj} \\
    \tau_R R_{\nu}^{ij}  &= \sum_{k} V_\nu(\tau)^{ik} R_{\nu}^{kj} \\
    R_{\mu}^{ij}R_\nu^{kl} &= \frac{t!}{\chi_\nu(1)} \delta_{\mu \nu} \delta_{ik} R_\nu^{jl} \\
    P_\mu R_{\nu}^{ij} &= \delta_{\mu \nu} R_{\nu}^{ij}
\end{align}
\end{lemma}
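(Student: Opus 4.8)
The plan is to derive all four identities from two elementary ingredients. The first is that $\rho \mapsto \rho_R$ is an antihomomorphism, $\rho_R\sigma_R = (\sigma\rho)_R$, which is immediate from the definition of the global right-action: applying $\sigma_R$ and then $\rho_R$ sends each $\sigma_i \mapsto \sigma_i\circ\sigma\circ\rho$, which is exactly the action of $(\sigma\rho)_R$. The second is the great orthogonality theorem for the matrix coefficients of the irreducible representations $V_\nu$ of $S_t$, which (choosing real orthogonal representatives) reads
\begin{gather}
\sum_{\rho \in S_t} V_\mu(\rho^{-1})^{ab}\, V_\nu(\rho)^{cd} = \frac{t!}{\chi_\nu(1)}\, \delta_{\mu\nu}\, \delta_{ad}\, \delta_{bc}.
\end{gather}
Together with the fact that $V_\nu$ is a homomorphism, $V_\nu(\alpha\beta)^{ij} = \sum_k V_\nu(\alpha)^{ik} V_\nu(\beta)^{kj}$, each of the four formulas reduces to a single relabelling of a sum.

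For the first identity I would write $R_\nu^{ij}\tau_R = \sum_\rho V_\nu(\rho^{-1})^{ij}\,\rho_R\tau_R = \sum_\rho V_\nu(\rho^{-1})^{ij}(\tau\rho)_R$, substitute $\pi = \tau\rho$ so that $\rho^{-1} = \pi^{-1}\tau$, and expand $V_\nu(\pi^{-1}\tau)^{ij} = \sum_k V_\nu(\pi^{-1})^{ik} V_\nu(\tau)^{kj}$; resumming over $\pi$ rebuilds $R_\nu^{ik}$ and gives $\sum_k R_\nu^{ik} V_\nu(\tau)^{kj}$. The second identity is the mirror image: $\tau_R R_\nu^{ij} = \sum_\rho V_\nu(\rho^{-1})^{ij}(\rho\tau)_R$, substitute $\pi = \rho\tau$ so $\rho^{-1} = \tau\pi^{-1}$, and split $V_\nu(\tau\pi^{-1})^{ij} = \sum_k V_\nu(\tau)^{ik} V_\nu(\pi^{-1})^{kj}$.

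For the third identity I would expand $R_\nu^{kl}$ in its defining sum over $\sigma$ and pull it through $R_\mu^{ij}$ using the first identity, $R_\mu^{ij}\sigma_R = \sum_m R_\mu^{im} V_\mu(\sigma)^{mj}$; the only nontrivial remaining sum is $\sum_\sigma V_\nu(\sigma^{-1})^{kl} V_\mu(\sigma)^{mj}$, which the great orthogonality theorem collapses to $\tfrac{t!}{\chi_\nu(1)}\delta_{\mu\nu}$ times two Kronecker deltas, leaving a single $R_\nu$ with the surviving indices and hence the stated identity. The fourth identity then follows at once by writing the canonical idempotent as $P_\mu = \tfrac{\chi_\mu(1)}{t!}\sum_a R_\mu^{aa}$ (since $\chi_\mu(\tau^{-1}) = \sum_a V_\mu(\tau^{-1})^{aa}$), applying the third identity to each term $R_\mu^{aa} R_\nu^{ij}$, summing over $a$, and using $\chi_\mu(1) = \chi_\nu(1)$ whenever $\mu = \nu$.

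The argument is purely formal, so there is no conceptual obstacle; the only place that demands care is bookkeeping — keeping the order of group multiplication straight so that the antihomomorphism is applied in the right orientation, fixing real orthogonal representatives for the $V_\nu$ so that the great orthogonality theorem is used without complex conjugates, and verifying that the free indices on the left of each formula emerge in exactly the stated positions on the right. Beyond Schur orthogonality and the representation property of $V_\nu$, nothing further is needed.
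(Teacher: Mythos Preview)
Your proposal is correct and follows essentially the same route as the paper: substitute and relabel for the first two identities, then invoke Schur orthogonality for the third, and combine these for the fourth. The only variation is that for the fourth identity you rewrite $P_\mu = \tfrac{\chi_\mu(1)}{t!}\sum_a R_\mu^{aa}$ and reuse the third identity, whereas the paper expands both $P_\mu$ and $R_\nu^{ij}$ directly and applies orthogonality once more; your shortcut is slightly cleaner but equivalent.
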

\begin{proof}
We compute
\begin{align*}
    R_{\nu}^{ij} \tau_R &= \sum_{\rho \in S_t} V_\nu(\rho^{-1})^{ij} \rho_R \tau_R 
    \\&= \sum_{\rho \in S_t} V_\nu(\rho^{-1})^{ij} (\tau \rho)_R 
    \\& = \sum_{\pi \in S_t} V_\nu(\pi^{-1} \tau)^{ij} \pi_R
    \\& = \sum_{\pi,k} V_\nu(\tau)^{kj} V_\nu(\pi^{-1} \tau)^{ik} \pi_R
    \\& = \sum_{k} V_\nu(\tau)^{kj} R_{\nu}^{ik}
\end{align*}
Similarly
\begin{align*}
    \tau_R R_{\nu}^{ij} &= \sum_{\rho \in S_t} V_\nu(\rho^{-1})^{ij} (\rho \tau)_R 
    = \sum_{\pi,k} V_\nu(\tau)^{ik} V_\nu(\pi^{-1} \tau)^{kj} \pi_R
\end{align*}
Meanwhile
\begin{align*}
    R_{\nu}^{ij}R_\mu^{kl} &= \sum_{\rho} V_{\nu}^{ij}(\rho) \rho_R R_\mu^{kl}
    \\ &=
    \sum_{\rho, n} V_{\nu}^{ij}(\rho) V_{\mu}(\rho)^{kn} R_\mu^{nl}
\end{align*}
The Schur Orthogonality relations tell us
\begin{gather}
    \sum_{\rho} V_{\nu}^{ij}(\rho) V_{\mu}(\rho)^{kn} = \frac{t!}{\chi_\nu(1)} \delta_{\mu \nu}\delta_{ik}\delta_{nj}
\end{gather}
Substituting this formula in allows us to do both sums and complete the proof of the third formula.

For the fourth formula,
\begin{align}
    P_\mu R_{\nu}^{ij} &= \frac{\chi_\mu(1)}{t!} \left(\sum_{\tau \in S_t} \chi_\mu(\tau^{-1}) \tau_R\right) \left(\sum_{\rho \in S_t} V_\nu(\rho^{-1})^{ij} \rho_R \right) 
    \n
    &= \frac{\chi_\mu(1)}{t!} \sum_{\tau,\rho} \chi_\mu(\tau^{-1}) V_\nu(\rho^{-1})^{ij} (\rho \tau)_R 
    \n
    &= \frac{\chi_\mu(1)}{t!} \sum_{\tau,\pi} \chi_\mu(\tau^{-1}) V_\nu(\tau \pi^{-1})^{ij} \pi_R
    \n
    &= \frac{\chi_\mu(1)}{t!} \sum_{\tau,\pi,k,\ell} V_\mu(\tau^{-1})^{kk} V_\nu(\tau)^{i\ell} V_\nu( \pi^{-1})^{\ell j} \pi_R 
    \n
    &= \sum_{\pi,k,\ell} \delta_{\mu \nu} \delta_{ik} \delta_{\ell k} V_\nu(\pi^{-1})^{\ell j} \pi_R  \label{eq:line_uses_got} \\
    &= \delta_{\mu \nu} \sum_{\rho} V_\nu(\rho^{-1})^{i j} \rho_R
\end{align}
where we have used the Great Orthogonality Theorem on Line (\ref{eq:line_uses_got}).
\end{proof}

\begin{lemma*}
(Restatement of Lemma \ref{lemma:isotype_basis})
A basis for the isotypic component of \(\vecspan  \left(\mathfrak{D}_3^{(t)}\right)\) corresponding to \(\nu\) is given by
\begin{gather}
    \left\{\ket{e_{\nu}^{ij,\sigma}} = R_{\nu}^{ij} \ket{I, \sigma} \bigg|\sigma \in \mathbb{D}_t, i,j \in \{1...d_{\nu}\} \right\}
\end{gather}
\end{lemma*}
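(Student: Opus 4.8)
The plan is to establish three facts: (i) each proposed state lies in the $\nu$-isotypic component (under the right action) of the deranged subspace; (ii) these states span that component; and (iii) a dimension count forces the spanning set to be a basis. I work throughout in the regime $t \le q$, in which the permutation states — and hence the natural spanning sets of $\image(\Pi_m)$ and of the deranged subspace inside it — are linearly independent. For (i), the fourth identity of Lemma~\ref{lemma:pr_properties} gives $P_\mu R_\nu^{ij} = \delta_{\mu\nu} R_\nu^{ij}$, so $P_\nu$ fixes $\ket{e_\nu^{ij,\sigma}} = R_\nu^{ij}\ket{I,I,\sigma}$ while every other canonical idempotent annihilates it; since the $P_\mu$ resolve the identity, this is exactly the statement that $\ket{e_\nu^{ij,\sigma}} \in \image(P_\nu)$. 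Moreover $\ket{I,I,\sigma} \in \image(\Pi_m)$ (it is uniform on the first two sites) and $\Pi_m$ commutes with the right action (cf. Lemma~\ref{lemma:symmetry}), so $\ket{e_\nu^{ij,\sigma}} \in \image(\Pi_m)$ as well; and expanding $R_\nu^{ij}\ket{I,I,\sigma} = \sum_\rho V_\nu(\rho^{-1})^{ij}\ket{\rho,\rho,\sigma\rho}$ and noting that $\ket{\rho,\rho,\sigma\rho}$ is a complete derangement state iff $\rho^{-1}\sigma\rho$ is a derangement, iff $\sigma$ is, shows $\ket{e_\nu^{ij,\sigma}}$ lies in the deranged subspace whenever $\sigma \in \mathbb{D}_t$ (this is the right-action invariance of that subspace from Lemma~\ref{lemma:deranged_symmetry}).

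For (ii), I would expand $\chi_\nu(\tau^{-1}) = \sum_k V_\nu(\tau^{-1})^{kk}$ in the definition of the canonical idempotent to get $P_\nu = \frac{d_\nu}{t!}\sum_k R_\nu^{kk}$. Every spanning state of the deranged part of $\image(\Pi_m)$ has the form $\ket{\sigma_1,\sigma_1,\sigma_2}$ with $\sigma_1^{-1}\sigma_2$ a derangement, and equals $(\sigma_1)_R\ket{I,I,\sigma_2\sigma_1^{-1}}$, where $\sigma_2\sigma_1^{-1} \in \mathbb{D}_t$ since it is conjugate to $\sigma_1^{-1}\sigma_2$. Applying $P_\nu$ and then the first identity of Lemma~\ref{lemma:pr_properties}, namely $R_\nu^{kk}(\sigma_1)_R = \sum_\ell R_\nu^{k\ell}V_\nu(\sigma_1)^{\ell k}$, writes $P_\nu\ket{\sigma_1,\sigma_1,\sigma_2}$ explicitly as $\frac{d_\nu}{t!}\sum_{k,\ell} V_\nu(\sigma_1)^{\ell k}\ket{e_\nu^{k\ell,\,\sigma_2\sigma_1^{-1}}}$, a linear combination of the proposed states. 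Since the states $\ket{\sigma_1,\sigma_1,\sigma_2}$ with $\sigma_1^{-1}\sigma_2 \in \mathbb{D}_t$ span the deranged part of $\image(\Pi_m)$ and $P_\nu$ is the projector onto its $\nu$-isotypic component, the $\ket{e_\nu^{ij,\sigma}}$ with $\sigma \in \mathbb{D}_t$ span that component.

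For (iii), there are $|\mathbb{D}_t|\, d_\nu^2$ proposed states for each $\nu$; summing over $\nu$ and using $\sum_\nu d_\nu^2 = t!$ gives $t!\,|\mathbb{D}_t|$, which for $t \le q$ is exactly the dimension of the deranged part of $\image(\Pi_m)$, since that space has the $t!\,|\mathbb{D}_t|$ linearly independent spanning states used in step (ii). The isotypic components form a direct-sum decomposition, so $\sum_\nu \dim(\text{isotypic}_\nu) = t!\,|\mathbb{D}_t| = \sum_\nu \#\{\text{states for }\nu\}$; combined with the spanning statement this forces $\#\{\text{states for }\nu\} = \dim(\text{isotypic}_\nu)$ for every $\nu$, so each spanning set is in fact a basis. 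The variant for $\image(\Pi_m)$ itself (without the derangement restriction, as in the main-text statement) follows by the identical argument with $\mathbb{D}_t$ replaced by $S_t$ and $t!\,|\mathbb{D}_t|$ by $(t!)^2$.

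The main obstacle is keeping the representation-theoretic bookkeeping straight: the operators $R_\nu^{ij}$ behave like matrix units rather than orthogonal projectors, so the order in which the identities of Lemma~\ref{lemma:pr_properties} are applied genuinely matters, and the dimension count in step (iii) is airtight only because $t \le q$ makes the permutation states independent and because conjugation preserves the set of derangements (so the deranged subspace really is right-action invariant). Once the correct identities are lined up, the remaining computations are routine.
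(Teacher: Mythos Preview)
Your proof is correct and follows essentially the same route as the paper: step (i) via $P_\mu R_\nu^{ij}=\delta_{\mu\nu}R_\nu^{ij}$ and step (ii) via writing $P_\nu$ applied to a generic deranged state as a combination of the $\ket{e_\nu^{ij,\sigma}}$ are exactly the paper's two moves, with only cosmetic differences in how the change of variables is performed. Your step (iii), the dimension count using $\sum_\nu d_\nu^2 = t!$, is an addition the paper omits: the paper's proof shows only that the states lie in the isotypic component and span it, never verifying linear independence, so strictly speaking your argument is more complete in justifying the word ``basis.''
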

\begin{proof}
From the fourth property of Lemma \ref{lemma:pr_properties}, 
\begin{gather}
    P_\nu R_{\nu}^{ij} \ket{I, \sigma} = R_{\nu}^{ij} \ket{I, \sigma}
\end{gather}
and so the basis elements lie in the isotypic components. It remains to show that the basis is complete. Applying the canonical idempotent to an arbitrary state, we obtain
\begin{align}
    P_\nu \ket{\pi, \rho} &= \frac{\chi_\nu(1)}{t!} \sum_\tau \chi_\nu(\tau^{-1})\ket{\tau \pi, \tau \rho}
    \\ 
    &= \frac{\chi_\nu(1)}{t!} \sum_\tau \chi_\nu(\pi \tau^{-1} )\ket{\tau, \tau \pi^{-1}\rho} \\
    &= \frac{\chi_\nu(1)}{t!} \sum_{\tau,i,j} V_\nu^{ij}(\tau^{-1}) V_{\nu}^{ji}(\pi)\ket{\tau, \tau \pi^{-1}\rho}
    \\
    &= \frac{\chi_\nu(1)}{t!} \sum_{i,j}  V_{\nu}^{ji}(\pi)\sum_\tau V_\nu^{ij}(\tau^{-1})\ket{\tau, \tau \pi^{-1}\rho}
    \\
    &= \frac{\chi_\nu(1)}{t!} \sum_{i,j}  V_{\nu}^{ji}(\pi)\ket{e_\nu^{ij, \pi^{-1}\rho}}
\end{align}
If the original state \(\ket{\pi,\rho}\) was completely deranged, then \(\pi^{-1}\rho\) is a derangement. The basis is thus complete for any linear combination of completely deranged states that lies in the isotypic component corresponding to \(\nu\).
\end{proof}

\subsection{Proof of bound on Weingarten function (Lemma \ref{lemma:weingarten_bound})}
\label{app:weingarten_bound}
\begin{lemma*} 
(Restatement of Lemma \ref{lemma:weingarten_bound})
    For \(3 \leq t \leq q\) and \(d \geq q^2\), 
    \begin{gather}
        ||W(d)||_\infty \leq e
    \end{gather}
    where the norm is the spectral norm with respect to either the permutation-basis inner product or the Hilbert space inner product.
\end{lemma*}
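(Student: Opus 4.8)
The plan is to identify the \emph{exact} spectrum of $W(d)$ and then estimate its top eigenvalue. Write $C$ for the $t!\times t!$ Gram matrix with entries $C_{\sigma\tau}=d^{-|\sigma^{-1}\tau|}$ (the $\delta_{ij}$ index in $W(d)^{ij}_{\sigma\tau}$ is a trivial tensor factor that changes no norm, so I suppress it). Since $3\le t\le q\le d$, the states $\{\ket\sigma\}$ are linearly independent, so $C$ is real symmetric and positive definite; and since the two-site gate $G=\sum_{\sigma\tau}\Wg(\sigma^{-1}\tau,d)\ket\sigma\bra\tau$ is the orthogonal projector onto $\vecspan\{\ket\mu\}$, we have $WC=I$, i.e. $W(d)=C^{-1}$. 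Hence $W(d)$ is self-adjoint and positive for either inner product: with respect to the Hilbert-space inner product it \emph{is} the rank-$t!$ projector $G$, of norm $1\le e$, so there is nothing to prove there; with respect to the permutation-basis inner product its operator norm is $\lambda_{\max}(C^{-1})=\lambda_{\min}(C)^{-1}$ (this value also bounds the $C$-inner-product operator norm, since $C^{-1}$ is self-adjoint with the same spectrum there). It therefore suffices to prove $\lambda_{\min}(C)\ge e^{-1}$.

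Next I would diagonalize $C$ with the representation theory of $S_t$. The function $f(\pi)=d^{-|\pi|}$ depends only on the cycle type, so it is a class function, and $C$ is multiplication by the central element $\sum_\pi d^{-|\pi|}e_\pi$ on the group algebra $\mathbb{C}[S_t]$; by Schur's lemma it acts as a scalar $z_\lambda$ on the $\lambda$-isotypic component, with
\[
z_\lambda=\frac{1}{\chi_\lambda(1)}\sum_{\pi\in S_t}d^{-|\pi|}\chi_\lambda(\pi)=\frac{1}{\chi_\lambda(1)\,d^{t}}\sum_{\pi\in S_t}d^{\,\#\mathrm{cyc}(\pi)}\chi_\lambda(\pi).
\]
The Schur--Weyl trace identity $\tr\!\big(\pi\,\big|\,(\mathbb{C}^d)^{\otimes t}\big)=d^{\#\mathrm{cyc}(\pi)}=\sum_{\mu\vdash t}s_\mu(1^d)\chi_\mu(\pi)$ together with character orthogonality gives $\sum_\pi d^{\#\mathrm{cyc}(\pi)}\chi_\lambda(\pi)=t!\,s_\lambda(1^d)$, and the hook--content formula $s_\lambda(1^d)=\tfrac{\chi_\lambda(1)}{t!}\prod_{b\in\lambda}\big(d+c(b)\big)$ (with $c(b)$ the content of box $b$) collapses everything to
\[
z_\lambda=\frac{1}{d^{t}}\prod_{b\in\lambda}\big(d+c(b)\big)=\prod_{b\in\lambda}\Big(1+\frac{c(b)}{d}\Big).
\]
Since $d\ge q^2\ge t^2>t-1\ge|c(b)|$ all factors are positive, re-confirming $C\succ0$, so $\lambda_{\min}(C)=\min_{\lambda\vdash t}z_\lambda$.

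I would then show the minimum is attained at the single column $1^t$, where $z_{1^t}=\prod_{i=1}^{t-1}(1-i/d)$. The claim is: for all $s\ge1$, all $\lambda\vdash s$, and all $d\ge s-1$, $\prod_{b\in\lambda}(d+c(b))\ge\prod_{i=0}^{s-1}(d-i)$, with equality at $\lambda=1^s$. This is a short induction on $s$: if $\lambda\ne 1^s$ then $\lambda$ has at most $s-1$ rows, so every content in $\lambda$ is at least $-(s-2)$; removing a corner box $b^\ast$ gives $\mu\vdash s-1$ with all other contents unchanged, and $d+c(b^\ast)\ge d-(s-2)\ge d-(s-1)\ge0$, so $\prod_{b\in\lambda}(d+c(b))=(d+c(b^\ast))\prod_{b\in\mu}(d+c(b))\ge (d-(s-1))\prod_{i=0}^{s-2}(d-i)=\prod_{i=0}^{s-1}(d-i)$ by the inductive hypothesis. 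Plugging in: for $d\ge t^2$ we have $1-i/d\ge 1-i/t^2>0$, so $\lambda_{\min}(C)^{-1}=\prod_{i=1}^{t-1}(1-i/d)^{-1}\le\prod_{i=1}^{t-1}(1-i/t^2)^{-1}$, and using $-\log(1-x)\le x/(1-x)$ with $i/t^2\le (t-1)/t^2$,
\[
\log\frac{1}{\lambda_{\min}(C)}\le\sum_{i=1}^{t-1}\frac{i/t^2}{1-i/t^2}\le\sum_{i=1}^{t-1}\frac{i}{t^2-t}=\frac{1}{t(t-1)}\cdot\frac{t(t-1)}{2}=\frac12,
\]
so $\|W(d)\|_\infty=\lambda_{\min}(C)^{-1}\le e^{1/2}\le e$.

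The main obstacle is the middle step: correctly pinning down the exact spectrum $z_\lambda=\prod_{b\in\lambda}(1+c(b)/d)$ of the Gram matrix, which requires assembling two classical ingredients (the Schur--Weyl permutation-trace identity and the hook--content formula) and the observation that the Weingarten matrix is precisely $C^{-1}$. The partition minimization and the final logarithmic estimate are routine once that is in place; indeed they give the stronger constant $\sqrt e$, so the bound $e$ is comfortably met.
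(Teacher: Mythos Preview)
Your approach is essentially the paper's: diagonalize $W(d)$ via the central idempotents of $\mathbb{C}[S_t]$, identify the extremal eigenvalue at the column partition $1^t$, and estimate $\prod_{i=1}^{t-1}(1-i/d)^{-1}$ for $d\ge t^2$; your elementary logarithmic inequality even yields the sharper constant $\sqrt e$ where the paper's Stirling-based argument lands at $e$. One quibble: the aside that under the Hilbert-space inner product $W(d)$ ``is the rank-$t!$ projector $G$'' of norm $1$ conflates the $t!\times t!$ coefficient matrix $W$ with the Hilbert-space operator $G=\sum_{\sigma,\tau}\Wg(\sigma^{-1}\tau,d)\ket{\sigma}\bra{\tau}$---but your parenthetical about $C^{-1}$ being self-adjoint with the same spectrum under the $C$-inner product already supplies the correct argument, so nothing is lost.
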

\begin{proof}
    The Weingarten matrix $W(d)$ has an established eigenvalue decomposition \cite{Zinn-Justin2010,Kostenberger2021}, from its $\mathbb{C}[S_t]$ algebra representation:
\begin{gather}
    W(d) = \sum_\lambda c_\lambda^{-1}(d) P_\lambda
\end{gather}
where $\lambda$ are partitions $\{\lambda_1 ... \lambda_k\}$ of $t$, $P_\lambda$ are a set of complete orthogonal idempotents in the algebra, and 
\begin{gather}
    c_\lambda(d) = d^{-t}\prod_{i=1}^k \prod_{j=1}^{\lambda_i} (d + j - i)
\end{gather}
Since the idempotents are orthogonal to each other, so are the eigenvectors of $W$, so the maximum norm of any unit vector passed through $W$ is bounded by its largest eigenvalue, which corresponds to the partition that minimizes $c_\lambda(Q)$ (keeping $c_\lambda$ nonzero, as the Weingarten function is a pseudo-inverse). For $t \leq d$, this would be the $(1,1,1,...1)$ partition, with an eigenvalue of 
\begin{align*}
    c_{1^t}(d)^{-1} &= \frac{d^t}{\prod_{i=1}^t [d-(i-1)]}\\
    &= \frac{d^t (d-t)!}{d!}
\end{align*}
We now apply a version of Stirling's approximation here:
\begin{gather}
    \sqrt{2\pi n}\left(\frac{n}{e}\right)^n e^{\frac{1}{12n+1}} < n! < \sqrt{2\pi n}\left(\frac{n}{e}\right)^n e^{\frac{1}{12n}}
\end{gather}
giving
\begin{align}
    c_{1^t}(d)^{-1} &\leq d^t \sqrt{\frac{d - t}{d}} e^t \left(\frac{(d - t)^{d - t}}{d^d} \right) e^{\frac{1}{12(d -t)+1} - \frac{1}{12d}}\n
    &= e^t \left(1-\frac{t}{d} \right)^{d-t+\frac{1}{2}} e^{\frac{12t+1}{[12(d -t)+1][12d]}}\n
    &\leq e^t \left(e^{-\frac{t}{d}}\right)^{d - t+\frac{1}{2}} e^{\frac{t+1}{12(d-t)^2}}\n
    &\leq e^{\frac{t^2}{d} - \frac{t}{2d} + \frac{t+1}{12(d-t)^2}}
\end{align}
For $t \leq \sqrt{d}$, this is at most $e$. Because $W$ is Hermitian with respect to both the basis inner product and the Hilbert-space inner product, this bound on its largest eigenvalue implies a bound on its operator norm with respect to either inner product.
\end{proof}

\subsection{Proof of Lemma \ref{lemma:coset_decomposition} (Coset Decomposition)} \label{app:block_triangular_proof}
\begin{lemma}
    \label{lemma:coset_decomposition}
    For $l \leq t$, each right coset of $S_l$ in $S_t$ has at least one permutation $\rho$ such that, for every permutation $\pi = \lambda \rho$ in the coset $S_l \rho$,
    \begin{gather}
        |\pi| = |\rho| + |\lambda|
    \end{gather}
\end{lemma}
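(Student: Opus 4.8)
The plan is to take $\rho$ to be any permutation of minimal transposition length $|\cdot|$ within its right coset $S_l\rho$ (such a $\rho$ exists since the coset is finite), and to verify that this $\rho$ works. I will use two standard facts: $|\pi| = t - c(\pi)$, where $c(\pi)$ denotes the number of cycles of $\pi$ (fixed points counted as $1$-cycles); and that left-multiplying a permutation $\pi$ by a transposition $(ab)$ splits the cycle of $\pi$ containing $a$ and $b$ into two cycles when $a$ and $b$ share a cycle (lowering $|\pi|$ by $1$), and otherwise merges their two cycles into one (raising $|\pi|$ by $1$). Write $[l] = \{1,\dots,l\}$ for brevity.

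First I would record a structural property of the minimal representative: no cycle of $\rho$ contains two distinct elements of $[l]$. Indeed, if $a,b\in[l]$ with $a\neq b$ lay in a common cycle of $\rho$, then $(ab)\in S_l$ and $(ab)\rho\in S_l\rho$ would satisfy $|(ab)\rho| = |\rho|-1$, contradicting minimality. It follows that exactly $l$ cycles of $\rho$ meet $[l]$, one cycle $C_p\ni p$ for each $p\in[l]$, with the $C_p$ pairwise distinct, and that the remaining $c_B := c(\rho)-l$ cycles lie entirely inside $\{l+1,\dots,t\}$.

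Next, fix an arbitrary $\lambda\in S_l$ and compute $c(\lambda\rho)$ by tracking the functional graph $x\mapsto(\lambda\rho)(x)$. Since $\lambda$ fixes $\{l+1,\dots,t\}$ pointwise, $\lambda\rho$ agrees with $\rho$ except that, for each $p\in[l]$, the edge $\rho^{-1}(p)\to p$ is redirected to $\rho^{-1}(p)\to\lambda(p)$. Removing those $l$ edges from the graph of $\rho$ turns each cycle $C_p$ into a directed path $\pi_p$ from $p$ to $\rho^{-1}(p)$ (a single vertex if $|C_p|=1$) and leaves the $c_B$ cycles inside $\{l+1,\dots,t\}$ intact; reinserting the edges $\rho^{-1}(p)\to\lambda(p)$ then concatenates the paths $\pi_p$ exactly along the cycles of $\lambda$ restricted to $[l]$. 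Hence each cycle of $\lambda|_{[l]}$ yields one cycle of $\lambda\rho$ and the other $c_B$ cycles survive, so $c(\lambda\rho) = c(\lambda|_{[l]}) + c_B$. Writing $c_\lambda := c(\lambda|_{[l]})$, this gives $|\lambda\rho| = t - c_\lambda - c_B$, while $|\rho| = t - l - c_B$ and $|\lambda| = l - c_\lambda$ (as $\lambda\in S_l\subseteq S_t$ has $c_\lambda + (t-l)$ cycles). Adding the last two, $|\rho| + |\lambda| = t - c_\lambda - c_B = |\lambda\rho|$, which is the claim for every $\pi=\lambda\rho$ in the coset.

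The main obstacle is the cycle-tracking step: one must check that the $C_p$ really are $l$ distinct cycles (this is where the "at most one element of $[l]$ per cycle" property is used) and that deleting and reinserting the $l$ edges does nothing except reglue the paths $\pi_p$ in the pattern dictated by $\lambda|_{[l]}$ — in particular that no extra components appear or disappear. A more conceptual alternative would run through the permutation representation on $\mathbb{R}^t$, using $|\pi| = \dim\,\mathrm{Im}(\pi-I)$ and the fact that $|\lambda\rho| = |\lambda|+|\rho|$ is equivalent to $\mathrm{Fix}(\lambda) + \mathrm{Fix}(\rho) = \mathbb{R}^t$ together with $\mathrm{Fix}(\lambda\rho) = \mathrm{Fix}(\lambda)\cap\mathrm{Fix}(\rho)$; minimality of $\rho$ makes the first condition transparent, but establishing the second appears to require essentially the same cycle analysis, so I would present the combinatorial argument above.
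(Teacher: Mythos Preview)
Your proof is correct and takes a genuinely different route from the paper's. The paper invokes the unique Jucys--Murphy factorization $\pi = s_{a_1 b_1}\cdots s_{a_k b_k}$ (with $a_i < b_i$ and the $b_i$ strictly increasing) and splits it at the point where $b_i$ first exceeds $l$; the leading segment lies in $S_l$ and the trailing segment is declared to be $\rho$, with additivity then immediate from uniqueness of the decomposition. You instead pick $\rho$ of minimal length in the coset, extract the structural consequence that each cycle of $\rho$ meets $[l]$ in at most one point, and then do a direct cycle count to get $c(\lambda\rho) = c(\lambda|_{[l]}) + c_B$. Your argument is more elementary --- it needs only $|\pi| = t - c(\pi)$ and the transposition split/merge rule --- and it makes the combinatorial characterization of the minimal representative explicit, whereas the paper's argument is shorter but imports the Jucys--Murphy machinery as a black box. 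Both yield the same $\rho$ (the Jucys--Murphy tail is precisely a minimal-length representative), so the two proofs are really two views of the same object.
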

We will call $\rho$ the \textit{minimal representative} of the coset $S_l \rho$, as no other element in $S_l \rho$ can have a smaller size than $\rho$.

\begin{proof}
We invoke the unique decomposition of each permutation $\pi$ into its Jucys-Murphy elements~\cite{Zinn-Justin2010}:
\begin{gather}
    \pi = s_{a_1 b_1} s_{a_2 b_2} ... s_{a_k b_k}
\end{gather}
where $s_{ab}$ is the transposition of elements $a$ and $b$, $a_i < b_i$ for all $i$, and $b_i < b_j$ for all $i < j$. These conditions ensure that none of the transpositions cancel with each other, so $k = |\pi|$. 

Now we take the Jucys-Murphy decomposition of $\pi$, and separate the transpositions $s_{a_i b_i}$ with $b_i \leq l$ from the ones with $b_i > l$. Then, we see that the former product is the unique Jucys-Murphy decomposition of a permutation in $S_l$. We call that permutation $\lambda$. The latter product is the unique Jucys-Murphy decomposition of another permutation, which we identify as $\rho$. 

Now, for any other $\lambda' \in S_l$, we can obtain $\pi' = \lambda' \rho$ by replacing the terms corresponding to $\lambda$ with the decomposition of $\lambda'$. This new product of transpositions does not violate the conditions of the unique Jucys-Murphy decomposition, so it must in fact be the unique Jucys-Murphy decomposition of $\pi'$. Hence the decomposition of $\pi'$ is just the decompositions of $\lambda'$ and $\rho$ put together, much like the original permutation $\pi$, so $|\pi'| = |\lambda'| + |\rho|$, and this holds for all $\lambda' \in S_l, \pi' = \lambda'\rho$. Therefore, $\rho$ is a minimal representative of $\pi$'s coset, and the Coset Decomposition Lemma follows.
\end{proof}

\subsection{Proof of Theorem \ref{thm:eigenstate_independence}}
\label{app:eigenstate_independence}
\begin{theorem*} 
(Restatment of Theorem \ref{thm:eigenstate_independence})
Let $P$ be any projector (not necessarily orthogonal) onto \(\vecspan (S_t^2 \circ S_k^{\times N})\). Then $T^{(t)}$ has a block-triangular structure in $P$ and $(I-P)$ and the former block contains the same eigenvalues as $T^{(k)}$. More precisely, let \(\text{eig}^*(M)\) denote the set of nonzero eigenvalues of $M$. Then
\begin{gather}
    \label{eq:eigenvalues_are_subset2}
    \text{eig}^*\left(T^{(k)}\right) = \text{eig}^* \left(P T^{(t)} P\right)
\end{gather}
and
\begin{gather}
    \label{eq:block_triangular_eigenvalues2}
    \text{eig}^*\left(T^{(t)}\right) = 
    \text{eig}^* \left((I - P) T^{(t)}(I - P)\right) \cup \text{eig}^*\left(T^{(k)}\right)
\end{gather}

\end{theorem*}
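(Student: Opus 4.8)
The plan is to deduce both identities from a single structural fact established earlier: the subspace $V \equiv \vecspan(S_t^2 \circ S_k^{\times N})$ is invariant under every gate $G_i$ (Corollary~\ref{corollary:gate_t_blindness} together with the corollary immediately following it, which extends invariance to all left/right translates), and hence invariant under the whole staircase operator $T^{(t)}$. The new ingredient relative to the orthogonal-projector version proved in the main text is that $P$ is now an arbitrary, possibly oblique, projector onto $V$; so I must only use properties that survive the loss of Hermiticity, and settle for block-triangular (rather than block-diagonal) structure.

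First I would prove the block-triangular structure and Eq.~\eqref{eq:block_triangular_eigenvalues2}, granting Eq.~\eqref{eq:eigenvalues_are_subset2}. Since $P$ is a projector, $I-P$ is also a projector, with $\image(I-P)=\ker P$ a complement of $V$. Because $T^{(t)}V\subseteq V$ and $P$ restricts to the identity on $V$, we get $T^{(t)}P = PT^{(t)}P$, hence $(I-P)T^{(t)}P = 0$. In the decomposition $\mathcal H = \image P \oplus \ker P$ this exhibits $T^{(t)}$ as block upper-triangular with diagonal blocks $PT^{(t)}P|_{\image P}$ and $(I-P)T^{(t)}(I-P)|_{\ker P}$. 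The spectrum with multiplicity of a block-triangular operator is the union of the spectra of its diagonal blocks; passing to nonzero eigenvalues gives $\text{eig}^*(T^{(t)}) = \text{eig}^*(PT^{(t)}P)\cup\text{eig}^*\big((I-P)T^{(t)}(I-P)\big)$, which becomes Eq.~\eqref{eq:block_triangular_eigenvalues2} once Eq.~\eqref{eq:eigenvalues_are_subset2} is in hand.

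Next I would establish Eq.~\eqref{eq:eigenvalues_are_subset2}. The operator $PT^{(t)}P$ annihilates $\ker P$ and preserves $\image P = V$, so its nonzero eigenvalues are exactly those of the restriction $T^{(t)}|_V$, and it suffices to show $\text{eig}^*(T^{(t)}|_V) = \text{eig}^*(T^{(k)})$. Let $W = \vecspan\{\ket{\vec\sigma}^{(t)} : \vec\sigma \in S_k^{\times N}\}$ be the untranslated copy. By Lemma~\ref{lemma:symmetry}, $T^{(t)}$ commutes with every $\pi_L$ and $\rho_R$, so each translate $\pi_L\rho_R W$ is $T^{(t)}$-invariant and $T^{(t)}$ on it is conjugate to $T^{(t)}|_W$; and by Lemma~\ref{lemma:gate_t_blindness}, the action of each $G^{(t)}$ — hence of $T^{(t)}$ — on the spanning states of $W$ is given by the very same coefficients as the action of $T^{(k)}$ on the corresponding states of $\vecspan(S_k^{\times N})$, so $T^{(t)}|_W$ and $T^{(k)}$ induce the same operator and share a spectrum. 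Since $V = \sum_{\pi,\rho}\pi_L\rho_R W$ is a (generally non-direct) sum of $T^{(t)}$-invariant subspaces on each of which the minimal polynomial of $T^{(t)}$ divides that of $T^{(k)}$, the minimal polynomial of $T^{(t)}|_V$ divides that of $T^{(k)}$, whence $\text{eig}(T^{(t)}|_V)\subseteq\text{eig}(T^{(k)})$; the reverse inclusion is immediate from $W\subseteq V$. Restricting to nonzero eigenvalues yields Eq.~\eqref{eq:eigenvalues_are_subset2}, and substituting it into the display of the previous paragraph completes the proof.

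I expect the main obstacle to be bookkeeping rather than conceptual. One must resist any use of Hermiticity of $P$ — the orthogonal-case argument uses it — and instead rely only on invariance of $V$ plus the fact that block-triangularity already suffices for eigenvalue counting. The second point requiring care is that the translates $\pi_L\rho_R W$ overlap, so $V$ is not their direct sum and a naive "add up multiplicities'' argument fails; the minimal-polynomial reasoning above is exactly what neutralizes the overlap. Finally, in the regime $t>q$ the permutation states are linearly dependent, so "same matrix'' must be read as "same action on a spanning set,'' which still determines the induced linear map on $V$ and hence its spectrum, leaving the conclusion intact.
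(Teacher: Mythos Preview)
Your proposal is correct and follows essentially the same approach as the paper's proof: both deduce block-triangularity from invariance of $V=\vecspan(S_t^2\circ S_k^{\times N})$ under every gate (hence under $T^{(t)}$), and both identify $\text{eig}^*(PT^{(t)}P)$ with $\text{eig}^*(T^{(k)})$ by noting that each translate $\pi_L\rho_R W$ is $T^{(t)}$-invariant with restriction isomorphic to $T^{(k)}$, then assembling these copies into $V$. The one substantive difference is in the assembly step: the paper argues that $V$ is spanned by the union of the eigenvectors coming from the individual copies, which tacitly leans on diagonalizability of $T^{(k)}$; your minimal-polynomial argument (the minimal polynomial of $T^{(t)}|_V$ divides that of $T^{(k)}$ because $V$ is a sum of invariant subspaces each annihilated by it) sidesteps that assumption and handles the overlapping, non-direct sum cleanly. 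This is a mild improvement in rigor rather than a different route.
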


\begin{proof}
We will first show that \(P T^{(t)} P\) is essentially several copies of $T^{(k)}$. Afterwards, we will prove that $T^{(t)}$ is block-triangular. 

By Corollary \ref{corollary:gate_t_blindness}, \(\vecspan  S_{k}^{\times N}\) is an invariant subspace of $T^{(t)}$, so $T^{(t)}$ may be diagonalized independently within it. By Lemma \ref{lemma:gate_t_blindness}, the restriction of $T^{(t)}$ to this subspace is isomorphic to $T^{(k)}$, so its eigenvalues are exactly those of $T^{(k)}$. Furthermore, by symmetry the same applies to the restriction of $T^{(t)}$ to \(\sigma_L \tau_R \vecspan  S_{k}^{\times N}\) for any \(\sigma, \tau \in S_t\). Each of these subspaces thus contains a complete copy of the eigenvalues of $T^{(k)}$, together with a complete set of corresponding eigenvectors.

Now consider the combined span of all these subspaces. Since each subspace contains a complete set of eigenvectors, the combined span of the subspaces must be spanned by the union of these eigenvectors. In other words, \(\vecspan  S_t^2 \circ S_{k}^{\times N}\) is spanned by some set of eigenvectors whose eigenvalues are copies of the eigenvalues of $T^{(k)}$. This proves Equation \ref{eq:eigenvalues_are_subset2}.

We now show that $T$ has a block-triangular structure. Since \(\vecspan \left(S_t^2 \circ S_k^{\times N}\right)\) is an invariant subspace of each gate $G_i$, we see $G_iP = PG_iP$. It follows that $(I-P)G_iP = 0$, i.e. each gate is block-triangular. Since each gate is block-triangular, a product of the gates is also block-triangular. The eigenvalues of a block-triangular matrix are exactly those of the diagonal blocks. In other words, for any projector $P$ and matrix $M$ such that $(I-P)MP = 0$, we have
\begin{gather}
    \text{eig}^*\left(M\right) = 
    \text{eig}^* \left((I - P) M(I - P)\right) \cup \text{eig}^*\left(P M P\right)
\end{gather}
Combining this with Equation \ref{eq:eigenvalues_are_subset2} proves Equation \ref{eq:block_triangular_eigenvalues2}. 
\end{proof}

\subsection{Proof of Lemma \ref{lemma:elementwise_hbound}}
\label{app:elementwise_hbound}
\begin{lemma*}
(Restatement of Lemma \ref{lemma:elementwise_hbound}) 
Let $Q_1, Q_2 \geq q \geq t$. Let $f_t, g_{\sigma^{-1}\tau}$ be polynomials such that
\begin{gather}
    \Wg\left(\sigma^{-1}\tau, d\right) = \frac{g_{\sigma^{-1} \tau}(d^{-1})}{f_t(d^{-1})}
\end{gather}
Define the polynomial 
\begin{gather}
    h_{\sigma \tau}(x,y) = \sum_{\rho \in S_t} x^{|\rho \sigma^{-1}|} y^{|\rho|} g_{\rho^{-1} \tau}(xy)
\end{gather}
and let $h_{\sigma \tau}^{(i,j)}$ be the corresponding coefficients. Define the matrix
\begin{gather}
    \overline{H}(t)_{\sigma \tau} = \frac{1}{f_t(t^{-2})}\sum_{i,j} \left|h_{\sigma \tau}^{(ij)}\right| t^{-(i+j)}
\end{gather}
Then we may bound the matrix elements of $H$ by
\begin{gather}
        |H_{\sigma \tau}(Q_1, Q_2)| \leq \left(\frac{t}{q}\right)^{\left\lceil \frac{t}{2}\right \rceil} 
        \overline{H}(t)_{\sigma \tau} 
\end{gather}

\end{lemma*}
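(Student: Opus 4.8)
The plan is to reduce the lemma to an elementary degree-counting statement about the polynomials $h_{\sigma\tau}$ and then estimate the resulting expression term by term. Throughout I take $\sigma$ (equivalently $\tau$) to be a complete derangement, which is the only case used in the application, since $H$ enters only through its restriction to the deranged block $P_{\mathfrak{D}}$. \textbf{Step 1 (an exact rational identity).} First I would expand the matrix product defining $H(Q_1,Q_2)=W(Q_1 Q_2)D(Q_2)C(Q_1)$ entrywise, which produces a sum over a single permutation index of a product of one Weingarten factor and two powers of $Q_1,Q_2$. Using that $\Wg$ is a class function and that the transposition length $|\cdot|$ is invariant under inversion and conjugation, one may relabel the summation index and rewrite the argument of $\Wg$ up to conjugacy; substituting $\Wg(\pi,d)=g_\pi(d^{-1})/f_t(d^{-1})$ then yields the exact identity
\[
 f_t\!\left((Q_1 Q_2)^{-1}\right) H_{\sigma\tau}(Q_1,Q_2) \;=\; h_{\sigma\tau}\!\left(Q_1^{-1},Q_2^{-1}\right) \;=\; \sum_{i,j} h_{\sigma\tau}^{(ij)}\, Q_1^{-i}Q_2^{-j}.
\]
(The bookkeeping of the conjugating permutations, and in particular the row/column convention for $H$, is precisely what dictates the form of the definition of $h_{\sigma\tau}$.) From here the lemma is a statement purely about the polynomial $h_{\sigma\tau}$.

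\textbf{Step 2 (every monomial of $h_{\sigma\tau}$ has total degree $\geq\lceil t/2\rceil$).} This is the heart of the argument. Consider the $\rho$-summand $x^{|\rho\sigma^{-1}|}y^{|\rho|}g_{\rho^{-1}\tau}(xy)$ of $h_{\sigma\tau}$. By the standard asymptotic $\Wg(\pi,d)=O(d^{-t-|\pi|})$ together with $f_t(0)\neq 0$, the polynomial $g_\pi(z)$ is divisible by $z^{|\pi|}$; hence every monomial occurring in this summand has total degree at least $|\rho\sigma^{-1}|+|\rho|+2\,|\rho^{-1}\tau|$. Now $|\cdot|$ is a bi-invariant metric on $S_t$ via $d(a,b)=|a^{-1}b|$, so the triangle inequality gives $|\rho\sigma^{-1}|+|\rho|\geq |\sigma|$ (and likewise $|\rho|+|\rho^{-1}\tau|\geq|\tau|$), while any derangement of $t$ points has all cycles of length $\geq 2$, hence at most $\lfloor t/2\rfloor$ cycles, hence transposition length $t-\lfloor t/2\rfloor=\lceil t/2\rceil$. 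Therefore each $\rho$-summand has all of its monomials of total degree at least $\lceil t/2\rceil$, and since summing over $\rho$ can only cancel monomials, the same bound holds for $h_{\sigma\tau}$ itself: $h_{\sigma\tau}^{(ij)}=0$ whenever $i+j<\lceil t/2\rceil$.

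\textbf{Step 3 (assembling the estimate).} Combining Steps 1 and 2 and using $Q_1,Q_2\geq q$,
\[
 \left|H_{\sigma\tau}(Q_1,Q_2)\right| \;\leq\; \frac{\sum_{i,j}\left|h_{\sigma\tau}^{(ij)}\right| Q_1^{-i}Q_2^{-j}}{f_t\!\left((Q_1 Q_2)^{-1}\right)} \;\leq\; \frac{\sum_{i,j}\left|h_{\sigma\tau}^{(ij)}\right| q^{-(i+j)}}{f_t\!\left((Q_1 Q_2)^{-1}\right)}.
\]
Since $q\geq t$ and every surviving monomial has $i+j\geq\lceil t/2\rceil$, we have $q^{-(i+j)}=t^{-(i+j)}(t/q)^{i+j}\leq (t/q)^{\lceil t/2\rceil}\,t^{-(i+j)}$, so the numerator is at most $(t/q)^{\lceil t/2\rceil}\sum_{i,j}|h_{\sigma\tau}^{(ij)}|\,t^{-(i+j)}$. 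For the denominator, the explicit product formula for the Weingarten denominator coming from the eigenvalue decomposition of $W$ in Lemma~\ref{lemma:weingarten_bound}, namely $f_t(d^{-1})=\prod_{k=1}^{t-1}(1-k^2 d^{-2})$, is positive and strictly decreasing in its argument on the relevant range, and $Q_1 Q_2\geq q^2\geq t^2$ gives $(Q_1 Q_2)^{-1}\leq t^{-2}$, so $f_t\!\left((Q_1 Q_2)^{-1}\right)\geq f_t(t^{-2})>0$. Putting these together yields exactly $|H_{\sigma\tau}(Q_1,Q_2)|\leq (t/q)^{\lceil t/2\rceil}\,\overline{H}(t)_{\sigma\tau}$.

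\textbf{Main obstacle.} I expect the substantive difficulty to lie in Step 2: noticing that the derangement hypothesis is essential (the bound is false for $\sigma=\tau=I$, where $h_{\sigma\tau}$ has a nonzero constant term while $(t/q)^{\lceil t/2\rceil}$ can be made arbitrarily small) and packaging the estimate through the bi-invariant word metric so that a derangement hypothesis on a single index feeds cleanly into a degree bound on all monomials. A more routine but still delicate point is Step 1: carrying out the relabeling in the matrix product so that the Weingarten argument becomes precisely $\rho^{-1}\tau$ and the two dimension exponents become precisely $|\rho\sigma^{-1}|$ and $|\rho|$, which requires repeated use of inversion- and conjugation-invariance of $|\cdot|$ together with the class-function property of $\Wg$.
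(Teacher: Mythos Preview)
Your proposal is correct and follows essentially the same route as the paper's proof: express $H_{\sigma\tau}$ exactly as $h_{\sigma\tau}(Q_1^{-1},Q_2^{-1})/f_t((Q_1Q_2)^{-1})$, use the triangle inequality $|\rho\sigma^{-1}|+|\rho|\geq|\sigma|\geq\lceil t/2\rceil$ (with $\sigma$ a derangement) to see that all monomials of $h_{\sigma\tau}$ have total degree at least $\lceil t/2\rceil$, then bound term-by-term using $Q_1,Q_2\geq q\geq t$ together with the monotonicity of $f_t$. The only differences are cosmetic: you additionally note that $g_\pi(z)$ is divisible by $z^{|\pi|}$ (true but unnecessary here, since the paper only needs that $g_{\rho^{-1}\tau}$ is a polynomial so the prefactor $x^{|\rho\sigma^{-1}|}y^{|\rho|}$ already supplies the required degree), and you justify the monotonicity of $f_t$ via an explicit product formula whereas the paper simply asserts it.
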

\begin{proof}
We have
\begin{align}
    H_{\sigma \tau}(Q_1, Q_2) &= \left[W(Q_1 Q_2) D(Q_2) C(Q_1)\right]_{\sigma \tau} \\
    &= \sum_{\rho \in S_t} Q_1^{-|\rho \sigma^{-1}|} Q_2^{-|\rho|} \Wg(\rho^{-1} \tau, Q_1 Q_2) \\
    &= \frac{1}{f_t((Q_1 Q_2)^{-1})} \sum_{\rho \in S_t} Q_1^{-|\rho \sigma^{-1}|} Q_2^{-|\rho|} g_{\rho^{-1} \tau}((Q_1 Q_2)^{-1}) \\
    &= \frac{1}{f_t\big((Q_1 Q_2)^{-1}\big)} \sum_{\rho \in S_t} x^{|\rho \sigma^{-1}|} y^{|\rho|} g_{\rho^{-1} \tau}(xy)\big|_{x=Q_1^{-1}, y=Q_2^{-1}}
    \label{eq:Rsum}
\end{align}
The whole sum in the last line is a single polynomial over $x$ and $y$, with $t^{-1} \geq x, y > 0$. We may thus define coefficients $h_{\sigma \tau}^{(ij)}$ such that 
\begin{gather}
    \sum_{\rho \in S_t} x^{|\rho \sigma^{-1}|} y^{|\rho|} g_{\rho^{-1} \tau}(xy) = \sum_{ij} h_{\sigma \tau}^{(ij)} x^{i} y^{j} \label{eq:h_polynomial}
\end{gather}
This polynomial is upper-bounded by the polynomial obtained by taking absolute values of its coefficients, so we see
\begin{align}
    H_{\sigma \tau}(Q_1, Q_2) &= \frac{1}{f_t\big((Q_1 Q_2)^{-1}\big)} \sum_{ij} h_{\sigma \tau}^{(ij)} x^{i} y^{j}\big|_{x=Q_1^{-1}, y=Q_2^{-1}}
    \\ &\leq \frac{1}{f_t\big((Q_1 Q_2)^{-1}\big)}\sum_{ij} \left|h_{\sigma \tau}^{(ij)}\right| x^{i} y^{j}\big|_{x=Q_1^{-1}, y=Q_2^{-1}}
\end{align}
Furthermore, $x,y \leq q^{-1}$ and each term of the final polynomial is an increasing function of $x$ and $y$, so we have
\begin{align}
    H_{\sigma \tau}(Q_1, Q_2) &\leq \frac{1}{f_t\big((Q_1 Q_2)^{-1}\big)}\sum_{ij} \left|h_{\sigma \tau}^{(ij)}\right| q^{-(i+j)}
\end{align}
We will now show $h_{ij} = 0$ unless $i + j \geq \lceil \frac{t}{2}\rceil$. Notice that in Equation \ref{eq:Rsum} we are interested in the case in which $\sigma$ is a derangement, so $|\sigma| \geq \lceil \frac{t}{2}\rceil$. But $|\rho \sigma^{-1}| + |\rho| \geq |\sigma|$, so every term in the polynomial of Equation \ref{eq:h_polynomial} has total degree at least $\lceil \frac{t}{2}\rceil$. 

From this it follows that 
$\left|h_{\sigma \tau}^{(ij)}\right| q^{-(i+j) -\lceil \frac{t}{2}\rceil }$ is always a decreasing function of $q$. Using the fact that $q \leq t$, we find
\begin{align}
    \left|h_{\sigma \tau}^{(ij)}\right| q^{-(i+j) } &= \left(\frac{t}{q}\right)^{\lceil \frac{t}{2}\rceil}\left|h_{\sigma \tau}^{(ij)}\right| q^{-(i+j) + \lceil\frac{t}{2}\rceil}\, t^{-\lceil \frac{t}{2}\rceil}\\
    &\leq \left(\frac{t}{q}\right)^{\lceil \frac{t}{2}\rceil}\left|h_{\sigma \tau}^{(ij)}\right| t^{-(i+j) + \lceil\frac{t}{2}\rceil -\lceil \frac{t}{2}\rceil}
    \\&\leq \left(\frac{t}{q}\right)^{\lceil \frac{t}{2}\rceil}\left|h_{\sigma \tau}^{(ij)}\right| t^{-(i+j)}
\end{align}
Substituting, we see
\begin{gather}
    H_{\sigma \tau}(Q_1, Q_2) \leq \left(\frac{t}{q}\right)^{\lceil \frac{t}{2}\rceil}\overline{H}(t)_{\sigma \tau}
\end{gather}
where
\begin{gather}
    \overline{H}(t)_{\sigma \tau} = \frac{1}{f_t((Q_1 Q_2)^{-1})}\sum_{i,j} \left|h_{\sigma \tau}^{(ij)}\right| t^{-(i+j)}
\end{gather}
Finally, $f(z)$ is monotonically decreasing in $z$ for all positive $z$. Therefore, for all $Q_1, Q_2 \geq t$, $f_t((Q_1 Q_2)^{-1}) \geq f_t(t^{-2})$.
\end{proof}

\subsection{Proof of Theorem \ref{thm:coderanged_characterization}}
\label{app:coderanged_characterization}
We are interested in understanding the orthogonal complement of non-deranged subspace in the case $q < t$. 
We will work in terms of the embedding $E: \mathbb{C}^{t!^N} \rightarrow \mathcal{H}_q^{2t}$ given by 
\begin{gather}
    E(\mathbf{v}) = \sum_{\vec{\sigma} \in S_t^N} v_{\vec{\sigma}} \ket{\widetilde{\sigma_1,  \dots \sigma_N}} \equiv \ket{\mathbf{v}}
\end{gather}
We will begin with some helpful properties of the operator
\begin{gather}
    X_{\sigma \tau}(q,t) = \braket{\sigma|\widetilde{\tau}}
\end{gather}
\begin{lemma}
\label{lemma:X_properties}
    The matrix $X(q,t)$ has the following properties:
    \begin{enumerate}
        \item $X$ is a projection. 
        \item $\ket{X\mathbf{v}} = \ket{\mathbf{v}}$ 
        \item $\ket{\mathbf{v}} = \ket{0}$ if and only if $X\mathbf{v} = \mathbf{0}$
        \item $$X = \sum_{\nu \vdash t: |\nu| \leq q} P_\nu$$
        where $P_\nu$ is the canonical idempotent with elements $$P_{\nu,\sigma \tau} = \frac{\chi_\nu(1)}{t!} \sum_\rho \chi_\nu(\rho^{-1}) \delta_{\sigma, \rho \circ \tau}$$
    \end{enumerate}
\end{lemma}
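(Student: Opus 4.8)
\textbf{Proof proposal for Lemma \ref{lemma:X_properties}.}

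The plan is to reduce all four statements to two inputs about the permutation Gram matrix and the pseudo‑normalized Weingarten matrix, after which each property is essentially a one‑line consequence. Throughout let $C$ be the matrix with entries $C_{\sigma\tau} = \iprod{\sigma}{\tau} = q^{-|\sigma^{-1}\tau|}$ and $W$ the matrix with entries $W_{\sigma\tau} = \Wg(\sigma^{-1}\tau, q)$; both are real and symmetric, since $|\cdot|$ and $\Wg(\cdot,q)$ are class functions on $S_t$. Let $\Sigma$ be the linear map sending the coordinate vector $e_\tau$ to $\ket{\tau}$, so that $\Sigma^\dagger \Sigma = C$ and $\ket{\widetilde{\sigma}} = \sum_\tau W_{\sigma\tau}\ket{\tau} = \Sigma W e_\sigma$. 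Consequently the embedding factors as $E(\mathbf{v}) = \Sigma W \mathbf{v}$ (identifying $\mathbf{v}$ with $\sum_\sigma v_\sigma e_\sigma$), and $X_{\sigma\tau} = \iprod{\sigma}{\widetilde{\tau}} = \sum_\rho W_{\tau\rho}\iprod{\sigma}{\rho} = (CW)_{\sigma\tau}$, i.e. $X = CW$.

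Next I would record the spectral input. The canonical idempotents $\{P_\nu\}_{\nu\vdash t}$ with entries as in the statement are central in the regular representation (they are class‑function elements of $\mathbb{C}[S_t]$), mutually orthogonal, and sum to the identity; since $C$ and $W$ are also class‑function matrices, all of $C$, $W$, $\{P_\nu\}$ commute, and $C$, $W$ act as scalars on each $\image P_\nu$ by Schur's lemma. The scalar of $C$ on $\image P_\nu$ is $c_\nu(q) = q^{-t}\prod_i \prod_{j=1}^{\nu_i}(q+j-i)$, the classical eigenvalues of the permutation Gram matrix, and a factor $q+j-i$ vanishes exactly when $i-j=q$ with $j\le\nu_i$, which forces $\nu$ to have more than $q$ rows; so $c_\nu(q)=0 \iff |\nu|>q$ (using $|\nu|$ for the number of rows, as in the statement). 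The matching decomposition $W = \sum_{\nu:\,c_\nu(q)\ne 0} c_\nu(q)^{-1}P_\nu$ is the one already quoted in the body from refs.~\onlinecite{Zinn-Justin2010,Kostenberger2021}; the normalization is consistent because, with the extra $q^t$, $W$ is precisely the Moore--Penrose pseudoinverse of the \emph{normalized} Gram matrix $C$ — equivalently $\sum_{\sigma\tau} W_{\sigma\tau}\oprod{\sigma}{\tau}$ is the Haar average of $U^{\otimes t}\otimes(U^*)^{\otimes t}$ over $U\in U(q)$, the orthogonal projector onto $\vecspan\{\ket{\sigma}\}$.

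Given this dictionary, the properties drop out. Property 4 is immediate: $X = CW = \sum_{\nu:\,c_\nu(q)\ne 0} c_\nu(q) c_\nu(q)^{-1}P_\nu = \sum_{\nu:\,|\nu|\le q} P_\nu$; being a sum of mutually orthogonal idempotents it is itself an orthogonal projection, which is Property 1. The same computation gives $WX = \sum_{\nu:\,c_\nu(q)\ne 0} c_\nu(q)^{-1}P_\nu = W$, so for Property 2, $E(X\mathbf{v}) = \Sigma W X\mathbf{v} = \Sigma W\mathbf{v} = E(\mathbf{v})$. For Property 3, $E(\mathbf{v})=0 \iff \Sigma W\mathbf{v}=0 \iff W\mathbf{v}\in\ker\Sigma = \ker C$ (since $\Sigma^\dagger\Sigma = C$ is positive semidefinite) $\iff CW\mathbf{v}=0 \iff X\mathbf{v}=0$. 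The multi‑site versions follow by tensoring: $X(\vec{q},t) = \bigotimes_i X(q_i,t)$ and the multi‑site $E$ factors as $\bigotimes_i\bigl(\Sigma(q_i)W(q_i)\bigr)$, so each identity above holds tensor‑factor by tensor‑factor and hence globally.

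The main obstacle is pinning down the spectral input precisely: one must verify that $C$ and $W$ share the partition‑indexed eigenbasis with reciprocal eigenvalues on their common support and with identical kernels, and keep careful track of the $q^t$ pseudo‑normalization so that $W$ is the pseudoinverse of $C$ rather than of an unnormalized Gram matrix. Once that is in place, Properties 1--4 are bookkeeping.
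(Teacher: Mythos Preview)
Your proposal is correct and follows essentially the same route as the paper: both arguments compute $X=CW$ and then read off Property~4 (and hence Property~1) from the shared spectral decomposition $C=\sum_\nu c_\nu(q)P_\nu$, $W=\sum_{\nu:c_\nu\neq 0} c_\nu(q)^{-1}P_\nu$ of the Gram and Weingarten matrices; Properties~2 and~3 then reduce to the identity $\braket{\sigma|\mathbf v}=(X\mathbf v)_\sigma$, which the paper uses directly and you encode as $E=\Sigma W$ together with $WX=W$ and $\ker\Sigma=\ker C$. The only differences are cosmetic packaging.
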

\begin{proof}
    We first prove claim (4). 
    \begin{align}
        X_{\sigma \tau} &= \braket{\sigma|\widetilde{\tau}} = \sum_\rho q^{-|\sigma \rho^{-1}|}\Wg(\tau \rho^{-1},q)
        \\ &= \sum_{\rho \in S_t,\lambda \vdash t : |\lambda| \leq q,\nu \vdash t: |\nu| \leq q} c_\lambda(q) P_{\lambda, \sigma \rho} c^{-1}_\nu(q) P_{\nu,\rho \tau}
        \\ &= \sum_{\lambda \vdash t : |\lambda| \leq q,\nu \vdash t: |\nu| \leq q} c_\lambda(q)c^{-1}_\nu(q) \delta_{\lambda \nu}P_{\nu, \sigma \tau}
        \\ &= \sum_{\nu \vdash t: |\nu| \leq q}P_{\nu, \sigma \tau}
    \end{align}
    Claim (1) follows immediately from this and the fact that the $P_\nu$ give an orthogonal decomposition of the space into irreducible representations. Furthermore we see that $X$ is orthogonal with respect to the basis metric $\delta_{\sigma \tau}$.

    For claim (2), we compute
    \begin{align}
        \braket{\sigma|\mathbf{v}} &= 
        \sum_{\tau} \braket{\sigma|\widetilde{\tau}} v_\tau 
        = \sum_\tau X_{\sigma \tau} v_\tau 
        = (X\mathbf{v})_\sigma 
        \\&= (X^2\mathbf{v})_\sigma
        \\&= \braket{\sigma|X\mathbf{v}}
    \end{align}
    where in the second line we have used claim (1) to see that $X = X^2$. Since the states $\ket{\sigma}$ span the space, this implies claim (2). 
    
    For claim (3), one direction follows from claim (2). To show the other, we will prove that $X\mathbf{v} \ne \mathbf{0} \implies \ket{\mathbf{v}} \ne \ket{0}$. If $X\mathbf{v} \ne 0$ then there exists at least one $\sigma$ such that $\left(X \mathbf{v}\right)_\sigma \ne 0$. Then $\braket{\sigma|\mathbf{v}} = \left(X \mathbf{v}\right)_\sigma \ne 0$, which implies that $\ket{\mathbf{v}}$ must be nonzero.
\end{proof}
\begin{theorem*}
    (Restatement of Theorem \ref{thm:coderanged_characterization})
    Define the intersection space to be the intersection of $\image(X(\vec{q},t))$ with the preimage of the coderanged subspace, 
    \begin{gather}
        \mathcal{I} = \left\{\mathbf{v} \in  \mathbb{C}^{t!^N}: \left(v_{\vec{\sigma}} = 0 \forall \vec{\sigma} \in S_t^2 \circ S_{t-1}^{\times N}\right) \wedge \left(\mathbf{v} \in \image(X(\vec{q},t))\right) \right\}
    \end{gather}
    The restriction of the embedding $E$ to $\mathcal{I}$
    is a vector-space isomorphism between $\mathcal{I}$ and the orthogonal complement of the non-deranged subspace.
\end{theorem*}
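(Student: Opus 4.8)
The plan is to route everything through the single identity
\[
\braket{\tau_1,\dots,\tau_N | E(\mathbf v)} \;=\; \big(X(\vec q,t)\,\mathbf v\big)_{\vec\tau}, \qquad \vec\tau\in S_t^N ,
\]
which follows immediately from $\braket{\tau_i|\widetilde{\sigma_i}}=[X(q_i,t)]_{\tau_i\sigma_i}$, multilinearity, and the site-factorization $\ket{\widetilde{\sigma_1,\dots,\sigma_N}}=\bigotimes_i\ket{\widetilde{\sigma_i}}$. Because the permutation states $\ket{\vec\tau}$ span the permutation module $\mathcal{M}:=\vecspan\{\ket{\vec\sigma}:\vec\sigma\in S_t^N\}$ and $E(\mathbf v)\in\mathcal{M}$, the identity says $E(\mathbf v)$ is completely determined by $X(\vec q,t)\mathbf v$: in particular $\ker E=\ker X(\vec q,t)$, and $\ket{X(\vec q,t)\mathbf v}=\ket{\mathbf v}$ for every $\mathbf v$ (these are the multi-site forms of Lemma~\ref{lemma:X_properties}(2),(3)). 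Since $X(\vec q,t)=\bigotimes_i X(q_i,t)$ is an orthogonal projection (Lemma~\ref{lemma:X_properties}(1)), $\mathbb{C}^{t!^N}=\image X(\vec q,t)\oplus\ker X(\vec q,t)$, so $E$ is injective on $\image X(\vec q,t)$; as $\mathcal{I}\subseteq\image X(\vec q,t)$, the restriction $E|_{\mathcal{I}}$ is injective.

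Next I would show $E$ is onto $\mathcal{M}$. By injectivity of $E$ on $\image X(\vec q,t)$, $\dim\image E=\operatorname{rank}X(\vec q,t)=\prod_i\operatorname{rank}X(q_i,t)$. By Lemma~\ref{lemma:X_properties}(4), $X(q_i,t)$ is the orthogonal projection onto the direct sum of those isotypic components of $\mathbb{C}[S_t]$ indexed by partitions with at most $q_i$ parts, which is exactly the image of the single-site Gram matrix $G(q_i)$; hence $\operatorname{rank}X(q_i,t)=\operatorname{rank}G(q_i)=\dim\mathcal{M}_1(q_i)$, the dimension of the single-site permutation module. Tensoring over the $N$ sites gives $\dim\image E=\dim\mathcal{M}$, and combined with $\image E\subseteq\mathcal{M}$ this yields $\image E=\mathcal{M}$. (Alternatively one checks directly that each $\ket{\vec\sigma}$ lies in $\vecspan\{\ket{\widetilde{\vec\tau}}\}$, using $\image G+\ker G=\mathbb{C}^{t!}$ for the symmetric Gram matrix $G$.)

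It then remains to identify $E(\mathcal{I})$ with the orthogonal complement $\mathcal{N}^{\perp}$ of the non-deranged subspace $\mathcal{N}=\vecspan(S_t^2\circ S_{t-1}^{\times N})$ taken inside $\mathcal{M}$. For $E(\mathcal{I})\subseteq\mathcal{N}^{\perp}$: if $\mathbf v\in\mathcal{I}$ then $X(\vec q,t)\mathbf v=\mathbf v$, so for any $\vec\sigma\in S_t^2\circ S_{t-1}^{\times N}$ the identity gives $\braket{\vec\sigma|E(\mathbf v)}=\big(X(\vec q,t)\mathbf v\big)_{\vec\sigma}=v_{\vec\sigma}=0$; since such $\ket{\vec\sigma}$ span $\mathcal{N}$, $E(\mathbf v)\perp\mathcal{N}$. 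For surjectivity onto $\mathcal{N}^{\perp}$: given $w\in\mathcal{N}^{\perp}\subseteq\mathcal{M}=\image E$, pick $\mathbf v$ with $E(\mathbf v)=w$ and replace it by $X(\vec q,t)\mathbf v$ (this leaves $E(\mathbf v)=w$ unchanged because $\ket{X(\vec q,t)\mathbf v}=\ket{\mathbf v}$); then $\mathbf v\in\image X(\vec q,t)$ and, for $\vec\sigma\in S_t^2\circ S_{t-1}^{\times N}$, $v_{\vec\sigma}=\big(X(\vec q,t)\mathbf v\big)_{\vec\sigma}=\braket{\vec\sigma|w}=0$ since $w\perp\mathcal{N}$, so $\mathbf v\in\mathcal{I}$ and $w\in E(\mathcal{I})$. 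With injectivity this proves $E|_{\mathcal{I}}$ is an isomorphism onto $\mathcal{N}^{\perp}$; as a byproduct $\mathcal{N}^{\perp}=E(\mathcal{I})\subseteq\vecspan(\widetilde{\mathfrak{D}}_N)$, giving the Corollary stated just before the theorem.

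The one genuinely delicate point is the surjectivity step $\image E=\mathcal{M}$: when $q<t$ the Gram matrix is singular, the cobasis $\{\ket{\widetilde\sigma}\}$ fails to be linearly independent, and no naive inversion of the Weingarten relation is available, so one must use the precise structure $X(q,t)=G(q)G(q)^{+}=\sum_{\nu}P_\nu$ (sum over partitions of $t$ with at most $q$ parts) from Lemma~\ref{lemma:X_properties}. Everything else — the identity, the injectivity, and the two orthogonality arguments — is routine bookkeeping once that identity and the projection property of $X(\vec q,t)$ are in hand.
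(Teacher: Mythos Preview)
Your proof is correct and follows essentially the same route as the paper's: both arguments pivot on the identity $\braket{\vec\sigma|\mathbf v}=(X\mathbf v)_{\vec\sigma}$, use $X\mathbf v=\mathbf v$ on $\image X$ to get $E(\mathcal I)\subseteq\mathcal N^\perp$, replace $\mathbf v$ by $X\mathbf v$ to get the reverse inclusion, and read off injectivity from $\ker E=\ker X$. The only substantive difference is that you explicitly prove $\image E=\mathcal M$ via the rank equality $\operatorname{rank}X(q,t)=\operatorname{rank}G(q)$, whereas the paper starts its surjectivity step with ``Consider $\mathbf v\in\mathbb C^{t!^N}$ such that $\braket{\vec\sigma|\mathbf v}=0$'' and thereby tacitly assumes every element of $\mathcal N^\perp$ already has a cobasis representative; your version closes that small gap but is otherwise the same argument.
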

\begin{proof}
Let $E\left(\mathcal{I}\right)$ be the image of $\mathcal{I}$ under $E$. 
We first show that every element of $E\left(\mathcal{I}\right)$ is orthogonal to the non-deranged subspace. From Lemma \ref{lemma:X_properties}.1 we see $\mathbf{v} \in \image(X)$ implies $X\mathbf{v} = \mathbf{v}$, so we can compute
\begin{gather}
    \braket{\vec{\sigma}|\mathbf{v}} = \sum_{\vec{\tau}} v_{\vec{\tau}}\braket{\vec{\sigma}|\widetilde{\vec{\tau}}}
    = 
    \sum_{\vec{\tau}} X_{\vec{\sigma}\vec{\tau}} v_{\vec{\tau}} 
    = v_{\vec{\sigma}} 
    = 0
\end{gather}
This establishes that $E\left(\mathcal{I}\right)$ is contained in the orthogonal complement of the non-deranged subspace. 

We now show that any vector orthogonal to the non-deranged subspace is represented by an element of $\mathcal{I}$. Consider $\mathbf{v} \in \mathbb{C}^{t!^N}$ such that $\braket{\vec{\sigma}|\mathbf{v}} = 0$ for every $\vec{\sigma} \in S_t^2 \circ S_{t-1}^{\times N}$. Define $\mathbf{u} = X\mathbf{v}$. From Lemma \ref{lemma:X_properties}.2 we have $\ket{\mathbf{u}} = \ket{\mathbf{v}}$, so we see $\braket{\vec{\sigma}|\mathbf{u}} = 0$ for every $\vec{\sigma} \in S_t^2 \circ S_{t-1}^{\times N}$. But from Lemma \ref{lemma:X_properties}.1 we have $X\mathbf{u} = \mathbf{u}$, so $\braket{\vec{\sigma}|\mathbf{u}} = u_{\vec{\sigma}}$. We thus see that $u_{\vec{\sigma}} = 0$ for all $\vec{\sigma} \in S_t^2 \circ S_{t-1}^{\times N}$, and so $\mathbf{u} \in \mathcal{I}$. This implies that the orthogonal complement of the non-deranged subspace is contained in $E\left(\mathcal{I}\right)$. 

We have shown that $E\left(\mathcal{I}\right)$ is exactly the orthogonal complement of the non-deranged subspace. We now show that the embedding is an isomorphism. It suffices to show that its kernel is trivial. From Lemma \ref{lemma:X_properties}.3 we see that $\ket{\mathbf{v}} = \ket{0} \implies X\mathbf{v} = \mathbf{0}$. It follows that if $\ket{\mathbf{v}} = \ket{0}$ and $\mathbf{v} \ne \mathbf{0}$, then $\mathbf{v} \notin \mathcal{I}$. The kernel of the embedding is thus trivial when we restrict to $\mathcal{I}$, so it is an isomorphism.
\end{proof}

\section{Numerical results}
\subsection{Values of $||K_m||_{\mathfrak{D}}$ for $t \leq 6, t \leq q$}
Let's return to the half product in the complete derangement block:
\begin{gather}
    H_{\sigma \tau} = \left[W(Q_1 Q_2) D(Q_2) C(Q_1)\right]_{\sigma \tau} = \sum_{\rho \in S_t} Q_1^{-|\rho \sigma^{-1}|} Q_2^{-|\rho|} \Wg(\rho^{-1} \tau, Q_1 Q_2) 
\end{gather}
From Lemma~\ref{lemma:elementwise_hbound}, there is an elementwise bound of
\begin{gather}
    H_{\sigma \tau} \leq \frac{1}{f_t(t^{-2})}\left(\frac{t}{q}\right)^{\lceil \frac{t}{2}\rceil}\sum_{i,j} \left|h_{\sigma \tau}^{(ij)}\right| t^{-(i+j)}
\end{gather}
Where $h_{\sigma \tau}$ was a polynomial dependent on the Weingarten numerator $g_{\sigma^{-1} \tau}(z)$, while $f_t(z)$ was the Weingarten denominator. We can choose a specific set of numerators $g_{\sigma^{-1} \tau}(z)$ and denominators $f_t(z)$, such that\footnote{For general $t$, we can define $f_t(z)$ as the inverse of the least common multiple of $z^t c_\lambda(z^{-1})$ for all partitions $\lambda$ of $t$, where $c_\lambda(q)$ is the eigenvalue of the cycle matrix $C(q)$ corresponding to partition $\lambda$.}
\begin{gather}
    f_t(z) = \prod_{i=1}^{t-1} \left(1-i^2 z^2\right)\begin{cases}  1 & t \leq 5\\ (1-z^2) & t = 6\end{cases}.
\end{gather}
Under this choice of $f_t(z)$, the Weingarten numerators $g_c(z)$ (where $c$ is the conjugacy class of $\sigma^{-1}\tau$, as that is the only thing that $g$ depends on) is given by Table~\ref{tab:gc_z} for each choice of conjugacy class $c$, while the values of $f_t(t^{-2})$ are in Table~\ref{tab:ft_t2}.

\begin{table}
    \begin{tabular}{|c||p{2.5cm}|p{2.2cm}|p{2.2cm}|p{2.2cm}|p{2.2cm}|p{2.2cm}|p{2.2cm}|}
        \hline
        $c,t=3$ & 111 & 21 & 3 & & & &\\
        \hline
        $g_c(z)$ & $1-2z^2$ & $-z$ & $2z^2$ & & & &\\ 
        \hline \hline
        $c,t=4$ & 1111 & 211 & 31 & 22 & 4 & & \\
        \hline
        $g_c(z)$ & $1-8z^2+6z^4$ & $-z+4z^3$ & $2z^2-3z^4$ & $z^2+6z^4$ & $-5z^3$ & & \\ 
        \hline \hline
        $c,t=5$ & 11111 & 2111 & 311 & 221 & 41 & 32 & 5\\
        \hline
        $g_c(z)$ & $1 - 20z^2 + 78z^4$ & $-z + 14z^3 - 42z^5$ & $2z^2-18z^4$ & $z^2-2z^4$ & $-5z^3 + 24z^5$ & $-2z^3 - 12z^5$ & $14z^4$\\ 
        \hline \hline
        $c,t=6$ & 111111 & 21111 & 3111 & 2211 & 411 & 321 & 51\\
        \hline
        $g_c(z)$ & $1-41z^2+458z^4-1258z^6+240z^8$ & $-z+33z^3-254z^5+342z^7$ & $2z^2-51z^4+229z^6-60z^8$ & $z^2-19z^4+58z^6-160z^8$ & $-5z^3 + 93z^5 - 208z^7$ & $-2z^3+5z^5+117z^7$ & $14z^4 -154 z^6 + 140 z^8$\\ \hline
         & 222 & 42 & 33 & 6 & & &\\
        \hline
        $g_c(z)$ & $-z^3-z^5+358z^7$ & $5z^4 + 75z^6+40z^8$ & $4z^4+116z^6-360z^8$ & $42z^5$ &  & & \\ \hline
    \end{tabular}
    \caption{Values of the Weingarten numerator $g_c(z)$ for $t \leq 6$.}\label{tab:gc_z}
\end{table}

\begin{table}
    \begin{tabular}{|c||c||c|c|}
        \hline
        t & $f_t(t^{-2})$ & $||H(t)||_{\mathfrak{D}} \leq$ & $||K_m||_{\mathfrak{D}} \leq$\\
        \hline 
        2 & 0.9375 & $\frac{q}{q^2+1}$  & $\frac{1}{q^2+1}$\\
        \hline
        3 & 0.9389 & $0.1845 \left(\frac{t}{q}\right)^2$ & $\frac{0.307}{q^2}\left(\frac{t}{q}\right)^2$\\
        \hline
        4 & 0.9460 & $0.1018 \left(\frac{t}{q}\right)^2$ & $\frac{0.167}{q^2}\left(\frac{t}{q}\right)^2$\\
        \hline
        5 & 0.9527 & $0.01818 \left(\frac{t}{q}\right)^3$ & $\frac{8.27\times 10^{-3}}{q^2}\left(\frac{t}{q}\right)^4$\\
        \hline
        6 & 0.9581 & $8.297 \times 10^{-3} \left(\frac{t}{q}\right)^3$ & $\frac{2.48 \times 10^{-3}}{q^2}\left(\frac{t}{q}\right)^4 $\\
        \hline
    \end{tabular}
    \caption{Values of the Weingarten denominator $f_t(t^{-2})$ for $t \leq 6$, as well as bounds on the singular values of $H$ and $K_m$.}\label{tab:ft_t2}
\end{table}

Using these reductions, we obtain the bounds on the singular values of $H$ and $K_m$ as shown in Table~\ref{tab:ft_t2}.
All of these are less than $\frac{1}{q^2+1}$, which is lower bounded by $\frac{0.8}{q^2}$ for all $2 \leq t \leq 6, t \leq q$. Therefore, the subleading singular value bound of $K_m$ is bounded by $\frac{1}{q^2+1}$ for all  $2 \leq t \leq 6$, $t\leq q$, proving Lemma~\ref{lemma:Kbound_t34}.

\subsection{Values of $||K_m||_{\mathfrak{D}}$ for $q < t$ (Proof of Theorem \ref{thm:gap_finite_N})}
\label{app:gap_finite_N}
In order to prove Theorem \ref{thm:gap_finite_N}, we must calculate $||K_m||_{\mathfrak{D}}$ for every $q < t \leq 6$ and $m \leq 1000$. Our numerics use a variant of the Lanczos algorithm. Results are presented in Figure \ref{fig:larget_numerics}.

\begin{figure}
    \centering
    \begin{tikzpicture}
        \begin{scope}
            \node[anchor=north west,inner sep=0] (image_a) at (0,0)
            {\includegraphics[width=0.8\columnwidth]{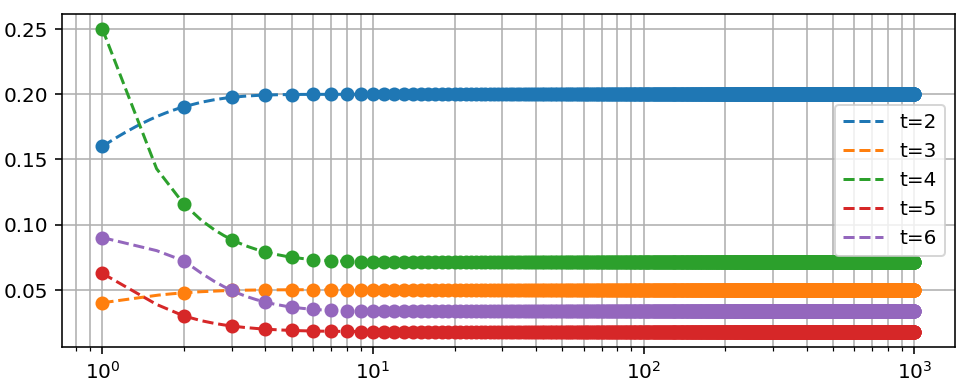}};
            \node [anchor=north west] (note) at (7,-5.3) {\large{$m$}};
            \node [anchor=north west] (note) at (-0.5,-2.4) {\large{$\lambda$}};
        \end{scope}
    \end{tikzpicture}
    \vspace*{-0.4cm}
    \caption{Largest eigenvalues in the deranged subspace for $q = 2$ and various values of $t$. Each curve remains essentially flat out to at least $m = 1000$ (not shown).}
    \label{fig:larget_numerics}
\end{figure}

Our computational basis states $\mathbf{e}_\sigma$ are related to the physical Hilbert space by the embedding $E$ described in the previous section. 
This embedding is an isometry under the metric 
\begin{gather}
    \langle \mathbf{e}_{\sigma}, \mathbf{e}_{\tau} \rangle = \Wg(\tau \sigma^{-1}, d)
\end{gather}
Since the metric is degenerate, $E^\dagger$ is not uniquely defined. However, if we define
\begin{multline}
    X_m = 
    \sum_{\sigma_i, \tau_i, \alpha, \beta} \Bigg[\delta_{\sigma_1 \sigma_2} \Wg(\sigma_1 \alpha^{-1}, q^{m+1}) q^{-m|\alpha^{-1} \tau_1|} q^{-|\alpha^{-1} \sigma_3|}
    \\ \times
    \Wg(\sigma_3 \beta^{-1}, q^2) q^{-|\beta^{-1} \tau_2|}q^{-|\beta^{-1} \tau_3|} 
    \mathbf{e}_{\tau_1 \tau_2 \tau_3}\mathbf{e}_{\sigma_1 \sigma_2 \sigma_3}^T\Bigg]
\end{multline}
then we have
\begin{gather}
    \Pi_{m+1} G_m = E X_m E^\dagger
\end{gather}
for any choice of adjoint $E^\dagger$.

We may cyclically permute the leading $\delta_{\sigma_1 \sigma_2}$ operation to obtain a new operator with the same spectrum
\begin{gather}
    O_m = \sum_{\sigma, \tau, \pi, \rho} \left(\sum_{\alpha, \beta} \Wg(\sigma \alpha^{-1}, q^{m+1}) q^{-m|\alpha \pi|}q^{-|\alpha \tau|} \Wg(\tau \beta^{-1}, q^2) q^{-|\beta \pi|} q^{-|\beta \rho|}\right) \mathbf{e}_{\pi \rho} \mathbf{e}_{\sigma \tau}^T
\end{gather}
This operator is now Hermitian with respect to the inherited metric. 

To find new eigenvalues, we wish to restrict our search to the intersection space $\mathcal{I}$ described in Theorem \ref{thm:coderanged_characterization}. This subspace is invariant under $O_m$, so in principle it suffices to begin our Lanczos iterations with an initial vector in $\mathcal{I}$. However, numerical instability makes the invariance only approximate, so at each iteration of the algorithm we must ensure that our vectors are constrained to $\mathcal{I}$.

We first find a starting point. Beginning with a random (normalized) vector, we use the implicitly restarted Lanczos method to find an eigenvector of $X|_{\vecspan\mathfrak{D}_2}$ with the largest eigenvalue. The corresponding eigenvalue will be $1$ so long as $\mathcal{I}$ is nontrivial. We then run the usual Lanczos algorithm to find large eigenvalues of $O_m$. At each step of the iteration we ensure that the vector remains (nearly) invariant under $X$ by repeated applications of $X$. The size of the subspace is increased iteratively until the change in the largest eigenvalue found is sufficiently small.

\section{Properties of the Derangement Polynomial}
\subsection{Proof of Lemma \ref{lemma:dc_bound}}\label{app:proof_of_dc_bound}
\begin{lemma*} (Restatement of Lemma \ref{lemma:dc_bound})
Define the \textit{derangement polynomial}
    \begin{gather}\label{eq:dt_definition_copy2}
        d_t(x) = \sum_{\sigma \in S_t^{(\mathfrak{D})}} \sum_{\tau \in S_t} x^{|\sigma^{-1} \tau| + |\tau|}
    \end{gather}
    Let $Q_1, Q_2 \geq q$. Then
    \begin{gather}
        ||D(Q_2) C(Q_1)||_{\mathfrak{D}, \text{basis}}  \leq \sqrt{d_t(q^{-2})}
    \end{gather}
\end{lemma*}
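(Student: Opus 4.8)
The plan is to dominate the basis-norm operator norm by the Frobenius norm, which for a matrix whose entries are powers of $q^{-1}$ collapses to a generating-function sum over $S_t$ that is, essentially by definition, the derangement polynomial $d_t$. The one point requiring care is arranging the two exponents so that they read $(|\sigma^{-1}\tau|,|\tau|)$ rather than $(|\sigma|,|\sigma^{-1}\tau|)$.

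First I would record the entrywise form. From the definitions in Lemma \ref{lemma:main_product}, $[D(Q_2)C(Q_1)]_{\sigma\tau} = Q_2^{-|\sigma|}Q_1^{-|\sigma^{-1}\tau|}$, and $||D(Q_2)C(Q_1)||_{\mathfrak{D},\text{basis}}$ is the basis-norm operator norm of this matrix with its domain (the $\tau$ index) restricted to the deranged subspace, i.e. $||D(Q_2)C(Q_1)P_{\mathfrak{D}}||_{\text{basis}}$, where $P_{\mathfrak{D}}$ is a coordinate projection in the permutation basis. Since the operator norm is invariant under transposition, and since $D(Q_2)$ (diagonal) and $C(Q_1)$ (because $|\sigma^{-1}\tau|=|(\sigma^{-1}\tau)^{-1}|=|\tau^{-1}\sigma|$) are both symmetric in the basis, I would pass to $(D(Q_2)C(Q_1)P_{\mathfrak{D}})^T = P_{\mathfrak{D}}\,C(Q_1)D(Q_2)$, whose entries are $Q_1^{-|\sigma^{-1}\tau|}Q_2^{-|\tau|}$ with the row index $\sigma$ now restricted to derangements.

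Next I would invoke the elementary bound $||A||_{\text{op}} \le ||A||_F$ and compute the Frobenius norm directly: $||P_{\mathfrak{D}}C(Q_1)D(Q_2)||_F^2 = \sum_{\sigma\in S_t^{(\mathfrak{D})}}\sum_{\tau\in S_t} Q_1^{-2|\sigma^{-1}\tau|}Q_2^{-2|\tau|}$. Because $Q_1,Q_2\ge q$, each summand is at most $q^{-2|\sigma^{-1}\tau|}q^{-2|\tau|} = (q^{-2})^{|\sigma^{-1}\tau|+|\tau|}$, so the whole sum is at most $\sum_{\sigma\in S_t^{(\mathfrak{D})}}\sum_{\tau\in S_t}(q^{-2})^{|\sigma^{-1}\tau|+|\tau|} = d_t(q^{-2})$, precisely the derangement polynomial of Equation \ref{eq:dt_definition_copy2}. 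Taking square roots gives $||D(Q_2)C(Q_1)||_{\mathfrak{D},\text{basis}} \le \sqrt{d_t(q^{-2})}$. If one instead reads $||\cdot||_{\mathfrak{D},\text{basis}}$ with $P_{\mathfrak{D}}$ on both sides, the identical computation goes through after the harmless extra step of enlarging the $\tau$-sum from $S_t^{(\mathfrak{D})}$ back to $S_t$, which only increases the bound.

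There is no substantial obstacle here; the argument is the standard one of dominating a spectral norm by a Frobenius norm and recognizing the resulting lattice sum as a known polynomial. The only things one must not get wrong are the transposition step — only after transposing (or, equivalently, after a change of summation variable together with conjugacy-invariance of $|\cdot|$) does the Frobenius sum match the definition of $d_t$ rather than an unrelated generating function — and the termwise use of $Q_1,Q_2\ge q$ before performing the sum.
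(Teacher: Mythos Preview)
Your proof is correct and is essentially the paper's own argument: compute the matrix entries, bound the operator norm by the Frobenius norm, and use $Q_1,Q_2\ge q$ termwise to recognize the resulting sum as $d_t(q^{-2})$. In fact you are slightly more careful than the paper, which silently writes the entry as $Q_2^{-|\tau|}Q_1^{-|\sigma^{-1}\tau|}$ (the $CD$ entry) while labeling it $DC$; your explicit transposition step is exactly what justifies that swap.
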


\begin{proof}
    We can compute 
    \begin{gather}
        (D(Q_2)C(Q_1))_{\sigma, \tau} = \sum_{\rho \in S_t}\delta_{\tau \rho} Q_2^{-|\tau|}Q_1^{-|\sigma^{-1} \rho|}  = Q_2^{-|\tau|}Q_1^{-|\sigma^{-1} \tau|}
    \end{gather}
    The spectral norm of a matrix is upper-bounded by its Frobenius norm, so
    \begin{align}
        ||D(Q_2)C(Q_1)||_2^2 
        &\leq \sum_{\sigma \in D_t}\sum_{\tau \in S_t} \left(Q_2^{-|\tau|}Q_1^{-|\sigma^{-1} \tau|} \right)^2
        \\
        &\leq \sum_{\sigma \in D_t}\sum_{\tau \in S_t} Q^{-2(|\tau|+|\sigma^{-1} \tau|)} 
    \end{align}
\end{proof}

\subsection{Product and Matrix Forms}
\label{app:form_of_derangement_polynomial}
\begin{lemma}\label{lemma:derangement_polynomial_productform}
    The derangement polynomial can be expressed as the following sum:
    \begin{gather}
        d_t(x) = (-1)^t\sum_{k=0}^t (-1)^k \binom{t}{k} \prod_{\ell = k+1}^t \big(1+(\ell - 1)x^2\big) \prod_{\ell = 1}^{k} \big(1+(\ell - 1)x\big)^2 
    \end{gather}
\end{lemma}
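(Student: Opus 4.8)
The plan is to evaluate $d_t$ by inclusion--exclusion over the fixed-point set of the \emph{first} permutation, which reduces the sum over derangements to a sum of contributions of point-stabilizer subgroups, and then to factor each of those contributions using the Coset Decomposition Lemma (Lemma~\ref{lemma:coset_decomposition}). The key preliminary identity I would establish is the following: for any subgroup $H\le S_t$, writing $W_H(C)=\sum_{\pi\in C}x^{|\pi|}$ for a right coset $C$ of $H$,
\[
\sum_{\sigma\in H}\sum_{\tau\in S_t}x^{|\sigma^{-1}\tau|+|\tau|}=\sum_{C}W_H(C)^2,
\]
the sum running over right cosets of $H$. This holds because for $\tau$ ranging over a fixed coset $C$ the element $\sigma^{-1}\tau$ ranges over $C$ as $\sigma$ ranges over $H$, so the inner $\sigma$-sum equals $W_H(C)$ for every $\tau\in C$. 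Taking $H=\{e\}$ recovers the classical $\sum_{\tau\in S_t}x^{|\tau|}=\prod_{\ell=1}^t\bigl(1+(\ell-1)x\bigr)$, which itself follows from the Jucys--Murphy decomposition used in the proof of Lemma~\ref{lemma:coset_decomposition} (for each value $b\in\{1,\dots,t\}$ one chooses either no transposition or one $s_{ab}$ with $a<b$).

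Next I would apply this to the derangement sum. Writing $[\sigma\text{ deranged}]=\sum_{S\subseteq\operatorname{Fix}(\sigma)}(-1)^{|S|}$ and swapping the order of summation gives
\[
d_t(x)=\sum_{S\subseteq\{1,\dots,t\}}(-1)^{|S|}\sum_{\substack{\sigma\in S_t\\ S\subseteq\operatorname{Fix}(\sigma)}}\sum_{\tau\in S_t}x^{|\sigma^{-1}\tau|+|\tau|}.
\]
The inner double sum is $\sum_{\sigma\in H_S}\sum_{\tau}x^{|\sigma^{-1}\tau|+|\tau|}$ where $H_S$ is the pointwise stabilizer of $S$; by conjugation-invariance of $|\cdot|$ it depends only on $k:=|S|$, so there are $\binom{t}{k}$ identical terms, each equal (after relabeling so that $H_S=S_{t-k}$, the permutations of $\{1,\dots,t-k\}$) to $\sum_C W_{S_{t-k}}(C)^2$. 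By Lemma~\ref{lemma:coset_decomposition} each coset has a minimal representative $\rho$ with $|\lambda\rho|=|\lambda|+|\rho|$ for all $\lambda\in S_{t-k}$, hence $W_{S_{t-k}}(C)=x^{|\rho|}\prod_{\ell=1}^{t-k}\bigl(1+(\ell-1)x\bigr)$, and the Jucys--Murphy description identifies the minimal coset representatives with products of transpositions $s_{ab}$ with $b\in\{t-k+1,\dots,t\}$, at most one per value of $b$, so $\sum_C x^{2|\rho(C)|}=\prod_{\ell=t-k+1}^{t}\bigl(1+(\ell-1)x^2\bigr)$.

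Assembling the pieces yields
\[
d_t(x)=\sum_{k=0}^{t}(-1)^k\binom{t}{k}\Bigl(\prod_{\ell=1}^{t-k}\bigl(1+(\ell-1)x\bigr)\Bigr)^2\prod_{\ell=t-k+1}^{t}\bigl(1+(\ell-1)x^2\bigr),
\]
and the substitution $k\mapsto t-k$, using $\binom{t}{t-k}=\binom{t}{k}$ and $(-1)^{t-k}=(-1)^t(-1)^k$, gives exactly the stated formula. The main obstacle is the bookkeeping in the coset step: one must verify carefully (i) that $W_H(C)$ factors as $x^{|\rho(C)|}$ times the full subgroup sum --- this is precisely the additivity $|\lambda\rho|=|\lambda|+|\rho|$ of Lemma~\ref{lemma:coset_decomposition} --- and (ii) that the minimal representatives are enumerated with the correct $x^{2|\rho|}$ weighting, which follows from uniqueness of the Jucys--Murphy decomposition. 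Everything else is routine rearrangement; a small-$t$ sanity check ($d_1=0$, $d_2=2x$) confirms the reindexing.
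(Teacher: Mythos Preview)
Your argument is correct and organized a little differently from the paper's. The paper partitions $S_t$ by the exact support of $\sigma$ (equivalently, by the number of fixed points), which yields the forward identity
\[
c_t(x)^2=\sum_{k=0}^t\binom{t}{k}\frac{c_t(x^2)}{c_k(x^2)}\,d_k(x),
\]
and then inverts this binomial transform to isolate $d_t$. You instead apply inclusion--exclusion directly to the indicator $[\sigma\text{ deranged}]$, which replaces the ``exactly no fixed points'' condition by a signed sum over ``fixes at least the set $S$''; the inner sum over the pointwise stabilizer $H_S\cong S_{t-|S|}$ is then handled by your coset identity $\sum_{\sigma\in H}\sum_\tau x^{|\sigma^{-1}\tau|+|\tau|}=\sum_C W_H(C)^2$. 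Both routes rely on the same underlying Jucys--Murphy/Coset Decomposition factoring (Lemma~\ref{lemma:coset_decomposition}) to evaluate $W_{S_{t-k}}(C)$ and $\sum_C x^{2|\rho(C)|}$, and end at the same closed form after the $k\mapsto t-k$ reindexing. Your path is slightly more direct, avoiding the explicit inversion step; the paper's path has the minor bonus of recording the intermediate relation between $c_t(x)^2$ and the family $\{d_k\}$.
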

\begin{proof}
We start with the original expression of the derangement polynomial:
\begin{gather}
    d_t(x) = \sum_{\sigma \in S_t^{(\mathfrak{D})}} \sum_{\tau \in S_t} x^{|\sigma^{-1} \tau| + |\tau|}
\end{gather}
We consider the more general sum
\begin{gather}
    c_t(x) \equiv \sum_{\tau \in S_t} x^{|\tau|}\\
    c_t(x)^2 = \sum_{\sigma \in S_t} \sum_{\tau \in S_t} x^{|\sigma^{-1} \tau| + |\tau|}
\end{gather}
Since the term in the sum is invariant upon simultaneous conjugation of $\sigma$ and $\tau$, $\sum_{\tau \in S_t} x^{|\sigma^{-1} \tau| + |\tau|}$ only depends on the conjugacy class of $\sigma$. Therefore, we can separate the sum over derangements $D_k$:
\begin{align*}
    c_t(x)^2 &= \sum_{k=0}^t \binom{t}{k} \sum_{\sigma \in D_k} \sum_{\tau \in S_t} x^{|\sigma^{-1} \tau| + |\tau|}
\end{align*}
For $\sigma \in D_k$, $\tau \in S_t$, we can use Lemma \ref{lemma:coset_decomposition} to split $\tau = \rho \pi$, $\rho \in S_k$ with $\pi$ being a minimal representative, and get
\begin{align}
    c_t(x)^2 &= \sum_{k=0}^t \binom{t}{k} \sum_{\pi \in S_t : S_k} x^{2|\pi|} \sum_{\sigma \in D_k} \sum_{\tau \in S_k} x^{|\sigma^{-1} \rho| + |\rho|}\n
    &= \sum_{k=0}^t \binom{t}{k} d_k(x) \sum_{\pi \in S_t : S_k} x^{2|\pi|} 
\end{align}
Moreover, for any specific $k$,
\begin{gather}
    c_t(x^2) = \sum_{\tau \in S_t} x^{2|\tau|} = \sum_{\pi \in S_t : S_k} x^{2|\pi|} \sum_{\rho \in S_k} x^{2|\rho|}\\
    \sum_{\pi \in S_t:S_k} x^{2|\pi|} = \frac{c_t(x^2)}{c_k(x^2)}
\end{gather}
So
\begin{gather}
    c_t(x)^2 = \sum_{k=0}^t \binom{t}{k} \frac{c_t(x^2)}{c_k(x^2)} d_k(x)
\end{gather}
A system of linear equations from $d_k(x)$ to $c_k(x)^2$ which we can invert as
\begin{gather}
    d_t(x) = \sum_{k=0}^t (-1)^{t-k} \binom{t}{k} \frac{c_t(x^2)}{c_k(x^2)} c_k(x)^2
\end{gather}
From the Jucys-Murphy decomposition of permutations in $S_k$,
\begin{gather}
    c_k(x) = \sum_{\sigma \in S_k} x^{|\sigma|} = \prod_{\ell = 1}^t \big(1+(\ell - 1)x\big)
\end{gather}
So we can also write
\begin{gather}
    d_t(x) = (-1)^t\sum_{k=0}^t (-1)^k \binom{t}{k} \prod_{\ell = k+1}^t \big(1+(\ell - 1)x^2\big) \prod_{\ell = 1}^{k} \big(1+(\ell - 1)x\big)^2 \label{eq:derangement_polynomial}
\end{gather}
\end{proof}

\subsection{Numerical Bound on the Derangement Polynomial (Proof of Lemma~\ref{lemma:dt_bound_numerical})} \label{app:dt_numerical_proof}
Using Eq.~\ref{eq:derangement_polynomial}, we can obtain an exact form of $d_t(x)$ for very large $t$ values in a time that is only polynomial in $t$. This in turn lets us bound $d_t(x)$ for these $t$ values, which allows us to prove Lemma~\ref{lemma:dt_bound_numerical}. 

The minimal degree of $d_t(x)$ must be smallest transposition cost of any derangement $\sigma \in S_t^{(\mathfrak{D})}$, i.e. it is $\ceil{\frac{t}{2}}$. Next, take $x_0 \equiv t^{-2} \geq x$. Because $d_t(x)$, in its original form (\ref{eq:dt_definition_copy2}), is a polynomial with positive coefficients, each term is monotonic in $x$. Therefore,
\begin{align}
    d_t(x) &\leq d_t(x_0) \left(\frac{x}{x_0}\right)^{\ceil{\frac{t}{2}}}\\
    &\leq d_t(t^{-2}) \left(\frac{t}{Q}\right)^{2\ceil{\frac{t}{2}}}
\end{align}
\begin{figure}
    \centering
    \begin{tikzpicture}
        \begin{scope}
            \node[anchor=north west,inner sep=0] (image_a) at (0.4,0)
            {\includegraphics[width=0.4\columnwidth]{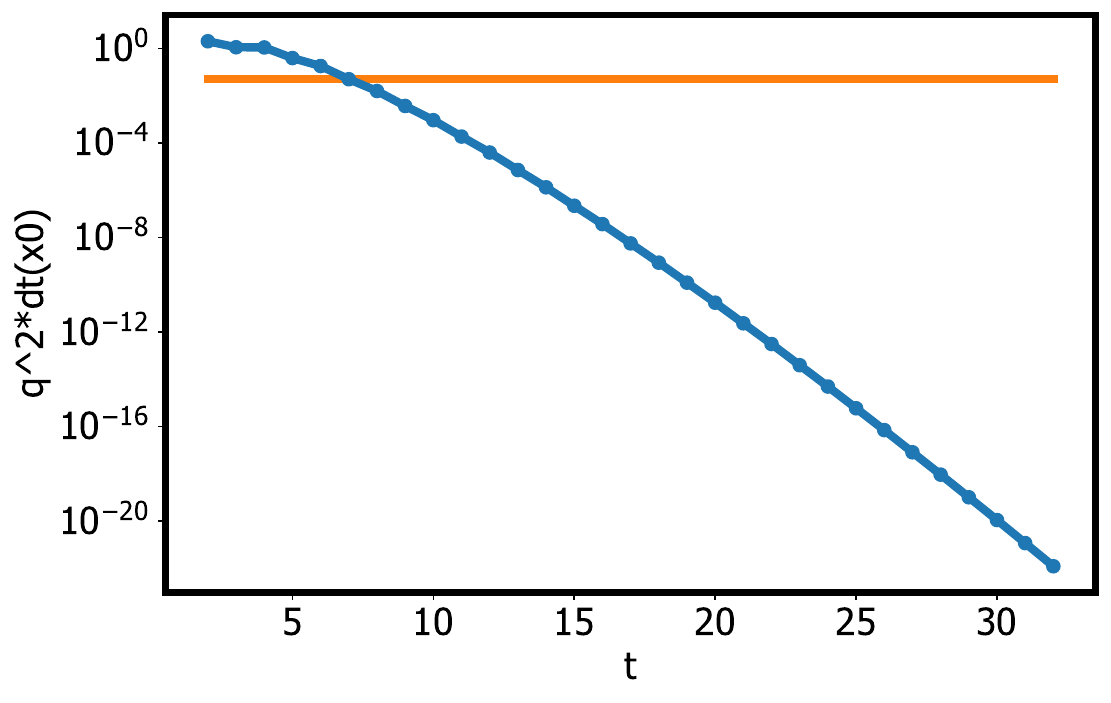}};
        \end{scope}
    \end{tikzpicture}
    \vspace*{-0.4cm}
    \caption{Bounds on the scaled derangement polynomial $q^2 d_t(t^{-2})$ for different values of $t \leq 32$. The $t=7$ value of $0.0496$ is also given by the orange line.}
    \label{fig:derangement_coeffs}
\end{figure}
The values of $d_t(t^{-2})$ for $t \leq 32$ are given in Fig.~\ref{fig:derangement_coeffs}, and monotonically decrease in $t$ for $t \geq 7$. Therefore, we can bound each of their coefficients by the $t=7$ value of $1.013\times 10^{-3}$, giving
\begin{gather}
    d_t(Q^{-2}) \leq 1.013\times 10^{-3} \left(\frac{t}{Q}\right)^{2\ceil{\frac{t}{2}}} \leq 1.013\times 10^{-3} \left(\frac{t}{q}\right)^t\\
    q^2 d_t(Q^{-2}) \leq 0.0496 \left(\frac{t}{q}\right)^{t-2}
\end{gather}
for all $7 \leq t \leq 32$. By Lemmas \ref{lemma:main_product} and \ref{lemma:weingarten_bound}, this therefore implies a bound on $K_m$ of
\begin{gather}
    \lambda = e^2 d_t(Q^{-2}) \leq \frac{0.367}{q^2} \left(\frac{t}{q}\right)^{t-2}
\end{gather}

\subsection{Analytic Bound on the Derangement Polynomial (Proof of Lemma \ref{lemma:dt_bound_analytic})} \label{app:proof_of_dt_bound}
\begin{lemma*}
(Restatement of Lemma \ref{lemma:dt_bound_analytic})
We have
\begin{gather}
    d_t(q^{-2}) \leq \half \frac{e^{\frac{t(t-1)}{2q^4}}}{2}\left(\frac{t}{q}\right)^t \left[\left(\frac{1}{2} + \frac{3}{t}\right)^t + \left(\frac{4}{\sqrt{t}}\right)^t\right]
\end{gather}
Furthermore, when $t \leq q$, this bound can be simplified to
\begin{gather}
    d_t(q^{-2}) \leq \frac{t^2}{2} e^{\frac{1}{2t^2}} \left[\left(\frac{1}{2} + \frac{3}{t}\right)^t + \left(\frac{4}{\sqrt{t}}\right)^t\right] \frac{1}{q^2}
\end{gather}
\end{lemma*}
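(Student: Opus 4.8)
The plan is to start from the product form of Lemma~\ref{lemma:derangement_polynomial_productform}, specialize $x=q^{-2}$, and peel off the exponential prefactor exactly. Writing $c_k(y)=\prod_{\ell=1}^k(1+(\ell-1)y)$, the identity $\prod_{\ell=k+1}^t(1+(\ell-1)x^2)=c_t(x^2)/c_k(x^2)$ lets us rewrite the product form as the exact equality $d_t(x)=c_t(x^2)\,S$ with $S=\sum_{k=0}^t(-1)^{t-k}\binom{t}{k}g(k)$ and $g(k)=c_k(x)^2/c_k(x^2)$. Since $S=d_t(x)/c_t(x^2)\ge 0$, the crude bound $c_t(x^2)=\prod_{\ell=1}^t(1+(\ell-1)x^2)\le \prod_{\ell=1}^t e^{(\ell-1)x^2}=e^{t(t-1)x^2/2}$ gives $d_t(q^{-2})\le e^{t(t-1)/(2q^4)}\,S$, which already accounts for the exponential factor in the claim. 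It then remains to show $S\le \tfrac14 (t/q)^t\big[(\tfrac12+\tfrac3t)^t+(\tfrac4{\sqrt t})^t\big]$ (the stated $\tfrac12$ absorbing the slack between $\tfrac14$ and $\tfrac12$).

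The heart of the argument, and the main obstacle, is bounding the non-negative alternating sum $S$ with enough cancellation retained: a term-by-term triangle inequality is useless, because for $t\le q$ each of the at most $q$ factors of $g(k)$ is $1+O(q^{-1})$, so $g(k)=O(1)$ and $\sum_k\binom{t}{k}g(k)$ only yields the hopeless $O(2^t)$. The quantity $S$ is the $t$-th forward finite difference $\Delta^t g(0)$ of a function that extends smoothly off the integers (e.g.\ $g(\xi)=\frac{\Gamma(q^4)}{\Gamma(q^2)^2}\cdot\frac{\Gamma(\xi+q^2)^2}{\Gamma(\xi+q^4)}$), so one may either use $\sum_{t\ge0}\frac{g(t)}{t!}z^t={}_2F_1(q^2,q^2;q^4;z)$ together with the defining convolution to get $S=t!\,[z^t]\,e^{-z}\,{}_2F_1(q^2,q^2;q^4;z)$ and estimate coefficients, or—equivalently and more elementarily—observe that $g(\xi)\approx e^{q^{-2}\xi^2}$ on $[0,t]$ so that $\Delta^t g$ inherits a factor of order $q^{-t}$ times a Hermite-polynomial–type combinatorial factor. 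Either way, the two summands $(\tfrac12+\tfrac3t)^t$ and $(\tfrac4{\sqrt t})^t$ come from splitting that combinatorial factor (equivalently, splitting the index $k$ into a low- and a high-$k$ regime, or splitting the underlying configurations into those built entirely from disjoint transpositions and the rest): in each regime one bounds $\binom{t}{k}$, the value $g(k)$, and the alternating partial sums $\sum_{k\le K}(-1)^{t-k}\binom{t}{k}=\pm\binom{t-1}{K}$ separately, with Stirling estimates producing the $\sqrt t$ in the second term.

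Given the bound on $S$, the simplification for $t\le q$ is routine bookkeeping: $\tfrac{t(t-1)}{2q^4}\le\tfrac{t^2}{2q^4}\le\tfrac1{2t^2}$ (using $t\le q$), and $(t/q)^t=\tfrac{t^2}{q^2}(t/q)^{t-2}$ with $(t/q)^{t-2}\le 1$ for $t\ge 2$, so $\tfrac14 e^{t(t-1)/(2q^4)}(t/q)^t[\cdots]$ collapses to $\tfrac{t^2}{4}e^{1/(2t^2)}(t/q)^{t-2}q^{-2}[\cdots]$, which is below the stated $\tfrac{t^2}{2}$ form. I expect essentially all the difficulty to be concentrated in the estimate of $S$: getting the constants inside the bracket small enough that, after multiplying by the $e^2$ and Weingarten factors used to pass to Lemma~\ref{lemma:analytic_bound}, the resulting bound on $\|K_m\|_{\mathfrak D}$ still lies below $\tfrac1{1+q^2}$ for all $t\ge 28$. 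The product-form reduction and the final simplification are comparatively mechanical.
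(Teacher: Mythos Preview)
Your reduction to $d_t(x)=c_t(x^2)\,S$ with $S=\sum_{k=0}^t(-1)^{t-k}\binom{t}{k}g(k)$ and $g(k)=c_k(x)^2/c_k(x^2)$ is exactly what the paper does (it is the content of Lemmas~\ref{lemma:derangement_polynomial_productform} and~\ref{lemma:derangement_polynomial_matrixform}), and the final simplification for $t\le q$ is indeed mechanical. But the core estimate on $S$ is where your proposal becomes a sketch rather than a proof: the hypergeometric identity and the Gaussian heuristic $g(\xi)\approx e^{x\xi^2}$ are both plausible starting points, yet you do not actually extract the bound $S\le \tfrac14(t/q)^t[(\tfrac12+\tfrac3t)^t+(4/\sqrt t)^t]$ from either one. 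Abel summation via $\sum_{k\le K}(-1)^{t-k}\binom{t}{k}=\pm\binom{t-1}{K}$ still leaves you needing uniform control on the first differences $g(k)-g(k-1)$, which is the same problem one level down.

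The paper's route is different from what you sketch. It writes $S=\hat e_0^T(\Delta+TG(x))^t u$ with $\Delta=T-I$ the shift-minus-identity and $G$ the diagonal matrix with entries $g_\ell(x)=\frac{(1+\ell x)^2}{1+\ell x^2}-1$, then expands the $t$-fold product and groups terms by the number $A$ of $TG$ factors. The key technical lemma is a uniform bound on iterated finite differences, $(\delta^n g_k)(x)\le 3x^{(n+1)/2}$, obtained by a Cauchy-integral residue computation for the rational function $j_x(y)=(1+y)^2/(1+xy)-1$. Feeding this in produces the clean sum $\sum_{A=0}^t\binom{t}{A}A^{t-A}3^A x^{t/2}$, and the two displayed summands arise from splitting that sum at $A=t/2$: for small $A$ one uses $A^{t-A}\le(t/2)^{t-A}$ to get $(t/2+3)^t$, and for large $A$ one uses $(t-A)^A\le t^{t/2}$ to get $t^{t/2}\cdot 4^t$. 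In particular, the constant $3$ and the form of the two brackets are artifacts of this specific expansion, not of a low-/high-$k$ split in the original alternating sum or of separating derangements by cycle type. Your plan does not currently contain a substitute for the $\delta^n g_k$ bound, and without it the cancellation in $S$ is not captured.
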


We will need several intermediate results before we are ready to prove Lemma \ref{lemma:dt_bound_analytic}. 
\begin{lemma}
\label{lemma:derangement_polynomial_matrixform}
\begin{gather}
        d_t(x) = c_t(x^2) \hat{e}_0^T \left(\Delta + TG(x)\right)^t \vec{u}
    \end{gather}
where \(\{\hat{e}_i\}\) is the standard basis for \(\mathbb{R}^{t+1}\) and we have defined
\begin{align}
    \vec{u} &= \sum_{i=1}^{t+1} \hat{e}_i \\
    T &= \sum_{i=1}^t \hat{e}_{i+1}\hat{e}_i^T \\
    \Delta &= T - I \\
    G(x) &= \sum_{j} \left(\frac{\big(1+j x\big)^2}{1+j x^2} - 1\right) \hat{e}_j \hat{e}_j^T \\
    c_t(x) &= \sum_{\tau \in S_t} x^{|\tau|}
\end{align}
\end{lemma}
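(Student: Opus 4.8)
The plan is to combine the product formula for $d_t(x)$ from Lemma~\ref{lemma:derangement_polynomial_productform} with the Jucys--Murphy evaluation $c_k(x) = \sum_{\sigma \in S_k} x^{|\sigma|} = \prod_{\ell=1}^{k}(1+(\ell-1)x) = \prod_{j=0}^{k-1}(1+jx)$ recorded in that proof, and then to recognize the resulting alternating sum as a single matrix element of $(\Delta + TG(x))^t$.

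First I would rewrite the two products appearing in Lemma~\ref{lemma:derangement_polynomial_productform}. Substituting $j = \ell-1$ gives $\prod_{\ell=k+1}^{t}(1+(\ell-1)x^2) = \prod_{j=k}^{t-1}(1+jx^2) = c_t(x^2)/c_k(x^2)$ and $\prod_{\ell=1}^{k}(1+(\ell-1)x)^2 = c_k(x)^2$. Pulling the common factor $c_t(x^2)$ out of the sum then yields $d_t(x) = c_t(x^2)\sum_{k=0}^{t}(-1)^{t-k}\binom{t}{k}\, c_k(x)^2/c_k(x^2)$, and since $c_k(x)^2/c_k(x^2) = \prod_{j=0}^{k-1}\frac{(1+jx)^2}{1+jx^2} = \prod_{j=0}^{k-1}\big(1+G(x)_{jj}\big)$, we arrive at $d_t(x) = c_t(x^2)\sum_{k=0}^{t}(-1)^{t-k}\binom{t}{k}\prod_{j=0}^{k-1}\big(1+G(x)_{jj}\big)$, the $k=0$ term being the empty product $1$.

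It then remains to identify this sum with the claimed bilinear form. Write $A(x) = I + G(x)$, a diagonal matrix, and observe that $\Delta + TG(x) = (T-I) + T(A(x)-I) = TA(x) - I$. Because $I$ is central, the binomial theorem gives $(TA(x)-I)^t = \sum_{k=0}^{t}(-1)^{t-k}\binom{t}{k}(TA(x))^k$. Since $T$ is the shift $\hat e_j \mapsto \hat e_{j+1}$ and $A(x)$ is diagonal, $TA(x)$ acts by $\hat e_j \mapsto A(x)_{jj}\,\hat e_{j+1}$, so $(TA(x))^k\hat e_0 = \big(\prod_{j=0}^{k-1}A(x)_{jj}\big)\hat e_k$ for every $k \le t$, and pairing with the all-ones vector $\vec u$ extracts the coefficient: $\vec u^{\,T}(TA(x))^k\hat e_0 = \prod_{j=0}^{k-1}A(x)_{jj}$. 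Summing, $\vec u^{\,T}(\Delta + TG(x))^t\hat e_0 = \sum_{k=0}^{t}(-1)^{t-k}\binom{t}{k}\prod_{j=0}^{k-1}A(x)_{jj}$, which is exactly the sum obtained above; transposing the scalar puts it in the stated form $\hat e_0^{\,T}(\Delta + TG(x))^t\vec u$ (the left-right placement, equivalently $T$ versus $T^T$, is purely a convention fixed by whether $T$ is read as the raising or lowering shift).

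The only step needing genuine care is this last one's index bookkeeping: fixing the index range so that $T$ is actually nilpotent, checking that the top index is never exceeded (it is not, since $k \le t$), and verifying the telescoping of the diagonal entries $A(x)_{00}, A(x)_{11}, \ldots$ along the orbit $\hat e_0 \mapsto \hat e_1 \mapsto \cdots \mapsto \hat e_k$. Everything else is routine algebraic rearrangement together with the already-cited product formula for $c_k$.
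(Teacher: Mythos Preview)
Your proof is correct and follows essentially the same route as the paper. Both start from Lemma~\ref{lemma:derangement_polynomial_productform}, factor out $c_t(x^2)$, and recognise the alternating sum $\sum_{k}(-1)^{t-k}\binom{t}{k}\prod_{j=0}^{k-1}(1+G(x)_{jj})$ as $\hat e_0^{T}(\Delta+TG(x))^t\vec u$; the paper phrases this last identification combinatorially as a weighted path sum over ``advance'' versus ``stay'' moves, while you invoke the binomial expansion of $(TA(x)-I)^t$ directly, which is the same computation in algebraic rather than combinatorial language. Your remark about the index bookkeeping and the $T$ vs.\ $T^{\mathsf T}$ convention is well taken---the paper itself is slightly loose on these points.
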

\begin{proof}
Define $g_\ell(x) = G_{\ell \ell}(x)$. We start with Lemma $\ref{lemma:derangement_polynomial_productform}$, which we express as
\begin{align}
    d_t(x) &= c_t(x^2) (-1)^t \sum_{k=0}^t (-1)^{t-k} \binom{t}{k} \prod_{\ell = 0}^{k-1} (1+g_\ell(x)) \label{eq:dt_sum}
\end{align}
Note that $g_\ell(0) = 0$. Ignoring the prefactor $c_t(x^2)$ for now, this expression can be described as a weighted sum over paths. Each path occurs over $t$ total time steps. At each time step we either move forward a space, acquiring weight $\frac{(1 + \ell x)^2}{1 + \ell x^2} = 1+g_\ell(x)$ if we are originally at position $\ell$, or we do not advance, acquiring a weight of $-1$. Any path that advances $k$ steps in total therefore acquires a weight $(-1)^{t-k} \prod_{\ell = 0}^{k-1} (1+g_\ell(x))$, and there are $\binom{t}{k}$ such paths. At the end of this process, we sum over all weights of all possible paths, regardless of the final position. This weighted path sum can therefore be represented as a matrix element
\begin{gather}
    u^T M^t \hat{e}_0
\end{gather}
where $M$ is the transfer matrix representing all possible actions at a single time step. The decision to advance and acquire a weight $(1+g_\ell(x))$ corresponds to the transfer matrix $T + TG(x)$, while the decision to remain still and acquire a weight $(-1)$ corresponds to $-I$. Therefore $M=T+TG-I$, so
\begin{align*}
    d_t(x) &= c_t(x^2) u^T \big[T + TG(x) - I\big]^t \hat{e}_0\\
    &= c_t(x^2) u^T \big[\Delta + TG\big]^t \hat{e}_0\\
\end{align*}-
\end{proof}

\begin{lemma}
    \label{lemma:derangement_rearrangement}
    Define an operator $\delta$ which maps diagonal matrices to diagonal matrices by
    \begin{gather}
    \delta M = \sum_{i} (M_{i,i} - M_{i-1,i-1})e_i e_i^T
    \end{gather}
    We have the following bound on the derangement polynomial:
    \begin{gather}
    d_t(x) \leq  c_t(x^2)\sum_A \binom{t}{A} A^{t-A} \max_{\lambda} \left| u^T \prod_{i=1}^{A} T^{\lambda_i+1} \delta^{\lambda_i} G(x) \hat{e}_0\right|. \label{eq:dt_inequality}
\end{gather}
\end{lemma}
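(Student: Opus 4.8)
The plan is to start from the matrix identity of Lemma~\ref{lemma:derangement_polynomial_matrixform},
\[
d_t(x) \;=\; c_t(x^2)\,u^T\bigl(\Delta + TG(x)\bigr)^t\hat{e}_0 ,
\]
and to normal-order the $t$-fold product so that the factors of $\Delta$ — which are exactly what is responsible for the large cancellations in the alternating sum defining $d_t$ — are traded for the finite-difference operator $\delta$. Concretely, I would first expand $\bigl(\Delta + TG(x)\bigr)^t$ as a sum over words in the two-letter alphabet $\{\Delta,\,TG(x)\}$ and group these words by the number $A$ of $TG(x)$-letters. For a fixed $A$ there are $\binom{t}{A}$ words, each of the shape $\Delta^{a_0}(TG)\Delta^{a_1}(TG)\cdots(TG)\Delta^{a_A}$ with $a_i\ge 0$ and $\sum_{i=0}^{A}a_i=t-A$.

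The engine of the normal-ordering is the pair of elementary relations $\Delta T = T\Delta$ (immediate since $\Delta = T-I$) and, for any diagonal matrix $D$,
\[
\Delta D \;=\; D\Delta - (\delta D)\,T ,
\qquad\text{whence}\qquad
\Delta^k D \;=\; \sum_{j=0}^{k}\binom{k}{j}(-1)^j(\delta^j D)\,\Delta^{k-j}\,T^{j},
\]
the latter being a finite sum so that the rewriting terminates. Sweeping the $\Delta$'s rightward through a word, each time the running power of $\Delta$ reaches one of the $A$ copies of $G(x)$ a number $\lambda_i\ge 0$ of them is ``spent'': this deposits a factor $\delta^{\lambda_i}G(x)$ and releases powers of $T$ (the exponent $\lambda_i+1$ in the target expression being this $\lambda_i$ together with the single $T$ already present in the letter $TG$), while the unspent factors of $\Delta$ continue rightward and, past the last copy of $G(x)$, are resolved against $\hat{e}_0$ via $\Delta^{r}\hat{e}_0=\sum_p\binom{r}{p}(-1)^{r-p}\hat{e}_p$. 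Collecting, each word is rewritten as a signed combination — with coefficients that are products of binomial coefficients — of terms of exactly the stated form $u^T\bigl(\prod_{i=1}^{A}T^{\lambda_i+1}\delta^{\lambda_i}G(x)\bigr)\hat{e}_0$, the tuple $\lambda=(\lambda_1,\dots,\lambda_A)$ recording which of the $t-A$ available $\Delta$-letters was spent at which copy of $G(x)$.

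Finally I would pass to absolute values term by term. Every coefficient has modulus at most the corresponding product of binomial coefficients, and the number of $(\text{word},\lambda)$ pairs arising for a fixed $A$ is at most $\binom{t}{A}$ (the word) times $A^{t-A}$ (a deliberately generous count of the ways to route each of the $t-A$ $\Delta$-letters to one of the $A$ copies of $G(x)$). Bounding each term by $\max_\lambda\bigl|u^T\prod_{i=1}^{A}T^{\lambda_i+1}\delta^{\lambda_i}G(x)\hat{e}_0\bigr|$, summing over $A$, and carrying the common prefactor $c_t(x^2)$ through unchanged yields the claimed inequality. The main obstacle is the bookkeeping in the middle step: one must check that, after the released $T$'s are gathered into the blocks $T^{\lambda_i+1}$, every surviving term has exactly the claimed product form with no leftover factors — in particular that the boundary contributions (the trailing $\Delta$'s acting on $\hat{e}_0$, and $u^T$ at the far left) introduce nothing outside this form — and that the aggregate combinatorial weight of all terms collapsing onto a given $\lambda$ is genuinely dominated by $\binom{t}{A}A^{t-A}$. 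This crude accounting is precisely what licenses the final $\max_\lambda$, which throws away all the delicate cancellation between contributions indexed by different $\lambda$.
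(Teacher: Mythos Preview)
Your proposal is essentially the paper's approach: expand $(\Delta+TG)^t$ into words grouped by the number $A$ of $TG$-letters, push the $\Delta$'s through via a commutation identity to trade them for $\delta$'s on $G$, then crudely bound the number of resulting terms by $\binom{t}{A}A^{t-A}$ and each term by the $\max_\lambda$. The one organizational difference is that the paper works with the Leibniz rule $\Delta(AB)=[\Delta,A]B+A\Delta B$ together with the commutator $[\Delta,T^kD]=T^{k+1}\delta D$, so that each $\Delta$ that ``hits'' a factor converts $T^kD\mapsto T^{k+1}\delta D$ \emph{in place} --- the released power of $T$ stays attached to the same factor rather than migrating rightward as in your relation $\Delta D=D\Delta-(\delta D)T$ --- which makes the blocks $T^{\lambda_i+1}\delta^{\lambda_i}G$ appear directly with $\sum_i\lambda_i=t-A$ and sidesteps exactly the leftover-$\Delta$-at-$\hat{e}_0$ and $T$-gathering bookkeeping you were (rightly) worried about.
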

\begin{proof}
    From Lemma \ref{lemma:derangement_polynomial_matrixform} we have
    \begin{gather}
        d_t(x) = c_t(x^2) u^T \left(\Delta + TG(x)\right)^t \hat{e}_0
    \end{gather}
    The matrix power $\big[\Delta + TG(x)\big]^t$ can be expanded into $2^t$ terms. Our general claim is that each of these terms decays exponentially with $t$. To demonstrate this claim, we start with the following observations:
\begin{itemize}
    \item $\Delta u = 0$
    \item $\Delta$ follows a pseudo product rule: $\Delta AB = [\Delta,A]B + A \Delta B$
    \item $[\Delta, T^k D] = [T-I, T^k D] = T^{k+1} \delta D$, where $D_{ij} = \delta_{ij} d_j$ is any diagonal matrix and $\delta D = \delta_{ij}(d_{j} - d_{j-1})$ is its finite difference (note that this is also a diagonal matrix). 
\end{itemize}
So
\begin{gather}
    \Delta\left(\prod_{i} T^{k_i} D_i\right)  \hat{e}_0 = \sum_{j} \prod_i\left[(1-\delta_{ij})T^{k_i} D_i + \delta_{ij}T^{k_i+1} \delta D_i \right] \hat{e}_0
\end{gather}
for any sequence of integers $k_i$ and diagonal matrices $D_i$. That is, applying $\Delta$ to a product of $\ell$ factors $T^{k_i} D_i$ gives us a sum of $\ell$ products, where each term in the sum is the original product with one of the factors replaced by $T^{k_i+1} \delta D_i$. This lets us impose the inequality
\begin{gather} \label{eq:delta_inequality}
    \big|\hat{e}_m^T\Delta\left(\prod_{i=0}^\ell T^{k_i} D_i\right)  \hat{e}_0\big| \leq \ell \max_j \big|\hat{e}_m^T\prod_{i=0}^\ell\left[(1-\delta_{ij})T^{k_i} D_i + \delta_{ij}T^{k_i+1} \delta D_i \right] \hat{e}_0\big|
\end{gather}
Now we generalize the situation to a sequence of $t$ operators which could either be $\Delta$ or $TG$. Specifically, we have two vectors $a \in \{0...t\}^m, b\in \{0...t\}^m$ for some integer $m$ such that $\sum_i a_i + b_i = t$. Let's also say that $\sum_i a_i \equiv A$. We want to bound the product
\begin{gather}
    \big|u^T \prod_{i=1}^m \Delta^{b_i} (TG)^{a_i} \hat{e}_0\big|
\end{gather}
By imposing inequality (\ref{eq:delta_inequality}) on each $\Delta$ individually, we get
\begin{gather}
    \big|u^T \prod_{i=1}^m \Delta^{b_i} (TG)^{a_i} \hat{e}_0\big| \leq \left(\prod_{k=1}^m \big(\sum_{j=1}^k a_j\big)^{b_i}\right) \max_{\lambda} \big|u^T \prod_{i=1}^{A} T^{\lambda_i+1} \delta^{\lambda_i} G(x) \hat{e}_0\big|
\end{gather}
where $\lambda$ is taken as any vector in $\{0...t-A\}^A$ such that $\sum_i \lambda_i = t-A$. The prefactor $\prod_{k=1}^m \big(\sum_{j=1}^k a_j\big)^{b_i}$ is counting the total number of terms in our sum after passing through each $\Delta$ individually, and for a fixed $A$ can be maximized if all the $\Delta$'s were applied at the end, i.e. $x_1 = A, y_1 = t-A$. This gives
\begin{gather}
    \big|u^T \prod_{i=1}^m \Delta^{b_i} (TG)^{a_i} \hat{e}_0\big| \leq A^{t-A} \max_{\lambda} \big|u^T \prod_{i=1}^{A} T^{\lambda_i+1} \delta^{\lambda_i} G(x) \hat{e}_0\big|
\end{gather}
and therefore
\begin{gather}
    d_t(x) = c_t(x^2) u^T \left[\Delta + TG(x)\right]^t \hat{e}_0 \leq  c_t(x^2)\sum_A \binom{t}{A} A^{t-A} \max_{\lambda} \big|u^T \prod_{i=1}^{A} T^{\lambda_i+1} \delta^{\lambda_i} G(x) \hat{e}_0\big|. 
\end{gather}
\end{proof}

\begin{lemma}
\label{lemma:derangement_Gbound}
\begin{gather}
        \left(\delta^n G(x)\right)_{ij} \leq 3x^\frac{n+1}{2} \delta_{ij}
\end{gather}
where $G$ is as defined in Lemma \ref{lemma:derangement_polynomial_matrixform} and $\delta$ is as defined in Lemma \ref{lemma:derangement_rearrangement}.
\end{lemma}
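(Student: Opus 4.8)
The plan is to obtain an explicit closed form for the diagonal entries of $\delta^n G(x)$ and then estimate it casewise in $n$. (There is nothing to prove off the diagonal, since $\delta$ reads off only the diagonal of its argument, so $\delta^n G(x)$ is again diagonal.)

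First I would put the diagonal entry $g_j(x):=G(x)_{jj}=\frac{(1+jx)^2}{1+jx^2}-1$ into a form suited to repeated differencing. Polynomial division in $j$ gives the identity
\begin{gather}
    g_j(x) \;=\; j \;-\; \frac{(1-x)^2}{x^2} \;+\; \frac{(1-x)^2}{x^2\,(1+jx^2)},
\end{gather}
which separates $g_j$ into a linear term, a constant, and a single simple pole in $j$. Since $\delta$ annihilates constants and $\delta^2$ annihilates the linear term, for $n\ge2$ only the pole survives; for $n=1$ the linear term additionally contributes $\delta[j]=1$. Combining this with the elementary identity $\delta^n[(a+bj)^{-1}] = (-b)^n n!\,/\,\prod_{k=0}^n(a+b(j-k))$ (proved by a one-line induction) and taking $a=1,\ b=x^2$, I get for $n\ge2$
\begin{gather}
    \big(\delta^n G(x)\big)_{jj} \;=\; \frac{(-1)^n (1-x)^2\, x^{2n-2}\, n!}{\prod_{k=0}^n\big(1+(j-k)x^2\big)},
\end{gather}
while $\big(\delta^1 G(x)\big)_{jj} = 1 - (1-x)^2/\big[(1+jx^2)(1+(j-1)x^2)\big]$ and $\big(\delta^0 G(x)\big)_{jj}=g_j(x)=\frac{jx(2+(j-1)x)}{1+jx^2}$.

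I would then split into cases, under the standing hypotheses of the section ($x=q^{-2}$, $t\le q$, $0\le j\le t$, and $n\le t$, which is all that is ever needed), giving $x\le t^{-2}\le\frac14$ and $jx\le\sqrt x$. For $n\ge2$, in the product each factor with $k\le j$ is $\ge1$ and the rest have the form $1-mx^2$ with $1\le m\le t$, so Weierstrass's inequality bounds the product below by $1-x^2\,t(t+1)/2\ge\frac34$; hence $|(\delta^nG)_{jj}|\le\frac43 x^{2n-2}n!$, and the bound $\le 3x^{(n+1)/2}$ then reduces to $n!\le\frac94\,x^{(5-3n)/2}$, which follows from $n!\le n^{n-1}\le n^{3n-5}$ together with $x^{-1}\ge t^2\ge n^2$. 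The case $n=0$ is immediate: $g_j(x)\le 2\sqrt x+x\le 3\sqrt x$ by $jx\le\sqrt x$ and $x\le\sqrt x$. For $n=1$ I must show $1-(1-x)^2/\big[(1+jx^2)(1+(j-1)x^2)\big]\le 3x$; clearing the (positive) denominator and bounding the denominator product by $1+2x^{3/2}+x^3$ via $jx\le\sqrt x$, this comes down to the scalar inequality $(1+2x^{3/2}+x^3)(1-3x)\le(1-x)^2$ on $0<x\le\frac14$.

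I expect $n=1$ to be the main obstacle. It is the only case in which the linear part of $g_j$ survives, so the leading behaviour is a cancellation of two $O(1)$ quantities from which one must extract the genuine $O(x)$ size; the constant $3$ is essentially tight there (near $x=\frac14$), leaving almost no slack, and the verification needs an honest scalar inequality rather than an asymptotic argument. By comparison the $n\ge2$ case is mechanical once the closed form above is in hand, and $n=0$ is a one-liner.
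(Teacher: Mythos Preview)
Your proposal is correct and takes a genuinely different route from the paper. The paper treats $g_k(x)=j_x(kx)$ with $j_x(y)=\frac{(1+y)^2}{1+xy}-1$, Taylor-expands $g_{k-i}$ about $kx$, identifies $\sum_i\binom{n}{i}(-1)^i i^\ell$ with derivatives of $(e^z-1)^n$ at $0$ to kill the low-order terms, and then bounds the high derivatives $j_x^{(\ell)}$ via the Cauchy integral formula before summing the resulting tail. Your partial-fractions decomposition $g_j=j-\frac{(1-x)^2}{x^2}+\frac{(1-x)^2}{x^2(1+jx^2)}$ short-circuits all of that machinery: it immediately yields the exact closed form $(\delta^nG)_{jj}=\frac{(-1)^n(1-x)^2x^{2n-2}n!}{\prod_{k=0}^n(1+(j-k)x^2)}$ for $n\ge2$, from which the bound is a direct estimate using only Weierstrass's product inequality and $n!\le n^{n-1}$. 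This is considerably more elementary---no contour integrals, no Stirling-number-type generating functions---and it also makes transparent why the bound scales as $x^{2n-2}$ (hence much better than the required $x^{(n+1)/2}$) for large $n$. Your identification of $n=1$ as the delicate case is accurate: it is the unique case where the affine part of $g_j$ survives and one must extract an $O(x)$ quantity from a difference of two $O(1)$ terms; the scalar inequality $(1+2x^{3/2}+x^3)(1-3x)\le(1-x)^2$ you reduce it to does hold on $(0,\tfrac14]$ (substitute $y=\sqrt x$ and note the difference factors as $y^2[(1-y)^2+y^3(6-y)+3y^6]>0$). One small remark: the application in the next lemma actually needs $|(\delta^nG)_{jj}|\le 3x^{(n+1)/2}$, not just the upper bound as the lemma is stated; your $n\ge2$ estimate is already an absolute-value bound, and for $n=0,1$ the quantity is manifestly nonnegative, so you cover this without extra work.
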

\begin{proof}
We again use the shorthand $g_k(x) = G_{kk}(x)$. We first write
\begin{gather}
    \big[\delta^{n} G(x)\big]_{kk} = \delta^{n} g_k
(x)
\end{gather}
where for scalar-valued sequences we have defined the usual finite difference operator
\begin{gather}
    \delta f_k(x) = f_{k}(x) - f_{k-1}(x)
\end{gather}
Then
\begin{gather}
    \delta^n g_k(x) = \sum_{i=0}^n \binom{n}{i}(-1)^{i} g_{k-i}(x)
\end{gather}
Let's write 
\[j_x(y) \equiv \frac{(1+y)^2}{1+xy} - 1\]
\begin{gather}
    g_k(x) = \frac{(1+kx)^2}{1+kx^2} - 1 = j_x(kx)
\end{gather}
\begin{gather}
    g_{k-i}(x) = j_x(kx-ix) = \sum_{\ell = 0}^{\infty} j_x^{(\ell)}(kx) \frac{(-ix)^\ell}{\ell!} 
\end{gather}
Therefore,
\begin{align*}
    \delta^n g_k(x) &= \sum_{\ell = 0}^{\infty} j_x^{(\ell)}(kx) \frac{(-x)^\ell}{\ell!} \sum_{i=0}^n \binom{n}{i}(-1)^{i} i^\ell\\
    &= \sum_{\ell = 0}^{\infty} j_x^{(\ell)}(kx) \frac{(-x)^\ell}{\ell!} \frac{d^\ell}{dz^\ell} \sum_{i=0}^n \binom{n}{i}(-1)^{i} e^{iz} \big|_{z=0}\\
    &= (-1)^n \sum_{\ell = 0}^{\infty} j_x^{(\ell)}(kx) \frac{(-x)^\ell}{\ell!} \frac{d^\ell}{dz^\ell} (e^z-1)^n \big|_{z=0}\\
    &= (-1)^n \sum_{\ell = n}^{\infty} j_x^{(\ell)}(kx) \frac{(-x)^\ell}{\ell!} \frac{d^\ell}{dz^\ell} (e^z-1)^n \big|_{z=0}
\end{align*}
In the last step we used the fact that $e^z-1$ has no constant term in its Taylor series, so $\frac{d^\ell}{dz^\ell} (e^z-1)^n \big|_{z=0} = 0$ for all $\ell < n$. Also, $\frac{d^\ell}{dz^\ell} (e^z-1)^n \big|_{z=0} \leq \frac{d^\ell}{dz^\ell} (e^z)^n \big|_{z=0}$ (the latter has more terms in each Taylor series and all coefficients of the Taylor expansion are positive).
Therefore, in terms of magnitudes,
\begin{align*}
    \delta^n g_k(x) &\leq \sum_{\ell = n}^{\infty} |j_x^{(\ell)}(kx)| \frac{(nx)^\ell}{\ell!}
\end{align*}

Next, we bound each $|j_x^{(\ell)}(kx)|$ using the Cauchy integral theorem:
\begin{align*}
    j_x^{(\ell)}(y) &= \frac{\ell!}{2\pi i} \int_\gamma \frac{j_x(z)}{(z-y)^{\ell+1}} dz\\
    &= \frac{\ell!}{2\pi i} \int_\gamma \frac{(1+z)^2}{(1+xz)(z-y)^{\ell+1}} dz
\end{align*}
For $\ell \geq 2$, the contour integral vanishes at infinity. Of course, if we take the contour integral to infinity, we capture the extra pole at $z=-1/x$, so the residues of the two poles must sum up to zero. Therefore,
\begin{align*}
    j_x^{(\ell)}(y) &= -\ell! \, \text{Res} \frac{(1+z)^2}{(1+xz)(z-y)^{\ell+1}} \big|_{z = -1/x}\\
    &= -\frac{\ell!}{x} \frac{(1-1/x)^2}{(-y-1/x)^{\ell+1}}\\
    &= (-1)^{\ell} \ell! x^{\ell-2} \frac{(1-x)^2}{(1+xy)^{\ell+1}}\\
    |j_x^{(\ell)}(y)| &\leq \ell! \, x^{\ell-2}
\end{align*}
For $\ell = 0$, 
\begin{gather}
    j_x^{(0)}(y) = \frac{(1+y)^2}{(1+xy)} - 1 \leq 2y+y^2
\end{gather}
and for $\ell = 1$,
\begin{gather}
    j_x'(y) = \frac{2(1+y)(1+xy) - x(1+y)^2}{(1+xy)^2} \leq 2(1+y)
\end{gather}
These inequalities hold in magnitude as well, provided $x < y < 1$, which is consistent with our $x = Q^{-2} \leq t^{-2}, y = kx \leq tx \leq t^{-1}$ regime. Therefore, we have (where we define $\theta_{ij} = 1$ if $i \geq j$ and 0 otherwise):
\begin{align*}
    \delta^n g_k(x) &\leq (2kx + k^2 x^2) \theta_{0 n} + 2nx(1+kx) \theta_{1 n} + x^{-2}\sum_{\ell = \max(2,n)}^\infty (nx^2)^\ell\\
    &\leq kx(2+kx) \theta_{0 n} + 2nx(1+kx) \theta_{1 n} + \frac{(nx^2)^{\max(2,n)}}{x^2(1-nx^2)}
\end{align*}
Given our initial condition $n \leq t \leq x^{-1/2}$, with $t \geq 2$, we have $\delta^0 g_k(x) \leq 2.5 x^{1/2}$, $\delta^1 g_k(x) \leq 3x$, $\delta^2 g_k(x) \leq 2/(1-1/8) x^{3/2}$, and for $n > 2$, $\delta^n g_k(x) \leq 2x^{1.5 n - 2} \leq 2x^{(n+1)/2}$. Therefore, we always have
\begin{align*}
    \delta^n g_k(x) &\leq 3x^{\frac{n+1}{2}}
\end{align*}
    
\end{proof}

\begin{lemma}
\label{lemma:derangement_cbound}
We have
\begin{gather}
    c_t(x^2) \leq e^{\frac{t(t-1)}{2}x^2}
\end{gather}
which when $x \leq \frac{1}{t^2}$ implies
\begin{gather}
    c_t(x^2) < e^\frac{1}{2t^2}
\end{gather}
\end{lemma}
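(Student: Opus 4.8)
The plan is to reduce everything to the product formula for $c_t$ already obtained in the proof of Lemma~\ref{lemma:derangement_polynomial_productform}. There, from the Jucys--Murphy decomposition of $S_t$, we have
\begin{gather}
    c_t(x) = \sum_{\tau \in S_t} x^{|\tau|} = \prod_{\ell=1}^{t}\bigl(1 + (\ell-1)x\bigr).
\end{gather}
Substituting $x \mapsto x^2$ gives $c_t(x^2) = \prod_{j=0}^{t-1}\bigl(1 + j x^2\bigr)$, a product of $t$ nonnegative factors.

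The first step is to apply the elementary inequality $1 + y \le e^{y}$, valid for all real $y$, to each factor with $y = j x^2 \ge 0$. This yields
\begin{gather}
    c_t(x^2) = \prod_{j=0}^{t-1}\bigl(1 + j x^2\bigr) \le \prod_{j=0}^{t-1} e^{j x^2} = \exp\!\left(x^2 \sum_{j=0}^{t-1} j\right) = \exp\!\left(\frac{t(t-1)}{2}\,x^2\right),
\end{gather}
using $\sum_{j=0}^{t-1} j = \binom{t}{2} = \tfrac{t(t-1)}{2}$. This establishes the first claimed bound.

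For the second claim, the step is simply to insert the hypothesis $x \le \tfrac{1}{t^2}$, so that $x^2 \le t^{-4}$ and
\begin{gather}
    \frac{t(t-1)}{2}\,x^2 \le \frac{t(t-1)}{2 t^4} = \frac{t-1}{2 t^3} < \frac{t}{2 t^3} = \frac{1}{2 t^2},
\end{gather}
where the strict inequality uses $t - 1 < t$. Combining with the previous display and monotonicity of $\exp$ gives $c_t(x^2) < e^{1/(2 t^2)}$, as desired. There is no real obstacle here: the only points requiring care are (i) that we invoke the \emph{already proved} product form of $c_t$ rather than re-deriving it, and (ii) that $1 + y \le e^y$ is applied only to nonnegative arguments, so no sign issues arise. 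Everything else is a one-line estimate.
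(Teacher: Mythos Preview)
Your proof is correct and essentially identical to the paper's: both use the product formula $c_t(x^2)=\prod_{\ell=1}^{t}(1+(\ell-1)x^2)$, apply $1+y\le e^y$ termwise to get $e^{\frac{t(t-1)}{2}x^2}$, and then substitute $x\le t^{-2}$ to obtain the strict bound via $\frac{t-1}{2t^3}<\frac{1}{2t^2}$.
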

\begin{proof}
    \begin{align*}
    c_t(x^2) &= \prod_{\ell = 1}^{t} \left(1+(\ell-1)x^2\right)\\
    &\leq e^{\sum_{\ell=1}^t (\ell-1)x^2}\\
    &\leq e^{\frac{t(t-1)}{2}x^2}\\
\end{align*}
which is an increasing function of $x$. When  $x \leq \frac{1}{t^2}$ this bound is at most
\begin{align*}
    e^{\frac{t(t-1)}{2}\left(\frac{1}{t^2}\right)^2} = e^{\frac{1}{2t^2} - \frac{1}{2t^3}} <  e^{\frac{1}{2t^2}}
\end{align*}
\end{proof}

We are now ready to combine these pieces to prove Lemma \ref{lemma:dt_bound_analytic}.
\begin{proof} 
(Proof of Lemma \ref{lemma:dt_bound_analytic})
We start with Lemma \ref{lemma:derangement_rearrangement}. We apply Lemma \ref{lemma:derangement_Gbound} to the $\delta G$ term to obtain
\begin{align*}
    d_t(x) &\leq c_t(x^2)\sum_A \binom{t}{A} A^{t-A} 3^A x^{\frac{t}{2}}\\
    &\leq c_t(x^2) x^{\frac{t}{2}}\sum_{A=0}^{\lfloor \frac{t}{2} \rfloor} \binom{t}{A} [A^{t-A} 3^A  + (t-A)^A 3^{t-A}]\\
    &\leq c_t(x^2) x^{\frac{t}{2}}\sum_{A=0}^{\lfloor \frac{t}{2} \rfloor} \binom{t}{A} [(t/2)^{t-A} 3^A  + t^{t/2} 3^{t-A}]\\
    &\leq \half c_t(x^2) x^{\frac{t}{2}}\left[\left(\frac{t}{2} + 3\right)^t + t^{t/2} \left(1+3\right)^t\right] \\
    &\leq \half c_t(x^2) x^{\frac{t}{2}}t^t\left[\left(\frac{1}{2} + \frac{3}{t}\right)^t + \left(\frac{4}{\sqrt{t}}\right)^t\right]
\end{align*}
We can now use Lemma \ref{lemma:derangement_cbound} to obtain
\begin{gather}
    d_t(x) \leq \half e^{\frac{t(t-1)}{2}x^2} x^{\frac{t}{2}}t^t\left[\left(\frac{1}{2} + \frac{3}{t}\right)^t + \left(\frac{4}{\sqrt{t}}\right)^t\right]
\end{gather}
Now, if we use the condition $x = q^{-2} \leq t^{-2}$, we get
\begin{align}
    d_t(q^{-2}) &\leq \half e^{\frac{1}{2t^2}} \left(\frac{t}{q}\right)^t \left[\left(\frac{1}{2} + \frac{3}{t}\right)^t + \left(\frac{4}{\sqrt{t}}\right)^t\right]\n
    &\leq \frac{t^2}{2} e^{\frac{1}{2t^2}} \left[\left(\frac{1}{2} + \frac{3}{t}\right)^t + \left(\frac{4}{\sqrt{t}}\right)^t\right] \left(\frac{t}{q}\right)^{t-2} \frac{1}{q^2}
\end{align}
as desired. 
\end{proof}

Finally, we will also show that the $t = 2$ eigenvalue dominates for sufficiently large $t$. 
\begin{lemma}
Let $29 \leq t \leq q$. Then
\begin{gather}
d_t(q^{-2}) \leq \frac{1}{e^2(q^2 + 1)}
\end{gather}
\end{lemma}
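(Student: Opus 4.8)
The plan is to feed the bound of Lemma~\ref{lemma:dt_bound_analytic} into a crude but comfortable numerical estimate. That lemma gives, for $t \leq q$,
\begin{gather}
    d_t(q^{-2}) \leq \frac{t^2}{2} e^{\frac{1}{2t^2}} \left[\left(\frac{1}{2} + \frac{3}{t}\right)^t + \left(\frac{4}{\sqrt{t}}\right)^t\right] \left(\frac{t}{q}\right)^{t-2} \frac{1}{q^2},
\end{gather}
so beyond the explicit powers the $q$-dependence sits entirely in $\left(\tfrac{t}{q}\right)^{t-2}\tfrac{1}{q^2}$. First I would absorb this factor: since $q \geq t \geq 29 > 2$ we have $(t/q)^{t-2} \leq 1$, and since $q \geq t$ we have $\tfrac{1}{q^2} = \tfrac{1}{q^2+1}\cdot\tfrac{q^2+1}{q^2} \leq \tfrac{1}{q^2+1}\left(1 + \tfrac{1}{t^2}\right)$. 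Hence it suffices to establish the purely numerical inequality
\begin{gather}
    \frac{t^2}{2}\, e^{\frac{1}{2t^2}}\left(1 + \frac{1}{t^2}\right)\left[\left(\frac{1}{2} + \frac{3}{t}\right)^t + \left(\frac{4}{\sqrt{t}}\right)^t\right] \leq \frac{1}{e^2}
\end{gather}
for every integer $t \geq 29$, with $q$ no longer appearing.

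I would then split the bracket into its two summands and bound each contribution separately. For the first summand, $\left(\tfrac12 + \tfrac3t\right)^t = 2^{-t}\left(1 + \tfrac6t\right)^t \leq e^6\, 2^{-t}$, so its contribution is at most $\tfrac12\, e^{1/(2t^2)}\left(1+1/t^2\right) e^6\, t^2\, 2^{-t}$. Here $e^{1/(2t^2)}$ and $1 + 1/t^2$ are decreasing in $t$, while $t^2 2^{-t}$ is decreasing for $t \geq 3$ (its derivative $t\,2^{-t}(2 - t\ln 2)$ changes sign at $t = 2/\ln 2 \approx 2.89$), so the whole expression is monotonically decreasing on $t \geq 29$ and is bounded by its value at $t = 29$, which is of order $10^{-3}$. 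For the second summand the argument is the same: $\tfrac12\, e^{1/(2t^2)}(1+1/t^2)$ is decreasing, and $t^2\left(4/\sqrt t\right)^t$ is decreasing for $t \geq 29$ --- one sees this by writing $\log\!\big(t^2 (4/\sqrt t)^t\big) = 2\log t + t\log 4 - \tfrac{t}{2}\log t$ and noting its derivative $\tfrac{2}{t} + \log 4 - \tfrac12\log t - \tfrac12$ is already negative at $t = 29$ and decreasing thereafter. Thus this contribution too is bounded by its $t = 29$ value, which is $\approx 0.076$.

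Putting the pieces together, for all $t \geq 29$ the left-hand side of the numerical inequality is at most its value at $t = 29$; a direct evaluation gives this value as $\approx 0.076$, comfortably below $e^{-2} \approx 0.135$, which finishes the proof. The one mildly delicate point --- and the step I would expect a referee to scrutinize --- is the monotonicity of $t^2(4/\sqrt t)^t$ (and, less importantly, of $t^2 2^{-t}$); but this is handled cleanly by the logarithmic-derivative computation above, and everything else is bookkeeping. If one wanted to avoid the monotonicity argument entirely, one could instead use $\left(4/\sqrt t\right)^t \leq (2/3)^t$ for $t \geq 36$ and verify the seven remaining cases $t = 29,\dots,35$ numerically, but the derivative argument is shorter.
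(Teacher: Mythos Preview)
Your proposal is correct and follows essentially the same route as the paper: both invoke Lemma~\ref{lemma:dt_bound_analytic}, evaluate the resulting $t$-dependent prefactor at $t=29$ to get $\approx 0.076$, and then establish that this prefactor is monotonically decreasing for $t\geq 29$. The only cosmetic differences are that the paper absorbs the $q^2/(q^2+1)$ factor by comparing against the constant $0.108$ (valid for all $q\geq 2$) rather than your $(1+1/t^2)$ factor, and the paper differentiates the full expression directly to obtain explicit coefficients $a(t),b(t)$ shown negative for $t\geq 16$, whereas you split into the two summands and handle each via a logarithmic derivative --- but the content is the same.
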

\begin{proof}
    We begin with Lemma \ref{lemma:dt_bound_analytic}. 
     Writing the bound as $\tilde{d}(t)  \left(\frac{t}{q}\right)^{t-2} \frac{1}{q^2}$, we have $\tilde{d}(29) = 0.076 < 0.108 \leq \frac{q^2}{e^2(q^2+1)}$ 
     for all $q \geq 2$. It remains to show that the derivative of $\tilde{d}(t)$ over $t$ is negative for all $t$ values beyond 29. We have
\begin{gather}
    \frac{\dd \tilde{d}(t)}{\dd t} = e^{\frac{1}{2t^2}} \left[a(t) \left(\half + \frac{3}{t} \right)^t + b(t) \left(\frac{4}{\sqrt{t}}\right)^t\right]
\end{gather}
where
\begin{gather}
    a(t) = -t\left(\frac{3t}{t+6} - 1\right) - \frac{t^2}{2} \ln\left[\left(\half + \frac{3}{t}\right)^{-1}\right] - \frac{1}{2t}\\
    b(t) = - \frac{t^2}{4} \ln\left(\frac{t}{16}\right) - t\left(\frac{t}{4} - 1\right)- \frac{1}{2t}. 
\end{gather}
Since both are negative once $t$ exceeds 16, the whole derivative is negative, so $\tilde{d}(t)$ remains decreasing below the threshhold of 0.8 for all $t \geq 29$.
\end{proof}

\section{Irreps of generalized action}\label{app:lr_irreps}
In the analytic part of our proof, we used the right-action symmetry of the effective 3-site operator to separate the basis into isotypical components, as described in Lemma~\ref{lemma:isotype_basis}. However, the effective 3-site system also has symmetry under global left-action of any permutation, which is not an equivalent symmetry to right-action and can potentially split up the basis even further. In this section, we will generalize the right action isotypical basis of Lemma~\ref{lemma:isotype_basis} to symmetries under both left and right action.
We will start with a state that lives in a right-action irrep indexed by a partition $\lambda_R$ and a left-action irrep indexed by $\lambda_L$. For each conjugacy class $c$, we can select an arbitrary representative element $\sigma_c$, and note the centralizer $C(\sigma_c)$ with size $C_c$. Then, we define a basis for this isotypic component, indexed by left/right irrep indices $ij, k\ell$ respectively, as well as a conjugacy class index $c$, as follows:
\begin{gather}
    \left\{\ket{e_{\nu_L, \nu_R}^{ij, k\ell, c}} = \frac{1}{|C_c|}L_{\nu_L}^{ij} R_{\nu_R}^{k\ell}|I, I, \sigma_c\rangle \bigg| i,j \in \{1...d_{\nu_L}\}, k,l \in \{1...d_{\nu_R}\}\right\}\\
\end{gather}
where as before we have used
\begin{gather}
    R_\nu^{ij} = \sum_{\rho \in S_t} V_\nu(\rho^{-1})^{ij} \rho_R
\end{gather}
and the analogous
\begin{gather}
    L_\nu^{ij} = \sum_{\rho \in S_t} V_\nu(\rho^{-1})^{ij} \rho_L
\end{gather}
Note that we are allowed to index our basis element only in terms of the conjugacy class $c$, instead of a full permutation $\sigma$, as attempting to replace $\sigma_c$ with any other element $\tau_c$ in the same conjugacy class is equivalent to applying the tensor 
\begin{gather}
    V_{\nu_L}(\pi)^{hi} V_{\nu_R}(\pi^{-1})^{\ell m}
\end{gather}
where $\pi$ is such that $\tau_c = \pi \sigma_c \pi^{-1}$. The above operation consists of left-action of $V_{\nu_L}(\pi)$ to the left irrep indices, combined with right-action of $V_{\nu_R}(\pi^{-1})$ to the right irrep indices. Therefore, it is actually a group homomorphism - we will call this operation $T(\pi)_{\nu_L, \nu_R}^{hi, \ell m}$.

Each state $\ket{\gamma} = \sum_{ijk\ell c} \gamma^{ij, k\ell, c}\ket{e_{\nu_L, \nu_R}^{ij, k\ell, c}}$is then characterized by the tensor $\gamma^{ij, k\ell, c}$ of size $d_{\lambda_R}^2 d_{\lambda_L}^2 N_c$, where $N_c$ is the number of conjugacy classes in $S_t$ - note that unlike the $\beta$ tensor, $\gamma$ is not indexed by permutations in $S_t$. Unlike the right-action ansatz, this is not surjective - i.e. there exists some nonzero $\gamma^{ij, k\ell, c}$ tensors that yield an identically zero wavefunction (for example, if $\lambda_L = $Trv, $\lambda_R = $Alt, $\gamma^c = \delta_{c,I}$). This, combined with the fact that the single site metric in permutation space $g(\ket{\sigma}, \ket{\tau}) = q^{-|\sigma^{-1}\tau|}$ becomes singular if $q < t$, means that there is a possibility of spurious $\gamma$ tensors which are eigenstates of the gate but do not actually exist in the original space.

Like before, the left and right-action twirls commute with all gates, so applying the $G_m$ gate yields
\begin{align*}
    G_m \ket{e_{\nu_L, \nu_R}^{ij, k\ell, c}} &= \frac{1}{C_c} L_{\nu_L}^{ij} R_{\nu_R}^{k\ell} G\ket{I, I, \sigma_c}\\
    &= \frac{1}{C_c} L_{\nu_L}^{ij} R_{\nu_R}^{k\ell} \sum_{\pi, \tau} \Wg(\pi \tau^{-1}, q^2) q^{-|\tau| - |\tau \sigma_c^{-1}|} \ket{I, \pi, \pi}
\end{align*}
Since
\begin{gather}
    \sum_{\pi \in S_t} f(\pi) = \sum_a \frac{1}{C_a} \sum_\pi f(\pi \sigma_a \pi^{-1}),
\end{gather}
for conjugacy class index $a$, we can change this sum into
\begin{align*}
    G_m \ket{e_{\nu_L, \nu_R}^{ij, k\ell, c}} &= \frac{1}{C_c} L_{\nu_L}^{ij} R_{\nu_R}^{k\ell} \sum_{a,b} \frac{1}{C_a C_b} \sum_{\pi, \tau} \Wg(\pi \sigma_a \pi^{-1} \tau \sigma_b^{-1} \tau^{-1}, q^2) q^{-|\sigma_b| - |\tau \sigma_b \tau^{-1} \sigma_c^{-1}|} \pi_L \pi_R^{-1}\ket{I, \sigma_a, \sigma_a}\\
    &= \frac{1}{C_c} L_{\nu_L}^{ij} R_{\nu_R}^{k\ell} \sum_{a,b} \frac{1}{C_a C_b} \sum_{\pi, \tau} \Wg(\pi \sigma_a \pi^{-1} \sigma_b^{-1}, q^2) q^{-|\sigma_b| - |\tau \sigma_b \tau^{-1} \sigma_c^{-1}|} (\tau \pi)_L (\tau \pi)^{-1}_R \ket{I, \sigma_a, \sigma_a} \qquad (\pi \rightarrow \tau \pi)
\end{align*}
We then use the commutation relations
\begin{gather}
    L_{\nu_L}^{ij} \sigma_L = V_{\nu_L}^{ih}(\sigma) L_{\nu_L}^{hj}\\
    R_{\nu_R}^{k\ell} \sigma_R = R_{\nu_L}^{k n} V_{\nu_R}^{n\ell}(\sigma)
\end{gather}
(using Einstein summation notation for all repeated irrep indices). 
In fact,
\begin{gather}
    L_{\nu_L}^{ij} R_{\nu_R}^{k\ell} \sigma_L \sigma_R^{-1} = T_{\nu_L, \nu_R}(\sigma)^{ih, n\ell} L_{\nu_L}^{hj} R_{\nu_L}^{k n}
\end{gather}
Therefore
\begin{align*}
    G_m \ket{e_{\nu_L, \nu_R}^{ij, k\ell, c}} &= \frac{1}{C_c} \sum_{a,b} \frac{1}{C_a C_b} \sum_{\pi, \tau} \Wg(\pi \sigma_a \pi^{-1} \sigma_b^{-1}, q^2) q^{-|\sigma_b| - |\tau \sigma_b \tau^{-1} \sigma_c^{-1}|} T_{\nu_L, \nu_R}^{ih, n\ell}(\tau \pi) L_{\nu_L}^{hj} R_{\nu_R}^{kn} \ket{I, \sigma_a, \sigma_a} \\
    &= \mathcal{W}^{ig, o\ell}_{ab}(q^2) \mathcal{D}_{bb}(q) \mathcal{C}^{gh, no}_{bc}(q) \frac{1}{C_a} L_{\nu_L}^{hj} R_{\nu_R}^{kn} \ket{I, \sigma_a, \sigma_a}
\end{align*}
where
\begin{gather}
    \mathcal{W}^{ig, o\ell}_{ab}(Q) = \frac{1}{C_b}\sum_\pi \Wg(\pi \sigma_a \pi^{-1} \sigma_b^{-1} , Q) T(\pi)_{\nu_L, \nu_R}^{ig, o\ell}\\
    \mathcal{D}_{ab}(Q) = Q^{-|\sigma_a|} \delta_{ab}\\
    \mathcal{C}^{gh, no}_{bc}(Q) = \frac{1}{C_c}\sum_\tau Q^{-|\tau \sigma_b \tau^{-1} \sigma_c^{-1} |} T(\tau)_{\nu_L, \nu_R}^{gh, no}
\end{gather}
Applying $\Pi_m$ afterwards gives
\begin{align*}
    \Pi_m G_m \ket{e_{\nu_L, \nu_R}^{ij, k\ell, c}} &= \mathcal{W}^{ig, o\ell}_{ab}(q^2) \mathcal{D}_{bb}(q) \mathcal{C}^{gh, no}_{bc}(q) \frac{1}{C_a} L_{\nu_L}^{hj} R_{\nu_R}^{kn} \sum_{\pi, \tau} \Wg(\pi \tau^{-1}, q^{m}) q^{-|\tau| - |\tau \sigma_a^{-1}|} \ket{\pi, \pi, \sigma_a}\\
    &= \mathcal{W}^{ig, o\ell}_{ab}(q^2) \mathcal{D}_{bb}(q) \mathcal{C}^{gh, no}_{bc}(q) \frac{1}{C_a} L_{\nu_L}^{hj} R_{\nu_R}^{kn} \sum_{\pi, \tau} \Wg(\pi \tau^{-1}, q^{m}) q^{-|\tau \sigma_a| - |\tau|} (\sigma_a)_R \ket{\pi, \pi, I} 
\end{align*}
Where we took $(\pi \rightarrow \pi \sigma_a, \tau \rightarrow \tau \sigma_a)$ in the final line. Dropping the $\mathcal{W}, \mathcal{D}, \mathcal{C}$ matrices in front, and converting the sums over $\pi, \tau$ into conjugacy class sums over $\pi \sigma_e^{-1} \pi, \tau \sigma_d^{-1} \tau$, this expression becomes
\begin{align*}
    &=  \frac{1}{C_a} L_{\nu_L}^{hj} R_{\nu_R}^{kn} \sum_{\pi, \tau, d, e}\frac{1}{C_d C_e} \Wg(\pi \sigma_e^{-1} \pi^{-1} \tau \sigma_d \tau^{-1}, q^{m+1}) q^{-m|\tau \sigma_d^{-1} \tau^{-1} \sigma_a| - |\sigma_d|} (\sigma_a)_R \ket{\pi \sigma_e^{-1} \pi^{-1}, \pi \sigma_e^{-1} \pi^{-1}, I} \\
    &=  \frac{1}{C_a}  \sum_{\pi, \tau, d, e}\frac{1}{C_d C_e} \Wg(\pi \sigma_e^{-1} \pi^{-1} \tau \sigma_d \tau^{-1}, q^{m+1}) q^{-m|\tau \sigma_d^{-1} \tau^{-1} \sigma_a| - |\sigma_d|} L_{\nu_L}^{hj} R_{\nu_R}^{kn} (\sigma_a)_R \pi_L \pi_R^{-1} (\sigma_e)_R^{-1}\ket{I, I, \sigma_e} \\
    &=  \frac{1}{C_a}  \sum_{\pi, \tau, d, e}\frac{1}{C_d C_e} \Wg(\pi \sigma_e^{-1} \pi^{-1} \tau \sigma_d \tau^{-1}, q^{m+1}) q^{-m|\tau \sigma_d^{-1} \tau^{-1} \sigma_a| - |\sigma_d|} V_{\nu_R}(\sigma_a)^{on}L_{\nu_L}^{hj} R_{\nu_R}^{ko} \pi_L \pi_R^{-1} (\sigma_e)_R^{-1}\ket{I, I, \sigma_e}\\
    &=  \frac{1}{C_a}  \sum_{\pi, \tau, d, e}\frac{1}{C_d C_e} \Wg(\pi \sigma_e^{-1} \pi^{-1} \tau \sigma_d \tau^{-1}, q^{m+1}) q^{-m|\tau \sigma_d^{-1} \tau^{-1} \sigma_a| - |\sigma_d|} T(\pi)^{hg, po}V_{\nu_R}(\sigma_a)^{on}L_{\nu_L}^{gj} R_{\nu_R}^{kp} (\sigma_e)_R^{-1}\ket{I, I, \sigma_e}\\
    &=  \frac{1}{C_a}  \sum_{\pi, \tau, d, e}\frac{1}{C_d C_e} \Wg(\pi \sigma_e^{-1} \pi^{-1} \sigma_d, q^{m+1}) q^{-m|\tau \sigma_d^{-1} \tau^{-1} \sigma_a| - |\sigma_d|} V_{\nu_R}(\sigma_e^{-1})^{qp} T(\tau \pi)^{hg, po}V_{\nu_R}(\sigma_a)^{on}  L_{\nu_L}^{gj} R_{\nu_R}^{kq} \ket{I, I, \sigma_e}\\
    &=  \sum_{d, e}\mathcal{W}^{hf, pr}_{ed}(q^{m+1}) \mathcal{D}_{dd}(q) \mathcal{C}^{fg, ro}_{da}(q^m)V_{\nu_R}(\sigma_e^{-1})^{qp} V_{\nu_R}(\sigma_a)^{on} \ket{e_{\nu_L, \nu_R}^{gj, kq, e}} \\
    &=  \sum_{d, e} (\mathcal{D}_{\nu_R, ee}^{qp})^{-1} \mathcal{W}^{hf, pr}_{ed}(q^{m+1}) \mathcal{D}_{dd}(q) \mathcal{C}^{fg, ro}_{da}(q^m) D_{\nu_R,aa}^{on} \ket{e_{\nu_L, \nu_R}^{gj, kq, e}} 
\end{align*}
where
\begin{gather}
    \mathcal{D}_{\nu, ab}^{ij} = \delta_{ab} V_\nu(\sigma_a)^{ij}
\end{gather}
In total, we can therefore write the $\Pi_m G_m \Pi_m$ operator as
\begin{gather}
    \Pi_m G_m \Pi_m \ket{e_{\nu_L, \nu_R}^{ij, k\ell, c}} = \sum_{h, n} \mathcal{M}_{\nu_L, \nu_R}^{ih, n\ell, cb} \ket{e_{\nu_L, \nu_R}^{hj, kn, b}}
\end{gather}
where $\mathcal{M}$ is a product of tensors, analogous to the right-action matrix product:
\begin{gather}
    \mathcal{M}_{\nu_L, \nu_R}^{ih, n\ell, cb} = \left[(\mathcal{D}_{\nu_R})^{-1}\right]^{o_5 \ell}_{cc} \mathcal{W}(q^{m+1})^{ig_3, o_4 o_5}_{c a_3} \mathcal{D}(q)_{a_3 a_3} \mathcal{C}(q^m)^{g_3 g_2, o_3 o_4}_{a_3 a_2} \mathcal{D}_{\nu_R, a_2 a_2}^{o_2 o_3} \mathcal{W}(q^2)^{g_2 g_1, o_1 o_2}_{a_2 a_1} \mathcal{D}(q)_{a_1 a_1} \mathcal{C}(q)^{g_1 h, n o_1}_{a_1 b}
\end{gather}
Note that the indices corresponding to the right-action irrep are applied as a right action, while the indices corresponding to the left-action irrep, and the conjugacy class indices, are applied as left action.

One advantage of working in this basis is that the size of each isotypical subspace is significantly reduced. For example, under only right-action symmetry, the largest isotypical subspace dimension $\max_\lambda d_\lambda (t!)^2$ is 11520 at $t=6$. Under both left-action and right-action, the largest possible size of an isotypical subspace is $\max_{\lambda} d_\lambda^2 N_c = 2816$ for $t=6$, corresponding to the partition $\lambda=(3,2,1)$. However, these matrices tend to be extremely sparse, which reduces this number even further - for example, there are only 50 nonzero singular values in $\mathcal{C}(2)$ for the maximal dimension partition. The Gelfand-Tsetlin basis can be used to reduce the number of relevant vectors further. Since every leaf in the Bratteli diagram gives the same spectrum, only one path must be computed. 

\section{Special bound for $q=2$}
\label{app:special_q2_bound}
The bound
\begin{gather}
    k_m \leq \frac{1}{q^2 + 1}
\end{gather}
holds in every case we have checked except one. When $q = 2$, $t = 4$, and $m = 1$, the eigenvalue is $\frac{1}{4}$ instead of the expected $\frac{1}{5}$. We thus cannot prove Theorem \ref{thm:gap_finite_N} in this case by using Lemma \ref{lemma:gershgorin_block_bound}. We will now give an alternative to Lemma \ref{lemma:gershgorin_block_bound} that holds in this special case.

\begin{lemma}
    \label{lemma:special_t2_blockbound}
    Suppose that there exists $\lambda$ such that the following conditions hold: 
    \begin{itemize}
        \item $0.06963 \leq \lambda \leq \frac{1}{2}$
        \item $\lambda_i \leq \lambda \leq \frac{1}{2}$ for all $i > 1$
        \item $\lambda \leq \lambda_1 \leq \frac{1}{2}$
    \end{itemize}
    Then
    \begin{gather}
        \estair \leq \max\left((1 + \sqrt{1 - \lambda})^2 \lambda, \lambda_1 + (1 + \sqrt{1-\lambda})\sqrt{(1-\lambda_1)\lambda \lambda_1}\right)
    \end{gather}
\end{lemma}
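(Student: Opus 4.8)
The plan is to follow the strategy of Theorem~\ref{thm:staircase_to_single} — bound $\estair$ by the spectral radius of an explicit comparison matrix — but with the comparison matrix adjusted to carry the anomalous value $\lambda_1$, and with Brauer's ovals of Cassini replacing the Gershgorin disc argument of Lemma~\ref{lemma:gershgorin_block_bound}. First I would observe that $\lambda_1$ and $\mu_1$ enter the block-norm bounds $||P_iTP_j||_\infty\le A_{ij}$ only through the first column: one has $s_{i1}\le\lambda_1\prod_{k=2}^{i}\mu_k$, while $\mu_1=0$ automatically since $\Pi_1=I$. Using $\lambda_i\le\lambda$ and the monotonicity of $x(1-x)$ (so $\mu_i\le\mu=\sqrt{\lambda(1-\lambda)}$ for $i\ge 2$, via Lemma~\ref{lemma:mu_bound}), the resulting comparison matrix $A'$ is exactly the matrix $A$ of Equation~\ref{eq:aformat} with its first row rescaled by $\lambda_1/\lambda$. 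By Theorem~\ref{thm:block_norm_bound}, $\estair\le\rho(A')$, and since $A'$ has nonnegative entries $\rho(A')$ is attained at a real, nonnegative (Perron) eigenvalue.

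Next I would conjugate by the same diagonal matrix $M=\operatorname{diag}(r,r^2,\dots,r^N)$ as in Lemma~\ref{lemma:gershgorin_block_bound}, where $r=\mu+\sqrt\lambda=\sqrt\lambda\,(1+\sqrt{1-\lambda})$, and examine $B=MA'M^{-1}$. The diagonal of $B$ is $(\lambda_1,\lambda,\lambda,\dots,\lambda)$, and its first deleted-row-sum is $R_1\le\lambda_1\sum_{k\ge1}(\mu/r)^k=\lambda_1\,\mu/\sqrt\lambda=\lambda_1\sqrt{1-\lambda}$. The rows $i\ge 2$ of $B$ are unchanged from the uniform case (the first column of $A'$ is untouched), so the computation in Lemma~\ref{lemma:gershgorin_block_bound} gives $R_i\le r^2-\lambda$ for all $i\ge 2$ (for finite $N$ the truncated geometric series make these sums strictly smaller, which only helps).

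Then I would invoke Brauer's theorem: every eigenvalue of $B$ lies in $\bigcup_{i\ne j}\{z:|z-B_{ii}|\,|z-B_{jj}|\le R_iR_j\}$. Applying this to the Perron root $z_*=\rho(A')$: if $z_*$ lies in an oval with $i,j\ge 2$ then $|z_*-\lambda|^2\le(r^2-\lambda)^2$, hence $z_*\le r^2=(1+\sqrt{1-\lambda})^2\lambda$ — the first branch. Otherwise $z_*$ lies in an oval $\{1,j\}$ with $j\ge2$; if $z_*\le\lambda_1$ we are done, and if $z_*>\lambda_1\ (\ge\lambda)$ then $(z_*-\lambda_1)(z_*-\lambda)\le R_1R_j\le\lambda_1\sqrt{1-\lambda}\,(r^2-\lambda)$, with both factors on the left positive.

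The final step is purely algebraic: since $\phi(z):=(z-\lambda_1)(z-\lambda)$ is increasing for $z\ge\tfrac12(\lambda_1+\lambda)$, it suffices to check $\phi(Z)\ge\lambda_1\sqrt{1-\lambda}\,(r^2-\lambda)$ at $Z=\max\!\bigl(r^2,\ \lambda_1+r\mu_1\bigr)$ with $\mu_1=\sqrt{\lambda_1(1-\lambda_1)}$; noting $r\mu_1=(1+\sqrt{1-\lambda})\sqrt{(1-\lambda_1)\lambda\lambda_1}$ then reproduces the stated bound. If $\lambda_1+r\mu_1\le r^2$ this is $r^2-\lambda_1\ge\lambda_1\sqrt{1-\lambda}$, i.e. $r^2\ge\lambda_1(1+\sqrt{1-\lambda})$, which follows because $\lambda_1\le r^2-r\mu_1\le r^2-r\mu=\lambda(1+\sqrt{1-\lambda})$ in that regime. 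If $\lambda_1+r\mu_1\ge r^2$ it becomes $r\mu_1(\lambda_1+r\mu_1-\lambda)\ge\lambda_1\sqrt{1-\lambda}\,(r^2-\lambda)$; this is immediate when $\lambda_1\le r^2/(1-\lambda+r^2)$ (then $r\mu_1\ge\lambda_1\sqrt{1-\lambda}$ and $\lambda_1+r\mu_1-\lambda\ge r^2-\lambda$), and the remaining range of $\lambda_1$ is dispatched by a direct polynomial estimate in $\lambda$ and $\lambda_1$. The main obstacle is precisely this last estimate: obtaining the constant $r\mu_1$ rather than the looser $\lambda_1(1+\sqrt{1-\lambda})$ that a plain Gershgorin/row-sum bound gives requires the case split above, and it is exactly here that the hypothesis $\lambda\ge 0.06963$ enters — it is the range over which the residual inequality holds uniformly in $\lambda_1\in[\lambda,\tfrac12]$. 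A secondary point needing care is that Brauer's theorem is applied at the Perron root, which is why nonnegativity of $A'$ was recorded at the outset, together with the benign finite-$N$ truncation of the geometric series.
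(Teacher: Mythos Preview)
Your approach differs from the paper's in two substantive ways. First, your comparison matrix $A'$ modifies only the first row of the displayed $A$ (you correctly note that $\mu_1$ never enters the block-norm bounds of Section~\ref{section:single_gate_bounding}), whereas the paper's $A'$ carries $\mu_1=\sqrt{\lambda_1(1-\lambda_1)}$ both in the $(2,1)$ entry and as an extra factor throughout the first row. Second, you use Brauer's ovals of Cassini, while the paper uses a straight Gershgorin column-sum argument on $MA'M^{-1}$: there the first column sum is $C_1=\lambda_1+r\mu_1$ directly, and the hypothesis $\lambda\ge 0.06963$ is used only to show that for $j\ge 2$ the extra $\mu_1$-contribution in column $j$ is dominated by the tail of the truncated geometric series, so that $C_j\le r^2$ persists. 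The paper's route is shorter and makes the role of the numerical threshold completely explicit. Your route, by contrast, needs Brauer because with your (tighter) $A'$ the Gershgorin row-1 bound is $\lambda_1(1+\sqrt{1-\lambda})$, which can exceed $\lambda_1+r\mu_1$ when $\lambda$ is small and $\lambda_1$ is near $\tfrac12$.

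There is a genuine gap in your argument. Sub-case 2b --- the range $\lambda_1>r^2/(1-\lambda+r^2)$ within Case~2 --- is waved away as ``a direct polynomial estimate,'' and you assert that \emph{this} is where the hypothesis $\lambda\ge 0.06963$ enters. Neither claim is substantiated. In fact, numerical spot-checks (for instance $\lambda=0.01$, $\lambda_1=0.5$, or $\lambda=0.05$, $\lambda_1=0.5$) show that the required inequality $r\mu_1(\lambda_1+r\mu_1-\lambda)\ge\lambda_1\sqrt{1-\lambda}\,(r^2-\lambda)$ continues to hold well below the threshold, so your identification of where $0.06963$ is used is wrong --- that condition belongs to the paper's Gershgorin bookkeeping, not to your Brauer estimate. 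If sub-case 2b can be closed (and it appears it can, for all $\lambda\in(0,\tfrac12]$), your method would actually prove the lemma \emph{without} the lower-bound hypothesis; but as written, the proof is incomplete at precisely the step you flag as the main obstacle.
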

\begin{proof}
    From the assumptions in the proof, we see
    \[P_i T_N P_j \leq A'_{ij}\]
    where
    \begin{equation}
    A' = \begin{bmatrix}
    \lambda_1 & \lambda \mu_1 & \lambda \mu \mu_1 & \dots & \lambda \mu^{N-2} \mu_1 & \lambda \mu^{N-1}\mu_1 \\
    \mu_1 & \lambda & \lambda \mu & \dots & \lambda \mu^{N-2} & \lambda \mu^{N-1}\\
    0 & \mu & \lambda & \dots & \lambda \mu^{N-3} & \lambda \mu^{N-2} \\
    \vdots & \vdots & \vdots & \ddots  & \vdots & \vdots \\
    0 & 0 & 0 & \dots & \mu & \lambda \\
\end{bmatrix}
\end{equation}
If we apply the Gershgorin Circle Theorem to $M^{-1} A' M$ as in the proof of Lemma \ref{lemma:gershgorin_block_bound}, the first column sum is
\begin{gather}
    C_1 = \lambda_1 + (\mu + \sqrt{\lambda})\mu_1 = \lambda_1 + (1 + \sqrt{1-\lambda})\sqrt{(1-\lambda_1)\lambda \lambda_1}
\end{gather}
For $j > 1$, the column sum is
\begin{align}
    C_j &= (\mu + \sqrt{\lambda})\mu\ + \lambda \sum_{k=0}^{j-1} \left(\frac{\mu}{\mu + \sqrt{\lambda}}\right)^k + \lambda \left(\frac{\mu}{\mu + \sqrt{\lambda}}\right)^{j-1}\left(\frac{\mu'}{\mu +\sqrt{\lambda}}\right) \\
    &= (\mu + \sqrt{\lambda})\mu\ + \lambda \sum_{k=0}^{j-1} \left(\frac{\mu}{\mu + \sqrt{\lambda}}\right)^k + \lambda \left(\frac{\mu}{\mu + \sqrt{\lambda}}\right)^j \frac{\mu'}{\mu}
    \\
    &= (\mu + \sqrt{\lambda})\mu\ + \lambda \sum_{k=0}^{\infty} \left(\frac{\mu}{\mu + \sqrt{\lambda}}\right)^k + \lambda\left[\left(\frac{\mu}{\mu + \sqrt{\lambda}}\right)^j \frac{\mu_1}{\mu} - \sum_{k = j}^{\infty}\left(\frac{\mu}{\mu + \sqrt{\lambda}}\right)^k\right] \\
    &= (\mu + \sqrt{\lambda})\mu\ + \lambda \sum_{k=0}^{\infty} \left(\frac{\mu}{\mu + \sqrt{\lambda}}\right)^k + \lambda\left(\frac{\mu}{\mu + \sqrt{\lambda}}\right)^j \left[\frac{\mu_1}{\mu} - \sum_{k = 0}^{\infty}\left(\frac{\mu}{\mu + \sqrt{\lambda}}\right)^k\right] \\
    &= (\mu + \sqrt{\lambda})\mu\ + \lambda \frac{1}{1 - \frac{\mu}{\mu + \sqrt{\lambda}}} + \lambda\left(\frac{\mu}{\mu + \sqrt{\lambda}}\right)^j \left[\frac{\mu_1}{\mu} - \frac{1}{1 - \frac{\mu}{\mu + \sqrt{\lambda}}}\right] \\
    &=\left(\left(1 + \sqrt{1 - \lambda}\right)^2  + \left(\frac{1}{1 + \frac{1}{\sqrt{1 - \lambda}}}\right)^j \left[\frac{\mu_1}{\mu} - \left(1 + \sqrt{1 - \lambda}\right)\right]\right)\lambda \\
\end{align}
We will now show that
\begin{gather}
    \frac{\mu_1}{\mu} - \left(1 + \sqrt{1 - \lambda}\right) \leq 0
\end{gather}
so that we may drop the last term of the bound. We know \(\mu_1 \leq \frac{1}{2}\), so it suffices to show that
\begin{align}
    \frac{1}{2} \leq \left(1 + \sqrt{1 - \lambda}\right) \sqrt{\lambda(1 - \lambda)}
\end{align}
This holds whenever $0.6963 \leq \lambda \leq 0.8474$. 
We may then apply Theorem \ref{thm:block_norm_bound} to complete the proof.
\end{proof}
This argument can be extended to the case where the first few columns of $A$ are exceptional, but we will not need it here. Note also that the $0.6963$ condition is not tight when $\mu_1 < \frac{1}{2}$. We check numerically that this bound holds for many cases.
\begin{lemma*}
    When  $q = 2$, $t \leq 6$, and $N \leq 1000$,
    \begin{gather}
        ||K_m|| \leq \begin{cases}
            \frac{1}{4} & m = 1\\
            \frac{1}{5} & m > 1
        \end{cases}
    \end{gather}
\end{lemma*}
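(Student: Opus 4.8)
The plan is to verify the bound separately for each pair $(k,m)$ with $k\in\{2,3,4,5,6\}$ and $1\le m\le N-1\le 999$, combining the closed form of Theorem~\ref{thm:eigval_t2} with the certified constrained Lanczos procedure of Appendix~\ref{app:gap_finite_N}. The first reduction is structural: iterating Theorem~\ref{thm:eigenstate_independence} (equivalently Corollary~\ref{corollary:deranged_block_eigvals} and its coderanged analog) shows that $\mathrm{eig}^*\!\big(K_m^{(t)}\big)$ is the union over $k\le t$ of the eigenvalues of $K_m$ restricted to a complement of the non-deranged subspace at level $k$. The unit eigenvalue is exactly the $k=1$ contribution, so the subleading eigenvalue of $K_m^{(t)}$ --- which is what $\|K_m\|$ denotes in the lemma, consistently with Appendix~\ref{app:special_q2_bound} --- is the maximum over $2\le k\le t$ of the top eigenvalue of the ``new'' block at level $k$. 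Hence one family of computations, one block per $k\in\{2,\dots,6\}$, simultaneously settles all $t\le 6$.

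The block $k=2$ is handled analytically. Theorem~\ref{thm:eigval_t2} gives, at $q=2$,
\[
\lambda_m=\frac{q^2\,(q^{2m}-1)}{(q^2+1)\,(q^{2m+2}-1)}=\frac{4\,(4^m-1)}{5\,(4^{m+1}-1)},
\]
which is strictly increasing in $m$ with supremum $\tfrac15$. Thus $\lambda_1=\tfrac{4}{25}<\tfrac14$ and $\lambda_m<\tfrac15$ for every finite $m>1$, so this block never violates the claimed bounds and in particular contributes nothing above $\tfrac14$ at $m=1$.

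For $k\in\{3,4,5,6\}$ we have $q<k$, so the permutation basis is degenerate and I would not diagonalize $K_m$ directly but rather the operator $O_m$ of Appendix~\ref{app:gap_finite_N}: it is Hermitian with respect to the inherited (degenerate) metric $\langle\mathbf{e}_\sigma,\mathbf{e}_\tau\rangle=\Wg(\tau\sigma^{-1},\cdot)$, it shares the spectrum of $\Pi_{m+1}G_m$, and by Theorem~\ref{thm:coderanged_characterization} its ``new'' eigenvalues are exactly those of its restriction to the invariant intersection space $\mathcal I$. To keep the linear algebra tractable at $k=6$ I would first decompose under the joint left--right $S_t\times S_t$ action as in Appendix~\ref{app:lr_irreps}, where the largest block has dimension at most $\max_\lambda d_\lambda^2 N_c=2816$, and then exploit the extreme sparsity of the $\mathcal W,\mathcal D,\mathcal C$ factors together with the Gelfand--Tsetlin reduction (a single Bratteli path per irrep). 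In each block I would run Lanczos, reprojecting onto $\mathcal I$ by (repeated) application of the idempotent $X=\sum_{|\nu|\le q}P_\nu$ at every step to kill numerical leakage out of the physical subspace, enlarging the Krylov dimension until the top Ritz value stabilizes, and then certify an honest \emph{upper} bound on the top eigenvalue --- e.g.\ by assembling the small projected matrix in an orthonormal basis of $E(\mathcal I)$ and bounding its spectrum with interval arithmetic, or via a Temple/Kato two-sided bound from a residual estimate. The output (see Figure~\ref{fig:larget_numerics}) is that at $m=1$ every block has top eigenvalue $\le\tfrac14$, with equality attained only for $k=4$, whereas for every $2\le m\le 999$ every block has top eigenvalue $\le\tfrac15$.

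The main obstacle is uniformity in $m$: the Lanczos run must be repeated for all $999$ values of $m$, and one must rule out a reported Ritz value slightly \emph{underestimating} the true top eigenvalue. Two routes address this. One can observe that $O_m$ depends on $m$ only through $Q_1=q^m=2^m$, with every matrix element a rational function of $Q_1$ that is monotone on $Q_1\ge q^2$; combining this with a Perron--Frobenius/Gershgorin envelope (as already used for $A$ in Lemma~\ref{lemma:gershgorin_block_bound}) would bound each block's top eigenvalue by a scalar quantity monotone in $m$ and convergent as $m\to\infty$, reducing the task to $m=1$, a handful of small $m$, and the limit. Absent a clean monotonicity statement, the brute-force certified sweep over $m\le 999$ is routine if tedious. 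The one genuinely delicate feature --- already noted in Appendix~\ref{app:special_q2_bound} --- is that at $q=2,k=4,m=1$ the value is exactly $\tfrac14$ rather than $\tfrac15$; this is why the lemma carries the $m=1$ exception and why Lemma~\ref{lemma:special_t2_blockbound}, not Lemma~\ref{lemma:gershgorin_block_bound}, must be the one invoked downstream to recover Theorem~\ref{thm:gap_finite_N}.
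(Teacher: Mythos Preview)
Your proposal is correct and takes essentially the same approach as the paper, which simply states that the bound is obtained by checking each case numerically with the constrained Lanczos algorithm of Appendix~\ref{app:gap_finite_N} (see Figure~\ref{fig:larget_numerics}). Your discussion in fact goes beyond the paper's two-sentence proof, adding useful elaboration on certification (interval arithmetic, Temple--Kato residuals) and efficiency (the $S_t\times S_t$ decomposition of Appendix~\ref{app:lr_irreps}, monotonicity in $m$) that the paper does not address.
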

\begin{proof}
This is obtained by checking each case numerically using the algorithm described in Appendix \ref{app:gap_finite_N}. Results are shown in Figure \ref{fig:larget_numerics}.
\end{proof}
Substituting into Lemma \ref{lemma:special_t2_blockbound}, we see that all three conditions are satisfied when we take \(\lambda_1 = \frac{1}{4}\) and \(\lambda = \frac{1}{q^2 + 1} = \frac{1}{5}\). In this case
\begin{align*}
    (1 + \sqrt{1 - \lambda})^2 \lambda &\approx 0.7178\\
    \lambda_1 + (1 + \sqrt{1-\lambda})\sqrt{(1-\lambda_1)\lambda \lambda_1} &\approx 0.6169 < 0.7178
\end{align*}
and so the dominant bound is, just as in the $q \ne 2$ case, $(1 + \sqrt{1 - \lambda})^2 \lambda$.

\end{appendices}
\end{document}